\let\P\relax
\DeclareMathOperator{\P}{\mathbb{P}}
\DeclareMathOperator{\Pn}{\mathbb{P}_n}
\DeclareMathOperator{\E}{\mathbb{E}}
\newcommand\independent{\protect\mathpalette{\protect\independenT}{\perp}}
\def\independenT#1#2{\mathrel{\rlap{$#1#2$}\mkern2mu{#1#2}}}
\newtheorem{assumption}{Assumption}[section]
\newtheorem{theorem}{Theorem}[section]
\newtheorem{lemma}{Lemma}[section]
\newtheorem{corollary}{Corollary}[section]
\theoremstyle{definition}
\newtheorem{remark}{Remark}[section]
\numberwithin{equation}{section}
\DeclareMathOperator*{\argmin}{arg\,min}
\newcommand{\bias}{\overline{\text{bias}}}
\newcommand{\m}[1]{\mathcal{#1}}
\newcommand{\mt}[1]{\tilde{\mathcal{#1}}}
\let\norm\relax
\DeclarePairedDelimiter{\norm}{\lVert}{\rVert}
\newcommand{\lo}{\theta}
\newcommand{\muplo}{\tilde\theta^{\mu}}
\newcommand{\plo}{\tilde\theta}
\newcommand{\hlo}{\hat \theta}
\newcommand{\bestmu}{\mu^{\star}}
\newcommand{\pmu}{\tilde \mu}
\newcommand{\hmu}{\hat\mu}
\newcommand{\Tef}{T_{e}}
\newcommand{\Pnn}{\frac{1}{n}\sum_{i=1}^n}
\newcommand{\spn}[1]{\textbf{span}\{#1\}}
\begin{document}


\title{Large-Sample Properties of the Synthetic Control Method under Selection on Unobservables\thanks{{\small We are grateful for comments by seminar participants at  CREST, Stanford GSB, EUI, CEMFI, UCL, Cambridge, LSE, Oxford, 43 Meeting of the Brazilian Econometric Society, Synthetic Control conference at Princeton University, ICSDS in Florence,  African Meeting of Econometric Society, Alberto Abadie, Manuel Arellano, Stephane Bonhomme, Guido Imbens, and Stefan Wager. This research was generously supported by the research grant from XIX Concurso Nacional de Ayudas a la Investigacion en Ciencias Sociales, Fundacion Ramon Areces.}}} 
\author{Dmitry  Arkhangelsky \thanks{{\small  Associate Professor, CEMFI, CEPR, darkhangel@cemfi.es. }} \and David Hirshberg \thanks{{\small Assistant Professor, Emory University QTM, davidahirshberg@emory.edu.
}} }\date{\today}

\maketitle

\begin{abstract}
\singlespacing
We analyze the synthetic control (SC) method in panel data settings with many units. We assume the treatment assignment is based on unobserved heterogeneity and pre-treatment information, allowing for both strictly and sequentially exogenous assignment processes. We show that the critical property that determines the behavior of the SC method is the ability of input features to approximate the unobserved heterogeneity. Our results imply that the SC method delivers asymptotically normal estimators for a large class of linear panel data models as long as the number of pre-treatment periods is sufficiently large, making it a natural alternative to the Difference-in-Differences. 
\end{abstract}

\noindent \textbf{Keywords}: synthetic control, difference in differences, fixed effects, panel data, sequential exogeneity,

\newpage
\section{Introduction}

Methods based on Difference in Differences (DiD) have had a tremendous impact on applied research in economics and social sciences more broadly. \cite{currie2020technology} document the prevalence of DiD in empirical practice and show that it is the dominant method for estimating treatment effects with observational data. This success can be attributed to the unique combination of transparency and flexibility of the DiD estimator. The flip side of this is that the validity of these methods relies on assumptions that severely restrict both the model for the counterfactual outcomes and the treatment assignment process. For the DiD estimator to work, the counterfactual outcomes for treated units should evolve in parallel to the control ones, which, outside of exceptional cases, requires the underlying outcomes to follow a two-way model and the treatment assignment to be based only on permanent characteristics (\citealp{ghanem2022selection}). This problem has been recognized for a long time: in one of the early applications of DiD, \cite{ashenfelter1985using} document that the data soundly reject the DiD assumptions. 

There are two different ways of resolving this tension. One possibility is restricting attention to applications and datasets where the DiD assumptions will likely hold. Empirical practice, particularly the reliance on tests for parallel trends, suggests this is not an uncommon solution. This process leads to well-understood inferential problems \citep{roth2022pretest,rambachan2023more} and more broadly contributes to publication bias (e.g., \citealp{andrews2019identification}). The alternative route is adopting methods that remain valid in environments where approaches based on DiD fail, which we focus on in this paper. 


 Textbooks on panel data (e.g., \citealp{arellano2003panel,wooldridge2010econometric}) describe various models that substantially relax the DiD assumptions. These models can be estimated using the generalized method of moments (GMM), leading to estimators with well-understood statistical properties. At the same time, this process can be quite fragile and often requires multiple choices from the user (e.g., \citealp{blundell1998initial}). More importantly, each model leads to a different estimator -- a dramatic contrast with the simplicity and transparency of DiD. 

In this paper, we argue that there is a single estimation strategy that delivers valid answers in a large class of panel data models, including those where the DiD approach fails. This strategy is based on adapting the Synthetic Control (SC) method \citep{abadie2003,Abadie2010} to panel data applications. The SC method exploits the information available before a specific policy (treatment) was adopted to calculate weights for units not exposed to the policy. These weights are then utilized to construct a counterfactual path for the exposed units. The key input for the SC method is a set of features -- functions of the available past information -- used to calculate the weights. 

The SC method is already one of the key tools for policy evaluation in social sciences.\footnote{Some recent studies using this method include \citet{cavallo2013catastrophic,andersson2019carbon, mitze2020face,jones2022labor}.} At the same time, it was originally designed for comparative case studies, which have few available units and often a single treated unit. Most empirical applications that use the SC method have a similar structure. In contrast, we focus on environments where the number of units is large and possibly much larger than the number of observed pre-treatment periods. In this way, we can reinterpret the SC method as a general-purpose algorithm encompassing applications where researchers would normally rely on DiD. To justify this interpretation, we derive a set of new statistical results that establish the properties of the SC method in large samples. 

Our analysis is based on two structural conditions that describe the interplay between the treatment assignment and the outcome model. First, we assume that the treatment is independent of future counterfactual outcomes conditional on permanent unobserved characteristics and observed pre-treatment information. This sequential exogeneity restriction is natural for many economic applications and substantially generalizes the selection assumptions that underlie the DiD analysis. Second, we assume that the relevant unobservables can be recovered with precision when the number of pre-treatment periods is sufficiently large. This retrievability restriction is routinely made in panel data models (e.g., \citealp{bonhomme2022discretizing}) and is critical for our analysis. 

We show that the SC method has desirable statistical properties if the input features are rich enough to approximate the relevant unobservables. In particular, we prove that the SC method delivers asymptotically normal estimators if the approximation error decreases fast enough. As a direct application of this general result, we demonstrate consistency and asymptotic normality of the SC estimator in a model with two-way fixed effects where the treatment assignment is based on both permanent and time-varying components of the outcomes. It has been long-established that such selection patterns naturally arise in economic applications  (e.g., \citealp{ashenfelter1985using}). The DiD estimator is inconsistent in this environment, even though the baseline counterfactual outcomes follow a two-way model. 

Our results extend beyond the two-way setup and remain valid in models with interactive fixed effects (e.g., \citealp{bai2009panel,moon2015linear,moon2017dynamic}). In contrast to a significant portion of the literature on interactive fixed effects, we do not rely on fixed rank assumptions and prove the asymptotic normality of the SC estimator for models where the dimension of fixed effects increases with sample size. This analysis better describes the interplay between finite and large $T$ regimes, showing that the convergence rate of the estimator is slower in a more flexible model. Our conclusions are in line with the recent results established in \cite{freeman2023linear} for a semiparametric model with unobserved two-way heterogeneity (see also \citealp{beyhum2022factor}). A similar performance was shown in \cite{arkhangelsky2021synthetic} for the Synthetic DiD estimator. Notably, the last result was established for a model with strictly exogenous treatment assignment. 

Our findings also reveal potential shortcomings of the SC method. Motivated by our theoretical results, we use simulations to demonstrate that the SC estimator fails in two-way environments when there is sizable heterogeneity in the persistence of the time-varying shocks across units, which affects the assignment. The failure of the SC estimator in this environment can be viewed as a consequence of using an insufficiently rich set of features to construct the weights. Linear combinations of pre-treatment outcomes capture the differences in the means but cannot distinguish the differences in the persistence. By including unit-specific measures of persistence, researchers can potentially alleviate this problem. More broadly, our results show that for the SC method to be consistent, researchers must choose a set of features that approximate the underlying heterogeneity. 

Our formal results contribute to several strands of the literature. First, we generalize the available statistical guarantees for the SC method in several dimensions (see \citealp{abadie2021using} for a recent survey). We derive an asymptotic expansion
for the SC method under high-level assumptions on the input features
that holds when the number of treated units is asymptotically larger than the number of time periods. This result can be applied to build a variety of estimators, allowing researchers to tailor the general strategy to their specific applications. We provide conditions under which the SC method delivers unbiased and asymptotically normal estimators and thus can be used for inference. This result expands the range of applications where the SC method can be credibly used.

Our results have direct implications for the econometric panel data literature. We show that the SC method delivers asymptotically normal estimators in a large class of linear panel data models. Some of these models can be used directly to consistently estimate the causal parameter of interest using GMM, even if the number of periods is small. Our analysis shows that the same parameters can be consistently recovered using linear estimators if the number of periods is large. This insight is connected to the literature on the biases of fixed effects estimators in linear panel data models \citep{nickell1981biases,hahn2002asymptotically,alvarez2003time}. The critical difference is that the SC method has this property simultaneously for a large class of panel data models, whereas the fixed effect estimators target a particular one. 

The idea that functions of pre-treatment outcomes can be a reasonable alternative to quasi-differencing schemes estimated by GMM is not new in theoretical and empirical research. In \cite{blundell1999market}, the authors directly use averages of pre-treatment firm-level histories to account for unobserved heterogeneity. In \cite{blundell2002individual}, the authors show that the resulting estimator, which they call the pre-sample mean estimator, has attractive properties in simulations, especially when compared with the GMM procedures. In a much earlier work, \cite{chamberlain1982multivariate} suggested using long lags of outcomes to control for the unobserved heterogeneity. Our results demonstrate the connection between all these proposals and the SC method and provide statistical guarantees for a large class of similar estimators. 

The motivation for the pre-sample mean estimator is straightforward. To the extent that unobservables have any meaning, they have to be part of some observable variables, and past outcomes are the most natural candidates for such connections. For instance, in the two-way model that forms the basis of the DiD estimation, the relevant unobservables -- unit fixed effects -- are part of the outcomes. As a result, the average of the pre-treatment outcomes is a good proxy for the fixed effects in this model. In models with more complex structures, such as interactive fixed effects, one cannot rely on a single average and needs to construct multiple proxies. We show that the SC method accomplishes this automatically by examining all possible linear combinations of the input features. This interpretation of the SC method suggests that other time-varying covariates that contain information on relevant unobservables should also be used as input features. In the paper, we discuss several examples in which such covariates dramatically improve the performance of the SC method.

We also contribute to the balancing literature (e.g., \citealp{graham2012inverse,imai2014covariate,zubizarreta2015stable,athey2018approximate,tan2020model,wang2020minimal,armstrong2018finite, hirshberg2021augmented}) by deriving the properties of a particular balancing estimator in environments where important confounders are unobserved. We do this by arguing that unobservables create a misspecification problem that becomes negligible in a specific limit. This interpretation leads to a high-dimensional problem, which cannot be addressed by relying on sparsity assumptions (e.g., \citealp{belloni2014jep,chernozhukov2018double}). Instead, we use a built-in independence property of balancing estimators: the weights do not directly depend on the outcomes and thus are conditionally mean-independent from them. We use this fact to derive the asymptotic expansion for the SC method under relatively mild conditions. 

Our analysis has several limitations, the biggest being our focus on block designs where all units are treated simultaneously. A natural next step would be extending our results to environments with staggered designs, where units adopt the treatment sequentially. We discuss an adaptation of the SC method that can be applied to estimate contemporaneous effects in such applications. A complete treatment of dynamic effects in such settings is challenging even without unobservables; see \cite{viviano2021dynamic} for a modern approach and references. 

The paper proceeds as follows. In Section \ref{sec:setup}, we introduce the SC method and our key assumptions, discuss examples, and present numerical experiments that demonstrate the performance of the SC method in empirically relevant contexts. In Section \ref{sec:results}, we establish the asymptotic properties of the SC method and discuss the underlying statistical assumptions. In Section \ref{sec:exm}, we discuss our results in the context of linear panel data models. In Section \ref{sec:discussion}, we discuss an adaptation of our results to staggered designs. Finally, Section \ref{sec:conc} concludes.

\paragraph{Notation:} For a vector $x \in \mathbb{R}^p$ we use $\|x\|_2$ to denote its $l^2$ norm; for a random variable $X$ we use $\|X\|_2$ to denots its $L^2$ norm. For an arbitrary matrix $X$, we use $\| X\|_{op}$ to denote its operator norm. For a random vector $X$, we write $\mathbb{E}[X]$ for its expectation and $\mathbb{V}[X]$ for its covariance matrix. We use $\mathbf{1}$ to denote a constant function and $\mathcal{I}_d$ to denote the $d\times d$ identity matrix. For two determinsitic sequences $a_n$ and $b_n$ we write $a_n \sim b_n$ if sequences $\frac{a_n}{b_n}$  and $\frac{b_n}{a_n}$ are well-defined and bounded. We write $a_n\gtrsim b_n$ if $\frac{b_n}{a_n}$ is bounded, and $a_n \gg b_n$ if $\frac{b_n}{a_n}$ converges to zero, possibly up to log factors.

\section{SC method}\label{sec:setup}
This section introduces the SC method and our key conceptual assumptions. We then discuss several examples, demonstrating the scope of our assumptions. We close this section with a Monte Carlo study that shows the advantages of the SC method over the DiD estimator. The goal of this section is to convince the applied reader that the SC method is a natural alternative to the DiD in relatively simple environments. Our formal analysis in Section \ref{sec:results} demonstrates that this is also the case in a much larger class of models.

\subsection{Estimation Approach}\label{sec:estimator}
\begin{figure}[t!]
\includegraphics[width=.6\textwidth]{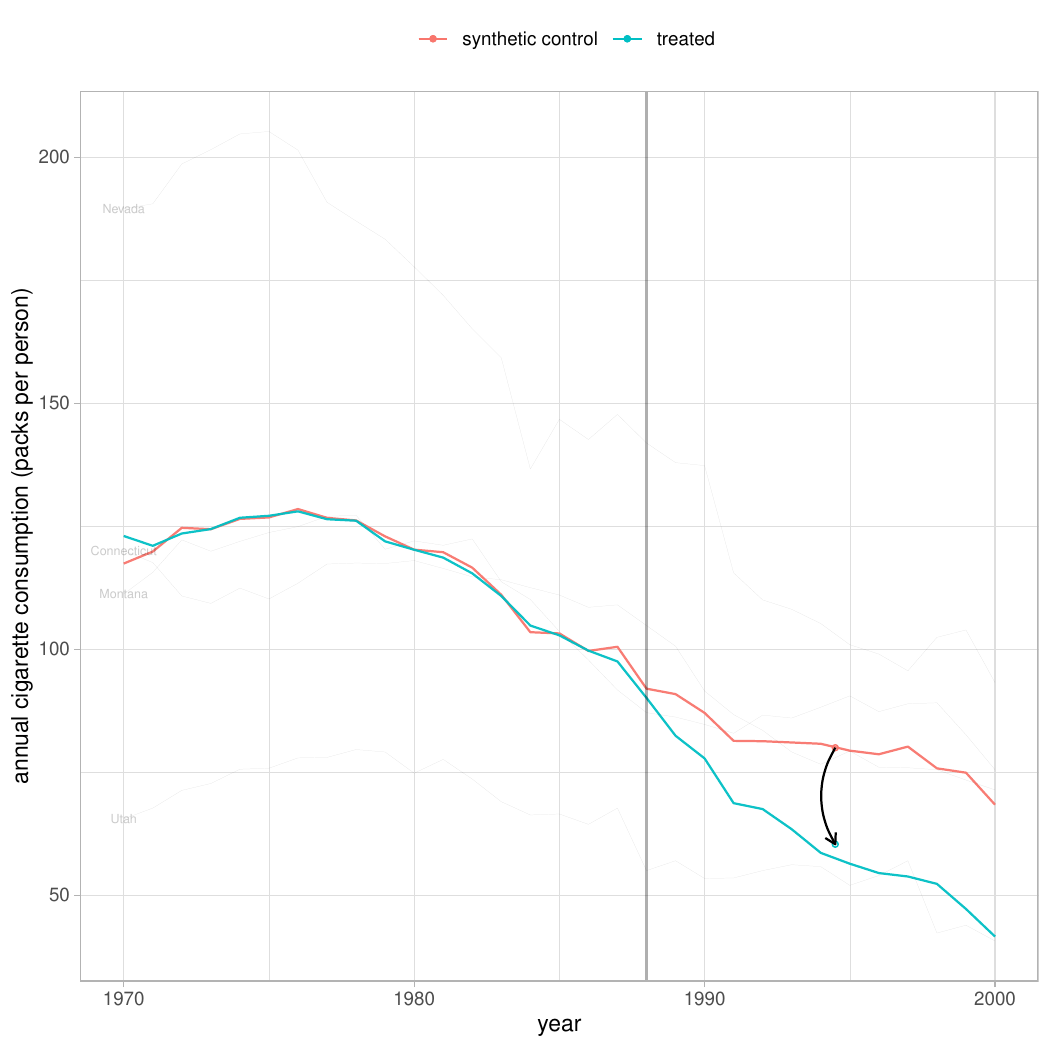}
\caption{\footnotesize Replication of the results in \cite{Abadie2010}.}\label{fig:sc}
\end{figure}


We consider settings in which the researcher has access to a dataset $\m{D}:= \{X_i, D_i, Y_i\}_{i=1}^n$, with $n$ units in total. Here $Y_i$ is an outcome of interest, $D_i\in \{0,1\}$ is a binary treatment, and $X_i$ describes data available for unit $i$ from $T_0$ pre-treatment periods. The SC estimators we consider involve a post-treatment comparison of an average of treated units to a weighted average of control units with similar pre-treatment outcomes. For weights $\hat\omega_i$, it has this form:
\begin{equation}\label{eq:estimator}
    \hat \tau := \frac{1}{n\overline{\pi}} \sum_{i=1}^n D_i Y_{i} - \Pnn \hat \omega_i(1-D_i) Y_{i} \qfor \bar \pi: = \frac{1}{n}\sum_{i=1}^n D_i.
\end{equation}

In Figure \ref{fig:sc}, we see a typical comparison. We re-estimate the effect of California's 1988 cigarette tax on per-capita cigarette consumption, as discussed in \citet{Abadie2010}, by comparing the post-treatment cigarette consumption in California to that of an average of states which, prior to treatment, had a similar rate of cigarette consumption. In this case, the components $X_{i1},\dots, X_{i,T_0}$ of the vector $X_i$ are cigarette consumption rates in years prior to 1988. As a similarity criterion, we have used mean-squared-error; in particular, we have chosen the weights $\{\hat \omega_{i}\}_{i \le n}$ by solving the following entropy-regularized least squares problem:
\begin{equation}\label{eq:classic_synth}
\begin{aligned}
   \hat \omega := \argmin_{\omega\ge 0}&\left\{\frac{\zeta^2}{n^2}\sum_{i=1}^n \omega_i \log(\omega_i) + \sum_{t=1}^{T_0} \qty(\frac{1}{n\overline{\pi}}\sum_{i=1}^n D_i X_{it} - \Pnn \omega_i (1-D_i) X_{it} )^2 \right\}\\
\text{subject to: } &\Pnn \omega_i(1-D_i) = 1.
\end{aligned}
\end{equation}

There is a natural interpretation of this optimization problem as protecting us against bias when, absent treatment, future outcomes would be predicted linearly by past ones. In particular, using the dual characterization of the Euclidian norm, $\norm{x}_2 = \max_{u : \norm{v}_2 \le 1} v^T x$, we can rewrite \eqref{eq:classic_synth} as follows.
\begin{equation}\label{eq:classic_synth_dual}
\begin{aligned}
   \hat \omega := \argmin_{\omega\ge 0}\max_{v : \norm{v} \le 1} &\left\{\frac{\zeta^2}{n^2}\sum_{i=1}^n \omega_i \log(\omega_i) + \qty( \frac{1}{n\overline{\pi}}\sum_{i=1}^n D_i (v^T X_i) - \Pnn \omega_i (1-D_i) (v^T X_i) )^2 \right\}\\
\text{subject to: } &\Pnn \omega_i(1-D_i) = 1.
\end{aligned}
\end{equation}
As a result, by averaging the outcomes with the weights $\hat \omega_i$ and taking the difference as in \eqref{eq:estimator}, we essentially eliminate any systematic variation in the future outcomes that can be predicted by linear combinations $v^T X_i$.

More generally, when we expect future outcomes to be predicted by a function $f(X_i)$ in some set $\m{F}$, we might instead consider this problem:
\begin{equation}\label{eq:general_synth}
\begin{aligned}
   \hat \omega := \argmin_{\omega\ge 0}\max_{f \in \m{F}} &\left\{\frac{\zeta^2}{n^2}\sum_{i=1}^n \omega_i \log(\omega_i) + \qty( \frac{1}{n\overline{\pi}}\sum_{i=1}^n D_i f(X_i) - \Pnn \omega_i (1-D_i) f(X_i) )^2 \right\}\\
\text{subject to: } &\Pnn \omega_i(1-D_i) = 1.
\end{aligned}
\end{equation}
This interpretation, borrowing from the literature on covariate balance \citep[e.g][]{ben2021balancing}, is discussed in \citet{ben2018augmented}.
Here we will focus on a set $\m{F}$ of predictors that are linear in features $\phi_1(X_i) \ldots \phi_p(X_i)$ of our pre-treatment observations,
\begin{equation}\label{eq:main_text_set}
  \m{F} = \left\{f: \sum_{k=1}^p \beta_k \phi_k(x),\, \sum_{k=1}^p\beta_k^2 \le 1 \right\}.
\end{equation}
As a result, we will be working with weights chosen via the following special case of \eqref{eq:general_synth}.
\begin{equation}\label{eq:pr_problem}
\begin{aligned}
   \hat \omega := \argmin_{\omega\ge 0}&\left\{\frac{\zeta^2}{n^2}\sum_{i=1}^n \omega_i\log(\omega_i) + \sum_{l=1}^p \qty(\frac{1}{n\overline{\pi}}\sum_{i=1}^n D_i \phi_l(X_i) - \Pnn \omega_i (1-D_i) \phi_l(X_i) )^2 \right\}\\
\text{subject to: } &\Pnn \omega_i(1-D_i) = 1.
\end{aligned}
\end{equation}
In the California example above, as is typical, each feature is one of $T_0$ pre-treatment outcomes, i.e., $\phi_l(X_i)=X_{il}$ for $l=1 \ldots T_0$. However, our formulation also allows for additional time-varying covariates, and in the coming sections we will discuss several such examples. 

\subsection{Estimation Target}
To define our target estimand, we interpret the observed data using the potential outcomes framework (\citealp{neyman1923,rubin1974estimating}). We assume that for the units not exposed to the treatment, we observe the baseline outcomes $Y_i(0)$, while for the treated ones, we observe $Y_{i}(1)$. We assume that the counterfactual pre-treatment outcomes are not affected by the treatment, $X_i = X_{i}(0) = X_{i}(1)$. This restriction is usually implicit in the standard cross-sectional settings where $X_i$ contains fixed attributes (e.g., \citealp{imbens2015causal}). In our environment, $X_i$ contains pre-treatment outcomes, and this requirement should be interpreted as a no-anticipation assumption (e.g., \citealp{abbring2003nonparametric}). 
\begin{assumption}\label{as:pot_outcomes}\textsc{(Potential Outcomes)}\\
    There exist potential outcomes $X_i(0),\, X_{i}(1),\, Y_i(1),\, Y_i(0)$, where    $Y_i= D_i Y_i(1) + (1-D_i) Y_i(0)$ and $X_i = X_{i}(0) = X_{i}(1)$.
\end{assumption}

With this assumption we can decompose $\hat \tau$ into two parts:
\begin{equation*}
    \hat \tau = \Pnn \frac{D_i}{\overline{\pi}}(Y_{i}(1) - Y_{i}(0)) + 
    \Pnn \left(\frac{D_i}{\overline{\pi}} - \hat \omega_i(1-D_i)\right)Y_{i}(0).
\end{equation*}
The first term is our target estimand,
\begin{equation*}
  \tau :=\Pnn \frac{D_i}{\overline{\pi}}(Y_{i}(1) - Y_{i}(0)).
\end{equation*}
By definition, $\tau$ is the in-sample average effect on the treated, a natural target in many applications. Our theoretical results describe the behavior of the error  $\hat \tau - \tau$ in large samples. 

Our next assumption describes the features of the data-generating process (DGP) that restrict the underlying sampling and assignment processes. The first part of this assumption assumes that the potential outcomes, treatment indicators, and an unobserved unit-level characteristic $\eta_i$ are sampled randomly from some population. This restriction is typical in econometric panel data analysis, going back at least to \citet{chamberlain1984panel}, and is commonly made in the recent literature on the DiD estimators (e.g., \citealp{abadie2005semiparametric,callaway2021difference}). In light of this assumption, we will often drop the subscript $i$ when discussing the properties of a generic observation. The second part of the assumption allows for rich selection patterns based on unobserved heterogeneity $\eta$ and information on the past. This latent unconfoundedness assumption is commonly imposed in causal models for panel data (e.g., \citealp{arkhangelsky2022doubly}). The key difference between that setting and our setup is that $X$ contains information on past outcomes, which implies that it is not a strictly exogenous covariate.\footnote{See \cite{arellano2003panel} for a textbook discussion of strict exogeneity.} Finally, we also impose a weak overlap restriction on the treatment probabilities. This restriction implies that $\eta_i \ne D_i$. It is an identification assumption that guarantees that if $\eta_i$ were observed, it would have been possible to solve the selection problem by appropriately reweighting the control units. 
\begin{assumption}\label{as:sel}\textsc{(Sampling and Selection)}\\
    (a) unit-level outcomes $\left\{\qty(X_i(0), X_{i}(1), Y_{i}(0), Y_{i}(1), D_i, \eta_i)\right\}_{i=1}^n$ are i.i.d.; (b) $ D_i \independent Y_i(0)\Bigl|\, \eta_i, X_i$, and  $\pi_i := \E[D_i| \eta_i, X_i]$ belongs to $(0,1)$ with probability 1.
\end{assumption}

Without further restrictions, the second part of Assumption \ref{as:sel} has no empirical content.
In particular, it is trivially satisfied by defining $\eta := Y(0)$, which is partially unobserved. To attach meaning to $\eta$, we need to connect it to observables, which we do with our next assumption. First, we define the conditional expectation of the outcome and the corresponding error:
\begin{equation}
    \mu := \E[Y(0)| \eta,X], \quad \epsilon := Y(0) - \mu.
\end{equation}
We use this to define the effective number of pre-treatment periods:
\begin{equation}
    \frac{1}{\Tef(\m{F})} := \min_{f \in\spn{\mathbf{1},\m{F}}}\E\left[(f(X) - \mu)^2\right],
\end{equation}
with the convention that $ \Tef(\m{F}) = \infty$ if $ \min_{f \in\spn{\mathbf{1},\m{F}}}\E\left[(f(X) - \mu)^2\right] = 0$. This quantity measures how useful the pre-treatment information in $\mathcal{F}$ is for predicting the relevant conditional mean. 
\begin{assumption}\label{as:emp_set}\textsc{(Identifiability)}\\
$\Tef(\mathcal{F}) \rightarrow \infty$ as $T_0$ increases to infinity.
\end{assumption}
This assumption guarantees that in the limit where $T_0$ is infinite, it is possible to recover $\mu$ using $\mathcal{F}$. The validity of this restriction depends on the underlying probability model that connects $Y,\,X$, $\eta$, and the set of features $\m{F}$. In the next section, we discuss three examples that illustrate the scope of this assumption. We will substantially generalize them in Section \ref{sec:exm}. 

\begin{remark}
    Weights that minimize (\ref{eq:pr_problem}) were proposed in \cite{hainmueller2012entropy} to produce a balanced sample. \citet{imbens1998information} and \citet{schennach2007point} analyzed a related exponential tiling estimator in the context of models defined by moment conditions, see also \cite{graham2012inverse}. What makes our setup special is that $X_i$ describes pre-treatment characteristics of unit $i$, particularly pre-treatment outcomes. For the case with a single treated unit, the problem \eqref{eq:pr_problem} with  $\zeta = 0$ corresponds to the SC method proposed in \cite{Abadie2010}. The same procedure, with $p= T_0$  and $\phi_{k}(X_i)$ equal to the levels of the pre-treatment outcomes, is often called the SC estimator (e.g., \citealp{doudchenko2016balancing,ben2018augmented,ferman2021synthetic}). 
\end{remark}
\begin{remark}
    In the literature on the SC method, there are various ways of dealing with multiple treated units, with \eqref{eq:pr_problem} being one option. Another prominent choice is to construct a synthetic control unit separately for each treated unit and aggregate afterward. See \cite{abadie2021penalized} for a discussion of this approach and \cite{ben2022synthetic} for a synthesis. As explained in \cite{cattaneo2022uncertainty}, the difference between the two approaches lies in how they measure the distance between the units. 
\end{remark}

\begin{remark}
    By definition we have a lower bound $\frac{1}{\Tef(\mathcal{F})} =   \min_{f \in\spn{\mathbf{1},\m{F}}}\E\left[(f(X) - \mu)^2\right] \ge \mathbb{E}[(\mu - \mathbb{E}[\mu|X])^2]$. The reciprocal of the latter quantity measures the optimal effective number of periods. In a typical panel data model, we expect $\mathbb{E}[(\mu - \mathbb{E}[\mu|X])^2] = O\qty(\frac{1}{T_0})$.\footnote{In the first example of Section \ref{subsec:exm}, if $(\epsilon_{1}, \dots, \epsilon_{T_0})$ and $\eta$ are jointly normal then $ \min_{f \in\spn{\mathbf{1},\m{F}}}\E\left[(f(X) - \mu)^2\right] = \mathbb{E}[(\mu - \mathbb{E}[\mu|X])^2]$.}  For a fixed $T_0$, even the best function of the past is insufficient to account for all aspects of selection. As a result, one cannot guarantee that the approximation error is arbitrarily small by expanding $\m{F}$. This contrasts our setup with the usual analysis under unconfoundedness, where one typically introduces enough conditions on $\m{F}$ for the approximation error to be negligible (e.g., \citealp{hirano2003efficient,wang2020minimal}).
\end{remark}

\subsection{Examples}\label{subsec:exm}
In the three examples below, we maintain Assumptions \ref{as:pot_outcomes} - \ref{as:sel} and interpret the observed pre-treatment variables $X$ as realizations of the underlying potential outcomes $X(0)$. We consider different types of $X$ and different models for $X(0)$. The main goal of these examples is to convince the reader that the concept of effective pre-treatment periods  $\Tef(\m{F})$ is useful.  Our first example describes a two-way model in which $\Tef(\m{F})$ behaves as $T_0$. We then demonstrate that this behavior dramatically deteriorates in the presence of unobserved policy shocks. Finally, we show that the initial behavior of $\Tef(\m{F})$ can be restored if additional information is available. 

\paragraph{Two-way model:} Suppose $X = (Y_{1},\dots, Y_{T_0})$ and $Y = Y_{T_0+1}$, i.e., the pre-treatment information we observe are outcomes in the pre-treatment periods. Also, suppose that $p = T_0$ and $\phi_t(X) = Y_{t}$.
In addition, suppose that the baseline potential outcomes $Y_{t}(0)$ follow a two-way model:
\begin{equation*}
    Y_{t}(0) = \eta + \lambda_t + \epsilon_{t}, \quad \mathbb{E}[\eta] = 0, \quad \mathbb{E}[\epsilon_{t}|\eta] = 0,
\end{equation*}
where $\lambda_t$ is a fixed constant. We assume that $\epsilon_{t}$ are uncorrelated and have equal variance $\sigma^2$. In this case $\mu = \eta + \lambda_{T_0+1}$ and we have:
\begin{equation*}
    \mathbb{E}\qty[\qty(\mu - c_0 - \sum_{t = 1}^{T_0}c_t Y_{t})^2] = 
    \qty(\lambda_{T_0+1} - c_0 - \sum_{t = 1}^{T_0}c_t \lambda_t)^2 + \mathbb{V}[\eta]\qty(1- \sum_{t = 1}^{T_0}c_t)^2  + \sigma^2\sum_{t=1}^{T_0}c_{t}^2. 
\end{equation*}
Minimizing the last expression over $(c_0, c_1,\dots, c_{T_0})$ we get
\begin{equation*}
      \frac{1}{\Tef(\mathcal{F})} = \min_{f \in\spn{\mathbf{1},\m{F}}}\E\left[(f(X) - \mu)^2\right] = \frac{ \sigma^2\mathbb{V}[\eta]}{\mathbb{V}[\eta]T_0 +  \sigma^2}. 
\end{equation*}
It follows that $\Tef(\mathcal{F}) \sim T_0 $ and Assumption \ref{as:emp_set} holds.  This behavior is natural: each period provides new information about $\eta$, and thus the effective number of periods is equal to $T_0$. 

\paragraph{Unobserved policy shock:} We continue assuming that $X = (Y_{1},\dots, Y_{T_0})$, and use the same set $\mathcal{F}$ as in the previous example. However, suppose now that the baseline outcomes follow a model with interactive fixed effects (e.g., \citealp{holtz1988estimating}):
\begin{equation}
    Y_{t}(0) = \eta^{(1)} + \eta^{(2)} \psi_t + \lambda_t + \epsilon_{t}, \quad \mathbb{E}[\eta] = 0, \quad \mathbb{E}[\epsilon_{t}|\eta] = 0,
\end{equation}
where $\eta := \qty(\eta^{(1)}, \eta^{(2)})$, and $\psi_t = 0$ for $t< T_0$ and $\psi_t = 1$ for $t\ge T_0$. We can interpret $\psi_t$ as an aggregate policy shock that affects the relevant outcomes, with $\eta^{(2)}$ measuring its heterogeneous impact on different units.

Assuming that the coordinates of $\eta$ are uncorrelated and making the same assumptions on $\epsilon_{t}$ as in the previous example, we have $\mu = \eta^{(1)} + \eta^{(2)} + \lambda_{T_0+1}$ which leads to the following bound:
\begin{multline*}
    \mathbb{E}\qty[\qty(\mu - c_0 - \sum_{t = 1}^{T_0}c_t Y_{t})^2] = 
    \qty(\lambda_{T_0+1} - c_0 - \sum_{t = 1}^{T_0}c_t \lambda_t)^2 + \mathbb{V}[\eta^{(1)}]\qty(1- \sum_{t = 1}^{T_0}c_t)^2  + \\
    \mathbb{V}[\eta^{(2)}]\qty(1- c_{T_0})^2 + \sigma^2\sum_{t=1}^{T_0}c_{t}^2 \ge \frac{ \sigma^2\mathbb{V}[\eta^{(2)}]}{\mathbb{V}[\eta^{(2)}] +  \sigma^2}. 
\end{multline*}
This implies that $\Tef(\m{F})\sim1$, and thus Assumption \ref{as:emp_set} is violated. Again, this behavior should not be surprising: in this example, we have a single pre-treatment period that provides information about the relevant unobserved heterogeneity.

\paragraph{Addressing policy shocks:} As a next example, suppose that $X = (Y_{1},Z_{1},\dots, Y_{T_0}, Z_{T_0})$ and the variables $Y_{t}(0)$ and $Z_{t}(0)$ evolve according to the following model:
\begin{align*}
    Y_{t}(0) &= \eta^{Y} + \lambda_t^{Y} + \psi_t\eta^{Z} + \epsilon^Y_{t},\\
    Z_{t}(0) & = \eta^{Z} + \lambda_t^{Z} + \epsilon^Z_{t},
\end{align*}
where $\psi_t$ is the same as in the previous example, and
\begin{align*}
    \mathbb{E}\qty[\qty(\eta^{Y},\eta^{Z})] = 0,\quad  \mathbb{V}\qty[\qty(\eta^{Y},\eta^{Z})] = \mathcal{I}_2, \quad \mathbb{E}\qty[\left(\epsilon^Y_{t}, \epsilon^Z_{t}\right)|\eta^{Y},\eta^{Z},Y_{t-1}(0),Z_{t-1}(0),\dots  ] = 0.
\end{align*}
This example is a generalization of the previous one because now we have access to a noisy measurement of $\eta^{Z}$. Suppose that $p = 2\times T_0$ and $\phi_t(X) = Y_{t}$, $\phi_{T_0 + t}(X) = Z_{t}$ for $t \le T_0$. Computation analogous to the one for the two-way model demonstrates that $\Tef(\m{F}) \sim T_0$. As a result, the access to the additional variable that captures relevant unobserved heterogeneity can restore Assumption \ref{as:emp_set} even in situations with unobserved policy shock. 

How natural is it to assume that researchers have access to a variable like $Z_{t}$? The model described above is quite specific and is unlikely to be directly applicable. At the same time, an alternative way of interpreting this structure is to see that $Y_{t}(0)$ and $Z_t(0)$ describe outcome variables that depend on the same unit-level heterogeneity $(\eta^{Y}, \eta^{Z})$ but react differently to aggregate shocks. From this perspective, the model in this section is a simple example of a more flexible setup that can be used in a large class of applications. In Section \ref{subsec:var}, we further develop this interpretation by considering a joint dynamic model for $Y_{t}(0)$ and $Z_{t}(0)$.
\\

The first two examples describe extreme cases for the behavior of $\Tef(\m{F})$, which ranges from $1$ to $T_0$. The behavior in the second example is akin to identification failure and may be considered too pessimistic. On the other hand, the behavior of $\Tef(\m{F})$ in the two-way model is quite optimistic because all information from the past is directly applicable. In Section \ref{sec:exm}, we show that $\Tef(\m{F})\sim T_0$ in models with interactive fixed effects, as long as the underlying factors $\psi_t$ are ``strong'', but is smaller in more complicated models.

\subsection{Numerical experiments}\label{sec:sim}
In this section, we discuss several Monte-Carlo experiments in which we compare the performance of the SC method versus the event study specification with two-way fixed effects (TWFE). The outcome models in these experiments are designed to give no prior advantage to the more complicated method. The goal of this exercise is to show that the SC method is a competitive alternative to the DiD, but its performance is not always perfect. Our theoretical results in Section \ref{sec:results} provide formal statistical guarantees that explain this performance.

For each $t \in \{1,\dots,  T_0 + K\}$, we assume that the baseline outcomes are described by a two-way model:
\begin{equation*}
Y_{i,t}(0) = \eta_i + \lambda_t + \epsilon^{(d)}_{i,t},
\end{equation*}
with shocks following either a stationary autoregressive process ($d= AR$) or a random walk process ($d = RW$). We also specify a treatment effect for $t> T_0$ as a function of time:
\begin{equation*}
    Y_{i,t}(1) - Y_{i,t}(0) = \tau(t - T_0 - 1),
\end{equation*}
so that the treatment effect is zero in the first treatment period $T_0+1$ and then grows linearly. We use this specification for presentation purposes; the estimators we consider are invariant with respect to any treatment effect specification. Appendix \ref{ap:pars} contains the values of all parameters and additional details.

We consider two different models for $\epsilon^{(d)}_{it}$. The first one is a stationary AR(1) model:
\begin{equation*}
    \epsilon_{i,t}^{AR} = \rho \epsilon^{AR}_{i,t-1} + u^{AR}_{i,t}.
\end{equation*}
The underlying selection process is based on past errors in periods $T_0$ and $T_0-1$ and unobserved heterogeneity:
\begin{equation*}
    \pi _i = \mathbb{E}\left[\frac{\exp\left(\eta_i + \beta^{AR}_{T_0}\epsilon^{AR}_{i,T_0} + \beta^{AR}_{T_0-1}\epsilon^{AR}_{i,T_0-1} + \nu_{i}\right)}{1+ \exp\left(\eta_i + \beta^{AR}_{T_0}\epsilon^{AR}_{i,T_0} + \beta^{AR}_{T_0-1}\epsilon^{AR}_{i,T_0-1} + \nu_{i}\right)}| \eta_i, Y_{i,T_0},\dots\right].
\end{equation*}
where $\nu_i$ is a random coefficient unrelated to all other variables.  With this model, we want to capture the underlying dynamics in the outcomes and their connection with the selection process.\footnote{We introduce $\nu_i$ to guarantee that the performance of the estimators is not driven by the linearity of $\log\qty(\frac{\pi_i}{1-\pi_i})$ in $\eta_i$ and past shocks.}

The second model for $\epsilon_{it}^{(d)}$ is a random walk:
\begin{equation*}
    \epsilon^{RW}_{it} = \epsilon^{RW}_{it-1} + u^{RW}_{it}.
\end{equation*}
The corresponding selection process based on the error in period $T_0$:
\begin{equation*}
       \pi _i = \mathbb{E}\left[\frac{\exp\left( \beta^{RW}_{T_0}\epsilon_{i,T_0}+ \nu_{i}\right)}{1+ \exp\left(\beta^{RW}_{T_0}\epsilon_{i,T_0} + \nu_{i}\right)}| \eta_i, Y_{i,T_0},\dots\right].
\end{equation*}
Since $\epsilon_{i,T_0}$ is not directly observed, the assignment model implicitly depends on $\eta_i$ and past outcomes. The distinguishing feature of this model is that the outcomes have a strong stochastic trend, and the best-performing units have a higher chance of adopting the treatment. As a result, we expect big differences between treated and control units.

We compare the performance of the SC method to the standard two-way fixed effects estimator because the latter dominates the empirical practice. In particular, we consider an event-study specification:
\begin{equation}\label{eq:event_study}
    Y_{i,t} = \eta_{i} + \lambda_t + \sum_{k \ne -1} \tau_{k} D_i \{t - T_0 - 1 = k\} + \varepsilon_{i,t},
\end{equation}
which we estimate by the ordinary least squares (OLS) with two-way fixed effects. As an alternative to $\hat \tau_{k}^{TWFE}$, we consider the SC method described in Section \ref{sec:estimator}. We use $(Y_{i1},\dots, Y_{i,T_0})$ as features to construct the weights and set $\zeta^2 = 1$. We then apply these weights separately for $K+1$ post-treatment outcomes to construct $\hat \tau_{k}^{SC}$. To mimic the event study plot, we also construct $\hat \tau_{k}^{SC}$ for $k <0$ by applying the SC weights to the pre-treatment outcomes.

Regardless of the choice of model for $\epsilon_{i,t}^{(d)}$, the underlying DGP has the form \eqref{eq:event_study}, with $\tau_{k} = 0$ for $k <0$. As a result, the problematic performance of $\hat\tau_{k}^{TFWE}$ that we observe in some cases should not be attributed to any misspecification errors, e.g., heterogeneity in treatment effects emphasized in the recent work on the DiD-based estimators (e.g., \citealp{de2020two,callaway2021difference,goodman2021difference, sun2021estimating,borusyak2021revisiting}). Instead, the strict exogeneity does not hold in the models we consider, i.e., the adoption decisions are correlated with the time-varying parts of the outcomes conditional on the permanent unobserved heterogeneity.

\subsubsection{Comparison}

\begin{figure}[t!]
    \centering
    \includegraphics[scale = 0.4]{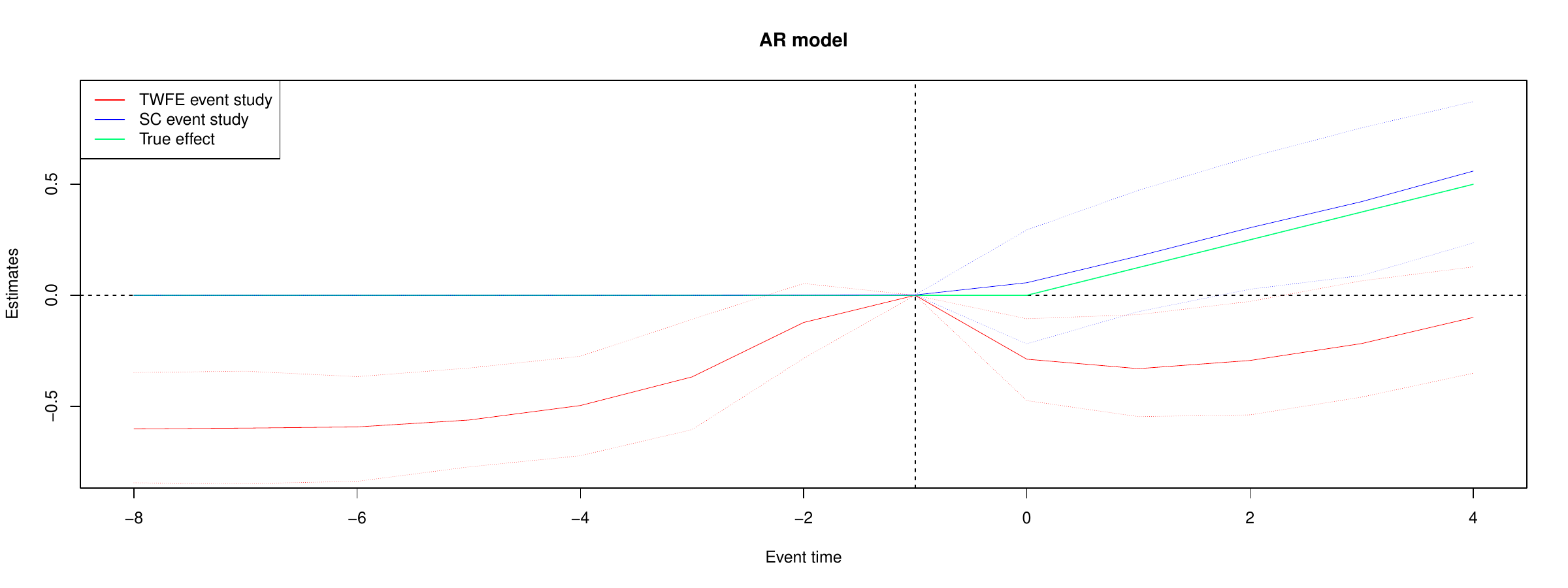}
    \caption{\footnotesize  AR design; the computation is based on $B = 200$ simulations, with each simulation having $n = 400$ units, $T_0 = 8$ pre-treatment periods, $K = 5$ treatment periods. The simulation parameters are reported in Appendix \ref{ap:pars}. The solid lines correspond to average results over the simulations. The dotted lines correspond to $5\%$ and $95\%$ quantiles of the distribution of the corresponding estimator in the simulations.}
    \label{fig:ar_model}
\end{figure}

\begin{figure}[t!]
    \centering
    \includegraphics[scale = 0.4]{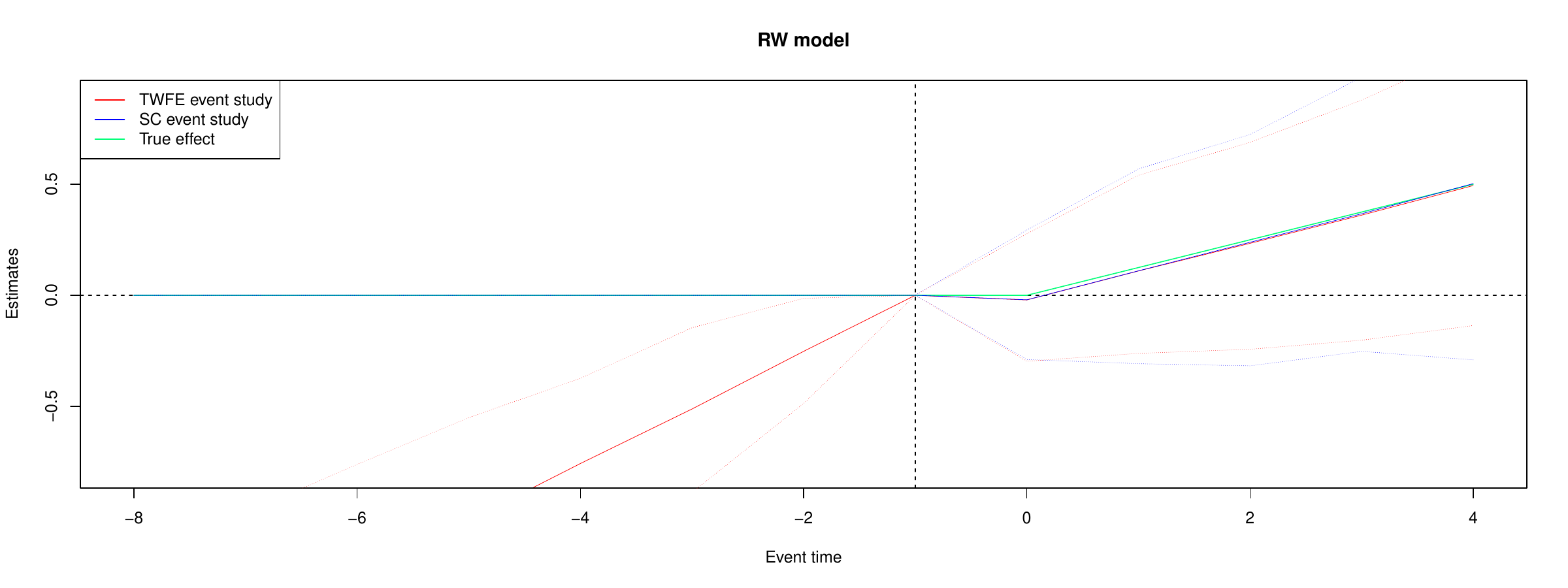}
    \caption{\footnotesize  RW design; the computation is based on $B = 200$ simulations, with each simulation having $n = 400$ units, $T_0 = 8$ pre-treatment periods, $K = 5$ treatment periods. The simulation parameters are reported in Appendix \ref{ap:pars}. The solid lines correspond to average results over the simulations. The dotted lines correspond to $5\%$ and $95\%$ quantiles of the distribution of the corresponding estimator in the simulations. }
    \label{fig:rw_model}
\end{figure}

We present the visual results in Figures \ref{fig:ar_model}-\ref{fig:rw_model}. Expectedly, the TWFE estimator is severely biased in the first simulation, with the bias being much larger than the estimation noise. At the same time, the SC estimator performs well in this simulation. The estimator is biased, which is in line with the theoretical results we present in Section \ref{sec:results}.  However, this bias is negligible compared to the noise, which itself is comparable to the noise of the TWFE estimator. In the second case, both estimators are nearly unbiased for the true treatment effect.  The lack of bias confirms the results in Section \ref{sec:results}: the conditional mean in the random walk model is a linear function of the past outcomes, and one of the relevant approximation errors is equal to zero. 

These two examples confirm our theoretical results and paint a positive picture for the SC estimator. It works well in environments where the DID estimator fails and remains competitive in environments where the DiD estimator is optimal. However, our theoretical results indicate that this behavior should depend on the ability of the set of features -- levels of past outcomes -- to approximate the conditional mean of the counterfactual outcome (Assumption \ref{as:emp_set}). We investigate this hypothesis by considering a straightforward generalization of the two models for which this assumption does not hold. In particular, we consider a simulation where $50\%$ of the observations are generated with the AR(1) model, and the rest are generated with the random walk model.   

\begin{figure}[t!]
    \centering
    \includegraphics[scale = 0.40]{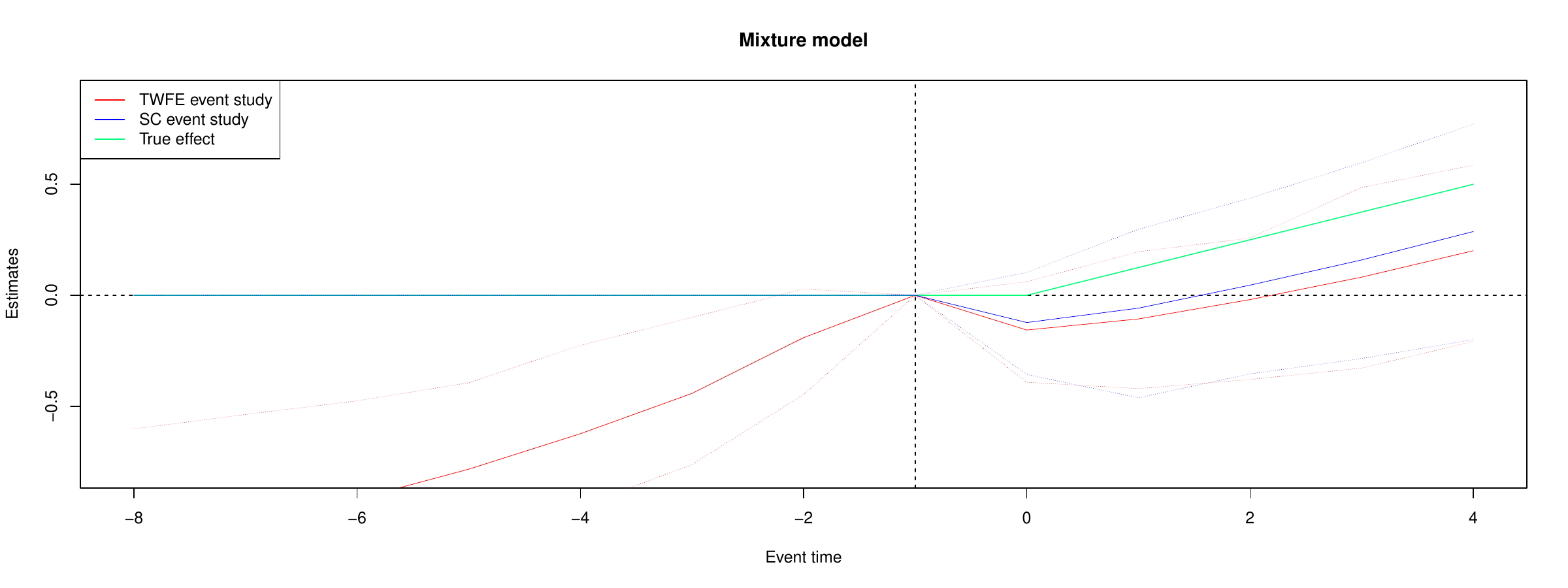}
    \caption{\footnotesize  Mixture design; the computation is based on $B = 200$ simulations, with each simulation having $n = 400$ units, $T_0 = 8$ pre-treatment periods, $K = 5$ treatment periods. The simulation parameters are reported in Appendix \ref{ap:pars}. The solid lines correspond to average results over the simulations. The dotted lines correspond to $5\%$ and $95\%$ quantiles of the distribution of the corresponding estimator in the simulations. }
    \label{fig:mix_model}
\end{figure}

The results for this simulation are visualized by Figure \ref{fig:mix_model}, and they are less favorable for the SC estimator. Its bias is comparable to that of the DiD estimator, and the estimator is more noisy. This failure might be surprising: a mixture of two two-way models remains a two-way model. The crucial difference between this simulation and the previous ones thus lies not in the specification of the levels but rather in the persistence of the time-varying shocks. Indeed, we can write down the model in the following form:
\begin{equation*}
    Y_{i,t}(0) = \eta_i + \lambda_t + \{\text{$i$ is from the  AR(1) model}\} \epsilon_{i,t}^{AR} + \{\text{$i$ is from the RW model}\} \epsilon_{i,t}^{RW}.
\end{equation*}
As a result, this model has an additional dimension of permanent heterogeneity. This dimension is relevant for the selection model and for the conditional mean of the counterfactual outcomes. The linear combination of past outcomes cannot distinguish the units that come from the AR(1) model from those that come from the random walk model, and Assumption \ref{as:emp_set} fails. 

Based on the results in this section and the formal results we derive in the following section, we recommend using the SC method in applications where the assignment process depends on the unobserved heterogeneity and past outcomes. The TWFE estimator is likely to fail in such applications, leaving applied researchers without a default option. SC method appears to be a natural alternative because, in contrast to conventional panel data estimators, it does not require users to specify a particular model. At the same time, the performance of the SC method is not always perfect, and our theoretical results in Section \ref{sec:results} outline the key reasons for its failure.

\subsubsection{Inference}

In general, inference based on SC estimators is challenging. In \cite{Abadie2010}, the authors focused on a permutation-based procedure, while subsequent research proposed alternative strategies (e.g., \citealp{chernozhukov2021exact, cattaneo2021prediction}). However, these challenges are mostly driven by the focus on applications with a single treated unit. For applications with many treated units \cite{arkhangelsky2021synthetic} show that conventional inference procedures based on unit-level bootstrap and $t$-statistic are asymptotically valid in models with a strictly exogenous assignment mechanism. The same results extend to our setup. 

In particular, we suggest that users conduct inference in two steps. First, they create bootstrap samples by randomly drawing $n$ units with replacements from the original dataset. In each bootstrap sample $s$ researchers  construct  $\hat \tau_{k}^{SC,s}$ and then compute the standard deviation of these estimates:
\begin{equation*}
    \hat \sigma\qty(\hat \tau_{k}^{SC,b}) := \sqrt{\frac{1}{S}\sum_{s=1}^S \qty (\hat \tau_{k}^{SC,s} - \frac{1}{S}\sum_{j=1}^S\hat \tau_{k}^{SC,j})^2}.
\end{equation*}
As a second step, researchers can use $  \hat \sigma\qty(\hat \tau_{k}^{SC,b})$ either to construct the $t$-statistic for a given null hypothesis
\begin{equation*}
    t_{stat} := \frac{\hat \tau_{k}^{SC} - \tau_{k} }{ \hat \sigma\qty(\hat \tau_{k}^{SC,b})},
\end{equation*}
and compare it to quantiles of the standard normal distribution or to construct a confidence interval (CI):
\begin{equation}\label{eq:conf_int}
    \tau_{k}  \in  \hat \tau_{k}^{SC} \pm q_{\frac{\alpha}{2}}  \hat \sigma\qty(\hat \tau_{k}^{SC,b}).
\end{equation}
If we use $\hat \tau_{k}^{TWFE}$ instead of $\hat \tau_{k}^{SC}$, then the described procedure corresponds to the conventional asymptotic inference for the TWFE estimator. We compare the inference procedures based on the two estimators using simulations.

\begin{figure}[t!]
    \centering
    \includegraphics[scale = 0.40]{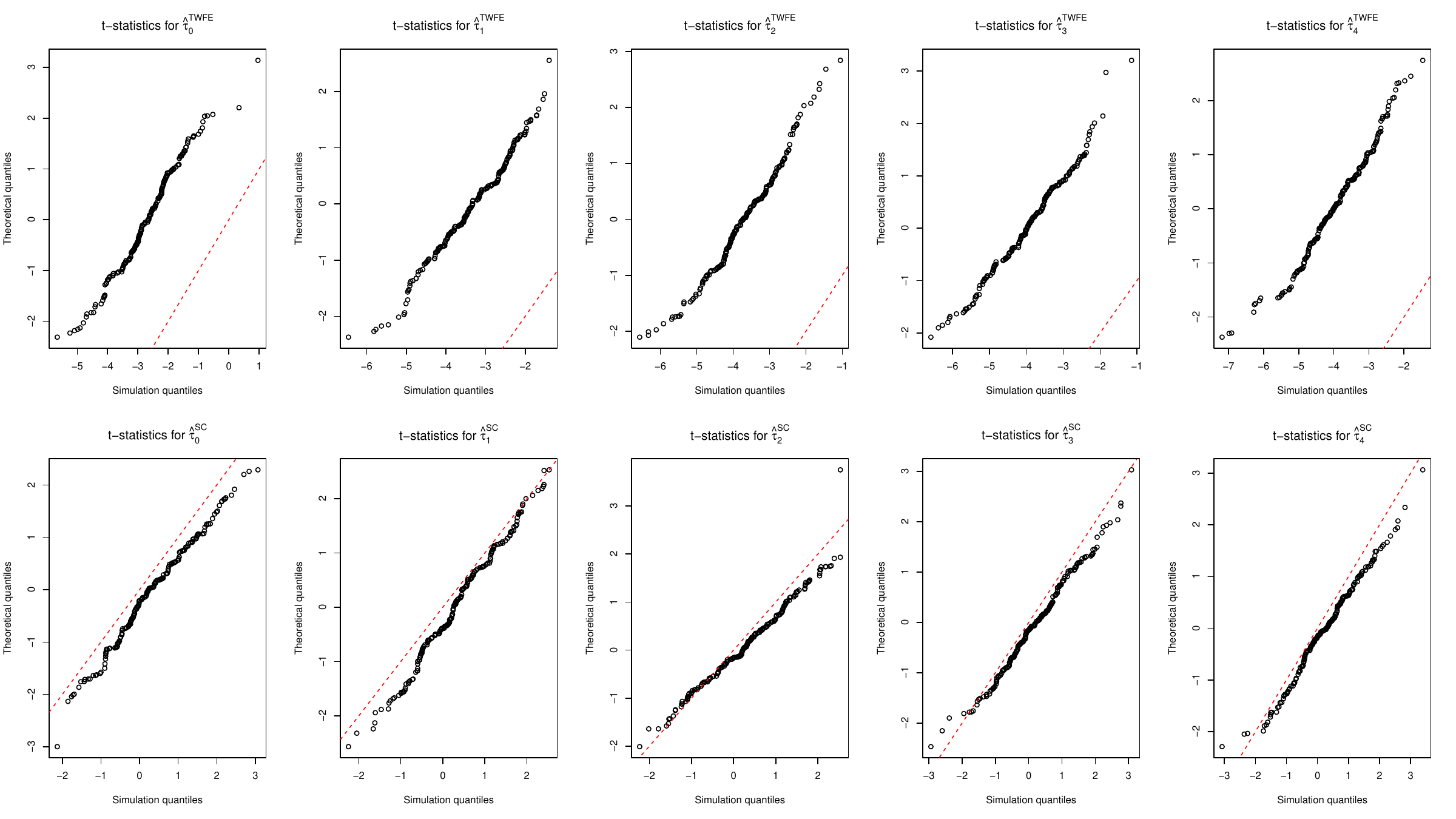}
    \caption{\footnotesize  Computations based on $B = 200$ simulations with AR design. Each simulation has $n = 400$ units, $T_0 = 8$ pre-treatment periods, $K = 5$ treatment periods. The simulation parameters are reported in Appendix \ref{ap:pars}. First row: QQ plots for $t$-statistics based on the TWFE estimator; second row: QQ plots for $t$-statistics based on the SC estimator. Variance for each estimator is computed using $100$ bootstrap samples. }
    \label{fig:inf_ar_model}
\end{figure}

Figure \ref{fig:inf_ar_model} describes the inference results for the TWFE estimator and SC estimator in AR design. Each graph corresponds to a quantile-quantile (QQ) plot that connects the distribution of the corresponding $t$-statistic in simulations with the quantiles of the standard normal distribution. As expected from the results in the previous section, $t$-statistic based on $\hat \tau_{k}^{TWFE}$ cannot be used for inference, with all its quantiles being shifted by the bias. Results are much more positive for the $\hat \tau_{k}^{SC}$. The distributions of the corresponding $t$-statistics are closely aligned with the standard normal distribution, albeit not perfectly, which is expected given a small bias evident from Figure \ref{fig:ar_model}. We report the coverage rates for the corresponding $95\%$ confidence intervals (CI) in Table \ref{table:coverage}. The results tell the same story as Figure \ref{fig:inf_ar_model}, with coverage for CI based on $\hat\tau_{k}^{TWFE}$ being below $20\%$ because of the bias, and the coverage for CI based on $\hat\tau_{k}^{SC}$ being close to its nominal $95\%$ level. The results for other designs (RW and mixture) confirm the visual evidence from Figures \ref{fig:rw_model} - \ref{fig:mix_model} and we report them in Appendix \ref{ap:pars}.

\begin{table}[t!]
\centering
\setlength{\extrarowheight}{0.2cm}
\begin{tabular}{|l|r|r|r|r|r|}
\hline\hline
\multirow{2}{*}{} & \multicolumn{5}{c|}{Effect in treatment period}\\\cline{2-6}
& 0 & 1 & 2 & 3 & 4 \\
\hline\hline
CI based on $\hat \tau_{k}^{TWFE}$ & 0.19 & 0.04 & 0.03 & 0.01 & 0.01 \\
CI based on $\hat \tau_{k}^{SC}$ & 0.93 & 0.95 & 0.93 & 0.93 & 0.94 \\
\hline\hline
\end{tabular}
\caption{\footnotesize  Coverage rates for $95\%$ confidence intervals based on $B = 200$ simulations with AR design. Each simulation has $n = 400$ units, $T_0 = 8$ pre-treatment periods, and $5$ treatment periods. The simulation parameters are reported in Appendix \ref{ap:pars}. First row: coverage rates based on $\hat \tau_{k}^{TWFE}$; second row: coverage rate based on  $\hat \tau_{k}^{SC}$. Confidence intervals are constructed using \eqref{eq:conf_int}.}
\label{table:coverage}
\end{table}

\subsubsection{Validation and placebo analysis}

One of the distinguishing features of the TWFE analysis is its reliance on the pretrends to test the underlying model. This practice has many potential problems (see \citealp{roth2022pretest,rambachan2023more}) but remains common in applied work. The results presented in Figure \ref{fig:rw_model} demonstrate that the lack of pretrends is not necessary for the event-study estimator to be unbiased, but the underlying DGP is quite specific. Figures \ref{fig:ar_model} and \ref{fig:mix_model} demonstrate that pretrends provide useful information in other cases. The situation with the SC estimator is different; by construction, the pre-treatment outcomes are balanced, and we do not observe any pretrends in Figures \ref{fig:ar_model} -- \ref{fig:mix_model}. 

\begin{figure}[t!]
    \centering
    \includegraphics[scale = 0.40]{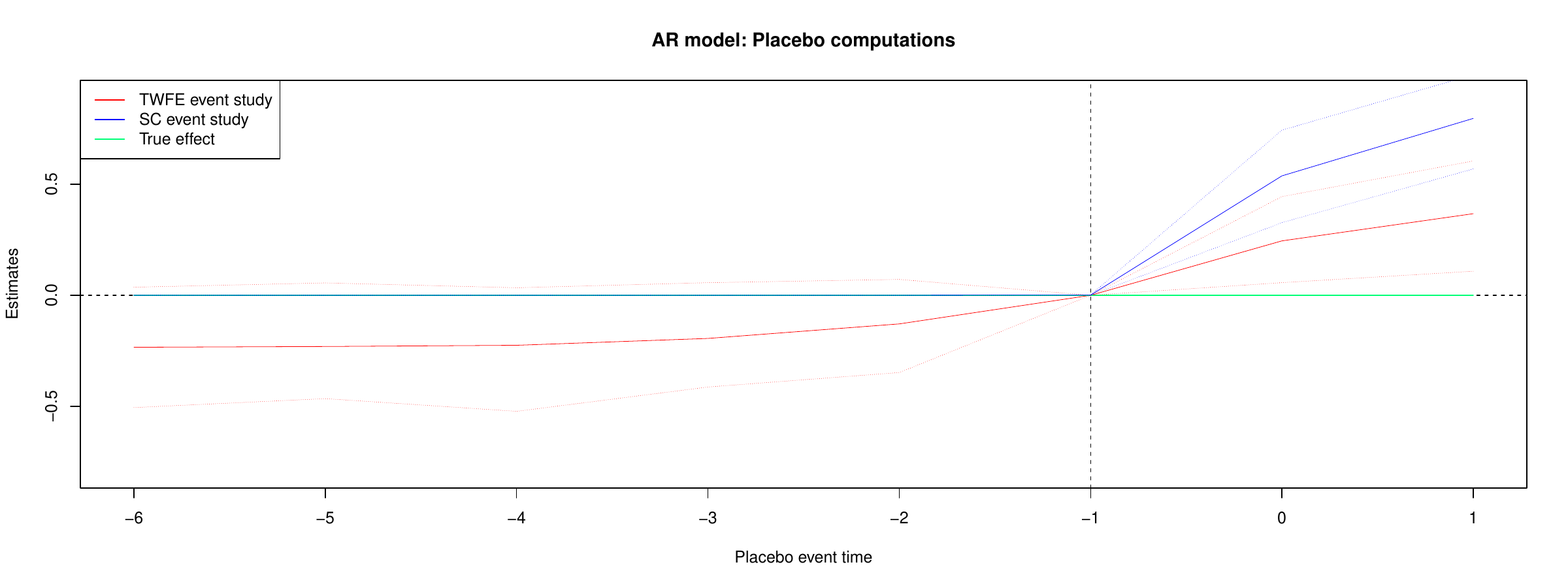}
    \caption{\footnotesize  The computation is based on $B = 200$ simulations, with each simulation having $n = 400$, $T_0 = 8$, $K = 5$. The simulation parameters are reported in Appendix \ref{ap:pars}. The solid lines correspond to average results over the simulations. The dotted lines correspond to $5\%$ and $95\%$ quantiles of the distribution of the corresponding estimator over the simulations. }
    \label{fig:plac}
\end{figure}

A different way of validating the model is to use a placebo analysis. In the context of the SC method, placebo evaluations take two forms that play distinct roles. First, as suggested in \cite{Abadie2010}, one can assign placebo treatments to control units, construct new estimates, and use them to quantify the uncertainty in the original estimator.  As we discussed in the previous section, this is a common inference technique in SC applications, but in the environments that we consider, one can use more conventional inference methods based on units-level bootstrap.

Alternatively, one can conduct a different placebo analysis by shifting the adoption period and checking if the SC estimator is close to zero for the placebo treatment periods. For the TWFE estimator, this exercise is equivalent to the standard analysis of the pretrends, but for the SC estimator, it delivers new information.  If the SC estimates are far away from zero in the placebo periods, one can interpret this as a failure of the underlying assumptions.  Unfortunately, such placebo exercises can deliver misleading results in the environments that we consider. 

To demonstrate this, we shift the (placebo) adoption time to period $T_0 - 1$ and use data from periods $1$ to $T_0-2$ to construct the SC weights and periods $T_0 -1$ and $T_0$ to conduct the placebo analysis. We present the results of this exercise for AR(1) simulation in Figure \ref{fig:plac}. Both estimators fail the placebo evaluation, delivering positive effects in the placebo periods in at least $95\%$ of simulations. This behavior is encouraging for the TWFE estimator, which is biased. However, the SC estimator's performance might appear puzzling, given the positive results presented in Figure \ref{fig:ar_model}. The failure is due to selection: the treated units are partly selected on the value of time-varying shocks in periods $T_0$ and $T_0-1$. As a result, these units tend to have larger outcomes in these periods, leading to biased estimates. Formally, Assumption \ref{as:sel} is not satisfied in this simulation. We do not recommend using such placebo evaluations for the validation of the SC estimator whenever one suspects this type of selection.

The results from Figure \ref{fig:plac} demonstrate that placebo exercises based on shifting the adoption periods can be misleading. This does not mean, though, that it is impossible to validate the performance of the SC estimator. Below, we describe one possible option, leaving the full theoretical analysis to future research. In particular, for each unit $i$, we compute the first-order autocorrelation coefficient using $T_0$ pre-treatment periods, which we denote by $\hat \rho_i$. We then calculate the difference between these coefficients among treated and control units using two different weights: uniform weights and the SC weights; we call the resulting coefficient $\hat \rho^{(k)}$, where $k   \in \{\text{Uniform; SC}\}$. We conducted this exercise for the three DGPs described before (AR, RW, and mixture) and for three different designs. The first design has the same number of units and periods as all our previous computations, and with the other two, we gradually increase the number of units and periods. To make comparisons across DGPs meaningful, in each case, we normalize $\hat \rho^{(k)}$ by its standard deviation (over simulations).

\begin{table}[t!]
    \centering
    \begin{tabular}{| m{4em} |c|c|c|c|c|c|}\hline
     & \multicolumn{2}{|c|}{Design 1} & \multicolumn{2}{|c|}{Design 2} &  \multicolumn{2}{|c|}{Design 3}  \\
     \hline
        DGP & Uniform & SC & Uniform & SC & Uniform & SC \\
        \hline\hline
       \multirow{2}{2em}{AR} & 0.05  & 0.10  &  0.01  & 0.02 & -0.06 & 0.08\\
       &  [-1.51, 1.72] & [-1.36, 1.79]  &[-1.77, 1.62] &  [-1.55, 1.47]  &[-1.74, 1.49] &  [-1.59,  1.65] \\
       \hline\hline
         \multirow{2}{2em}{Mixture} & 0.03 & -0.25 & -0.04 & -1.17 & 0.04 & -1.96 \\
         &  [-1.49, 1.65] & [-1.84, 1.32]  &  [-1.82, 1.44] & [-2.87, 0.37] & [-1.62, 1.63]  & [-3.65, -0.49]\\
         \hline\hline
         \multirow{2}{2em}{RW} & -0.14  & -0.06  & -0.06   &-0.06  & -0.01 & 0.07 \\
         & [-1.68, 1.61] &[-1.62, 1.89] &[-1.59, 1.61]  &[-1.61, 1.62] & [-1.78, 1.55] & [-1.54,  1.88]\\
    \hline\hline
    \end{tabular}
    \caption{\footnotesize Each sell reports the mean of $\hat \rho^{(k)}$ over $200$ simulations for $k \in \{\text{Uniform; SC}\}$. $5\%$ and $95\%$ quantiles (over simulations) are reported in the parenthesis. AR corresponds to AR(1) DGP, RW -- to the random walk DGP, and Mixture to the mixture DGP. Design 1 has $n = 400$ and $T_0 = 8$; Design 2 has $n = 3200$ and $T_0 = 12$; Design 3 has $n = 6000$ and $T_0 = 24$. }
    \label{tab:testing}
\end{table}

The results of this exercise are presented in Table \ref{tab:testing}. We can see that in all cases, the average (over stimulations) difference in the autocorrelation coefficients with uniform weights is close to zero, and the corresponding quantiles show that the distribution of $\hat \rho^{uniform}$ is dominated by noise. The situation is similar for the SC-based estimator for all designs and DGPs except the mixture one. In the latter case, we can already see a weak signal for the first design with an average of $-0.25$, which is dominated by the noise. The signal gets much stronger for the second design, with the average of $-1.17$ and $95\%$ quantile being closer to zero. Finally, the signal is overwhelming for the third design, with its $95\%$ quantile being equal to $-0.49$. A researcher who would have used this procedure to test the validity of the SC estimator would have rightly concluded that it has problems in the case of the mixture DGP.

\begin{remark}
    The failure of parallel trends for the AR(1) model is expected in light of the theoretical results in \cite{ghanem2022selection} for the DiD estimator. As shown in  \cite{ghanem2022selection}, the random walk is essentially a unique model under which the DiD estimator is consistent, even if the assignment process is not strictly exogenous, which is the reason for the success of the TWFE estimator in the RW simulation.
\end{remark}

\section{Theoretical results}\label{sec:results}
This section presents our abstract theoretical results for two asymptotic regimes. The first regime assumes that the share of treated units is asymptotically vanishing and is close in spirit to the traditional analysis for the SC method. In the second regime, the share of treated units is constant, which is more natural for economic applications where researchers currently use the DiD estimator. We show that the asymptotic behavior of the estimator differs across the two scenarios. To streamline the presentation, we describe our results first and then discuss the technical statistical assumptions we need to prove them in addition to those described in the previous section.

To state our results, we introduce additional notation. First, we define the log-odds, $\lo:= \log\qty(\frac{\pi}{1-\pi})$, and a loss function $\ell(x) := \exp(-x) + x -1$.\footnote{By definition $\ell(0) = \ell^{\prime}(0) = 0$ and $\ell^{\prime\prime}(0) = 1$, as a result for $x$ close to zero we have $\ell(x) \approx \frac{x^2}{2}$.} We also associate any function $f\in \mathcal{F}$ with the corresponding random variable $f(X)$. We use these objects to define the bias term:\footnote{Here $\| \cdot \|_{\m{F}}$ is the gauge of $\m{F}$ extended to $\spn{ \mathbf{1},\m{F},\mu}$, see Appendix \ref{ap:defs} for the formal definition.}
\begin{equation}\label{eq:bias}
\begin{aligned}
\bias
&:= \Pnn\frac{\pi_i \exp(\muplo_i -\lo_i)}{\E[\pi]}(\plo_i - \muplo_i)(\mu_i-\pmu_i), \qqtext { where } \\
\plo &:=  \argmin_{f \in \spn{ \mathbf{1},\m{F}}} \E\left[\ell(\lo - f)| D = 1\right] + \frac{\zeta^2}{2n}\|f\|^2_{\m{F}}, \\ 
\muplo &:=\argmin_{f \in \spn{ \mathbf{1},\m{F},\mu}}  \E\left[\ell(\lo - f)| D = 1\right] +\frac{\zeta^2}{2n}\|f\|^2_{\m{F}}, \\
\pmu &:= \argmin_{f\in \spn{\mathbf{1},\m{F}}} \E\qty[\exp(\muplo -\lo)(\mu - f)^2|D=1].
\end{aligned}
\end{equation}
In general, $\bias\ne 0$ because we cannot control perfectly for the unobservables. It has a familiar product structure typical for estimators of average effects, with one part coming from the outcome model and the second coming from the assignment model. In particular, the outcome part of the bias quantifies the difference between the conditional mean $\mu$ and its weighted projection $\pmu$. Assumption \ref{as:emp_set} suggests that this error behaves as $\frac{1}{\sqrt{\Tef}}$, which is indeed the case under additional technical assumptions we describe in the next section.\footnote{For brevity we suppress the dependence of $\Tef$ on $\m{F}$ whenever it does not cause confusion.} 

The assignment part of the bias is different and describes the discrepancy between two different projections of the log-odds $\lo$. The first projection, $\plo$, uses functions in $\m{F}$ to predict $\lo$. The second projection, $\muplo$, also uses $\mu$ to predict the log-odds. Neither $\plo$ nor $\muplo$ are assumed to be close to true log-odds $\lo$ even as $T_0$ gets large. Similarly to the first error, we show that the difference $\plo-\muplo$ behaves as $\frac{1}{\sqrt{\Tef}}$ implying that the overall bias term behaves as $\frac{1}{\Tef}$.

In our discussions below, we routinely use the fact that $\bias = O_p\qty(\frac{1}{\Tef})$. This is a worst-case bound, which we guarantee under weak assumptions. In practice, the bias can be smaller for several reasons. For example, the error $\plo_i - \muplo_i$ can be small because $\mu$ is not particularly relevant for predicting log-odds (in addition to functions in $\mt{F}$). Also, it might be the case that the two errors $\plo_i - \muplo_i$ and $\mu_i-\pmu_i$ are uncorrelated, making $\bias$ negligible.

The theoretical results we present next show that $\bias$ indeed describes the asymptotic bias of the SC method in both asymptotic regimes. This shows that $\Tef$ is a crucial parameter for the performance of the SC method. Depending on how large $\Tef$ is compared to the sample size, the SC method is either asymptotically unbiased and thus can be used for inference or is dominated by the bias. In Section \ref{sec:exm}, we show how $\Tef$ depends on the complexity of the underlying model.

\subsection{Vanishing treatment share}
Our first result describes the behavior of the SC method in the regime where the share of treated units is small.
\begin{theorem}\label{th:van_share}
    Suppose $\mathcal{F}$ is given by \eqref{eq:main_text_set},  Assumptions \ref{as:pot_outcomes} - \ref{as:emp_set}, and \ref{as:subg} - \ref{as:out_mom} hold, $1 \gg \E[\pi]  \gg  \frac{1}{\sqrt{n}}$, and $\zeta = O(1)$. Then we have:
    \begin{equation*}
          \hat \tau - \tau = \bias  + \Pnn \frac{D_i - \pi_i}{1-\pi_i}\frac{\epsilon_i}{\E[\pi]}+ o_p\left(\frac{1}{\sqrt{\E[\pi]n}}\right) + o_p\left(\frac{1}{\Tef}\right),
          \qqtext{ with } \bias = O_p\left(\frac{1}{\Tef}\right).
    \end{equation*}
\end{theorem}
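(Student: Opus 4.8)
The plan is to exploit the dual characterization of the entropy-balancing weights $\hat\omega$ together with the built-in conditional mean-independence of the weights from the outcome errors $\epsilon_i$. Write the error as $\hat\tau - \tau = \Pn(\frac{D_i}{\overline\pi} - \hat\omega_i(1-D_i))Y_i(0)$ and decompose $Y_i(0) = \mu_i + \epsilon_i$. This splits the error into an \emph{outcome-model term} $\Pn(\frac{D_i}{\overline\pi} - \hat\omega_i(1-D_i))\mu_i$ and a \emph{noise term} $\Pn(\frac{D_i}{\overline\pi} - \hat\omega_i(1-D_i))\epsilon_i$. For the noise term, the key is that $\hat\omega_i$ depends on the data only through $(X_i, D_i)$, so conditionally on $\{(X_i,D_i)\}_{i\le n}$ the term $\Pn\hat\omega_i(1-D_i)\epsilon_i$ is mean-zero; I would show it is $o_p(1/\sqrt{\E[\pi]n})$ by bounding its conditional variance, which requires controlling $\Pn\hat\omega_i^2(1-D_i)$ — i.e., showing the realized weights are not too spread out. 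This is where the entropy regularization $\zeta^2$ and the dual problem enter: the optimal $\hat\omega_i$ has the form $\hat\omega_i \propto \exp(\hat f(X_i)/\zeta^2)$ (up to normalization and the $D_i$ structure) for the dual optimizer $\hat f$, and one bounds $\hat f$ in an appropriate norm. The $\Pn\frac{D_i}{\overline\pi}\epsilon_i$ piece combines with the control piece; after replacing $\overline\pi$ by $\E[\pi]$ (a $1+o_p(1)$ correction since $\E[\pi]\gg 1/\sqrt n$) and the population weights, one recognizes the influence function $\Pn\frac{D_i-\pi_i}{1-\pi_i}\frac{\epsilon_i}{\E[\pi]}$ plus a remainder governed by how close $\hat\omega_i$ is to the oracle inverse-propensity weights $\frac{\pi_i}{(1-\pi_i)\E[\pi]}$.

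For the outcome-model term, I would use duality to relate $\hat\omega$ to a finite-sample version of the projection $\plo$ of the log-odds $\theta$ onto $\spn{\mathbf 1,\mathcal F}$. The entropy-balancing dual is (essentially) an empirical risk minimization of $\E_n[\ell(\theta_i - f(X_i))\mid D_i=1]$-type loss with a ridge penalty $\frac{\zeta^2}{2n}\|f\|^2_{\mathcal F}$, so $\hat\omega_i(1-D_i)$ tracks $\exp(\tilde\theta_i - \theta_i)\cdot\frac{\pi_i}{(1-\pi_i)\E[\pi]}$-type quantities. Then $\Pn(\frac{D_i}{\overline\pi} - \hat\omega_i(1-D_i))\mu_i$ is, to first order, a sample average that concentrates around $\E[\frac{\pi}{\E[\pi]}(1 - \exp(\tilde\theta - \theta))\mu]$; expanding $1-\exp(\tilde\theta-\theta)\approx (\theta - \tilde\theta) + \ldots$ and using the first-order optimality conditions defining $\plo$, $\muplo$, and $\pmu$, the leading surviving term is exactly $\bias$, which by its product structure $(\plo - \muplo)(\mu - \pmu)$ and Cauchy–Schwarz is $O_p(\|\plo-\muplo\|_2\cdot\|\mu-\pmu\|_2)$. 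By Assumption \ref{as:emp_set} and the technical assumptions \ref{as:subg}–\ref{as:out_mom}, each factor is $O_p(1/\sqrt{\Tef})$, giving $\bias = O_p(1/\Tef)$; the higher-order terms in the expansion of $\ell$ and the sampling fluctuation around the population projections are absorbed into $o_p(1/\Tef)$ and $o_p(1/\sqrt{\E[\pi]n})$.

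Concretely, the steps in order are: (i) pass to the dual of \eqref{eq:pr_problem} and write $\hat\omega$ in exponential-family form with dual optimizer $\hat f\in\spn{\mathbf 1,\mathcal F}$; (ii) establish consistency/rate of $\hat f$ for $\plo$ in $L^2$ and a uniform-in-$\mathcal F$ concentration bound (using subgaussianity of the features, Assumption \ref{as:subg}) that controls $\Pn\hat\omega_i^2(1-D_i)$ and the empirical imbalance $\sup_{f\in\mathcal F}[\Pn\hat\omega_i(1-D_i)f - \Pn\frac{D_i}{\overline\pi}f]$; (iii) linearize $\hat\tau-\tau$ around the population quantities, peeling off $\bias$ via the defining optimality conditions of $\plo,\muplo,\pmu$; (iv) bound the noise term by the conditional-variance argument, yielding the influence function and the $o_p(1/\sqrt{\E[\pi]n})$ remainder; (v) handle the $\overline\pi \to \E[\pi]$ replacement. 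The main obstacle I anticipate is step (ii)–(iii): getting a \emph{sharp} $O_p(1/\Tef)$ (rather than a crude $O_p(1/\sqrt{\Tef})$) for the outcome-model term requires carefully using that both the assignment-side error $\plo-\muplo$ and the outcome-side error $\mu-\pmu$ are individually small and that their product — not either factor alone — drives the bias; this is precisely the Neyman-orthogonality-type cancellation, and making it rigorous in the presence of the estimated $\hat f$ (as opposed to population $\plo$) demands a delicate second-order remainder analysis controlling cross terms between estimation error in the weights and the outcome approximation error, uniformly over $\mathcal F$ and the vanishing-$\E[\pi]$ regime.
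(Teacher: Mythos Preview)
Your overall strategy---dualize, expand around population projections $\plo,\muplo,\pmu$, exploit the product structure to get $\bias=O_p(1/\Tef)$, and handle the noise via conditional independence of the weights from $\epsilon_i$---matches the paper. Two corrections, however. First, the dual form is wrong: Lemma~\ref{lem:dual} gives $(1-D_i)\hat\omega_i=\frac{1-D_i}{\overline\pi}\exp(\hlo_i)$, with $\zeta^2$ entering only as a ridge penalty in the dual objective, not as $\exp(\hat f/\zeta^2)$. Second, and more substantively, the paper does \emph{not} use your two-way split $Y_i(0)=\mu_i+\epsilon_i$. It introduces an \emph{empirical} oracle $\hmu:=\argmin_{f\in\mt{F}}\Pn\frac{\pi_i}{\E[\pi]}\exp(\muplo_i-\lo_i)(\mu_i-f_i)^2$ and splits three ways: $\xi_1$ (noise, against $\epsilon_i$), $\xi_2$ (against $\mu_i-\hmu_i$), and $\xi_3$ (against $\hmu_i\in\mt{F}$). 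The point of $\hmu$ is that $\xi_3$ is then killed directly by the \emph{empirical} first-order condition for $\hlo$ (Lemma~\ref{lem:balancing}), and---crucially---the empirical FOC for $\hmu$ itself (equation \eqref{eq:foc_hmu}) annihilates the cross term $\Pn\frac{\pi_i}{\E[\pi]}\exp(\muplo_i-\lo_i)(\hlo_i-\plo_i)(\mu_i-\hmu_i)$ that would otherwise be the hardest piece of the second-order remainder you flag in step~(iii).

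Your approach, using only population projections, can in principle work, but you have not explained how you handle the term $\Pn[(1-D_i)\hat\omega_i-\tfrac{D_i}{\overline\pi}]\mu_i$ when $\mu\notin\mt{F}$. If you split $\mu=\pmu+\nu_\mu$, the empirical FOC for $\hlo$ kills the $\pmu$ piece up to penalty, but the $\nu_\mu$ piece then contains a product of the \emph{random} estimation error $\hlo-\plo$ with the \emph{fixed} $\nu_\mu$, for which only the population FOC \eqref{eq:foc_pmu} is available; the resulting centered empirical-process term $(\Pn-\P)[\cdots](\hlo_i-\plo_i)\nu_{\mu,i}$ needs a uniform multiplier bound you have not supplied. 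The paper's $\hmu$ device sidesteps this by moving the orthogonality from population to sample. Separately, your statement that the noise remainder is ``governed by how close $\hat\omega_i$ is to the oracle inverse-propensity weights $\frac{\pi_i}{(1-\pi_i)\E[\pi]}$'' is imprecise: the weights converge to $\frac{\pi_i\exp(\plo_i-\lo_i)}{(1-\pi_i)\E[\pi]}$, not to the oracle IPW, and it is only in the vanishing-share regime that the discrepancy (the $u$ terms in Theorem~\ref{th:const_share}) becomes negligible---the paper proves Theorem~\ref{th:van_share} precisely by first establishing the more general expansion and then showing these extra terms are $o_p(1/\sqrt{\E[\pi]n})$ when $\E[\pi]\to 0$.
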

This result describes the behavior of the SC estimator in settings where the share of the treated units is vanishingly small. It is close in spirit to the results on the SC method that have a single treated unit (e.g., \citealp{Abadie2010, ferman2021synthetic,chernozhukov2021exact}). However, the lower bound on $\E[\pi]$ implies that the total number of treated units is increasing to infinity, similarly to  \cite{arkhangelsky2021synthetic}. We expect this result to be useful in applications where the share of treated units is very small, e.g., $1\%$ of the sample is treated. 

Theorem \ref{th:van_share} shows that the estimation error has two dominant terms. The first term is the bias discussed above, which, under the assumptions of the theorem, behaves as $O_p\qty(\frac{1}{\Tef})$. The second term of the error is the noise term, which also has a product structure with errors coming from randomness in the treatment assignment and unpredictable noise in the outcomes. The standard deviation of this term is on the order of $\frac{1}{\sqrt{\E[\pi]n}}$. This behavior has a straightforward implication for inference. In particular, as long as $\Tef$ is larger than $\sqrt{\E[\pi]n}$, the SC estimator is dominated by the noise, guaranteeing its asymptotic normality. Formally, we have the following corollary.
\begin{corollary}\label{cor:as_norm_van}
    Suppose conditions of Theorem \ref{th:van_share} hold, and $\Tef^2 \gg \E[\pi]n$. Then we have
    \begin{equation*}
        \hat \tau - \tau = \Pnn \frac{D_i - \pi_i}{1-\pi_i}\frac{\epsilon_i}{\E[\pi]}+ o_p\left(\frac{1}{\sqrt{\E[\pi]n}}\right),
    \end{equation*}
    and $\sqrt{\overline \pi n}(\hat \tau - \tau)\rightharpoonup_{d} \mathcal{N}(0, \sigma^2_{van})$, where $\sigma^2_{van} =     \E\left[\epsilon_{i}^2 | D= 1\right]$.
\end{corollary}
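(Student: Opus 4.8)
\textbf{Proof proposal for Corollary \ref{cor:as_norm_van}.}

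The plan is to derive the asymptotic representation directly from Theorem \ref{th:van_share} and then verify that the leading term satisfies a central limit theorem. The first step is essentially bookkeeping: Theorem \ref{th:van_share} already gives
\begin{equation*}
    \hat\tau - \tau = \bias + \Pn\frac{D_i-\pi_i}{1-\pi_i}\frac{\epsilon_i}{\E[\pi]} + o_p\qty(\frac{1}{\sqrt{\E[\pi]n}}) + o_p\qty(\frac{1}{\Tef}),
\end{equation*}
with $\bias = O_p(1/\Tef)$. Under the additional hypothesis $\Tef^2 \gg \E[\pi] n$, both the bias term $O_p(1/\Tef)$ and the remainder $o_p(1/\Tef)$ are $o_p(1/\sqrt{\E[\pi]n})$ — this is exactly what $\Tef^2 \gg \E[\pi]n$ buys us, up to the log factors absorbed in the $\gg$ notation. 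Folding these into the existing $o_p(1/\sqrt{\E[\pi]n})$ error gives the stated first display.

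The second step is to establish asymptotic normality of $\sqrt{\overline\pi n}$ times the leading term. Write $\overline\pi = \Pn D_i$ and note $\overline\pi / \E[\pi] \to_p 1$ by the law of large numbers (using $\E[\pi] \gg 1/\sqrt n \to 0$ is compatible since $\overline\pi$ concentrates around $\E[\pi]$ relatively; more carefully, $\mathbb{V}[\overline\pi] = \E[\pi(1-\pi)]/n \lesssim \E[\pi]/n$, so $\overline\pi/\E[\pi] = 1 + O_p(1/\sqrt{\E[\pi]n}) = 1 + o_p(1)$). Hence by Slutsky it suffices to show
\begin{equation*}
    \sqrt{\E[\pi]n}\,\Pn\frac{D_i-\pi_i}{1-\pi_i}\frac{\epsilon_i}{\E[\pi]} = \frac{1}{\sqrt{n\E[\pi]}}\sum_{i=1}^n \frac{(D_i-\pi_i)\epsilon_i}{1-\pi_i} \rightharpoonup_d \mathcal{N}\qty(0,\E[\epsilon_i^2\mid D=1]).
\end{equation*}
The summands are i.i.d. (Assumption \ref{as:sel}(a)) with mean zero, since $\E[(D_i-\pi_i)\epsilon_i \mid \eta_i, X_i] = \epsilon_i\,\E[D_i-\pi_i\mid\eta_i,X_i] = 0$ — here I use that $\epsilon_i = Y_i(0)-\mu_i$ is measurable with respect to $(\eta_i,X_i)$ together with the potential outcome, and that $D_i \independent Y_i(0) \mid \eta_i, X_i$, so conditioning is legitimate. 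The variance of a single summand is $\E\qty[\frac{(D_i-\pi_i)^2}{(1-\pi_i)^2}\epsilon_i^2] = \E\qty[\frac{\pi_i(1-\pi_i)}{(1-\pi_i)^2}\epsilon_i^2] = \E\qty[\frac{\pi_i}{1-\pi_i}\epsilon_i^2]$, using $\E[(D_i-\pi_i)^2\mid\eta_i,X_i] = \pi_i(1-\pi_i)$. Dividing by $\E[\pi]$, the normalized variance is $\E\qty[\frac{\pi_i}{(1-\pi_i)\E[\pi]}\epsilon_i^2]$; as $\E[\pi]\to 0$ the overlap and moment conditions (Assumptions \ref{as:subg}--\ref{as:out_mom}, which among other things should control $\pi/\E[\pi]$ and the tails of $\epsilon$) give $\pi_i/(1-\pi_i) \approx \pi_i$ and $\E[\pi_i \epsilon_i^2]/\E[\pi_i] \to \E[\epsilon_i^2\mid D=1]$ by the definition of conditioning on $\{D=1\}$ as a $\pi$-weighted average — this limit identity is where the $\sigma^2_{van} = \E[\epsilon_i^2\mid D=1]$ comes from.

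The third step is to check the Lindeberg (or Lyapunov) condition for this triangular array, since the summand distribution changes with $n$ through the shrinking $\E[\pi]$. The natural route is a Lyapunov bound: show $\E\big|\frac{(D_i-\pi_i)\epsilon_i}{(1-\pi_i)\sqrt{\E[\pi]}}\big|^{2+\delta} = o(n^{\delta/2})$ for some $\delta>0$, which reduces (by overlap and boundedness of $1/(1-\pi)$) to controlling $\E[\pi_i^{1+\delta/2}|\epsilon_i|^{2+\delta}]/\E[\pi]^{1+\delta/2}$ against $n^{\delta/2}$; this should follow from the subgaussian/moment assumptions on $\epsilon$ and the assumed rate $\E[\pi]n \to \infty$. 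I expect this verification of the Lindeberg condition under the vanishing-share scaling to be the main technical obstacle — everything else is Slutsky plus the conditional-independence computations above — but it should be routine given Assumptions \ref{as:subg}--\ref{as:out_mom}. Finally, combine the CLT for the leading term with the $o_p(1/\sqrt{\E[\pi]n})$ remainder and $\overline\pi/\E[\pi]\to_p 1$ via Slutsky to conclude $\sqrt{\overline\pi n}(\hat\tau-\tau)\rightharpoonup_d \mathcal{N}(0,\sigma^2_{van})$.
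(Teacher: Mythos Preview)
Your proposal is correct and matches the paper's intended argument: the paper does not give a separate proof of this corollary, treating the first display as an immediate consequence of Theorem~\ref{th:van_share} once $\Tef^2 \gg \E[\pi]n$ makes $O_p(1/\Tef)=o_p(1/\sqrt{\E[\pi]n})$, and the asymptotic normality then follows from a triangular-array CLT for the iid mean-zero summands $\frac{(D_i-\pi_i)\epsilon_i}{1-\pi_i}$ combined with $\overline\pi/\E[\pi]\to_p 1$ via Slutsky. One small caution on the step you flag as the obstacle: Assumption~\ref{as:out_mom} only bounds the conditional \emph{second} moment of $\epsilon$, so verifying Lyapunov/Lindeberg in the triangular array really does need a little care (e.g., invoking uniform integrability of $\epsilon^2$ or a higher conditional moment implicit in the paper's linear-model examples); the paper leaves this implicit as well.
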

This result implies that the estimator is asymptotically linear and unbiased. Standard tools, such as unit-level bootstrap, can be used to estimate the variance and conduct asymptotically valid inferences. In the extreme case where $\E[\pi]$ approaches $\frac{1}{\sqrt{n}}$, Corollary \ref{cor:as_norm_van} requires $\Tef^2 \gg \sqrt{n}$. It implies that the SC estimator can be asymptotically unbiased even when $\Tef$ is essentially of the order of $n^{\frac14}$ as long as the share of treated units is minimal.  This justifies using this estimator for inference in applications where $\Tef$ is not very large and $\E[\pi]$ is very small.


\subsection{Non-vanishing treatment share}

Our next result describes the asymptotic behavior of the estimator in the regime where the share of treated units remains constant. To state this result, we introduce an additional error term:
\begin{equation}\label{eq:u_def}
    u:= \exp(\muplo - \lo) - 1,
\end{equation}
which is a population residual in the optimization problem that defines $\muplo$. Our next result shows that it affects the asymptotic variance of the SC method. 
\begin{theorem}\label{th:const_share}
     Suppose $\mathcal{F}$ is given by \eqref{eq:main_text_set}, Assumptions \ref{as:pot_outcomes} - \ref{as:emp_set}, and \ref{as:subg} - \ref{as:out_mom} hold; $\zeta = O(1)$ and $1 \gtrsim \E[\pi]  \gg  \frac{1}{\sqrt{n}}$. Then, 
\begin{equation*}
\begin{aligned}
        \hat \tau - \tau &= \bias +
        \Pnn \frac{D_i - \pi_i}{1-\pi_i}\frac{(\pi_iu_i + 1)}{\E[\pi]}\epsilon_i+\Pnn\frac{\pi_iu_i}{\E[\pi]} \epsilon_i +
        o_p\left(\frac{1}{\Tef}\right) + o_p\left(\frac{1}{\sqrt{\E[\pi]n}}\right), \\
        &\qqtext{ with } \bias = O_p\left(\frac{1}{\Tef}\right).
\end{aligned}
\end{equation*}
\end{theorem}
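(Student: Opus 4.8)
The plan is to split the error into a $\mathcal G$-conditional mean plus a $\mathcal G$-conditionally centered part, where $\mathcal G:=\sigma(\{(\eta_i,X_i,D_i)\}_{i\le n})$, and to analyze the two pieces through the dual characterization of the weights. By Assumption~\ref{as:pot_outcomes}, $\hat\tau-\tau=\Pn\bigl(\tfrac{D_i}{\overline{\pi}}-\hat\omega_i(1-D_i)\bigr)Y_i(0)$; since $\hat\omega$ depends only on $\{(X_i,D_i)\}$ and Assumption~\ref{as:sel} gives $\E[\epsilon_i\mid\mathcal G]=0$, substituting $Y_i(0)=\mu_i+\epsilon_i$ writes the error as a bias term $\Pn\bigl(\tfrac{D_i}{\overline{\pi}}-\hat\omega_i(1-D_i)\bigr)\mu_i=\E[\hat\tau-\tau\mid\mathcal G]$ plus a noise term $\Pn\bigl(\tfrac{D_i}{\overline{\pi}}-\hat\omega_i(1-D_i)\bigr)\epsilon_i$; I would match the former to $\bias$ and the latter to the two displayed sums.

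First I would establish the dual representation of the weights. Fenchel duality applied to \eqref{eq:pr_problem} shows $\hat\omega_i(1-D_i)=(1-D_i)\exp(\hat\theta(X_i))$ for a fitted log-odds $\hat\theta\in\spn{\mathbf{1},\m{F}}$ solving a ridge-penalized strictly convex problem whose first-order conditions are exactly the near-balancing relations $\Pn\bigl(\tfrac{D_i}{\overline{\pi}}-\hat\omega_i(1-D_i)\bigr)f(X_i)=O_p(\zeta^2/n)$ for every $f\in\spn{\mathbf{1},\m{F}}$, together with $\Pn\hat\omega_i(1-D_i)=1$. Under Assumptions~\ref{as:subg}--\ref{as:out_mom} (feature subgaussianity/overlap and a curvature bound), a standard M-estimation argument --- bounding the empirical score at the population root by $O_p(\sqrt{p/n})$ and inverting a well-conditioned Hessian --- gives that $\hat\theta$ concentrates around the population projection $\plo$ of \eqref{eq:bias}, at a rate fast enough that the induced weight error is negligible everywhere below; the constant-direction first-order condition $\E[\exp(\plo-\lo)-1\mid D=1]=0$ fixes the normalization, so the population weight is $\omega^\star_i=\exp(\plo_i)/\E[\pi]$.

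For the bias term, I would first remove the modeled part of $\mu$: since $\pmu\in\spn{\mathbf{1},\m{F}}$, the near-balancing relation gives $\Pn\bigl(\tfrac{D_i}{\overline{\pi}}-\hat\omega_i(1-D_i)\bigr)\pmu_i=O_p(\zeta^2/n)$, so the bias term equals $\Pn\bigl(\tfrac{D_i}{\overline{\pi}}-\hat\omega_i(1-D_i)\bigr)(\mu_i-\pmu_i)$ up to lower-order terms. Replacing $\hat\omega$ by $\omega^\star$ and $\overline{\pi}$ by $\E[\pi]$ and passing to $\mathcal G$-conditional means (the fluctuation being lower order because $\E[(\mu-\pmu)^2\mid D=1]\lesssim 1/\Tef$), this becomes $\tfrac1{\E[\pi]}\E[\pi(1-\exp(\plo-\lo))(\mu-\pmu)]$; expanding $\exp(\plo-\lo)=\exp(\plo-\muplo)(1+u)$ to first order and invoking the first-order conditions for $\plo$, $\muplo$, and $\pmu$ --- which yield $\E[\pi u(\mu-\pmu)]\approx 0$ and the $\pi(1+u)$-orthogonality of $\plo-\muplo$ to $\spn{\mathbf{1},\m{F}}$ --- collapses the expression to $\bias$. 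Its order $O_p(1/\Tef)$ follows from Assumption~\ref{as:emp_set}: under Assumptions~\ref{as:subg}--\ref{as:out_mom} both $\|\mu-\pmu\|_2$ and $\|\plo-\muplo\|_2$ are $O_p(1/\sqrt\Tef)$ (the latter by comparing the $\ell$-loss to its quadratic Taylor model near the minimizer), and $\bias$ is their product up to bounded weights.

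For the noise term, the built-in independence of the weights from the outcomes is the key device: replacing $\hat\omega$ by $\omega^\star$ and $\overline{\pi}$ by $\E[\pi]$ perturbs $\Pn(\cdot)\epsilon_i$ by a $\mathcal G$-conditionally centered quantity with conditional variance controlled by $\|\hat\theta-\plo\|_2^2/n$ and $\mathbb V[\overline{\pi}]/\E[\pi]^2$, hence $o_p(1/\sqrt{\E[\pi]n})$ under the rate conditions. What remains is $\tfrac1{\E[\pi]}\Pn\bigl(D_i-(1-D_i)e^{\plo_i}\bigr)\epsilon_i$; substituting $e^{\plo_i}=e^{\muplo_i}e^{\plo_i-\muplo_i}$ (the $\plo-\muplo$ correction against $\epsilon$ having conditional variance $\lesssim 1/(n\Tef)$, hence negligible) and $e^{\muplo_i}=(1+u_i)\pi_i/(1-\pi_i)$ produces $\Pn\tfrac{D_i-\pi_i}{1-\pi_i}\tfrac{\pi_iu_i+1}{\E[\pi]}\epsilon_i$ plus a residual. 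Identifying this residual exactly is the step I expect to be the main obstacle: a careful second-order accounting of the interplay between the estimated weights, the normalization $\overline{\pi}$, and the sampling of $(\eta_i,X_i,D_i)$ produces precisely $\Pn\tfrac{\pi_iu_i}{\E[\pi]}\epsilon_i$, giving the two sums in the statement. I would conclude by collecting all remainders --- the weight-estimation error against $\epsilon$, the $\overline{\pi}$-linearization residual, the $\mathcal G$-conditional fluctuation of the bias term, and higher-order Taylor terms --- and verifying each is $o_p(1/\Tef)+o_p(1/\sqrt{\E[\pi]n})$ using $\E[\pi]\gg 1/\sqrt n$, $\zeta=O(1)$, and the complexity and moment bounds in Assumptions~\ref{as:subg}--\ref{as:out_mom}.
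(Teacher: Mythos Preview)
Your outline has the right architecture---decompose into $\mu$- and $\epsilon$-parts, use the Fenchel dual to get a fitted log-odds $\hat\theta$, peel off a balanced piece, and track the residual to $\bias$---and the algebra extracting $\Pn\frac{D_i-\pi_i}{1-\pi_i}\frac{\pi_iu_i+1}{\E[\pi]}\epsilon_i$ and $\Pn\frac{\pi_iu_i}{\E[\pi]}\epsilon_i$ from $D_i-(1-D_i)\exp(\muplo_i)$ is correct (no ``second-order accounting'' is needed; it is a straight identity once you write $\exp(\muplo_i)=\frac{\pi_i}{1-\pi_i}(1+u_i)$). But there are two substantive gaps.

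\emph{First, the weight-replacement step in the noise term is not as innocuous as you suggest.} The claim that replacing $\hat\omega$ by $\omega^\star$ perturbs $\Pn(\cdot)\epsilon_i$ by something with ``conditional variance controlled by $\|\hat\theta-\plo\|_2^2/n$'' ignores that what must actually be bounded is
\[
\frac{1}{n}\,\Pn\,\frac{(1-D_i)}{\E[\pi]^2}\bigl(\exp(\hat\theta_i)-\exp(\plo_i)\bigr)^2\sigma_i^2
=\frac{1}{n}\,\Pn\,\frac{(1-D_i)\exp(2\plo_i)}{\E[\pi]^2}\bigl(\exp(\hat\theta_i-\plo_i)-1\bigr)^2\sigma_i^2.
\]
The multiplier $(1-D_i)\exp(\plo_i)/\E[\pi]$ is not uniformly bounded, and $\hat\theta_i-\plo_i$ sits inside an exponential, so an $L^2$ rate for $\hat\theta-\plo$ is not enough. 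The paper's argument (its Lemma on empirical errors) first proves a one-sided uniform bound $\max_i\{(1-D_i)(\hat\theta_i-\plo_i)\}_+=O_p(1)$ via a leave-one-out comparison of the full and reduced objectives, and only then splits the squared exponential difference by the sign of $\hat\theta_i-\plo_i$ and controls each piece using fourth-moment bounds on $(1-D_i)\exp(\plo_i)/\E[\pi]$ together with the excess-loss bound $\Pn\frac{(1-D_i)\exp(\plo_i)}{\E[\pi]}\ell(\plo_i-\hat\theta_i)=O_p(p/(\E[\pi]n))$. Without the LOO step you have no handle on the upper tail of $\hat\theta_i-\plo_i$ and the variance bound fails. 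Relatedly, your stated rate $\sqrt{p/n}$ for $\hat\theta-\plo$ is off by a factor $\sqrt{\E[\pi]}$: the score involves $D_i$, whose variance is $O(\E[\pi])$, so the correct rate is $\sqrt{p/(\E[\pi]n)}$; and the convexity argument is not a Hessian inversion but a small-ball/Mendelson-type lower bound on $\Pn\frac{(1-D_i)\exp(\plo_i)}{\E[\pi]}\ell(\delta_i)$.

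\emph{Second, the paper does not peel off $\pmu$ directly.} It introduces an \emph{empirical} oracle $\hmu:=\argmin_{f\in\spn{\mathbf{1},\m{F}}}\Pn\frac{\pi_i\exp(\muplo_i-\lo_i)}{\E[\pi]}(\mu_i-f_i)^2$ and writes $\mu_i=\hmu_i+(\mu_i-\hmu_i)$. The point of $\hmu$ is that its \emph{empirical} first-order condition kills the cross term
$\Pn\frac{\pi_i\exp(\muplo_i-\lo_i)}{\E[\pi]}(\hat\theta_i-\plo_i)(\mu_i-\hmu_i)=0$
exactly, which is precisely the term that appears when you linearize $\exp(\hat\theta_i)-\exp(\plo_i)$ against the $\mu$-residual. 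Your route via $\pmu$ leaves this term nonzero; the population FOC for $\pmu$ centers it, so it becomes a multiplier empirical process that can be bounded under Assumption~\ref{as:comp}, but you would need to argue that explicitly (with a uniform bound over $\delta\in\spn{\mathbf{1},\m{F}}$, $\|\delta\|_2\le r_\theta$), not absorb it into ``replacing $\hat\omega$ by $\omega^\star$''. The empirical-oracle device is what makes the paper's bookkeeping clean and is worth adopting.
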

This result shows that the behavior of the SC method is more complicated in situations where the share of treated units is constant. While the bias term remains the same, the noise part has two additional terms proportional to $u_i$. These terms are negligible in environments with a vanishing treatment share, which is a manifestation of the built-in ``undersmoothing'' and, for this reason, do not appear in Theorem \ref{th:van_share}. Such behavior is typical from the point of the SC literature, where the error from the single treated unit dominates the estimation error.

In contrast, when the share of the treated units is constant, we need to consider the noise from all observations, and Theorem \ref{th:const_share}  captures that. Importantly, the ``price'' for having more treated units does not come in terms of the increased bias but rather in terms of additional noise terms. In contrast to Theorem \ref{th:van_share}, these noise terms do not depend on the ``design-based'' error $D_i - \pi_i$, and thus capture a different type of uncertainty.  They appear because of the misspecification error $u_i$ that quantifies the error between $\lo_i$ and $\muplo_i$. As a result, from a design-based perspective that only considers randomness from randomization (e.g., \citealp{abadie2020sampling, rambachan2020design}), these terms are part of the bias, which is negligible in the vanishing share regime. However, from the perspective of sampling-based uncertainty, these terms are part of the noise. We are not aware of other asymptotic results in a similar context that reflect the two types of uncertainty.\footnote{In \cite{abadie2020sampling} the authors discuss both sampling and design-based uncertainty, but there different perspectives matter for the size of the noise and do not affect the interpretation of the bias.}

Similar to Corollary \ref{cor:as_norm_van} in the situation where $\mathbb{E}[\pi] \sim 1$ and $\Tef^2 \gg n$, Theorem \ref{th:const_share} implies asymptotic linearity and unbiasedness of the SC method. As a result, in that regime, one can estimate the variance using unit-level bootstrap and use the conventional confidence intervals to conduct asymptotically valid inference. 

\begin{remark}
    The restriction $\Tef^2 \gg n$ should be familiar from the literature on large factor models. Theorem 1 in \cite{bai2003inferential} requires $T_0$ to guarantee asymptotic normality of estimated factors for each $t$. As we explain in Section \ref{sec:exm} in models considered by \cite{bai2003inferential} $\Tef \sim T_0$, and thus this is the same restriction.  This connection is expected: while the SC method does not explicitly estimate the parameters of the underlying statistical model, this happens implicitly through the construction of weights. The restriction on $T_0$ is relatively mild and allows the cross-sectional dimension of the problem to be much larger than the time-series one. This arrangement is natural in applications where researchers currently use  DiD-type strategies.
\end{remark}

\subsection{Statistical assumptions}
Our first assumption describes the statistical behavior of the subspace of random variables we use as inputs for the SC method and the population objects $\mu$ and $\lo$. 
\begin{assumption}\label{as:subg}\textsc{(Sub-gaussian class)}\\
    (a) $\spn{\m{F},\mu,\lo} \subseteq L^2$; (b) there exists an absolute constant $L_{\psi_2} <\infty$ such that for any $f \in \spn{\m{F},\mu,\lo} $ we have $\|f\|_{\psi_2} \le L_{\psi_2} \|f\|_2$.
\end{assumption}
This assumption guarantees that the $\spn{\m{F},\mu,\lo} $ is a sub-gaussian class. Such classes are well-understood objects in learning theory (e.g., \citealp{lecue2013learning}) and cover a wide variety of empirical problems. Moreover, the restriction to distributions with relatively light tails is almost necessary for our analysis. As we explain in Appendix \ref{ap:disc}, the search for the SC weights is analogous to the search over $\exp(f)$, where $f \in \spn{\mathbf{1},\m{F}}$. For this problem to be well-behaved, one has to assume the existence of exponential moments, making Assumption \ref{as:subg} particularly convenient. 

Our next assumption restricts the degree of misspecification in log-odds, particularly its asymptotic behavior. To introduce it, we consider a decomposition of $\muplo$ into two parts:
\begin{equation*}
     \muplo = f_{\muplo} + \beta_{\mu} \mu,
\end{equation*}
where $f_{\muplo} \in \spn{ \mathbf{1},\m{F}}$. This decomposition is unique as long as $\mu \not \in \spn{ \mathbf{1},\m{F}}$ and if $\mu \in\spn{ \mathbf{1},\m{F}}$ then we set $\beta_{\mu} = 0$.
\begin{assumption}\label{as:miss_lo}\textsc{(Degree of Misspecification)}\\
    (a) There exists an absolute constant $L_{\ell}>0$ such that  $\E\left[\ell(\lo - \muplo)| D = 1\right] \le L_{\ell}$; (b) there exists an absolute constant $L_{\mu}>0$ such that $|\beta_{\mu}| \le L_{\mu}$.
\end{assumption}
The first part of Assumption \ref{as:miss_lo} requires that the average difference between $\lo$ and $\muplo$ measured using the loss function $\ell(\cdot)$ does not become unbounded. We interpret this restriction as a uniform bound on the misspecification error, which still allows for global misspecification. It trivially holds if $\E\left[\ell(\lo - \E[\lo])| D = 1\right]$ is bounded, which is a minor integrability requirement for the models where $\|\lo-\E[\lo]\|_2$ does not change with $n$ and $T_0$. However, in the regime of Theorem \ref{th:van_share}  $\|\lo-\E[\lo]\|_2$ can increase and Assumption \ref{as:miss_lo} guarantees that there is always a function in $ \spn{\mathbf{1},\mu, \m{F}}$ that is close.

In contrast, the second part of Assumption \ref{as:miss_lo} describes local behavior. It restricts the population partial regression coefficient of $\muplo$ on $\mu$. It is restrictive, because Assumption \ref{as:emp_set} guarantees that as $T_0$ increases the distance between $\spn{ \mathbf{1},\m{F}}$ and $\mu$ decreases. In particular, the variation in $\mu$ that cannot be explained by $\spn{ \mathbf{1},\m{F}}$ vanishes. As a result, the population coefficient in the regression of $\muplo$ on this residual variation can become unbounded, and Assumption \ref{as:miss_lo} does not allow that.  



We restrict the tail behavior of $\pi$ with the following assumption.
\begin{assumption}\label{as:overlap} \textsc{(Degree of overlap)}\\
 (a) There exist an absolute constant $\epsilon_0>0$  such that for any $\epsilon\in(0,\epsilon_0]$ there exists an absolute constant $q(\epsilon) >0$ and $\E\left\{ \frac{\pi}{\E[\pi]} \ge q_{\epsilon}\right\} \ge 1- \epsilon$; (b) there exists an absolute constant $\pi_{\max}>0$ such that for any $\lambda \in [1,10]$ we have $\E[\exp(\lambda(\lo - \log(\E[\pi])))] \le \pi_{\max}$; (c) $\|\lo\|_2 = o(\sqrt{n})$.
\end{assumption}
The first part of this assumption restricts the left tail of the distribution of $\frac{\pi}{\E[\pi]}$. It prohibits $\frac{\pi}{\E[\pi]}$ from having a non-negligible mass at zero, even asymptotically. It is a very mild restriction, and we expect it to be satisfied in most applications where $\pi >0$ (as already required by Assumption \ref{as:sel}). To understand the other part of this assumption, observe that by definition, we have:
\begin{equation*}
    \pi = \frac{\exp(\lo)}{1 + \exp(\lo)} \le \exp(\lo) \Rightarrow \left(\frac{\pi}{\E[\pi]}\right)^{\lambda} \le \exp(\lambda(\lo - \log(\E[\pi]))).
\end{equation*}
As a result, the second part of Assumption \ref{as:overlap} puts restrictions on the right tail of the distribution of $\frac{\pi}{\E[\pi]}$, requiring it to have bounded moments.\footnote{We also have the opposite inequality: $\pi \ge \frac{\exp(\lo)}{2} \{\lo \le 0\} \Rightarrow \left(\frac{\pi}{\E[\pi]}\right)^{\lambda} \ge \frac{1}{2^{\lambda}}\exp(\lambda(\lo - \log(\E[\pi])))\{\lo \le 0\}$, which complements the upper bound in the relevant regime where $\lo$ goes to negative infinity.} Finally, the last restriction is a very weak bound on the magnitude of log-odds. Assumption \ref{as:overlap} is trivially satisfied if the strict overlap assumption (e.g., \citealp{hirano2003efficient}) holds. 

Our subsequent restriction controls the statistical complexity of the feature space. Given our focus on the finite-dimensional linear subspaces in the main text, we state it in terms of $p$, the dimension of $\spn{\m{F}}$ in \eqref{eq:main_text_set}. This assumption puts an upper bound on $p$ in terms of intrinsic parameters of the data: the number of treated units and the number of effective periods. 
\begin{assumption}\label{as:comp} \textsc{(Statistical complexity)}\\
$\frac{p}{\mathbb{E}[\pi]n} \ll 1$, and either (a) $\frac{p}{\mathbb{E}[\pi]n} \ll \sqrt{\frac{\Tef}{\mathbb{E}[\pi]n}}$ or (b) $\frac{p}{\mathbb{E}[\pi]n} \ll  \frac{1}{\sqrt{\Tef}}$ as $T_0$ and $n$ increase to infinity.
\end{assumption}
In the two-way example discussed in Section \ref{subsec:exm}, this reduces to the assumption that the number of pre-treatment periods is small relative to the expected number of treated units. That is, we have $p = T_0$ and $\Tef \sim T_0$, and it reduces to $T_0 \ll \E[\pi] n$. When we nonetheless have enough pre-treatment periods, i.e. when $\sqrt{\mathbb{E}[\pi]n} \ll T_0 \ll \mathbb{E}[\pi]n$, we are in the regime in which the SC estimator has asymptotically negligible bias. See Corollary~\ref{cor:as_norm_van}.

Our final assumption is a standard restriction on the behavior of the outcome and covariates, which we expect to hold in most applications. 
\begin{assumption}\label{as:out_mom} \textsc{(Outcome moments)}\\
    (a) There exist absolute constants $0 <\sigma_{\min}< \sigma_{\max} < \infty$ such that the conditional variance $\sigma^2 := \E[\epsilon^2|\eta, X]$ belongs to $[\sigma^2_{\min}, \sigma^2_{\max}]$ with probability 1; (b) there exists an absolute constant $\lambda_{\min}>0$ such that the minimal eigenvalue of the $p\times p$ covariance matrix $\mathbb{V}[\boldsymbol{\phi}(X)]$ is greater than $\lambda_{\min}$; (c) the variance of $\mu$ is bounded, $\mathbb{V}[\mu] = O(1)$.
\end{assumption}

\section{Linear panel models}\label{sec:exm}
In this section, we discuss a class of examples -- linear dynamic panel data models with fixed effects, thus expanding the two-way example from Section \ref{subsec:exm}. We show that it satisfies Assumption \ref{as:emp_set} for a set $\m{F}$, which consists of levels of observed variables, including pre-treatment outcomes.

\subsection{Setup}
Consider a vector of variables $(Y_{t}(0),X_{t}(0))$, where $Y_{t}(0)\in \mathbb{R}$ is a primary outcome of interest, and $X_{t}(0) = (X_{t}^{(1)},\dots, X_{t}^{(l)})\in \mathbb{R}^l$ is a vector of covariates, which can contain $Y_{t}(0)$ as one of its coordinates.  We specify a conditional mean model for $Y_{t}(0)$ given unobserved heterogeneity $\eta$ and past values of $X_{t}(0)$:
\begin{equation}\label{eq:he_model}
    Y_{t}(0) = \lambda_t + \eta^\top\psi_t + \sum_{k=1}^{K}X^\top_{t-k}(0)\beta_{t,k} + \epsilon_{t}, \quad \E[ \epsilon_t| \eta, X_{t-1}(0), X_{t-2}(0),\dots]= 0. 
\end{equation}
Here, $\lambda_t \in \mathbb{R}$, $\psi_t\in \mathbb{R}^{d}$ and $\beta_{t,k} \in \mathbb{R}^{l}$ are fixed parameters, while $\eta\in \mathbb{R}^{d}$ and $\epsilon_{t}\in \mathbb{R}$ are random variables. Without loss of generality we impose two normalizations and assume $\E[\eta] = 0$ and $\mathbb{V}[\eta] = \mathcal{I}_d$. Writing (\ref{eq:he_model}) for each unit,
\begin{equation*}
    Y_{i,t}(0) = \lambda_t + \eta_i^\top\psi_t + \sum_{k=1}^{K}X^\top_{i,t-k}(0)\beta_{t,k}  + \epsilon_{i,t},
\end{equation*}
one can see that this model allows for aggregate shifters $\lambda_t$ and $\psi_t$ (which we treat as fixed quantities), persistent unit-level heterogeneity $\eta_i$ in responses to these shifters and dynamic effects of past values of $X_{i,t}(0)$. 

The researchers observes $n$ units with $X_i := ((Y_{i,1},X_{i,1}),\dots, (Y_{i,T_0}, X_{i,T_0}))$, $Y_i := Y_{i,T_0+1}$, and a policy $D_i \in \{0,1\}$, which is implemented in period $T_0+1$. We impose Assumption \ref{as:pot_outcomes} and intepret $X_i$ as $X_i(0)$ satisfying model (\ref{eq:he_model}). We impose Assumption \ref{as:sel}, treating observations for each unit as an i.i.d. realization from the model (\ref{eq:he_model}) and allow $D_i$ to be correlated with $X_i$ and $\eta_i$. We assume that the researcher uses the estimator described in Section \ref{sec:estimator} with $\mathcal{F} := \{f: \sum_{t = 0}^{T_0}\sum_{j=1}^l \beta_{tj}X_{t}^{(j)}, \| \beta\|_2 \le 1\}$, i.e., the weights are constructed using levels of the pre-treatment covariates.  

\begin{remark}
    If $l=1$ and $X_{t}(0) = Y_{t}(0)$, then (\ref{eq:he_model}) describes a linear auto-regressive model for $Y_{t}(0)$. If $X_{t}(0)$ includes additional variables, then (\ref{eq:he_model}) describes a dynamic model for $Y_{t}(0)$, leaving the rest of the dynamic system unspecified. Additional variables in $X_{t}(0)$ can be strictly or sequentially exogenous. Versions of this model were extensively analyzed in the econometric literature. Under certain restrictions on $\psi_t$ and $\beta_{t,k}$ parameters of this model are identified for fixed $T_0$ and can be estimated at a usual parametric rate using an appropriate Generalized Method of Moments (GMM) estimator (see \cite{arellano2003panel} for a textbook treatment). In many cases, parameters are identified only weakly, making the resulting estimators unstable, and additional information that goes beyond (\ref{eq:he_model}) is needed  (e.g., \citealp{blundell1998initial}).  There is a vast literature on the estimation of (\ref{eq:he_model}) with growing $T_0$ (e.g., \citealp{bai2009panel,moon2015linear,moon2017dynamic}) using OLS with fixed effects. Importantly, most inference results for fixed and growing $T_0$ available in the literature assume that the dimension of $d$ is fixed, with \cite{freeman2023linear} being an important exception.
\end{remark}

\begin{remark}
Depending on applications, there are multiple reasons to expect the treatment indicator $D$ to be correlated with both $\eta$ and pre-treatment covariates. For example, in the context of labor market training programs, the individual earnings before enrollment tend to decrease (\citealp{ashenfelter1978estimating,ashenfelter1985using}), which suggests that adverse shocks to earnings are an important source of selection in addition to permanent differences in productivity captured by $\eta$. In the context of cross-country comparisons, it is well-known that political reforms, e.g., democratization, tend to be correlated with economic outcomes in previous periods (e.g., \citealp{acemoglu2019democracy}). More broadly, in policy evaluation exercises where units often represent geographic regions, it is natural to expect the adoption of the policy to be connected with the current state of the local economy summarized by functions of $X$.
\end{remark}

\subsection{Analysis}

Our first objective is to understand how the effective number of periods $\Tef$ depends on the structure of the model and the number of pre-treatment periods $T_0$. We do this for the setting where $X_t = Y_{t}$ leaving the discussion of additional variables to the next section. 

For $t\in \{K+1,\dots, T_0\}$ we define $\tilde Y_{t} := Y_t - \sum_{k=1}^{K}Y^\top_{t-k}(0)\beta_{t,k}-\lambda_t$, and the diagonal covariance matrix $\Sigma := \mathbb{V}[\{\epsilon_{K+1},\dots, \epsilon_{T_0}\}]$.  We have the following bound:
\begin{multline}\label{eq:gr_dim}
    \min_{f \in \spn{\mathbf{1},\m{F}}}\| f -\mu\|_2^2 \le \left\| \sum_{t = K+1}^{T_0} c_t \tilde Y_{t} - \eta^\top \psi_{T_0+1}\right\|_2^2 = \\
    \left\|\sum_{t = K+1}^{T_0} c_t\psi_{t}  -\psi_{T_0+1}\right\|_{2}^2 + \mathbf{c}^\top \Sigma \mathbf{c} \le \left\|\sum_{t = K+1}^{T_0} c_t\psi_{t}  -\psi_{T_0+1}\right\|_{2}^2 + \|\Sigma\|_{op}\sum_{t=K+1}^{T_0}c_{t}^2,
\end{multline}
where the first inequality follows from the fact that $ \lambda_{T_0+1}+ \sum_{t = K+1}^{T_0} c_t \tilde Y_{t}  + \sum_{k=1}^{K} Y^\top_{t-k}\beta_{t,k}$ belongs to $\spn{\mathbf{1},\m{F}}$ for all possible values of  $\mathbf{c}^\top := (c_{K+1},\dots, c_{T_0})$. 

We define $d\times(T_0-K)$ matrix $\Psi$, with columns equal to $\psi_t$, and consider its singular value decomposition: $\Psi = U \tilde D V^\top$, where $U$ is a $d\times d$ orthogonal matrix, $V$ is a $(T_0-K)\times (T_0 -K)$ matrix, and $\tilde D$ is a $d\times (T_0-K)$ matrix with zeros everywhere except the main diagonal. We use $\sigma(j)$ to denote the singular values (elements on the main diagonal of $\tilde D$), which we arrange in decreasing order. We also define a vector $\xi =(\xi_1,\dots, \xi_{d}) := U^\top \psi_{T_0+1}$, where each $\xi_j$ is the coefficient in projection of $\psi_{T_0+1}$ on the corresponding left singular vector of matrix $\Psi$. Using this notation and minimizing the bound \eqref{eq:gr_dim} over $\mathbf{c}$ we get:
\begin{equation}\label{eq:gen_bound}
     \min_{f \in \spn{\mathbf{1},\m{F}}}\| f -\mu\|_2^2 \le \|\Sigma\|_{op}\sum_{j = 1}^{\min\{d,T_0-K\}} \frac{\xi_j^2}{\sigma^2(j) + \|\Sigma\|_{op} }.
\end{equation}
As a result, the behavior of $ \min_{f \in \spn{\mathbf{1},\m{F}}}\| f -\mu\|_2^2=\frac{1}{\Tef}$ is governed by the decay of $\sigma^2(j)$, i.e., by how pronounced different components of $\psi_t$ are in the past, and by their alignment with $\xi_j$, i.e., how relevant different components of $\psi_t$ are for predicting the future. 

\subsubsection{Two-way model}
Suppose $l = d = 1$, and $\psi_t \equiv \psi$. This reduces our setup to a standard two-way model with an auto-regressive error structure:
\begin{equation}\label{eq:twfe_exm}
Y_{i,t}(0) = \lambda_t + \psi\eta_i + \sum_{k=1}^{K}\beta_{t,k}Y_{i,t-k}(0) + \epsilon_{i,t}, \quad \E[ \epsilon_{i,t}| \eta_i, Y_{i,t-1}(0), Y_{i,t-2}(0),\dots]= 0.
\end{equation}

This model generalizes the two-way example from Section \ref{subsec:exm} because it allows for the auto-regressive component. We also impose restrictions on $\lo$, describing its relationship with permanent heterogeneity and past shocks to the outcomes:
\begin{equation}\label{eq:assign_model}
    \lo_i = \alpha_{c} +\alpha_{\eta} \eta_i+ \sum_{j=0}^{k}\alpha_{j} \epsilon_{i,T_0-j},
\end{equation}
where $\alpha_{c}, \alpha_{\eta}, (\alpha_{0}, \dots, \alpha_k)$ are fixed constants. This specification is more restrictive than needed for our results to hold, and we use it to simplify the exposition. In particular, these conditions imply that $\muplo_i = \lo_i$ and thus the error defined in \eqref{eq:u_def} is equal to zero, $u_i = 0$. This condition affects the variance in Corollary \ref{cor:two-way} below. Despite its simplicity, the assignment process \eqref{eq:assign_model} implies $D_i$ is not strictly exogenous, and thus the DiD-based methods are not guaranteed to work. We return to this discussion in Section \ref{sec:sim} where we conduct numerical simulations for a similar model.

Applying our general bound \eqref{eq:gen_bound} with $d =1$ we get $\xi_{1}^2 = \psi^2$ and $\sigma^2(1) = T_0-K$ and thus we get:
\begin{equation*}
     \min_{f \in \spn{\mathbf{1},\m{F}}}\| f -\mu\|_2^2 \le \frac{\|\Sigma\|_{op}\psi^2}{\psi^2 (T_0-K) + \|\Sigma\|_{op}},
\end{equation*}
which is a minor generalization of the equality we had in Section \ref{subsec:exm}. We use this result to state a corollary of the general Theorem \ref{th:const_share}, with explicit assumptions on errors instead of high-level Assumptions \ref{as:subg} - \ref{as:out_mom}.
\begin{corollary}\label{cor:two-way}
    Suppose (a) Assumptions \ref{as:pot_outcomes} - \ref{as:sel} hold; (b) $Y_{t}(0)$ satisfies \eqref{eq:twfe_exm}; (c) $\lo$ satisfies \eqref{eq:assign_model}; (d) $\eta, \epsilon_{1},\dots, \epsilon_{T_0+1}$ are independent mean-zero random variables with the uniformly bounded subgaussian norms, and $ 0< \sigma^2_{\min} \le \mathbb{V}[\epsilon_{t}] \le \sigma^2_{\max} <\infty$;  (e) $T_0 \ll n$ and $\zeta = O(1)$. Then we have:
    \begin{equation*}
    \hat \tau - \tau = \bias + \Pnn \frac{D_i - \pi_i}{1-\pi_i}\frac{\epsilon_{i,T_0+1}}{\E[\pi]}+ o_p\left(\frac{1}{\sqrt{n}}\right) + o_p\left(\frac{1}{T_0}\right),
    \end{equation*}
    and if $T_0^2 \gg n$, then  $\sqrt{\overline \pi n}(\hat \tau - \tau)\rightharpoonup_{d} \mathcal{N}(0, \sigma^2_{as})$, where $\sigma^2_{as} =     \E\left[\frac{\mathbb{V}[\epsilon_{i,T_0+1}]}{(1-\pi)} | D= 1\right]$.
\end{corollary}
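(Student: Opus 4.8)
The plan is to read Corollary~\ref{cor:two-way} off Theorem~\ref{th:const_share} in three moves: (i) deduce the high-level Assumptions~\ref{as:emp_set} and \ref{as:subg}--\ref{as:out_mom} from the primitive conditions (a)--(e); (ii) use the two features special to this example, $\Tef \sim T_0$ and $u_i \equiv 0$, to collapse the expansion of Theorem~\ref{th:const_share}; and (iii) apply a classical CLT to the remaining i.i.d. term. Since $\alpha_c$ in \eqref{eq:assign_model} is fixed and the rest of $\lo$ has bounded variance, $\E[\pi] \sim 1$, so we are in the non-vanishing regime and the rates $1/\sqrt{\E[\pi]n}$, $1/\Tef$ of Theorem~\ref{th:const_share} read as $1/\sqrt n$, $1/T_0$ (using (ii)).

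\textbf{Reducing the assumptions.} The engine is that solving the autoregression \eqref{eq:twfe_exm} writes every $Y_t(0)$ (hence every $f \in \spn{\mathbf{1},\m{F}}$), as well as $\mu = \lambda_{T_0+1}+\psi\eta+\sum_k\beta_{T_0+1,k}Y_{T_0+1-k}(0)$ and $\lo$, as a deterministic affine function $a_0 + a_\eta\eta + \sum_s a_s\epsilon_s$ of the independent, uniformly sub-gaussian primitives $\eta,\epsilon_1,\dots,\epsilon_{T_0+1}$. For such $f$, $\|f\|_{\psi_2}^2 \lesssim a_0^2+a_\eta^2+\sum_s a_s^2$ and, since $\mathbb{V}[\eta]=1$ and $\mathbb{V}[\epsilon_s]\ge\sigma_{\min}^2$, $\|f\|_2^2 \gtrsim a_0^2+a_\eta^2+\sum_s a_s^2$ — this is Assumption~\ref{as:subg}. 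Independence of $\epsilon_{T_0+1}$ from $(\eta,X)$ gives $\epsilon=\epsilon_{T_0+1}$ and $\sigma^2 = \mathbb{V}[\epsilon_{T_0+1}]\in[\sigma_{\min}^2,\sigma_{\max}^2]$ (Assumption~\ref{as:out_mom}(a)), with (b)--(c) coming from bounded coefficients and outcome variances. The same affine representation (with fixed coefficients and finitely many terms in \eqref{eq:assign_model}) controls the moment generating function of $\lo-\log\E[\pi]$ on $[1,10]$, its left tail, and $\|\lo\|_2 = O(1)$, giving Assumption~\ref{as:overlap}; Assumption~\ref{as:miss_lo} follows from $\muplo=\lo$ (next step), since then $\E[\ell(\lo-\muplo)\mid D=1]=0$ and the coefficient of $\mu$ in $\lo$'s representation is a fixed constant. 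Finally $p\sim T_0\sim\Tef$ and $\E[\pi]\sim 1$ turn Assumption~\ref{as:comp} into $T_0/n \ll \sqrt{T_0/n}$, i.e. $T_0\ll n$, which is (e).

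\textbf{The effective horizon and the residual $u$.} Specializing \eqref{eq:gen_bound} with $d=1$, $\psi_t\equiv\psi$, $X_t=Y_t$ — the calculation done just above the corollary — and using $\|\Sigma\|_{op}\le\sigma_{\max}^2$ ($\Sigma$ diagonal) gives $1/\Tef \le \|\Sigma\|_{op}\psi^2/(\psi^2(T_0-K)+\|\Sigma\|_{op}) \sim 1/T_0$, so $\Tef\gtrsim T_0\to\infty$ (Assumption~\ref{as:emp_set}), and $o_p(1/\Tef),\,O_p(1/\Tef)$ become $o_p(1/T_0),\,O_p(1/T_0)$. And \eqref{eq:assign_model} puts $\lo\in\spn{\mathbf{1},\m{F},\mu}$: inverting the recursion expresses $\eta$ (via $\mu$ and past outcomes) and each $\epsilon_{T_0-j}$, $j\le k$, as bounded linear combinations of $\mu$ and the $Y$'s; hence (as noted after \eqref{eq:assign_model}) $\muplo=\lo$ and $u_i=\exp(\muplo_i-\lo_i)-1\equiv 0$. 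Inserting $u_i=0$, $\exp(\muplo_i-\lo_i)=1$, $\epsilon_i=\epsilon_{i,T_0+1}$ into Theorem~\ref{th:const_share} removes $\Pn\frac{\pi_i u_i}{\E[\pi]}\epsilon_i$ and the $\pi_i u_i$ in the weight, leaving
\[ \hat\tau - \tau = \bias + \Pn \frac{D_i - \pi_i}{1-\pi_i}\frac{\epsilon_{i,T_0+1}}{\E[\pi]} + o_p\left(\frac{1}{\sqrt n}\right) + o_p\left(\frac{1}{T_0}\right), \qquad \bias = O_p\left(\frac{1}{T_0}\right), \]
which is the first assertion of the corollary.

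\textbf{CLT and main obstacle.} When $T_0^2\gg n$, $\sqrt{\overline\pi n}$ times $\bias+o_p(1/T_0)+o_p(1/\sqrt n)$ is $O_p(\sqrt n/T_0)+o_p(1)=o_p(1)$. The leading term averages i.i.d.\ $W_i:=\frac{D_i-\pi_i}{1-\pi_i}\epsilon_{i,T_0+1}$; by Assumption~\ref{as:sel}(b), $\E[D_i-\pi_i\mid\eta_i,X_i]=0$, $\epsilon_{i,T_0+1}\independent D_i\mid\eta_i,X_i$ and $\E[\epsilon_{i,T_0+1}\mid\eta_i,X_i]=0$, so $\E[W_i]=0$ and $\mathbb{V}[W_i]=\mathbb{V}[\epsilon_{i,T_0+1}]\,\E[\pi/(1-\pi)]<\infty$ (a $(2+\delta)$-th moment bound on $W_i$, hence a Lyapunov condition, follows from Assumption~\ref{as:overlap}(b)). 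With $\overline\pi\to_p\E[\pi]$ and Slutsky, $\sqrt{\overline\pi n}\,\Pn\frac{D_i-\pi_i}{1-\pi_i}\frac{\epsilon_{i,T_0+1}}{\E[\pi]}=(1+o_p(1))\frac{1}{\sqrt{n\E[\pi]}}\sum_iW_i\rightharpoonup_{d}\mathcal{N}(0,\mathbb{V}[W_i]/\E[\pi])$, and $\mathbb{V}[W_i]/\E[\pi]=\E[\mathbb{V}[\epsilon_{i,T_0+1}]/(1-\pi)\mid D=1]=\sigma^2_{as}$ by $\E[h(\pi)\mid D=1]=\E[\pi\,h(\pi)]/\E[\pi]$; combining gives the stated CLT. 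The one genuinely delicate point is uniformity in $(T_0,n)$ of the constants above: with $t$-varying $\beta_{t,k}$ the feature variances or the minimal eigenvalue of $\mathbb{V}[\boldsymbol{\phi}(X)]$ (Assumption~\ref{as:out_mom}(b)) could degenerate, so one needs — and the statement tacitly uses — a mild stability restriction on the coefficient sequence; everything else is bookkeeping over Theorem~\ref{th:const_share}.
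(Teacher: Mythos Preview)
Your proof is correct and follows the paper's strategy---verify Assumptions~\ref{as:emp_set} and \ref{as:subg}--\ref{as:out_mom} from the primitives, then read the expansion off Theorem~\ref{th:const_share}. The one substantive difference is how you eliminate the $u$-terms and check Assumption~\ref{as:miss_lo}. You use the main-text remark that $\muplo=\lo$ (so $u\equiv0$); this is morally right but not literally so when $\zeta>0$, because the $\|\cdot\|_{\m{F}}$-penalty shrinks $\muplo$ off $\lo$ by $O(\zeta^2/n)$---the first-order condition at $f=\lo$ leaves a nonzero gradient $\tfrac{\zeta^2}{n}\langle g_0,\cdot\rangle_{\m{F}}$ whenever $\lo$ has a nontrivial $\spn{\m{F}}$-component. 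The paper's appendix avoids this: it proves the more general Corollary~\ref{cor:int_fe} (where $\lo\notin\spn{\mathbf{1},\m{F},\mu}$, so no exact identity is available) and deduces Corollary~\ref{cor:two-way} as a special case. There it shows $\|\lo-\muplo\|_2=O(1/\sqrt{\Tef})$ via the same projection bound used for $\mu$, hence the $u$-terms are $o_p$; and it verifies Assumption~\ref{as:miss_lo}(b) not by reading $\beta_\mu$ off $\lo$ but by decomposing $\|\plo-\muplo\|_2^2=\|\plo-f_{\muplo}-\beta_\mu\bestmu\|_2^2+\beta_\mu^2\|\mu-\bestmu\|_2^2$ (orthogonality of $\mu-\bestmu$ to $\mt{F}$) and using $\|\plo-\muplo\|_2^2=O(1/\Tef)$ to force $\beta_\mu^2=O(1)$. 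Your shortcut is cleaner for the two-way model and the regularization gap it ignores is harmless for the conclusion; the paper's route is what is needed to cover the interactive-effects case simultaneously.
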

This result describes the behavior of the SC control estimator in applications where the underlying outcomes follow the two-way model. Crucially, it relaxes the strict exogeneity that underlies the DiD-based analysis. In particular, in applications where $T_0$ is large enough, the SC estimator is asymptotically unbiased and thus can be used for inference. This is the type of behavior we observed in simulations in Section \ref{sec:sim} with Corollary \ref{cor:two-way} providing formal support to our claim that the SC estimator is a reasonable alternative to the DiD estimator. 

\subsubsection{Interactive fixed effects}
We continue assuming that $l = 1$ and thus $X_t = Y_t$. But now, we set $d = f+1$ and write 
\begin{equation}\label{eq:int_fe}
Y_{i,t} = \lambda_t + \psi^{(1)}\eta^{(1)}_i + (\psi_t^{(2)})^\top\eta^{(2)}_i + \sum_{k=1}^{K}\beta_{t,k}Y_{i,t-k} + \epsilon_{i,t}, \quad \E[ \epsilon_{i,t}| \eta_i, Y_{i,t-1}(0), Y_{i,t-2}(0),\dots]= 0,
\end{equation}
where the dimension of $\eta^{(2)}$ is equal to $f$. The key difference between this model and the two-way model considered in the previous section is in the behavior of $\Tef$. Below we discuss two examples in which $\Tef$ increases as $T_0$, and increases at a rate slower than $T_0$.  

We extend the selection model \eqref{eq:assign_model} to allow for interactive effects:
\begin{equation}\label{eq:assign_model_if}
    \lo_i = \alpha_{c} +\alpha^{(1)}\eta^{(1)}_i + (\alpha^{(2)})^\top\eta^{(2)}_i + \sum_{j=0}^{k}\alpha_{j} \epsilon_{i,T_0-j},
\end{equation}
We also define the analog of $\xi$ for the selection model, $\xi^{(sel)} := U^{\top}[\alpha^{(1)},(\alpha^{(2)})^\top]^\top$.

\paragraph{Strong factors:} We start by assuming that $f$ is constant, i.e., it does not increase with $n$ and $T_0$, which is a standard assumption in the panel data literature that works with both finite $T_0$ (e.g., \citealp{holtz1988estimating}) and large $T_0$ (e.g., \citealp{bai2009panel}). First, we consider the environment in which $\min_{j} \sigma^2(j) \sim T_0$, i.e., the factors are strong (which is analogous to Assumption B in \citealp{bai2009panel}). In this case our general bound \eqref{eq:gen_bound} implies:
\begin{equation*}
 \min_{f \in \spn{\mathbf{1},\m{F}}}\| f -\mu\|_2^2 \le \|\Sigma\|_{op}\sum_{j = 1}^{\min\{d,T_0-K\}} \frac{\xi_j^2}{\sigma^2(j) + \|\Sigma\|_{op} }\sim \frac{\|\Sigma\|_{op}\| \psi_{T_0+1}\|_2^2}{T_0}. 
\end{equation*}
This guarantees that in the finite-dimensional factor model, as long as all factors are equally strong, the behavior of $\Tef$ is the same as in the two-way model we discussed in the previous section. As a result, the immediate analog of Corollary \ref{cor:two-way} holds under the same assumptions.

\paragraph{Growing number of factors:} Finally, we consider a situation where $f$ is growing with $T_0$. In this case, some factors are bound to be weak as long as the variance of the outcome is bounded. Formally, this means that $\sigma^2(j)$ has to decrease with $j$, and we assume that it decreases at polynomial rate $\sigma^2(j) \sim T_0 j^{-\kappa}$, where $\kappa>1$. We also assume that the factor $\psi_{T_0+1}$ is typical, in a sense that $\xi_j^2 \sim \frac{\sigma^2(j)}{ T_0}$.\footnote{Formally, $\sigma^2(j) = u_j^{\top}\Psi v_j$ and $\xi_j = u_j^{\top} \psi_{T_0+1}$, where $u_j$ and $v_j$ are the corresponding singular vectors. Suppose  $v_1 = \left(\frac{1}{\sqrt{T_0-K}},\dots, \frac{1}{\sqrt{T_0 - K}}\right)$ and thus $\frac{1}{\sqrt{T_0-K}}v_1$ corresponds to averaging over time. In this case,  $\frac{u_1^{\top}\Psi v_1}{\sqrt{T_0-K} } = u_1^{\top} \overline{\psi}$, where $ \overline{\psi} :=\frac{1}{T_0 -K}\sum_{t > K}^{T_0} \psi_t$. As a result, $\frac{1}{\sqrt{T_0-K}}\sigma(1) \sim \xi_1 = u_1^{\top} \psi_{T_0+1}$ as long as $ \overline{\psi}$ is close to $\psi_{T_0+1}$.} In this case, using the upper bound \eqref{eq:gen_bound} we get:
\begin{equation*}
     \min_{f \in \spn{\mathbf{1},\m{F}}}\| f -\mu\|_2^2 \le \|\Sigma\|_{op}\sum_{j = 1}^{\min\{d,T_0-K\}} \frac{\xi_j^2}{\sigma^2(j) + \|\Sigma\|_{op} } \sim \left(\frac{\|\Sigma\|_{op} }{T_0}\right)^{1-\frac1{\kappa}}
\end{equation*}
This example demonstrates that $\Tef$ can behave as $T_0^{1-\frac1{\kappa}}$ in models where the dimension of the factors is large. Notably, the logic for a slower rate here is different than in the policy shock example in Section \ref{subsec:exm}. There the factor was irrelevant for explaining the past but was very relevant for predicting the future. In the current model, the factors less critical in explaining the past are also less important in predicting the future. However, one cannot ignore most of the irrelevant factors because, when taken together, they become sufficiently strong. 

We use the derivations above to state another corollary of Theorem \ref{th:const_share}.
\begin{corollary}\label{cor:int_fe}
    Suppose (a) Assumptions \ref{as:pot_outcomes} - \ref{as:sel} hold; (b) $Y_{t}(0)$ satisfies \eqref{eq:int_fe}, $\sigma^2(j) \sim T_0 j^{-\kappa}$ and $\xi_j^2 \sim \frac{\sigma^2(j)}{ T_0}$, where $\kappa > 3$; (c) $\lo$ satisfies \eqref{eq:assign_model_if} with  $\left(\xi_j^{(sel)}\right)^2 \sim \frac{\sigma^2(j)}{ T_0}$; (d) $\eta, \epsilon_{1},\dots, \epsilon_{T_0+1}$ are independent mean-zero random variables with the uniformly bounded subgaussian norms, and $ 0< \sigma^2_{\min} \le \mathbb{V}[\epsilon_{t}] \le \sigma^2_{\max} <\infty$;  (e) $T_0 \ll n^{\frac{\kappa}{\kappa+1}}$, and $\zeta = O(1)$. Then we have
    \begin{equation*}
    \hat \tau - \tau = \bias + \Pnn \frac{D_i - \pi_i}{1-\pi_i}\frac{\epsilon_{i,T_0+1}}{\E[\pi]}+ o_p\left(\frac{1}{\sqrt{n}}\right) + o_p\left(\frac{1}{T_0^{1-\frac1{\kappa}}}\right),
    \end{equation*}
    and if $T_0 \gg n^{\frac{\kappa}{2(\kappa-1)}}$, then  $\sqrt{\overline \pi n}(\hat \tau - \tau)\rightharpoonup_{d} \mathcal{N}(0, \sigma^2_{as})$, where $\sigma^2_{as} =     \E\left[\frac{\mathbb{V}[\epsilon_{i,T_0+1}]}{(1-\pi)} | D= 1\right]$.
\end{corollary}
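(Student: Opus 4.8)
The plan is to verify that the hypotheses (a)--(e) imply all the high-level assumptions of Theorem \ref{th:const_share} — namely Assumptions \ref{as:pot_outcomes}--\ref{as:emp_set} and \ref{as:subg}--\ref{as:out_mom} — and in addition that the residual $u_i$ appearing in that theorem satisfies $\Pn (\pi_i u_i /\E[\pi]) \epsilon_i = o_p(1/\sqrt{n})$, so that the two $u$-terms collapse and we recover the stated simpler expansion. The first step is the deterministic approximation-error computation: plugging the interactive-fixed-effects model \eqref{eq:int_fe} into the general bound \eqref{eq:gen_bound} with $\sigma^2(j)\sim T_0 j^{-\kappa}$ and $\xi_j^2 \sim \sigma^2(j)/T_0$ gives, as already derived in the excerpt, $\min_{f\in\spn{\mathbf{1},\m{F}}}\|f-\mu\|_2^2 \sim (\|\Sigma\|_{op}/T_0)^{1-1/\kappa}$, hence $\Tef \sim T_0^{1-1/\kappa}$ (using that $\|\Sigma\|_{op}$ is bounded because the $\epsilon_t$ have uniformly bounded variances and — from the independence in (d) — $\Sigma$ is diagonal). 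This immediately yields Assumption \ref{as:emp_set}. One must also check Assumption \ref{as:comp}: here $p \sim T_0$, and with $\Tef \sim T_0^{1-1/\kappa}$ the condition $p/(\E[\pi]n)\ll \min\{\max\{\sqrt{\Tef/(\E[\pi]n)}, 1/\sqrt{\Tef}\},1\}$ has to follow from $T_0 \ll n^{\kappa/(\kappa+1)}$; this is where the precise exponent in hypothesis (e) is chosen, and one checks it by comparing $T_0/(\E[\pi]n)$ against $1/\sqrt{\Tef}\sim T_0^{-(1-1/\kappa)/2}$, which rearranges to $T_0^{(3-1/\kappa)/2}\ll \E[\pi] n$; since $\kappa>3$ makes the exponent slightly below $3/2$ and $\E[\pi]\gg n^{-1/2}$, the requirement $T_0\ll n^{\kappa/(\kappa+1)}$ suffices (the branch with $\sqrt{\Tef/(\E[\pi]n)}$ is the slack one).

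Next I would dispatch the remaining statistical assumptions. The sub-gaussianity Assumption \ref{as:subg} follows from (d): every element of $\spn{\m{F},\mu,\lo}$ is a finite linear combination of $\eta$ and the $\epsilon_t$ (via \eqref{eq:int_fe} and \eqref{eq:assign_model_if}), and linear combinations of independent uniformly-sub-gaussian variables are sub-gaussian with norm controlled by the $\ell^2$ norm of coefficients, which is in turn controlled by the $L^2$ norm because of the lower eigenvalue bound that one gets from independence and $\sigma^2_{\min}>0$; this same lower eigenvalue bound gives Assumption \ref{as:out_mom}(b). Assumption \ref{as:out_mom}(a) is hypothesis (d) directly (with $\sigma^2 = \mathbb{V}[\epsilon_{T_0+1}]$ constant, since $\epsilon_{T_0+1}$ is independent of $(\eta,X)$), and \ref{as:out_mom}(c), $\mathbb{V}[\mu]=O(1)$, follows because $\mu = \lambda_{T_0+1} + \psi^{(1)}\eta^{(1)} + (\psi^{(2)}_{T_0+1})^\top\eta^{(2)} + \sum_k \beta_{T_0+1,k} Y_{T_0+1-k}$ and one bounds $\|\psi_{T_0+1}\|_2$ using $\xi_j^2\sim\sigma^2(j)/T_0$ together with $\sum_j \sigma^2(j)\sim T_0\sum_j j^{-\kappa} = O(T_0)$, so $\|\psi_{T_0+1}\|_2^2 = \sum_j\xi_j^2 = O(1)$; the dynamic terms are bounded by stationarity-type arguments on the AR structure. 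For the overlap Assumption \ref{as:overlap}: parts (b) and (c) need control of the right tail and the $L^2$ norm of $\lo$; from \eqref{eq:assign_model_if}, $\lo - \log\E[\pi]$ is a sub-gaussian variable (centered up to the normalizing constant) with variance $\sim (\alpha^{(1)})^2 + \|\alpha^{(2)}\|^2 + \sum_j \alpha_j^2 = O(1)$ using $(\xi^{(sel)}_j)^2 \sim \sigma^2(j)/T_0$ to bound $\|[\alpha^{(1)},(\alpha^{(2)})^\top]\|_2 = \|\xi^{(sel)}\|_2 = O(1)$, so exponential moments of order up to $10$ are finite and \ref{as:overlap}(c) holds with room to spare; part (a) is a mild left-tail condition that follows from sub-gaussianity of $\lo$ and $\pi = \exp(\lo)/(1+\exp(\lo))$ being bounded below in probability.

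The more delicate step is Assumption \ref{as:miss_lo} and the vanishing of the $u$-terms. Hypothesis (c) imposes that $\lo$ lies in $\spn{\mathbf{1}, \eta, \epsilon_{T_0}, \dots, \epsilon_{T_0-k}}$; one must show this is well-approximated by $\spn{\mathbf{1},\m{F},\mu}$ so that $\muplo$ is close to $\lo$ and $u = \exp(\muplo-\lo)-1$ is small. Unlike the two-way Corollary \ref{cor:two-way}, where $\muplo = \lo$ exactly, here the recent shocks $\epsilon_{T_0-j}$ are recoverable from the pretreatment covariates only up to the same $O(1/\Tef)$-type error, and the factor part of $\lo$ (through $\alpha^{(2)}$) is recoverable only through the same singular-value-weighted projection as $\mu$; the assumption $(\xi^{(sel)}_j)^2\sim \sigma^2(j)/T_0$ is exactly what makes the selection-side projection error decay at the same $(\|\Sigma\|_{op}/T_0)^{1-1/\kappa}$ rate. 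So I would show $\|\muplo - \lo\|_2^2 = O(1/\Tef) = o(1)$, which gives Assumption \ref{as:miss_lo}(a) (the loss $\ell$ is controlled by the squared $L^2$ distance for sub-gaussian arguments) and also forces $\|u\|_2 = O(1/\sqrt{\Tef})$. Then, since $u$ is conditionally mean-independent structure does not quite hold, I would instead bound $\Pn(\pi_i u_i/\E[\pi])\epsilon_i$ directly: $\epsilon_{i,T_0+1}$ is independent of everything in $u_i$ and $\pi_i$ (which depend only on $\eta_i$ and pre-$T_0{+}1$ shocks), so this is a sum of conditionally-centered terms with variance $\lesssim \mathbb{E}[\pi^2 u^2]\sigma^2_{\max}/(\E[\pi]^2 n) \lesssim \|u\|^2 /(\E[\pi] n) \lesssim 1/(\Tef \E[\pi] n)$; and $1/(\Tef\E[\pi]n) = o(1/n)$ precisely when $\Tef \gg 1/\E[\pi]$, which holds since $\Tef\sim T_0^{1-1/\kappa}\to\infty$ and $\E[\pi]\gg n^{-1/2}$ with $T_0$ a polynomial power of $n$. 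Similarly the term $\Pn \frac{D_i-\pi_i}{1-\pi_i}\frac{\pi_i u_i}{\E[\pi]}\epsilon_i$ inside the first noise term contributes only $o_p(1/\sqrt{n})$. This collapses Theorem \ref{th:const_share}'s expansion to the stated form with leading noise term $\Pn\frac{D_i-\pi_i}{1-\pi_i}\frac{\epsilon_{i,T_0+1}}{\E[\pi]}$. Finally, $\bias = O_p(1/\Tef) = o_p(1/T_0^{1-1/\kappa})$ is immediate from the theorem and the rate for $\Tef$, and for the CLT one checks $\bias = o_p(1/\sqrt{\E[\pi]n})$ under $T_0\gg n^{\kappa/(2(\kappa-1))}$ — indeed $1/\Tef \sim T_0^{-(1-1/\kappa)} = T_0^{-(\kappa-1)/\kappa}$, which is $o(n^{-1/2})$ exactly when $T_0^{(\kappa-1)/\kappa}\gg \sqrt{n}$, i.e. $T_0 \gg n^{\kappa/(2(\kappa-1))}$ — after which Lindeberg's CLT applied to the i.i.d. sum $\Pn\frac{D_i-\pi_i}{1-\pi_i}\frac{\epsilon_{i,T_0+1}}{\E[\pi]}$, whose variance is $\frac{1}{\E[\pi]}\E[\frac{\mathbb{V}[\epsilon_{T_0+1}]}{1-\pi}\mid D=1]$ after normalizing by $\sqrt{\overline\pi n}$, delivers the asymptotic normality. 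The main obstacle is the bookkeeping in this last step: tracking that both the selection-side approximation error and the recoverability of the recent shocks $\epsilon_{T_0-j}$ decay fast enough, simultaneously with verifying the complexity condition \ref{as:comp} for the borderline exponent $\kappa>3$.
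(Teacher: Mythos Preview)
Your overall strategy---reduce to Theorem \ref{th:const_share} by verifying Assumptions \ref{as:emp_set} and \ref{as:subg}--\ref{as:out_mom}, then show the $u$-terms vanish because $\|\muplo-\lo\|_2=o(1)$---is exactly the paper's route, and most of your verifications (subgaussianity, overlap, outcome moments, the $\Tef\sim T_0^{1-1/\kappa}$ computation) match the paper's arguments closely.

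There is, however, a genuine gap: you verify Assumption \ref{as:miss_lo}(a) but never touch part (b), the bound $|\beta_\mu|\le L_\mu$ on the coefficient of $\mu$ in the decomposition $\muplo=f_{\muplo}+\beta_\mu\mu$. This is not automatic; as $T_0$ grows, $\mu$ becomes nearly collinear with $\spn{\mathbf 1,\m F}$ and $\beta_\mu$ could in principle blow up. The paper handles this with a short but non-obvious orthogonality argument: one first shows $\|\plo-\muplo\|_2^2=O(1/\Tef)$ (via the same $f^{opt}=\argmin_{f\in\mt F}\|\lo-f\|_2^2$ device used to get $\|\lo-\muplo\|_2=O(1/\sqrt{\Tef})$, combined with Lemma \ref{lem:ell_eq}), then expands
\[
\|\plo-\muplo\|_2^2=\|\plo-f_{\muplo}-\beta_\mu\bestmu\|_2^2+\beta_\mu^2\|\mu-\bestmu\|_2^2
\]
using that $\mu-\bestmu\perp\mt F$; since the second summand is $\beta_\mu^2/\Tef$ and the left side is $O(1/\Tef)$, one reads off $\beta_\mu^2=O(1)$. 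You need this step; without it, Theorem \ref{th:const_share} cannot be invoked.

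Two smaller points. First, your check of Assumption \ref{as:comp} compares $p/(\E[\pi]n)$ only against the $1/\sqrt{\Tef}$ branch of the $\max$ and calls the $\sqrt{\Tef/(\E[\pi]n)}$ branch ``slack''; in fact the latter is the binding one in the regime $T_0\gg n^{\kappa/(2(\kappa-1))}$ where the CLT applies, and it is precisely that branch that yields the clean equivalence $T_0\ll n^{\kappa/(\kappa+1)}$ (the other branch is automatically satisfied once $\kappa>3$ and $T_0<n^{\kappa/(2(\kappa-1))}$). Second, in bounding $\Pn(\pi_i u_i/\E[\pi])\epsilon_i$ you write the variance as $\lesssim\|u\|_2^2/(\E[\pi]n)$; the correct bound is $\lesssim\|\pi/\E[\pi]\|_4^2\|u\|_4^2/n=O(\|u\|_2^2/n)$ via H\"older and Corollary \ref{cor:mom_prod}, with no residual $1/\E[\pi]$. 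This doesn't matter here since $\E[\pi]\sim 1$, but the stated inequality is not right as written.
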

This result demonstrates that the SC estimator can be asymptotically unbiased in the model with an increasing number of factors. However, the restrictions on the relationship between $T_0$ and $n$ become more stringent. For example, if $\kappa = 4$, which implies a relatively fast convergence of the singular values, then for asymptotic unbiasedness, we require $T_0^{\frac54}\ll n \ll T_0^\frac32$. 

Following the same logic as in \cite{freeman2023linear}, one can interpret the model with a growing dimension of the interactive fixed effects as a semiparametric model with two-way unobserved heterogeneity. In this case, $\kappa$ can be interpreted as the smoothness of the nonparametric part of that model.

\subsection{VAR models}\label{subsec:var}
We now briefly discuss the role that additional covariates can play in estimation. For simplicity, we do this in the context of a single additional variable $Z_t$, assuming that $X_t = (Y_t, Z_t)$. We specify the evolution of the underlying potential outcomes using a VAR model:
\begin{equation}\label{eq:var_model}
\begin{aligned}
    Y_{t}(0) = \lambda_t^{Y} + \eta^\top \psi_{t}^{Y} + \sum_{k=1}^K\left(\beta_{1,t,k}^{Y} Y_{t-k}(0) + \beta_{2,t,k}^{Y} Z_{t-k}(0)\right) + \epsilon_{t}^{Y},\\
    Z_{t}(0) =  \lambda_t^{Z} + \eta^\top \psi_{t}^{Z} + \sum_{k=1}^K\left(\beta_{1,t,k}^{Z} Z_{t-k}(0) + \beta_{2,t,k}^{Z} Y_{t-k}(0)\right) + \epsilon_{t}^{Z},
\end{aligned}
\end{equation}
where $\mathbb{E}[\eta] = 0$, $\mathbb{V}[\eta] = \mathcal{I}_d$, and $\mathbb{E}\qty[\left(\epsilon_{t}^{Y},\epsilon_{t}^{Z}\right)| \eta, Y_{t-1}(0), Z_{t-1}(0), \dots] = 0$. This model allows $Z_t$ to be either a sequentially or a strictly exogenous variable. For the latter case, we need to set the coefficients $\beta^{Z}_{2,t,k}$ to zero and assume that $\epsilon_{t}^{Y}, \epsilon_{t}^{Z}$ are uncorrelated.

As before, we define the transformed variables:
\begin{align*}
    \tilde Y_{ t} = Y_{t} - \sum_{k=1}^K\left(\beta_{1,t,k}^{Y} Y_{t-k} + \beta_{2,t,k}^{Y} Z_{t-k}\right) -\lambda_t^{Y}, \quad
    \tilde Z_{ t} = Z_{t} - \sum_{k=1}^K\left(\beta_{1,t,k}^{Z} Z_{t-k} + \beta_{2,t,k}^{Z} Y_{t-k}\right) - \lambda_t^{Z},
\end{align*}
Using this notation, we get the following bound:
\begin{multline}\label{eq:var_bound}
      \min_{f \in \spn{\mathbf{1},\m{F}}}\| f -\mu\|_2^2 \le \left\|\eta \psi_{T_0+1}^{Y} -   \sum_{t = K+1}^{T_0}c_t^Y  \tilde Y_{ t} -  \sum_{t = K+1}^{T_0}c_t^Z\tilde Z_{t} \right\|_2^2 =\\
      \left\|  \psi_{T_0+1}^{Y}  -  \sum_{t = K+1}^{T_0}c_t^Y  \psi_t^{Y}-  \sum_{t = K+1}^{T_0}c_t^Z\psi_t^{Z}\right\|_2^2 + \sum_{t = K+1}^{T_0} \mathbf{c}_t^{\top}\Sigma_t\mathbf{c}_t,
\end{multline}
where $\Sigma_t = \mathbb{V}\qty[\left(\epsilon_{t}^{Y},\epsilon_{t}^{Z}\right)]$, and $\mathbf{c}_t^\top = \qty(c_t^{Y}, c_t^{Z})$. 

One can view this bound as a minor generalization of \eqref{eq:gr_dim} and optimize it over the coefficients as we did in \eqref{eq:gen_bound}. We present it separately to emphasize two different roles of $Z_{t}$. The first one is apparent from \eqref{eq:var_bound} -- we can use this variable to predict the relevant unobserved heterogeneity. This generalizes the discussion in Section \ref{subsec:exm}, where we showed how the availability of $Z_{t}$ can increase $\Tef$ in cases with unobserved policy shocks. $Z_{t}(0)$ plays an additional role in \eqref{eq:var_model}, allowing us to introduce an additional time-varying shock $\epsilon_{t}^Z$. This variable affects the future outcomes through the VAR system and can be part of the selection equation.

\section{Staggered Adoption}\label{sec:discussion}
In this section, we describe the adaptation of the previously developed results to the staggered adoption applications. We first briefly discuss the general principles behind the methods currently used for such settings and their potential problems. We then propose a particular estimator for contemporaneous treatment effects and outline the critical conceptual assumptions needed for its validity.

\subsection{Status quo}

Applications where units adopt the treatment sequentially, commonly called staggered designs, are ubiquitous in economics. The standard tool used to analyze such designs is the TWFE regression \eqref{eq:event_study} and its recent extensions for models with heterogenous effects (e.g., \citealp{de2020two,callaway2021difference,sun2021estimating,borusyak2021revisiting}). Another option, in particular for applications with few treated units, is the adaptation of the SC method described in \cite{ben2022synthetic} (see also \citealp{arkhangelsky2021synthetic}  and \citealp{cattaneo2022uncertainty}). 

All these solutions rely on the same principle, transforming the staggered design problem into a sequence of more straightforward block design problems. In particular, for a group of units that adopts the treatment in period $t$, researchers construct a suitable control group using some of the units that have not received the treatment. Once this group is constructed, the resulting data is analyzed using methods for block designs. Two choices for the control group are particularly prominent in the literature: it either includes all non-treated units or only ``never treated'' units. See \cite{callaway2021difference} for a discussion of these two control groups.

Our proposal below follows the same logic, with one important caveat: we suggest using SC only to learn the contemporaneous effects of the treatment. This distinguishes our proposal from some of the abovementioned methods, which explicitly focus on dynamic effects (e.g., \citealp{callaway2021difference, sun2021estimating}). This caveat is due to the possibility of dynamic selection -- the adoption of the treatment based on past outcomes. Dynamic selection arises naturally in staggered adoption applications with observational (e.g., \citealp{heckman2007dynamic}) or experimental data (e.g., \citealp{xiong2023optimal}).

To understand why dynamic selection creates a problem, consider period $t$ and a group of units that have not yet adopted the treatment. Under natural generalizations of the assumptions from Section \ref{sec:estimator}, which we discuss below, these units form a suitable control group for the period $t$, allowing us to estimate the contemporaneous effects. To learn the dynamic effects, we need a control group that has not received the treatment in future periods, e.g., period $t+1$. However, precisely the fact that this group has not received the treatment tells us that their outcomes in period $t+1$ should be systematically different. Using this group without additional adjustments will produce biased estimators of dynamic effects. This problem applies to all the methods mentioned above to the extent that they estimate dynamic effects. These estimators are valid only in the absence of dynamic selection. 

Dynamic selection is well understood in the literature on sequential unconfoundedness in biostatistics, e.g., \cite{robins2000marginal}, which also offers a solution. To identify and estimate dynamic effects, one uses sequential one-period comparisons and projects them back to the current period. See \cite{viviano2021dynamic} for a recent balancing algorithm that implements this logic in settings without unobserved heterogeneity. Our proposal below can be used as the first step toward developing similar algorithms for environments with unobserved heterogeneity. 

\subsection{Estimator and Assumptions}

To incorporate the possibility of multiple treatment periods, we enrich the setup discussed in Section \ref{sec:estimator} and assume that for each period $t \in \{-T_0,\dots, T_1\}$, we observe $\{X_{i,t}, W_{i,t}, Y_{i,t}\}_{i=1}^n$. Here $X_{i,t}$ includes all information observed up to period $t$ for unit $i$, $W_{i,t}$ describes the treatment status of unit $i$ in period $t$, and $Y_{i,t}$ describes the outcome of interest. Our first assumption restricts the behavior of $W_{i,t}$ over time.
\begin{assumption}\label{as:st_ad}\textsc{(Staggered adoption)}\\
    For every $t \in \{-T_0,\dots, T_1\}$ we have $W_{i,t+1} \ge W_{i,t}$, $W_{i,-1} \equiv 0$.
\end{assumption}
This assumption implies that no units are treated in the first $T_0$ periods, and overall, there are $T_1+1$ possible treatment dates. We use it to define for each $t \ge 0$ two subsamples: $\m{D}_{t,1} := \{i: W_{i,t} = 1, W_{i,t-1} = 0\}$ and $\m{D}_{t,0} := \{i: W_{i,t} = 0\}$. Let $n_{t,1}$ and $n_{t,0}$ be the size of the corresponding subsample and define $n_t := n_{t,1} + n_{t,0}$, $\overline \pi_t:= \frac{n_{t,1}}{n_t}$. 

In each period $t\ge 0$, we form the following estimator:
\begin{equation}\label{eq:st_estimator}
    \hat \tau^{(t)} := \frac{ \sum_{i \in \m{D}_{t,1}} Y_{i,t}}{\overline \pi_t n_{t}} -\frac{\sum_{i \in \m{D}_{t,0}} \hat \omega_i^{(t)} Y_{i,t}}{n_t} 
\end{equation}
The weights $\{\hat \omega^{(t)}_{i}\}_{i \in \m{D}_{t,1}\cup \m{D}_{t,0}}$ are constructed in the same way as before:
\begin{equation}\label{eq:st_pr_problem}
\begin{aligned}
   \hat \omega^{(t)} := \argmin_{\omega\ge 0}&\left\{\frac{\zeta^2}{n_t^2}\sum_{i \in \m{D}_1 \cup \m{D}_0} \omega_i\log(\omega_i) + \sum_{l=1}^{p^{(t)}} \qty(\frac{\sum_{i\in \m{D}_{t,1}} \phi^{(t)}_l(X_i)}{ \overline \pi_t n_{t}} - \frac{\sum_{i\in \m{D}_{t,0}}\omega_i \phi^{(t)}_l(X_i)}{n_t} )^2 \right\}\\
\text{subject to: } &\frac{1}{n_t}\sum_{i \in \m{D}_{t,0}}\omega_i = 1.
\end{aligned}
\end{equation}
Compared to \eqref{eq:pr_problem} this estimator has two differences: for each period $t$ we have different samples and use different functions for balancing $\phi^{(t)}(X_i) = \left(\phi_{1}^{(t)}(X_i), \dots, \phi_{p^{(t)}}^{(t)}(X_i)\right)$. Both of these changes are natural: the sample size $n_t$ diminishes over time, whereas the amount of the pre-treatment information increases. 

Next, we formalize the causal model behind a relevant part of the observed data.
\begin{assumption}\label{as:dyn_pot_out}\textsc{(Dynamic potential outcomes)}\\
    For every $t$ there exist potential outcomes $X_{i,t}(0), Y_{i,t}(1), Y_{i,t}(0)$ such that $X_{i,t}$ and $Y_{i,t}$ satisfy
\begin{equation*}
\begin{aligned}
     &X_{i,t}\{W_{i,t-1} = 0\}  = X_{i,t}(0)\{W_{i,t-1} = 0\},\\
     &Y_{i,t}\{W_{i,t-1} = 0\} = (Y_{i,t}(1) \{W_{i,t} = 1\} + Y_{i,t}(0) \{W_{i,t} = 0\})\{W_{i,t-1} = 0\}
\end{aligned}
\end{equation*}
\end{assumption}
The first part of this assumption specifies that the available information prior to period $t$ that we observe for units not treated in period $t-1$ corresponds to the baseline potential outcomes. This restriction generalizes the no-anticipation part of Assumption \ref{as:pot_outcomes}. Importantly, it does not restrict the behavior of $X_{i,t}$ in any other situation. The second part of the assumption relates the observed outcomes $Y_{i,t}$ to the underlying potential ones but only for units with $W_{i,t-1} = 0$. For those units, we interpret the observed outcomes in the same way as before. Since units can be treated in later periods, this restriction also incorporates the no-anticipation assumption. Assumption \ref{as:dyn_pot_out} does not specify a relationship between the observed and potential outcomes for units treated in earlier periods. The extension to a full dynamic model is conceptually straightforward but is irrelevant given our focus on contemporaneous effects.

Assumption \ref{as:dyn_pot_out} allows us to expand our estimator into two parts:
\begin{equation*}
      \hat \tau^{(t)}  =  \frac{ \sum_{i \in \m{D}_{t,1}} (Y_{i,t}(1) - Y_{i,t}(0)}{\overline \pi_t n_{t}} +\qty(\frac{\sum_{i \in \m{D}_{t,1}} Y_{i,t}(0)}{\overline \pi_t n_{t}} - \frac{\sum_{i \in \m{D}_{t,0}} \hat \omega_i^{(t)} Y_{i,t}(0)}{n_t})
\end{equation*}
The first part of this expansion is an in-sample contemporaneous effect of the treatment in period $t$ for units that adopted the policy in the same period, which we denote $\tau_t$. Despite its dependence on $t$, this effect is static in nature and does not capture any dynamics. The second part of the expansion is the error term. 

To complete the model, we need to specify sampling and selection mechanisms.  We do this with the following assumption, which is the generalization of Assumption \ref{as:sel}.
\begin{assumption}\label{as:dyn_sel}\textsc{(Dynamic selection)}\\
       (a) unit-level data $\left\{X_{i,t}, Y_{i,t}, W_{i,t}, \eta_i\right\}_{t= -T_0}^{T_1}$ are i.i.d. over $i$; (b) $W_{i,t} \independent Y_{i,t}(0)\Bigl|\, \eta_i, X_{i,t}, W_{i,t-1} = 0$, and  $\pi_{i,t} := \E[W_{i,t}| \eta_i, X_{i,t}, W_{i,t-1} = 0]$ belongs to $(0,1)$ with probability 1.
\end{assumption}
This restriction is a natural generalization of Assumption \ref{as:sel} to dynamic contexts. It is a version of the sequential ignorability assumption common in biostatistics (e.g., \citealp{robins2000marginal}) adapted to staggered designs. It implies that in each period, the treatment decision is based on the available information $X_{i,t}$, unobserved characteristic $\eta_i$, and some unobserved shocks that are unrelated to the potential outcomes.  This structure arises naturally in dynamic economic models; see \cite{heckman2007dynamic} for a comprehensive treatment. 

We do not formally analyze the behavior of the estimation error $\hat \tau_t - \tau_t$. Statistical guarantees analogous to Theorems \ref{th:van_share} - \ref{th:const_share} can be derived under natural extensions of Assumption \ref{as:emp_set} and technical assumptions from Section \ref{sec:results}. In particular, similar results will hold in an asymptotic regime where $T_0$ increases to infinity, even if $T_1$ is constant. Notably, the vanishing share results are particularly natural in staggered designs because we can expect $\overline \pi_{t}$ to be small for larger values of $t$. These results describe the marginal behavior for each $t\ge 0$, but as long as $T_1$ is finite, we expect the same guarantees to hold simultaneously for all $t\ge 0$.

\section{Conclusion}\label{sec:conc}
We analyze the large sample properties of the SC method. We derive the asymptotic representation of the resulting estimator using high-level assumptions on the assignment process and the complexity of unobservables. Our results imply that the SC estimator is asymptotically unbiased and normal in a large class of linear panel data models as long as the number of observed pre-treatment periods is large enough. In particular, this justifies using it as an alternative to the DiD estimators. We also show that the SC estimator can fail in models that feature unobserved heterogeneity in the persistence of time-varying shocks. 

\bibliographystyle{plainnat}
\bibliography{references}

\newpage
\appendix
\footnotesize
\begin{center}
    \Large For Online Publication
\end{center}
\section{Discussion}\label{ap:disc}
In this section, we informally discuss the intuition behind our results. The formal proofs are collected in Appendix \ref{ap:proofs}. To establish the properties of $\hat \tau$, we consider an alternative representation of $\hat\omega$ that follows from applying Fenchel duality to (\ref{eq:pr_problem}).
\begin{lemma}\label{lem:dual}
Define
\begin{equation*}
       \hlo :=\argmin_{f \in \spn{ \mathbf{1},\m{F}}} \Bigl\{ 
    \Pn\left((1-D_i)\exp(f_i) -
   \Pn D_if_i\right) +  \frac{\overline{\pi}\zeta^2}{2n}\|f\|^2_{\m{F}} \Bigl\} .
\end{equation*}
Then $(1-D_i)\hat \omega_i=\frac{(1-D_i)}{\overline{\pi}} \exp(\hlo_i)$.
\end{lemma}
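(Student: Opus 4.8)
## Proof plan for Lemma \ref{lem:dual}

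The plan is to apply Fenchel--Rockafellar duality to the convex program \eqref{eq:pr_problem} and identify the dual variable with $\hlo$. First I would rewrite \eqref{eq:pr_problem} in a form amenable to duality. The objective is a sum of a separable strictly convex term $\frac{\zeta^2}{n}\Pn \omega_i \log \omega_i$ and the square of a support-function term $\sup_{f\in\m{F}}[\cdots]$; for the set $\m{F}$ in \eqref{eq:main_text_set} the latter equals $\tfrac12 \norm{\Pn \omega_i(1-D_i)\boldsymbol\phi(X_i) - \Pn \tfrac{D_i}{\overline\pi}\boldsymbol\phi(X_i)}_2^2$, a convex quadratic in $\omega$. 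Introducing an auxiliary variable $v := \Pn \omega_i(1-D_i)\boldsymbol\phi(X_i) - \Pn \tfrac{D_i}{\overline\pi}\boldsymbol\phi(X_i) \in \mathbb{R}^p$ together with the linear constraint linking it to $\omega$, and keeping the normalization constraint $\Pn \omega_i(1-D_i) = 1$, turns \eqref{eq:pr_problem} into a problem with a separable entropy objective, a quadratic penalty on $v$, and finitely many linear equality constraints.

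Next I would form the Lagrangian, attaching a vector multiplier $g \in \mathbb{R}^p$ to the constraint defining $v$ (which I will re-express through the function $f = \sum_k g_k \phi_k \in \spn{\m{F}}$) and a scalar multiplier $c$ to the normalization constraint. Minimizing the Lagrangian over $\omega \ge 0$ is a coordinate-wise exercise: for each $i$ with $D_i = 0$ the inner problem is $\min_{\omega_i \ge 0}\{\tfrac{\zeta^2}{n}\omega_i\log\omega_i - \omega_i f(X_i)/n - c\,\omega_i/n\}$ (up to the way the $\tfrac1n$ and $\overline\pi$ factors are bookkept), whose solution is the familiar exponential tilt $\omega_i \propto \exp(\,\cdot\,)$; for $i$ with $D_i=1$ the variable $\omega_i$ drops out because it is multiplied by $(1-D_i)$ everywhere it appears. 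Minimizing over $v$ gives $v = -g$ (from $\tfrac12\norm{v}_2^2 + g^\top v$), which is what converts the quadratic imbalance penalty into the $\overline\pi\zeta^2/(2n)\norm{f}_{\m{F}}^2$ term after substituting back. Collecting terms, the dual objective becomes, after absorbing the scalar $c$ into the constant component of $f$ (this is why the minimization in the statement runs over $\spn{\mathbf 1,\m{F}}$ rather than $\spn{\m{F}}$), exactly the negative of the functional minimized in the definition of $\hlo$, so the dual optimizer is $\hlo$ and the primal-dual relation reads $(1-D_i)\hat\omega_i = \tfrac{(1-D_i)}{\overline\pi}\exp(\hlo_i)$.

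Finally I would check that strong duality and attainment actually hold, so that the stationarity/KKT relation is valid: the primal objective is strictly convex and coercive on the feasible set (the entropy term forces $\omega_i \to +\infty$ or $\omega_i \to 0$ to be penalized appropriately, and when $\zeta = 0$ one argues directly on the exponential-family form), Slater's condition is trivially met since $\omega_i = 1/\overline\pi_{\text{(control count)}}$ type points are strictly feasible, and the dual problem has a finite-dimensional smooth strongly-... well, strictly convex objective with a minimizer. I would also need to treat the boundary case $\zeta = 0$ separately, since then the entropy term has no curvature and one should argue that the representation still holds by a limiting argument or by direct verification of optimality conditions for the exponential-tilting solution.

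The main obstacle I anticipate is purely bookkeeping rather than conceptual: carefully tracking the $\tfrac1n$, $\overline\pi$, and $\zeta^2$ normalizations through the Legendre transform so that the quadratic imbalance penalty maps precisely to $\tfrac{\overline\pi\zeta^2}{2n}\norm{f}_{\m{F}}^2$ and the constant multiplier $c$ merges cleanly into the $\mathbf 1$-component of $f$ — getting these constants exactly right is where an otherwise routine duality computation can go wrong, and it is the step I would write out most carefully.
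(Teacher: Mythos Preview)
The paper does not actually prove this lemma; it simply states it and cites \citet{hainmueller2012entropy} and \citet{wang2020minimal} for the result, noting that it is ``not new.'' Your Fenchel--Rockafellar duality outline is precisely the standard route used in those references: introduce an auxiliary variable for the imbalance vector, take the Lagrangian, solve the separable entropy minimization in $\omega$ coordinatewise to obtain the exponential tilt, and absorb the scalar multiplier for the normalization constraint into the $\mathbf{1}$-component of $f$. The concerns you flag---careful tracking of the $1/n$, $\overline\pi$, and $\zeta^2$ factors so that the Legendre transform of $\tfrac12\norm{v}_2^2$ produces exactly $\tfrac{\overline\pi\zeta^2}{2n}\norm{f}_{\m{F}}^2$, and the degenerate case $\zeta=0$---are the right ones, and they are the only places a routine computation can slip.
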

Lemma \ref{lem:dual} is not new, e.g., \citet{hainmueller2012entropy} uses a similar result (see also \citet{wang2020minimal} for a more general formulation), and we state it to provide an important intuition for the weights that solve (\ref{eq:pr_problem}). In particular, we see that $\hlo$ solves an empirical analog of the problem for $\plo$ and thus we can expect:
\begin{equation*}
    (1-D_i) \hat \omega_i \approx \frac{(1-D_i)\exp(\plo_i)}{\E[\pi]} = \frac{1-D_i}{1-\pi_i}\frac{\pi_i \exp(\plo_i - \lo_i)}{\E[\pi]}.
\end{equation*}
If $\plo$ was equal to $\lo$, then $(1-D_i) \hat \omega_i$ would have converged to $\frac{1-D_i}{1-\pi_i}\frac{\pi_i}{\E[\pi]}$. These weights have an important balancing property: for any bounded function $f(X,\eta)$ we have
\begin{equation*}
    \E\left[\frac{1-D}{1-\pi}\frac{\pi}{\E[\pi]}f(X, \eta)\right] = \E[f(X, \eta)|D= 1].
\end{equation*}
In particular, under Assumption \ref{as:sel}, these weights balance all systematic differences between the treated and control groups.

However, our assumptions do not guarantee that $\plo = \lo$, even approximately. Instead, we use a different property of the weights $\frac{1-D_i}{1-\pi_i}\frac{\pi_i \exp(\plo_i - \lo_i)}{\E[\pi]}$, namely that they balance any function in $\spn{ \mathbf{1},\m{F}}$. Formally, ignoring regularization, for any $f \in \spn{ \mathbf{1},\m{F}}$ we have:
\begin{equation}\label{eq:bal_obs}
    \E\left[\frac{1-D}{1-\pi}\frac{\pi\exp(\plo - \lo)}{\E[\pi]}f\right] = \E[f|D= 1].
\end{equation}
From Assumption \ref{as:emp_set} we can expect this to guarantee
\begin{equation*}
     \E\left[\frac{1-D}{1-\pi}\frac{\pi\exp(\plo - \lo)}{\E[\pi]}\mu\right] \approx \E[\mu|D= 1],
\end{equation*}
with the approximation becoming better as $T_0$ goes to infinity. A naive analysis suggests that the approximation error should behave as $\min_{f\in \spn{1,\m{F}}}\|f - \mu\|_2$:
\begin{multline}\label{eq:pes_bound}
     \left|\E\left[\frac{1-D}{1-\pi}\frac{\pi\exp(\plo - \lo)}{\E[\pi]}\mu\right] - \E[\mu|D= 1]\right| = \left|\E\left[\left(\frac{\pi\exp(\plo - \lo)}{\E[\pi]} - \frac{\pi}{\E[\pi]}\right)\mu\right]\right| =\\
     \left|\E\left[\left(\frac{\pi\exp(\plo - \lo)}{\E[\pi]} - \frac{\pi}{\E[\pi]}\right)(\mu-f)\right]\right| \le \| \mu-f\|_2\left\|\frac{\pi\exp(\plo - \lo)}{\E[\pi]} - \frac{\pi}{\E[\pi]}\right\|_2,
\end{multline}
where $f \in \spn{1,\m{F}}$ and we used (\ref{eq:bal_obs}) to get the second equality. 

By definition $\min_{f\in \spn{1,\m{F}}}\|f - \mu\|_2=\frac{1}{\sqrt{\Tef}}$ and \eqref{eq:pes_bound} implies the same upper bound for the SC estimator. To improve over this pessimistic bound, we again ignore regularization and use the balancing property of $\muplo$:
\begin{equation*}
\E\qty[\left(\frac{\pi \exp(\muplo - \lo)}{\E[\pi]}-\frac{\pi}{\E[\pi]}\right)(\mu - f)] = 0.
\end{equation*}
Subtracting this equation from the last equality in \eqref{eq:pes_bound} and using $f = \pmu \in \spn{1,\m{F}}$ we get
\begin{multline*}
     \left|\E\left[\frac{1-D}{1-\pi}\frac{\pi\exp(\plo - \lo)}{\E[\pi]}\mu\right] - \E[\mu|D= 1]\right| = \\
     \left|\E\qty[\left(\frac{\pi\exp(\plo - \lo)}{\E[\pi]} - \frac{\pi \exp(\muplo - \lo)}{\E[\pi]}\right)(\mu - \pmu)]\right| = \\
      \left|\E\qty[\frac{\pi \exp(\muplo - \lo)}{\E[\pi]}\left(\exp(\plo -\muplo) - 1\right)(\mu - \pmu)]\right| \le\\
      \left|\E\qty[\frac{\pi \exp(\muplo - \lo)}{\E[\pi]}(\plo -\muplo)(\mu - \pmu)]\right| + \left|\E\qty[\frac{\pi \exp(\muplo - \lo)}{\E[\pi]}\ell(\muplo-\plo)(\mu - \pmu)]\right|
\end{multline*}
The first part of the last inequality is the population analog of the bias term defined in \eqref{eq:bias}. The second part arises from the nonlinearity of $\ell$ and will be asymptotically negligible compared to the first one.


\newpage
\section{Proofs}\label{ap:proofs}

\subsection{Definitions}\label{ap:defs}
By construction, the estimator \eqref{eq:estimator} is invariant with respect to arbitrary constant shifts of functions $f\in \m{F}$. As a result, without loss of generality, we will assume that $\mathbb{E}[f] = 0$ for any $f\in \m{F}$. It then follows that for $f\in \spn{\m{F}}$ we have:
\begin{equation*}
\| f\|_{\m{F}} = \min_{\alpha,\beta}\| f - \alpha - \beta \mu\|_{\m{F}}.
\end{equation*}
as long as $\mu \not\in \spn{\mathbf{1},\m{F}}$. To see this, observe that by construction for any $\alpha, \beta \ne 0$ we have $ f - \alpha - \beta \mu \not \in \spn{\m{F}}$, and by definition of the gauge function we have $\infty = \|f - \alpha - \beta \mu\|_{\m{F}}> \| f\|_{\m{F}}$. As a result, we can extend this norm to a seminorm on $f \in \spn{\mu, \mathbf{1}, \m{F}}$ by defining:
\begin{equation*}
\| f\|_{\m{F}} = \min_{\alpha,\beta}\| f - \alpha - \beta \mu\|_{\m{F}}.
\end{equation*}
We use $<\cdot,\cdot>_{\m{F}}$ to denote the semi-inner product this norm induces. 

Define $\mt{F} := \spn{\mathbf{1},\m{F}}$; we list the definitions of functions that we will use in the proof:
\begin{equation}
\begin{aligned}
    &\hlo :=\argmin_{f \in \mt{F}} \Bigl\{ 
    \Pn\left((1-D_i)\exp(f_i) -
   \Pn D_if_i\right) +  \frac{\overline{\pi}\zeta^2}{2n}\|f\|^2_{\m{F}} \Bigl\}\\
     &  \plo := \argmin_{f \in \mt{F}}  \Bigl\{
    \E\left[\pi\ell(\lo - f)\right] + \frac{\E[\pi]\zeta^2}{2n}\|f\|^2_{\m{F}}\Bigl\}\\
 &\muplo := \argmin_{f \in \spn{ \mt{F}, \mu}} \Bigl\{
    \E\left[\pi\ell(\lo - f)\right]+\frac{\E[\pi]\zeta^2}{2n}\|f\|^2_{\m{F}}\Bigl\}\\
    &\bestmu:= \argmin_{f\in \mt{F}} \| f - \mu\|_2\\
    &\pmu := \argmin_{f\in \mt{F}} \E\left[\frac{\pi}{\E[\pi]}\exp(\muplo -\lo)(\mu - f)^2\right]\\
    &\hmu:= \argmin_{f \in \mt{F}} \left\{\Pn\frac{\pi_i}{\E[\pi]}\exp(\muplo_i -\lo_i)(\mu_i - f_i)^2\right\}
\end{aligned}
\end{equation}

We also define the error terms:
\begin{equation}
    u:= \exp(\muplo - \lo) - 1, \quad
     \nu_{\muplo} := \plo-\muplo, \quad
     \nu_{\mu} := \mu - \pmu, \quad \epsilon:= Y(0) - \mu.
\end{equation}

\subsection{First-order conditions}\label{sec:ap_foc}
We collect the first-order conditions for different objects. We have for any $f \in \mt{F}$ for the population problem for $\pmu$:
\begin{equation}\label{eq:foc_pmu}
    \E\left[\frac{\pi}{\E[\pi]}\exp(\muplo -\lo)\nu_{\mu}f\right] = 0,
\end{equation}
and analogously for the empirical problem for $\hmu$:
\begin{equation}\label{eq:foc_hmu}
    \Pn\frac{\pi_i}{\E[\pi]}\exp(\muplo_i -\lo_i)(\mu_i-\hmu)f_i = 0.
\end{equation}
For any $f\in \spn{\mt{F}, \mu}$ we have from the FOC for $\muplo$:
\begin{equation}\label{eq:foc_muplo}
    \E\left[\pi uf\right] =  - \frac{\E[\pi]\zeta^2}{n}<\muplo,f>_{\m{F}},
\end{equation}
and for any $f \in \mt{F}$ we have from the FOC for $\plo$:
\begin{equation}\label{eq:foc_plo}
    \E\left[\pi(\exp(\plo - \lo) - 1)f\right] =  - \frac{\E[\pi]\zeta^2}{n}<\plo,f>_{\m{F}}.
\end{equation}
Subtacting \eqref{eq:foc_muplo} from \eqref{eq:foc_plo} we get for any $f \in \mt{F}$:
\begin{multline}\label{eq:alt_foc_plo}
     \E\left[\pi\left(\exp(\plo-\lo) - \exp(\muplo - \lo)\right)f\right] = \E\left[\pi\exp(\muplo - \lo)\left(\exp(\nu_{\muplo}) - 1\right)f\right] = \\
     -  \frac{\E[\pi]\zeta^2}{n}<\plo -\muplo,f>_{\m{F}},
\end{multline}
which is an alternative FOC for $\plo$.

We use \eqref{eq:alt_foc_plo} to derive an equivalent definition for $\plo$:
\begin{equation}\label{eq:def_plo}
\plo := \argmin_{f \in \mt{F}}  \Bigl\{
    \E\left[\pi\exp(\muplo - \lo)\ell(\muplo - f)\right] + \frac{\E[\pi]\zeta^2}{2n}\|f-\muplo\|^2_{\m{F}}\Bigl\}
\end{equation}
To see why this is correct, observe that
\begin{equation}
\begin{aligned}
\label{eq:quick_der}
&\E\left[\pi\ell(\lo - f)\right]+\frac{\E[\pi]\zeta^2}{2n}\|f\|^2_{\m{F}} -  \E\left[\pi\ell(\lo - \muplo)\right]-\frac{\E[\pi]\zeta^2}{2n}\|\muplo\|^2_{\m{F}} \\
&= \E\left[\pi\exp(\muplo - \lo)\ell(\muplo - f)\right] + \frac{\E[\pi]\zeta^2}{2n}\|f-\muplo\|^2_{\m{F}} + \E\left[\pi u(f - \muplo)\right] +  \frac{\E[\pi]\zeta^2}{n}<\muplo,f - \muplo>_{\m{F}} \\
&=  \E\left[\pi\exp(\muplo - \lo)\ell(\muplo - f)\right] + \frac{\E[\pi]\zeta^2}{2n}\|f-\muplo\|^2_{\m{F}},
\end{aligned}
\end{equation}
Here the second equality holds because, taking $f=f-\muplo$ in \eqref{eq:foc_muplo}, we see the last two terms in the penultimate expression are equal and opposite. To show that the first holds, we use this elementary identity to combine the penalty terms
\begin{equation*}
    \| f\|_{\m{F}}^2 - \| \muplo\|_{\m{F}}^2 = \| f - \muplo\|_{\m{F}}^2 + 2< f - \muplo, \muplo>_{\m{F}}
\end{equation*}
and this arithmetic to combine the loss terms
\begin{multline*}
    \ell(\lo - f) - \ell(\lo - \muplo) = \exp(f - \lo) - \exp(\muplo - \lo) - (f - \muplo) =\\
    \exp(f- \lo) - \exp(\muplo - \lo) -  \exp(\muplo - \lo)(f- \muplo) + \exp(\muplo - \lo)(f - \muplo) -  (f - \muplo) = \\
    \exp(\muplo - \lo)(\exp(f - \muplo) - (f - \muplo) - 1) + (\exp(\muplo - \lo)-1)(f - \muplo)  = \\
    \exp(\muplo - \lo) \ell(\muplo - f) + u (f- \muplo).
\end{multline*}

 \newpage
\subsection{Properties of the population objects}\label{ap:pop_objects}
\begin{lemma}\label{lem:init_bound}
    Suppose Assumptions \ref{as:subg} - \ref{as:overlap} hold, then there exists an absolute constant $\tilde L_{\muplo}$ such that $\| \muplo-\lo\|_2 \le  \tilde L_{\muplo}$.
\end{lemma}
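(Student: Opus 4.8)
The plan is to bound $\|\muplo - \lo\|_2$ by exploiting the definition of $\muplo$ as a regularized minimizer over the class $\spn{\mt{F}, \mu}$, together with the sub-gaussianity of that class (Assumption \ref{as:subg}) and the overlap control on the right tail of $\pi/\E[\pi]$ (Assumption \ref{as:overlap}). The key observation is that $\muplo$ is the population $\m{F}$-penalized minimizer of $f \mapsto \E[\pi \ell(\lo - f)] + \frac{\E[\pi]\zeta^2}{2n}\|f\|_{\m{F}}^2$, and the constant function $f \equiv \log \E[\pi] =: c_0$ lies in $\spn{\mathbf{1}} \subseteq \spn{\mt{F},\mu}$ and has zero $\m{F}$-seminorm. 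Hence
\begin{equation*}
\E[\pi \ell(\lo - \muplo)] \le \E[\pi \ell(\lo - \muplo)] + \frac{\E[\pi]\zeta^2}{2n}\|\muplo\|_{\m{F}}^2 \le \E[\pi \ell(\lo - c_0)].
\end{equation*}
The right-hand side is $\E[\exp(\lo)\tfrac{\E[\pi]}{\exp(\lo)}\cdot\tfrac{1}{\E[\pi]} \cdot(\dots)]$ — more carefully, $\pi \le \exp(\lo)$ and $\ell(\lo - c_0) = \exp(c_0 - \lo) + (\lo - c_0) - 1$, so $\pi\ell(\lo - c_0) \le \E[\pi] \cdot \frac{\pi}{\exp(\lo)}\cdot\frac{\exp(\lo)}{\exp(\lo)}(\dots)$; the clean route is $\pi\exp(c_0-\lo) \le \exp(\lo)\exp(c_0-\lo) = \E[\pi]$ and $\pi(\lo - c_0) \le \exp(\lo)(\lo-c_0)$, whose expectation is controlled by Assumption \ref{as:overlap}(b) (taking $\lambda$ slightly above $1$, since $x \mapsto xe^{x}$ has at most exponential growth, bounded by $\exp(\lambda x)$ up to a constant). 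This shows $\E[\pi\ell(\lo-c_0)] \le C \E[\pi]$ for an absolute constant $C$, hence $\E[\pi\ell(\lo - \muplo)] \le C\E[\pi]$, i.e. $\E\big[\tfrac{\pi}{\E[\pi]}\ell(\lo - \muplo)\big] \le C$. Note this also gives $\E[\pi\ell(\lo - \muplo)]/\E[\pi] \le C$, which is essentially Assumption \ref{as:miss_lo}(a) derived rather than assumed — consistent, since here we are after an $L^2$ bound.

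The next step converts the weighted $\ell$-bound into an (unweighted) $L^2$-bound on $\muplo - \lo$. Write $g := \muplo - \lo$. From the step above, $\E[\pi\, \ell(-g)] \le C\E[\pi]$, i.e. $\E\big[\tfrac{\pi}{\E[\pi]}(\exp(g) - g - 1)\big] \le C$. Since $\ell(-g) = e^{g} - g - 1 \ge 0$ and, for $|g|$ bounded away from a neighborhood of $0$, $\ell(-g) \gtrsim g^2$ (indeed $\ell(-g) \ge \frac{1}{2}\min\{g^2, |g|\}$ for all real $g$, and $\ell(-g) \ge c g^2$ once $|g| \le M$), I would split $\E[\tfrac{\pi}{\E[\pi]} g^2] = \E[\tfrac{\pi}{\E[\pi]} g^2 \{|g| \le M\}] + \E[\tfrac{\pi}{\E[\pi]} g^2 \{|g| > M\}]$. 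The first piece is $\lesssim \E[\tfrac{\pi}{\E[\pi]}\ell(-g)] \le C/c$. For the second piece I would use Cauchy–Schwarz and the fact that $g = \muplo - \lo \in \spn{\mt{F}, \mu, \lo} \subseteq L^2$ is sub-gaussian (Assumption \ref{as:subg}, combined with $\lo \in L^2$ and $\mu \in L^2$ with bounded variance, Assumption \ref{as:out_mom}(c)) — together with the bounded moments of $\pi/\E[\pi]$ from Assumption \ref{as:overlap}(b) — to show $\E[\tfrac{\pi}{\E[\pi]}g^2\{|g|>M\}]$ is small, or at least $O(1)$; combined with $\|g\|_2$ being finite a priori this closes the bound on $\E[\tfrac{\pi}{\E[\pi]}g^2]$. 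Finally, Assumption \ref{as:overlap}(a) gives a set of probability $\ge 1-\epsilon_0$ on which $\pi/\E[\pi] \ge q_{\epsilon_0} > 0$, so $\E[g^2\{\pi/\E[\pi] \ge q_{\epsilon_0}\}] \le q_{\epsilon_0}^{-1}\E[\tfrac{\pi}{\E[\pi]}g^2] = O(1)$; combining with the sub-gaussian tail of $g$ to control $\E[g^2 \{\pi/\E[\pi] < q_{\epsilon_0}\}]$ (a set of small probability) yields $\|g\|_2^2 = O(1)$, which is the claim.

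The main obstacle I anticipate is the circularity risk in the $L^2$ step: to bound $\E[\tfrac{\pi}{\E[\pi]}g^2\{|g|>M\}]$ and $\E[g^2\{\pi/\E[\pi]<q\}]$ I want to invoke that $g$ is sub-gaussian with a controlled norm, but a priori I only know $g \in L^2$ (the norm is what I am trying to bound). The resolution is to make the argument self-improving: pick $M$ large and use that on $\{|g|>M\}$ one has $g^2 \le \frac{2}{M}\ell(-g)\cdot M \le \dots$ — more precisely $\ell(-g) \ge e^{g}-g-1$ grows super-quadratically for large positive $g$ and linearly-dominates for large negative $g$ is too weak on one side, so I would instead use $\ell(-g) \ge \frac{g^2}{2}\cdot\frac{1}{1+|g|/3}$-type bounds and absorb. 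Alternatively — and this is cleaner — bound $\E[\tfrac{\pi}{\E[\pi]}g^2]$ directly via $g^2 \le 2\ell(-g) + 2\ell(g)$ fails since $\ell(g)$ is not controlled; so the honest route is: (i) get $\E[\tfrac{\pi}{\E[\pi]}\ell(-g)]\le C$; (ii) note $\ell(-g) \ge \tfrac{1}{8}\min(g^2,g^2/|g|)$-style two-sided quadratic-or-linear lower bound, giving control of $\E[\tfrac{\pi}{\E[\pi]}(g^2\wedge |g|)]$; (iii) upgrade $\wedge$ to plain $g^2$ using Hölder with the finite higher moments of $\pi/\E[\pi]$ (Assumption \ref{as:overlap}(b)) and of $g$ — and here the finiteness (not the size) of $\|g\|_{\psi_2}$ suffices to run Hölder, after which the resulting inequality $\|g\|_2^2 \lesssim 1 + \|g\|_2^{\alpha}$ with $\alpha<2$ forces $\|g\|_2 = O(1)$. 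I would present the argument in that order so that no step presupposes the conclusion.
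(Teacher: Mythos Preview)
Your overall strategy---bound the weighted loss $\E[\tfrac{\pi}{\E[\pi]}\ell(\lo-\muplo)]$, then convert to an $L^2$ bound on $g := \muplo-\lo$---matches the paper's. Two remarks.

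First, the derivation of $\E[\tfrac{\pi}{\E[\pi]}\ell(\lo-\muplo)] \le C$ from optimality against the competitor $c_0=\log\E[\pi]$ is unnecessary: the lemma's hypotheses include Assumption~\ref{as:miss_lo}, whose part (a) states exactly this bound. Your derivation is a pleasant aside---it shows part (a) of Assumption~\ref{as:miss_lo} is implied by Assumption~\ref{as:overlap}(b)---but it is extra work here, and the paper simply invokes the assumption.

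Second, your conversion step differs from the paper's, and the circularity you worry about is an artifact of the detour through $g^2$. Your self-improving scheme \emph{can} be closed: from $\ell(-g)\gtrsim g^2\wedge|g|$ one gets $\E[\tfrac{\pi}{\E[\pi]}(g^2\wedge|g|)] \lesssim 1$; de-weighting via Assumption~\ref{as:overlap}(a) on $\{\pi\ge q_\epsilon\E[\pi]\}$ controls $\E[g^2\{|g|\le 1\}]$ and $\E[|g|\{|g|>1\}]$ there; Cauchy--Schwarz upgrades the latter to $\E[g^2\{|g|>1\}]\lesssim \|g\|_2^{3/2}$ using $\|g\|_3\lesssim\|g\|_2$; and the complement contributes $\lesssim \|g\|_2^2\sqrt{\epsilon}$, absorbable for small $\epsilon$. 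This yields $\|g\|_2^2\lesssim 1+\|g\|_2^{3/2}$, hence $\|g\|_2=O(1)$. The paper instead uses a one-line small-ball argument: since $\|g\|_{\psi_2}\le L_{\psi_2}\|g\|_2$, Paley--Zygmund gives $\P(2|g|\ge\|g\|_2)\ge c^\star$ for an absolute $c^\star>0$, whence (using $\ell(x)\ge\ell(|x|)$ and Assumption~\ref{as:overlap}(a))
\[
L_\ell \;\ge\; \E\!\left[\tfrac{\pi}{\E[\pi]}\ell(|g|)\right] \;\ge\; \ell\!\left(\tfrac{\|g\|_2}{2}\right) q_{\epsilon^\star}\,\P\!\left(\pi\ge q_{\epsilon^\star}\E[\pi],\ 2|g|\ge\|g\|_2\right) \;\ge\; \ell\!\left(\tfrac{\|g\|_2}{2}\right)\cdot\tfrac{q_{\epsilon^\star}c^\star}{2},
\]
with $\epsilon^\star=\min(\epsilon_0,c^\star/2)$, and inverting $\ell$ on $[0,\infty)$ finishes. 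What your approach buys is a slightly weaker hypothesis set (you do not use Assumption~\ref{as:miss_lo}(a)); what the paper's buys is directness---no bootstrapping, no case splits, no intermediate control of $\E[\tfrac{\pi}{\E[\pi]}g^2]$.
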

\begin{proof}
By Assumption \ref{as:subg} we have $\| \lo-\muplo\|_2 < \infty$, then using Assumption \ref{as:overlap} and $\ell(x) \ge \ell(|x|)$ we get for any $\epsilon \in (0, \epsilon_0]$:
\begin{align*}
    &\frac{\E[\pi \ell(\lo - \muplo)]}{\E[\pi]}  \ge  \frac{\E[\pi \ell(|\lo - \muplo|)]}{\E[\pi]} = \frac{\E\left[\pi \ell\left(\|\lo - \muplo\|_2\frac{|\lo - \muplo|}{\|\lo - \muplo\|_2}\right)\right]}{\E[\pi]} \ge \\
    &\ell\left(\frac{\|\lo - \muplo\|_2}{2}\right)\frac{\E\left[\pi \{2|\lo - \muplo| \ge \|\lo - \muplo\|_2\} \right]}{\E[\pi]} \ge\\
    &\ell\left(\frac{\|\lo - \muplo\|_2}{2}\right)q_{\epsilon}\E\{\pi \ge q_{\epsilon}\E[\pi]\}\{2|\lo - \muplo| \ge \|\lo - \muplo\|_2\}  \ge \\
    &\ell\left(\frac{\|\lo - \muplo\|_2}{2}\right)q_{\epsilon}\left(\E\{\pi \ge q_{\epsilon}\E[\pi]\} + \E\{2|\lo - \muplo| \ge \|\lo - \muplo\|_2\} -1\right)  \ge \\
     &\ell\left(\frac{\|\lo - \muplo\|_2}{2}\right)q_{\epsilon}\left(\E\{2|\lo - \muplo| \ge \|\lo - \muplo\|_2\} -\epsilon\right). 
\end{align*}
    By Assumption \ref{as:subg} we have $\E\{2|\lo - \muplo| \ge \|\lo - \muplo\|_2\} > c^{\star}>0$, where $c^{\star}$ is an absolute constant. Choosing $\epsilon^{\star} = \min\left\{\epsilon_0, \frac{c^{\star}}{2}\right\}$ we get from Assumption \ref{as:miss_lo}:
    \begin{equation*}
       L_{\muplo} \ge \frac{\E[\pi \ell(\lo - \muplo)]}{\E[\pi]} \ge \ell\left(\frac{\|\lo - \muplo\|_2}{2}\right)\frac{q_{\epsilon^{\star}}c^{\star}}{2} \Rightarrow  2\ell_{+}^{-1}\left(\frac{2L_{\muplo}}{q_{\epsilon^{\star}}c^{\star}}\right) \ge \|\lo - \muplo\|_2,
    \end{equation*}
    where $\ell_{+}^{-1}$ is the inverse of $\ell(|x|)$. The result follows by defining  $\tilde L_{\muplo} :=2\ell_{+}^{-1}\left(\frac{2L_{\muplo}}{q_{\epsilon^{\star}}c^{\star}}\right)$
\end{proof}

The next two lemmas connect the weighted loss functions we used to construct $\plo$ and $\pmu$ with the usual $L^2$ norm.
\begin{lemma}\label{lem:l_2_eq}
Suppose Assumptions \ref{as:subg} - \ref{as:overlap} holds, then we have for any $x$ such that $\|x\|_{\psi_2} \le C$
\begin{equation*}
    C_1 \|x\|_2^2 \le \E\left[\frac{\pi\exp(\muplo -\lo) }{\E[\pi]}x^2\right] \le C_2 \|x\|_2^2.
\end{equation*}
\end{lemma}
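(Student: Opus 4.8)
The plan is to set $w := \dfrac{\pi\exp(\muplo - \lo)}{\E[\pi]}$ and prove the two inequalities separately: the right-hand one from an $L^2$ moment bound on $w$ combined with H\"older's inequality, and the left-hand one from the observation that $w$ is bounded below by an absolute constant on an event of probability arbitrarily close to one, after which the contribution of the complementary event is absorbed using the sub-gaussian tail of $x$. The two standing inputs are Lemma \ref{lem:init_bound}, giving $\|\muplo - \lo\|_2 \le \tilde L_{\muplo}$, together with Assumptions \ref{as:subg} and \ref{as:overlap}; throughout I use that on the relevant class $\|x\|_{\psi_2}$ and $\|x\|_2$ are comparable (Assumption \ref{as:subg}), so that the hypothesis $\|x\|_{\psi_2}\le C$ controls $\|x\|_{L^p}$ by $\|x\|_2$ up to an absolute constant depending on $p$.

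For the upper bound, first note $\pi \le \exp(\lo)$, hence $w \le \exp(\muplo)/\E[\pi] = \exp(\lo - \log \E[\pi])\exp(\muplo - \lo)$. Applying Cauchy--Schwarz,
\begin{equation*}
    \E[w^2] \le \E\big[\exp(4(\lo - \log\E[\pi]))\big]^{1/2}\,\E\big[\exp(4(\muplo-\lo))\big]^{1/2}.
\end{equation*}
The first factor is at most $\pi_{\max}^{1/2}$ by Assumption \ref{as:overlap}(b) with $\lambda = 4 \in [1,10]$. For the second factor, $\muplo - \lo$ lies in the sub-gaussian class $\spn{\mathbf{1},\m{F},\mu,\lo}$, so by Assumption \ref{as:subg} and Lemma \ref{lem:init_bound} its $\psi_2$-norm is bounded by an absolute constant, whence $\E[\exp(4(\muplo - \lo))]$ is bounded by an absolute constant. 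Thus $\|w\|_{L^2}\le C_w$ for an absolute $C_w$, and H\"older gives $\E[w x^2] \le \|w\|_{L^2}\|x\|_{L^4}^2 \lesssim \|x\|_{\psi_2}^2 \lesssim \|x\|_2^2$, which is the desired upper inequality.

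For the lower bound, fix $\epsilon \in (0,\epsilon_0]$. By Assumption \ref{as:overlap}(a), $\frac{\pi}{\E[\pi]} \ge q_\epsilon$ on an event of probability at least $1-\epsilon$; by Chebyshev and Lemma \ref{lem:init_bound}, $|\muplo - \lo| \le M$ on an event of probability at least $1 - \tilde L_{\muplo}^2/M^2$. On the intersection $A$ of these two events one has $w \ge q_\epsilon e^{-M}$, so
\begin{equation*}
    \E[w x^2] \ge q_\epsilon e^{-M}\,\E[x^2 \mathbf{1}_A] = q_\epsilon e^{-M}\big(\E[x^2] - \E[x^2 \mathbf{1}_{A^c}]\big).
\end{equation*}
By H\"older and the sub-gaussian bound, $\E[x^2\mathbf{1}_{A^c}] \le \|x\|_{L^4}^2\,\P(A^c)^{1/2} \lesssim \|x\|_2^2\,(\epsilon + \tilde L_{\muplo}^2/M^2)^{1/2}$. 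Choosing $\epsilon$ small and then $M$ large — both absolute constants, determined by $\epsilon_0$, $\tilde L_{\muplo}$, and the sub-gaussian constant — so that $(\epsilon + \tilde L_{\muplo}^2/M^2)^{1/2}$ times the implied constant is at most $\tfrac12$, we get $\E[wx^2] \ge \tfrac12 q_\epsilon e^{-M}\|x\|_2^2$, i.e. the lower inequality with $C_1 := \tfrac12 q_\epsilon e^{-M}$.

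The moment bounds are routine; the delicate point is the lower bound, where one must coordinate the choice of $\epsilon$ (controlling the left tail of $\pi/\E[\pi]$) with the Chebyshev radius $M$ so that the leakage term $\E[x^2\mathbf{1}_{A^c}]$ is genuinely absorbed while $q_\epsilon e^{-M}$ stays an absolute constant, and one should check that $\muplo - \lo$ — a constant plus an element of $\spn{\m{F},\mu,\lo}$ — indeed has $\psi_2$-norm controlled by an absolute constant via Assumption \ref{as:subg} and Lemma \ref{lem:init_bound}.
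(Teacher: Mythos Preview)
Your proof is correct. The upper bound follows the paper's route (H\"older with the $L^2$-$L^4$ comparison for $x$); you simply bound $\|w\|_{L^2}$ directly from Assumptions \ref{as:subg}--\ref{as:overlap} rather than appealing to Corollary \ref{cor:mom_prod}, which cleanly avoids the circularity the paper footnotes.

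The lower bound takes a genuinely different route. The paper proceeds in the small-ball style used in Lemma \ref{lem:init_bound}: it introduces three indicator events, one of them $\{2|x|\ge\|x\|_2\}$ depending on $x$ itself, multiplies out, and uses inclusion--exclusion to reduce to a bound on $\E\{2|x|\ge\|x\|_2\}$ via the $L^2$--$L^4$ comparison. You instead fix an event $A$ independent of $x$ on which $w$ is uniformly bounded below, write $\E[wx^2]\ge q_\epsilon e^{-M}(\|x\|_2^2-\E[x^2\mathbf{1}_{A^c}])$, and absorb the leakage term by H\"older and the same $L^2$--$L^4$ comparison. Both arguments ultimately rest on the sub-gaussian norm equivalence for $x$, but yours is more direct and decouples the ``$w$ is large'' event from the random variable $x$; the paper's version matches the template reused in Lemmas \ref{lem:or_proj} and \ref{lem:conv_rate_1}, which is presumably why it is written that way.
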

\begin{proof}
We start with the upper bound:
    \begin{align*}
        \E\left[\frac{\pi\exp(\muplo -\lo) }{\E[\pi]}x^2\right]  \le 
        \left\|\frac{\pi\exp(\muplo -\lo) }{\E[\pi]}\right\|_2\|x\|_4^2  \le
        C \left\|\frac{\pi\exp(\muplo -\lo) }{\E[\pi]}\right\|_2\|x\|_2^2,
    \end{align*}
where the last inequality follows by subgaussianity of $x$, and $\left\|\frac{\pi\exp(\muplo -\lo) }{\E[\pi]}\right\|_2$ is bounded by Corollary \ref{cor:mom_prod}.\footnote{The proof of Corollary \ref{cor:mom_prod} for $\lambda_4 = 0$ does not depend on the current result.}

To prove the lower bound we consider the following set of inequalities for $\epsilon \in (0, \epsilon_0]$ and $x > 0$:
\begin{align*}
&\E\left[\frac{\pi\exp(\muplo -\lo) }{\E[\pi]}x^2\right] = \|x\|_2^2 \E\left[\frac{\pi\exp(\muplo -\lo) }{\E[\pi]}\left(\frac{|x|}{\|x\|_2}\right)^2\right] \ge\\
&\frac{\|x\|_2^2q_{\epsilon}\exp( -x \| \muplo -\lo\|_{\psi_2})}{4} \E\{ \pi \ge q_{\epsilon} \E[\pi]\} \{2 |x| \ge \|x\|_2\}\{\muplo -\lo \ge -x \| \muplo -\lo\|_{\psi_2}\} = \\
&\frac{\|x\|_2^2q_{\epsilon}\exp( -x \| \muplo -\lo\|_{\psi_2})}{4}\Bigl ( \E\{ \pi \ge q_{\epsilon} \E[\pi]\} \{2 |x| \ge \|x\|_2\} - \\
&\E\{ \pi \ge q_{\epsilon} \E[\pi]\} \{2 |x| \ge \|x\|_2\}\{\muplo -\lo \le -x \| \muplo -\lo\|_{\psi_2}\}\Bigr) \ge \\
&\frac{\|x\|_2^2q_{\epsilon}\exp( -x \| \muplo -\lo\|_{\psi_2})}{4}\Bigl ( \E\{ \pi \ge q_{\epsilon} \E[\pi]\} \{2 |x| \ge \|x\|_2\} - 
\E\{\muplo -\lo \le -x \| \muplo -\lo\|_{\psi_2}\}\Bigr).
\end{align*}
We then choose the same $\epsilon^{\star}$ as in Lemma \ref{lem:init_bound}, use the result of Lemma \ref{lem:init_bound} and Assumption \ref{as:subg} to guarantee $\| \muplo -\lo\|_{\psi_2}$ is bounded, and then choose $x$ large enough so that $\E\{\muplo -\lo \le -x \| \muplo -\lo\|_{\psi_2}\} = \frac{c^{\star}}{4}$, which proves the result. 
\end{proof}

\begin{lemma}\label{lem:ell_eq}
    Suppose Assumptions \ref{as:subg} - \ref{as:overlap} hold, then for any $x$ such that $\|x\|_{\psi_2} \le C\|x\|_2$ and $\|x\|_{2} = o(1)$ we have
    \begin{equation*}
       C_1 \|x\|_2^2 \le \E\left[\frac{\pi \exp(\muplo -\lo)}{\E[\pi]}\ell(x)\right] \le C_2 \|x\|_2^2
    \end{equation*}
\end{lemma}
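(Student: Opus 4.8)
The plan is to exploit the fact that $\ell(x) = \exp(-x) + x - 1$ behaves like $x^2/2$ near zero and to transfer the problem to the $L^2$ bound already established in Lemma \ref{lem:l_2_eq}. The crucial elementary observation is that for real $x$ one has the two-sided bound $c_1(|x|) \, x^2 \le \ell(x) \le c_2(|x|) \, x^2$ where $c_1, c_2$ are positive and locally bounded; concretely, for any fixed $M$, on the event $\{|x| \le M\}$ we have $\ell(x) \asymp x^2$ with constants depending only on $M$, while on $\{|x| > M\}$ we can use $\ell(x) \ge |x| - 1 \ge 0$ from below and $\ell(x) \le \exp(|x|) + |x|$ from above. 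First I would record this pointwise comparison, splitting the range of $x$ at a threshold $M$ to be chosen as an absolute constant.

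For the \textbf{lower bound}: write $\E\left[\frac{\pi\exp(\muplo - \lo)}{\E[\pi]}\ell(x)\right] \ge \E\left[\frac{\pi\exp(\muplo - \lo)}{\E[\pi]}\ell(x)\{|x| \le M\}\right] \ge c_1(M)\,\E\left[\frac{\pi\exp(\muplo - \lo)}{\E[\pi]}x^2\{|x|\le M\}\right]$. Then I would write $x^2\{|x| \le M\} = x^2 - x^2\{|x| > M\}$ and bound the subtracted term: by subgaussianity of $x$ (with $\|x\|_{\psi_2} \le C\|x\|_2$) together with $\|x\|_2 = o(1)$, the tail contribution $\E\left[\frac{\pi\exp(\muplo - \lo)}{\E[\pi]}x^2\{|x| > M\}\right]$ is of smaller order than $\|x\|_2^2$ — this is a Cauchy--Schwarz / Hölder estimate using that $\frac{\pi\exp(\muplo-\lo)}{\E[\pi]}$ has bounded moments (Corollary \ref{cor:mom_prod}) and that $\P(|x| > M) \le 2\exp(-M^2/(C^2\|x\|_2^2))$ is super-exponentially small since $\|x\|_2 \to 0$. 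Combining with the lower bound of Lemma \ref{lem:l_2_eq} on $\E\left[\frac{\pi\exp(\muplo-\lo)}{\E[\pi]}x^2\right]$ gives the claimed $C_1\|x\|_2^2$ lower bound, absorbing the $o(\|x\|_2^2)$ remainder.

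For the \textbf{upper bound}: split again at $M$. On $\{|x| \le M\}$, $\ell(x) \le c_2(M) x^2$, and $\E\left[\frac{\pi\exp(\muplo-\lo)}{\E[\pi]}x^2\{|x|\le M\}\right] \le \E\left[\frac{\pi\exp(\muplo-\lo)}{\E[\pi]}x^2\right] \le C_2\|x\|_2^2$ by the upper bound of Lemma \ref{lem:l_2_eq}. On $\{|x| > M\}$ I would use $\ell(x) \le \exp(|x|) + |x| \lesssim \exp(2|x|)$ and apply Hölder's inequality to $\E\left[\frac{\pi\exp(\muplo-\lo)}{\E[\pi]}\exp(2|x|)\{|x| > M\}\right]$: split off $\frac{\pi\exp(\muplo-\lo)}{\E[\pi]}$ (bounded moments of all orders up to some finite order by Corollary \ref{cor:mom_prod}), then $\exp(2|x|)$ (bounded since $\|x\|_{\psi_2}$ is small — indeed $\|x\|_{\psi_2} \le C\|x\|_2 = o(1)$, so $\E[\exp(q|x|)]$ is bounded for every fixed $q$), then the indicator $\{|x|>M\}$ contributes a factor $\P(|x|>M)^{1/r}$ which is again super-exponentially small in $1/\|x\|_2^2$, hence $o(\|x\|_2^2)$. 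Summing the two pieces yields the upper bound.

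\textbf{Main obstacle.} The delicate point is the tail control on $\{|x| > M\}$ in both directions: one must verify that the super-exponential smallness of $\P(|x| > M)$ (coming from $\|x\|_2 = o(1)$ and the subgaussian comparison $\|x\|_{\psi_2} \le C\|x\|_2$) genuinely dominates the possibly growing factor $\E[\exp(q|x|)]$ for the relevant fixed exponent $q$, and does so with enough room to be $o(\|x\|_2^2)$ rather than merely $O(\|x\|_2^2)$. This is where the hypothesis $\|x\|_2 = o(1)$ is essential — without it, $M$ would have to be chosen depending on $\|x\|_2$ and the constants $c_1(M), c_2(M)$ would degrade. The rest is the routine pointwise inequality for $\ell$ and two invocations of Lemma \ref{lem:l_2_eq}.
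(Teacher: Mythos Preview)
Your proposal is correct and would go through, but it follows a different route than the paper's proof. The paper does \emph{not} introduce a threshold $M$ at all. For the lower bound, it simply observes that $\ell(x)\ge \ell(|x|)$ and then reruns the anti-concentration (``small ball'') argument from the proofs of Lemmas \ref{lem:init_bound} and \ref{lem:l_2_eq} directly, restricting to the event $\{2|x|\ge \|x\|_2\}$ and using that $\ell(\|x\|_2/2)\asymp \|x\|_2^2$ when $\|x\|_2=o(1)$. For the upper bound, it applies Cauchy--Schwarz to split off the weight $\frac{\pi\exp(\muplo-\lo)}{\E[\pi]}$ (bounded in $L^2$ by Corollary \ref{cor:mom_prod}) and then bounds $\E[\ell^2(x)]$ directly: Taylor's theorem gives $\ell(x)=\tfrac12 e^{-y}x^2$ for some $y$ between $0$ and $x$, so $\E[\ell^2(x)]\le \|x\|_4^4/4 + \E[e^{-2x}x^4\{x<0\}]/4$, and the second term is handled by a tail-integral computation against the subgaussian tail of $x$.

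Your threshold-splitting approach is more modular---it invokes Lemma \ref{lem:l_2_eq} as a black box rather than repeating its internals---and the tail pieces are dispatched uniformly by H\"older plus the super-exponential decay of $\P(|x|>M)$. The paper's approach avoids the split but pays with a slightly longer explicit computation of $\E[\ell^2(x)]$. Both rely on the same ingredients (Corollary \ref{cor:mom_prod} for the weight, subgaussianity of $x$, and $\|x\|_2=o(1)$), so neither is substantially more general; yours is arguably cleaner to verify, while the paper's upper bound computation is reusable elsewhere (variants of it reappear later in the appendix).
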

\begin{proof}
For the lower bound, we use the same argument as in the previous lemma and the fact that $\ell(x)\ge \ell(|x|)$ is quadratic in the neighborhood of zero. For the upper bound, we have the following:
\begin{equation*}
    \E\left[\frac{\pi \exp(\muplo -\lo)}{\E[\pi]}\ell(x)\right]\le \sqrt{\E\left[\frac{\pi\exp(\muplo -\lo) }{\E[\pi]}\right]^2}\sqrt{\E\left[\ell^2(x)\right]}
\end{equation*}
The first multiplier is bounded by Corollary \ref{cor:mom_prod}. For the second multiplier, we have by Taylor's theorem with MVT remainder $\ell(x) = \exp(-y)\frac{x^2}{2}$ for some $y$ between $0$ and $x$. 
It then follows:
\begin{multline*}
    \E[\ell^2(x)] = \E[\ell^2(x)\{x <0\}] + \E[\ell^2(x)\{x >0\}] \le \E\left[\frac{x^4}{4}\{x >0\}\right] + \E\left[\exp(-2x)\frac{x^{4}}{4}\{x <0\}\right] \le \\
   \|x\|_4^4 + \E\left[\exp(-2x)\frac{x^{4}}{4}\{x <0\}\right] \le O(\|x\|_2^4) + \E\left[\exp(-2x)\frac{x^{4}}{4}\{x <0\}\right].
\end{multline*}
For $x\ge 0$ define $f(x):=\exp(2x)\frac{x^{4}}{4}$, and observe that it is strictly monotone.  We then have:
\begin{align*}
    &\E\left[\exp(2x)\frac{x^{4}}{4}\{x >0\}\right] = \int_{0}^{\infty} \E\left\{\exp(2x)\frac{x^{4}}{4} > u\right\}du = 
    \int_{0}^{\infty} \E\left\{x > f^{-1}(u)\right\}du = \\
    &\int_{0}^{\infty} \E\left\{x > t\right\}f^{\prime}(t)dt = 
    \int_{0}^{\infty} \E\left\{x > t\right\}\left(2\exp(2t)\frac{t^4}{4} + \exp(2t)t^3\right)dt \le\\
     &2\int_{0}^{\infty} \exp\left(-C\frac{t^2}{\|x\|_{\psi_2}^2}\right)\left(2\exp(2t)\frac{t^4}{4} + \exp(2t)t^3\right)dt = \\
     &2\|x\|_{\psi_2} \int_{0}^{\infty} \exp\left(-Ct^2\right)\left(2\exp(2t\|x\|_{\psi_2})\frac{(\|x\|_{\psi_2}t)^4}{4} + \exp(2t\|x\|_{\psi_2})(\|x\|_{\psi_2}t)^3\right)dt \le \\
     &O \qty(\|x\|^5_{\psi_2} + \|x\|^4_{\psi_2}) \le O\qty(\|x\|_2^4 (1 + \|x\|_2)) = O(\|x\|_2^4)
\end{align*}
Where the first inequality follows from the subgaussianity of $x$, and the inequality follows because $\| x\|_{\psi_2}$ is bounded by assumption. 
\end{proof}

We use these three lemmas to establish the rate for $\nu_{\mu}$ and $\nu_{\muplo}$ from the approximation rate for $\mu$.
\begin{corollary}\label{cor:pop_proj}
Suppose Assumption \ref{as:emp_set} and Assumptions \ref{as:subg} - \ref{as:overlap} hold, then we have:
\begin{equation*}
    \begin{aligned}
        &\E\left[\frac{\pi \exp(\muplo -\lo)}{\E[\pi]} \nu_{\mu}^2\right] = O(\|\mu - \bestmu\|_2^2), \quad \|\nu_{\mu}\|_2^2 = O(\|\mu - \bestmu\|_2^2),\\
          &\E\left[\frac{\pi \exp(\muplo -\lo)}{\E[\pi]} \ell( -\nu_{\muplo}) \right] = O\qty(\|\mu - \bestmu\|_2^2 +  \frac{\zeta^2}{n}), \quad \|\nu_{\muplo}\|_2^2 = O\qty(\|\mu - \bestmu\|_2^2+\frac{\zeta^2}{n}),\\
              &\frac{\zeta^2}{2n}\| \muplo - \plo\|^2_{\m{F}} = O\qty(\|\mu - \bestmu\|_2^2 +  \frac{\zeta^2}{n}).
    \end{aligned}
    \end{equation*}
\end{corollary}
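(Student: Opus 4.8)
Write $W:=\tfrac{\pi\exp(\muplo-\lo)}{\E[\pi]}$ for the weight appearing in Lemmas~\ref{lem:l_2_eq} and \ref{lem:ell_eq}, and recall the decomposition $\muplo=f_{\muplo}+\beta_{\mu}\mu$ with $f_{\muplo}\in\mt{F}$ and $|\beta_{\mu}|\le L_{\mu}$ (Assumption~\ref{as:miss_lo}), together with $\|\mu-\bestmu\|_2=\Tef^{-1/2}\to0$ by Assumption~\ref{as:emp_set}. The whole argument is three applications of the defining optimality of $\pmu$ and $\plo$: each time compare with a well-chosen element of $\mt{F}$ and convert back to $\|\cdot\|_2$ using the norm equivalences of Lemmas~\ref{lem:l_2_eq}--\ref{lem:ell_eq}.

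\emph{The $\nu_{\mu}$ bounds.} Since $\pmu=\argmin_{f\in\mt{F}}\E[W(\mu-f)^2]$ and $\bestmu\in\mt{F}$, optimality gives $\E[W\nu_{\mu}^2]\le\E[W(\mu-\bestmu)^2]$. Both $\nu_{\mu}$ and $\mu-\bestmu$ lie in $\spn{\mathbf{1},\m{F},\mu}$, a sub-gaussian class, so Lemma~\ref{lem:l_2_eq} applies to each, yielding the chain $C_1\|\nu_{\mu}\|_2^2\le\E[W\nu_{\mu}^2]\le\E[W(\mu-\bestmu)^2]\le C_2\|\mu-\bestmu\|_2^2$, which is exactly the first line of the corollary.

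\emph{The $\nu_{\muplo}$ and $\|\muplo-\plo\|_{\m{F}}$ bounds.} Use the representation \eqref{eq:def_plo}: $\plo$ minimizes $H(f):=\E[\pi\exp(\muplo-\lo)\ell(\muplo-f)]+\tfrac{\E[\pi]\zeta^2}{2n}\|f-\muplo\|^2_{\m{F}}$ over $f\in\mt{F}$. Take the competitor $f^{\dagger}:=f_{\muplo}+\beta_{\mu}\bestmu\in\mt{F}$, so that $\muplo-f^{\dagger}=\beta_{\mu}(\mu-\bestmu)$ and, because the extended seminorm $\|\cdot\|_{\m{F}}$ is insensitive to adding multiples of $\mathbf{1}$ and $\mu$, $\|f^{\dagger}-\muplo\|_{\m{F}}=|\beta_{\mu}|\,\|\bestmu\|_{\m{F}}$. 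Dividing $H(\plo)\le H(f^{\dagger})$ by $\E[\pi]$ and using $\muplo-\plo=-\nu_{\muplo}$,
\[
\E[W\ell(-\nu_{\muplo})]+\tfrac{\zeta^2}{2n}\|\nu_{\muplo}\|^2_{\m{F}}\le \E[W\ell(\beta_{\mu}(\mu-\bestmu))]+\tfrac{\zeta^2}{2n}|\beta_{\mu}|^2\|\bestmu\|_{\m{F}}^2.
\]
The first term on the right is $O(\|\mu-\bestmu\|_2^2)$ by Lemma~\ref{lem:ell_eq}, whose hypotheses hold since $\|\beta_{\mu}(\mu-\bestmu)\|_2\le L_{\mu}\|\mu-\bestmu\|_2=o(1)$ and $\beta_{\mu}(\mu-\bestmu)$ is sub-gaussian; the second term is $O(\zeta^2/n)$ because $|\beta_{\mu}|\le L_{\mu}$ and $\|\bestmu\|_{\m{F}}=O(1)$ (writing $\bestmu-\E[\bestmu]=a^{\top}(\boldsymbol{\phi}(X)-\E[\boldsymbol{\phi}(X)])$, this reduces to $\lambda_{\min}\|a\|_2^2\le a^{\top}\mathbb{V}[\boldsymbol{\phi}(X)]a=\mathbb{V}[\bestmu]\le\mathbb{V}[\mu]=O(1)$). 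Both terms on the left being nonnegative, each is $O(\|\mu-\bestmu\|_2^2+\zeta^2/n)$, giving the bound on $\E[W\ell(-\nu_{\muplo})]$ and, via $\|\muplo-\plo\|_{\m{F}}=\|\nu_{\muplo}\|_{\m{F}}$, the last line of the corollary. Finally $\|\nu_{\muplo}\|_2^2=O(\|\mu-\bestmu\|_2^2+\zeta^2/n)$ follows from $\E[W\ell(-\nu_{\muplo})]=o(1)$ together with the lower bound of Lemma~\ref{lem:ell_eq}: the Paley--Zygmund argument in its proof first forces $\|\nu_{\muplo}\|_2=o(1)$ (using that $\ell(|x|)\to\infty$), after which the quadratic lower bound $C_1\|\nu_{\muplo}\|_2^2\le\E[W\ell(-\nu_{\muplo})]$ applies.

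\emph{Main obstacle.} The only non-mechanical steps are the $\m{F}$-seminorm bookkeeping — verifying $\|f^{\dagger}-\muplo\|_{\m{F}}=|\beta_{\mu}|\|\bestmu\|_{\m{F}}$ and, above all, that $\|\bestmu\|_{\m{F}}=O(1)$, the latter being the one place a lower eigenvalue bound on $\mathbb{V}[\boldsymbol{\phi}(X)]$ (equivalently $\|\cdot\|_{\m{F}}\lesssim\|\cdot\|_2$ on $\spn{\m{F}}$) is needed — and the bootstrap that upgrades the $o(1)$ bound on $\E[W\ell(-\nu_{\muplo})]$ to the quadratic $\|\nu_{\muplo}\|_2^2$ bound, since Lemma~\ref{lem:ell_eq} is stated under the a priori hypothesis $\|x\|_2=o(1)$.
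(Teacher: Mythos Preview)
Your proof is correct and follows essentially the same route as the paper: optimality of $\pmu$ against $\bestmu$ plus Lemma~\ref{lem:l_2_eq} for the first line, and the reformulation \eqref{eq:def_plo} with the competitor $f_{\muplo}+\beta_{\mu}\bestmu$ plus Lemma~\ref{lem:ell_eq} and the eigenvalue bound for $\|\bestmu\|_{\m{F}}$ for the rest. You are in fact slightly more careful than the paper in flagging the bootstrap needed to invoke the lower bound of Lemma~\ref{lem:ell_eq} (which presupposes $\|\nu_{\muplo}\|_2=o(1)$), and you correctly note that the $\|\bestmu\|_{\m{F}}=O(1)$ step tacitly relies on Assumption~\ref{as:out_mom}, which the corollary's stated hypotheses omit but the paper's own proof also uses.
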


\begin{proof}
To prove the first line we use optimality of $\nu_{\mu}$:
\begin{equation*}
    \E\left[\frac{\pi \exp(\muplo -\lo)}{\E[\pi]} \nu^2_{\mu}\right] \le \E\left[\frac{\pi \exp(\muplo -\lo)}{\E[\pi]} (\mu - \bestmu)^2\right]  = O(\|\mu - \bestmu\|_2^2),
\end{equation*}
where the last equality follows from Assumptions \ref{as:miss_lo}, \ref{as:emp_set}  and the Lemma \ref{lem:l_2_eq}. By the same lemma it follows that $\|\nu_{\mu}\|_2^2 = O(\|\mu - \bestmu\|_2^2)$. 

For the second line we use alternative definition of $\plo$ we derived in \eqref{eq:def_plo} together with Assumptions \ref{as:miss_lo}, using $f^{\star} := f_{\muplo} + \beta_u \bestmu\in\mt{F}$:
\begin{equation}\label{ap:new_bound}
\begin{aligned}
&\E\qty[\frac{\pi\exp(\muplo -\lo)}{\E[\pi]}\ell(\muplo - \plo)] +  \frac{\zeta^2}{n}\| \muplo-\plo\|^2_{\m{F}}  
\le \E\qty[\frac{\pi\exp(\muplo -\lo)}{\E[\pi]}\ell(\muplo - f^{\star})] + \frac{\zeta^2}{2n}\| \muplo - f^{\star}\|^2_{\m{F}}\\
&= \E\qty[\frac{\pi\exp(\muplo -\lo)}{\E[\pi]}\ell(\beta_\mu (\bestmu - \mu))] + \frac{\zeta^2 \beta_\mu^2}{2n}\|\bestmu - \mu\|^2_{\m{F}} \\ 
& =O(\beta_{\mu}^2\|\mu - \bestmu\|_2^2) +  \frac{\zeta^2 \beta_\mu^2}{2n}\|\bestmu\|^2_{\m{F}} = O\qty(\|\mu - \bestmu\|_2^2 +  \frac{\zeta^2}{n}).
\end{aligned}
\end{equation}
where the next to las equality follows from Lemma \ref{lem:ell_eq} and the fact that by definition $\|\bestmu - \mu\|^2_{\m{F}} =\|\bestmu\|^2_{\m{F}}$. The last equality follows from Assumption \ref{as:out_mom} which implies the equivalence betweeen the two norms and $\|\bestmu\|^2_{\m{F}} = O\qty(\|\bestmu-\E[\bestmu]\|^2_{2}) = O\qty(\mathbb{V}[Y(0)]) = O\qty(1)$. The same inequality implies
\begin{equation*}
    \frac{\zeta^2}{2n}\| \muplo - \plo\|^2_{\m{F}} = O\qty(\|\mu - \bestmu\|_2^2 +  \frac{\zeta^2}{n}).
\end{equation*}

Applying Lemma \ref{lem:ell_eq} in other direction we get $\|\nu_{\muplo}\|_2^2 = O\qty(\|\mu - \bestmu\|_2^2+\frac{\zeta^2}{n})$. 
\end{proof}
We use this result to establish the following corollary, which we will use multiple times throughout the proof.
\begin{corollary}\label{cor:mom_prod}
    Suppose Assumption \ref{as:emp_set} and Assumptions \ref{as:subg} - \ref{as:overlap} hold. Then for any $\lambda_1, \lambda_2,\lambda_3,\lambda_4 \ge 0$ and  $\lambda_1+ \lambda_2 \le 9$ we have:
    \begin{equation*}
        \E\left[\exp(\lambda_1 \lo) \pi^{\lambda_2}\exp(\lambda_3(\muplo - \lo))\exp(\lambda_4 \nu_{\muplo}) \right] = O(\E[\pi]^{\lambda_1+\lambda_2})
    \end{equation*}
\end{corollary}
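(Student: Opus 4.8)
The plan is to reduce the multi-factor moment bound to a single exponential-moment estimate controlled by Assumption \ref{as:overlap}(b), using H\"older's inequality to split the factors and Corollary \ref{cor:pop_proj} together with Assumption \ref{as:subg} to dispose of the factors that are not literally powers of $\pi$. First I would use the elementary bound $\pi = \exp(\lo)/(1+\exp(\lo)) \le \exp(\lo)$ to replace $\pi^{\lambda_2}$ by $\exp(\lambda_2 \lo)$, so that the first two factors combine into $\exp((\lambda_1+\lambda_2)\lo)$ with $\lambda_1+\lambda_2 \le 9$. Writing $a := \lambda_1 + \lambda_2$, the quantity to bound is then $\E[\exp(a\lo)\exp(\lambda_3(\muplo-\lo))\exp(\lambda_4\nu_{\muplo})]$. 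Dividing and multiplying by $\E[\pi]^a$, and noting $\exp(a\lo) \le \exp(a\,\lo^{+})$ isn't quite what we want — instead I would write $\exp(a\lo) = \E[\pi]^a \exp(a(\lo - \log\E[\pi]))$ directly, so the target becomes $\E[\pi]^a\,\E[\exp(a(\lo-\log\E[\pi]))\exp(\lambda_3(\muplo-\lo))\exp(\lambda_4\nu_{\muplo})]$, and it suffices to show the remaining expectation is $O(1)$.

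The core step is to apply H\"older's inequality with three conjugate exponents $p,q,r$ (with $1/p+1/q+1/r=1$) to the three factors $\exp(a(\lo-\log\E[\pi]))$, $\exp(\lambda_3(\muplo-\lo))$, $\exp(\lambda_4\nu_{\muplo})$. For the first factor I would choose $p$ close to $1$ (say $p$ slightly above $1$) so that $pa$ stays at or below $10$; then $\E[\exp(pa(\lo-\log\E[\pi]))]^{1/p} \le \pi_{\max}^{1/p} = O(1)$ by Assumption \ref{as:overlap}(b), which is exactly the hypothesis requiring $\lambda_1+\lambda_2\le 9$ with a little room to spare. For the second factor, $\muplo-\lo = (\muplo-\lo)$ has bounded $\psi_2$ norm: indeed $\|\muplo-\lo\|_2$ is bounded by Lemma \ref{lem:init_bound}, and then $\|\muplo-\lo\|_{\psi_2} \lesssim \|\muplo-\lo\|_2 = O(1)$ by Assumption \ref{as:subg}(b) (since $\muplo-\lo \in \spn{\m{F},\mu,\lo}$, using that $\muplo \in \spn{\mathbf{1},\m{F},\mu}$); a sub-gaussian random variable with bounded $\psi_2$ norm has all exponential moments finite, so $\E[\exp(q\lambda_3(\muplo-\lo))]^{1/q} = O(1)$ for any fixed $q$. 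The third factor is handled the same way: $\nu_{\muplo}=\plo-\muplo$ satisfies $\|\nu_{\muplo}\|_2 = o(1)$ by Corollary \ref{cor:pop_proj} (using Assumption \ref{as:emp_set} and $\zeta^2/n = o(1)$), and $\nu_{\muplo}\in\spn{\m{F},\mu}$, so $\|\nu_{\muplo}\|_{\psi_2}$ is bounded and $\E[\exp(r\lambda_4\nu_{\muplo})]^{1/r} = O(1)$.

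The main obstacle — and the only subtlety — is the bookkeeping on exponents: I need $pa \le 10$ while $p>1$, which forces $a \le 10/p < 10$; this is why the hypothesis is stated as $\lambda_1+\lambda_2 \le 9$ rather than $\le 10$, leaving a constant gap that lets me pick, e.g., $p = 10/9$. Once $p$ is fixed, $q$ and $r$ are any exponents with $1/q+1/r = 1 - 9/10 = 1/10$, e.g.\ $q=r=20$; since $\lambda_3,\lambda_4$ are fixed constants, $q\lambda_3$ and $r\lambda_4$ are fixed, and the sub-gaussian moment generating function bound applies with constants depending only on the absolute constants in Assumptions \ref{as:subg}--\ref{as:overlap} and Lemma \ref{lem:init_bound}. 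One should also note the degenerate cases $\lambda_3=0$ or $\lambda_4=0$ (drop the corresponding factor and the corresponding H\"older exponent), and the case $a=0$ (then the first factor is $1$ and the bound is trivial); in the footnote's special case $\lambda_4=0$ invoked in the proof of Lemma \ref{lem:l_2_eq}, one does not need Corollary \ref{cor:pop_proj} at all, consistent with the non-circularity remark. Collecting the three $O(1)$ factors gives $\E[\exp(a\lo)\exp(\lambda_3(\muplo-\lo))\exp(\lambda_4\nu_{\muplo})] = O(\E[\pi]^a) = O(\E[\pi]^{\lambda_1+\lambda_2})$, which is the claim.
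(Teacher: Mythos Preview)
Your proposal is correct and follows essentially the same route as the paper: bound $\pi^{\lambda_2}$ by $\exp(\lambda_2\lo)$, factor out $\E[\pi]^{\lambda_1+\lambda_2}$, and apply a three-way H\"older inequality, controlling the first factor via Assumption \ref{as:overlap}(b) and the other two via the sub-gaussian bounds coming from Lemma \ref{lem:init_bound} and Corollary \ref{cor:pop_proj} together with Assumption \ref{as:subg}. The only cosmetic difference is the choice of H\"older exponents: the paper takes $p = 10/(\lambda_1+\lambda_2)$ (so that $p(\lambda_1+\lambda_2)=10$ lands exactly at the top of the range in Assumption \ref{as:overlap}(b)) and then $q=r=20/(10-(\lambda_1+\lambda_2))$, whereas you fix $p=10/9$; the paper's choice sidesteps the edge case $p(\lambda_1+\lambda_2)<1$ when $\lambda_1+\lambda_2$ is small, but that case is trivially handled by Lyapunov's inequality, so both versions go through.
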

\begin{proof}
By definition, we have:
\begin{equation*}
    \left(\frac{\pi}{\E[\pi]}\right)^{\lambda_2} = \left(\frac{\exp(\lo)}{\E[\pi](1+\exp(\lo)))}\right)^{\lambda_2} \le \exp(\lambda_2(\lo - \log(\E[\pi]))).
\end{equation*}
Using this inequalities together with H\"older's inequality, Assumptions \ref{as:subg} - \ref{as:overlap}, Lemma \ref{lem:init_bound}, and Corollary \ref{cor:pop_proj} we get:
    \begin{align*}
       &\frac{\E\left[\exp(\lambda_1 \lo) \pi^{\lambda_2}\exp(\lambda_3(\muplo - \lo))\exp(\lambda_4 \nu_{\muplo}) \right] }{\E[\pi]^{\lambda_1+\lambda_2}} = \\
       &\E\left[\exp(\lambda_1 (\lo-\log(\E\pi)))\left(\frac{\pi}{\E[\pi]}\right)^{\lambda_2}\exp(\lambda_3(\muplo - \lo))\exp(\lambda_4 \nu_{\muplo}) \right] \le \\
        &\E\left[\exp((\lambda_1 + \lambda_2) (\lo - \log(\E[\pi])))\exp(\lambda_3(\muplo - \lo))\exp(\lambda_4 \nu_{\muplo}) \right] \le\\
        &\left\|\exp((\lambda_1 + \lambda_2) (\lo - \log(\E[\pi]))) \right\|_{\frac{10}{\lambda_1 + \lambda_2}} \|\exp(\lambda_3(\muplo - \lo))\|_{\frac{20}{10 - \lambda_1 + \lambda_2}}\times 
        &\|\exp(\lambda_4\nu_{\muplo})\|_{\frac{20}{10 - \lambda_1 + \lambda_2}} =  O(1).
    \end{align*}
Note that we use an iterated version of H\"older's inequality,
\[ \norm{abc}_1 \le \norm{a}_p\norm{bc}_q \le \norm{a}_p \norm{b}_{2q}\norm{c}_{2q} \qfor 1/p+1/q=1, \]
taking $p=10/(\lambda_1+\lambda_2)$ and $q=10/(10-\lambda_1+\lambda_2)$.
\end{proof}

\newpage
\subsection{Properties of the oracle objects}\label{ap:mu_prop}
In what follows, we will routinely use bounds for $\pmu- \hmu$, which we establish in the next lemma.

\begin{lemma}
\label{lemma:mu-l2-convergence}
If the assumptions of Lemma~\ref{lem:or_proj} hold, $\norm{\hmu-\pmu}_2 = O_p(r)$ if $r$ satisfies the fixed point conditions
\begin{equation*}
\begin{aligned}
    \eta r &\ge \E \sup_{\substack{\delta \in \mu - \mt{F} \\ \norm{\delta} \le r}} \Pn \varepsilon_i \delta_i \qand
    \frac{\eta r^2}{\norm{\mu-\pmu}_2} &\ge \E \sup_{\substack{\delta \in \mu - \mt{F} \\ \norm{\delta} \le r}} \Pn g_i \delta_i.
    \end{aligned}
    \end{equation*}
    Here $\eta>0$ is a sufficiently small constant, $\varepsilon_1 \ldots \varepsilon_n$ are independent Rademacher RVs independent of the observed data, and $g_1 \ldots g_n$ an analogous sequence of standard normal RVs.
\end{lemma}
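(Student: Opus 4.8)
I would prove this by the standard localised basic-inequality (``peeling'') technique for M-estimators, here in its convex, peeling-free form. Write $w_i := \frac{\pi_i}{\E[\pi]}\exp(\muplo_i - \lo_i) > 0$, so that $\hmu$ and $\pmu$ minimise $Q_n(f) := \Pn w_i(\mu_i - f_i)^2$ and $Q(f) := \E[w(\mu - f)^2]$ over $f \in \mt{F}$, and put $\hat\delta := \hmu - \pmu \in \mt{F}$, $\nu_\mu := \mu - \pmu$ (so $\|\nu_\mu\|_2 = O(\|\mu - \bestmu\|_2)$ by Corollary~\ref{cor:pop_proj}). Since $Q_n$ is convex with minimiser $\hmu = \pmu + \hat\delta$ over the linear space $\mt{F}$, for every $t \in [0,1]$ we get $Q_n(\pmu + t\hat\delta) = Q_n\big((1-t)\pmu + t\hmu\big) \le Q_n(\pmu)$; expanding the squares yields the \emph{basic inequality} $\Pn w_i(t\hat\delta_i)^2 \le 2\Pn w_i\nu_{\mu,i}(t\hat\delta_i)$. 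Hence for any $\rho > 0$, on $\{\|\hat\delta\|_2 \ge \rho\}$ the point $\delta^\star := (\rho/\|\hat\delta\|_2)\hat\delta$ lies in $\mt{F}$, has $\|\delta^\star\|_2 = \rho$, and satisfies $\Pn w_i(\delta^\star_i)^2 \le 2\Pn w_i\nu_{\mu,i}\delta^\star_i$. So it is enough to fix $\rho := Cr$ for a large absolute constant $C$ and show that, with probability tending to one, no $\delta \in \mt{F}$ with $\|\delta\|_2 = Cr$ can satisfy $\Pn w_i\delta_i^2 \le 2\Pn w_i\nu_{\mu,i}\delta_i$.

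For the left-hand side, Lemma~\ref{lem:l_2_eq} gives $\E[w\delta^2] \ge c_1\|\delta\|_2^2$ for all $\delta$ in the sub-gaussian span of Assumption~\ref{as:subg}, so I only need $\sup_{\delta \in \mt{F},\,\|\delta\|_2 \le \rho}|(\Pn - \E)[w\delta^2]| \le \tfrac14 c_1\rho^2$; by symmetrisation, contraction applied to $x \mapsto x^2$ (Lipschitz of order $\rho$ on the relevant range, after a truncation justified by sub-gaussian tails), and extraction of the multiplier $w_i$ ($\|w\|_{\psi_2} = O(1)$ by Corollary~\ref{cor:mom_prod}), this supremum is of order $\rho\,\E\sup_{\delta \in \mu - \mt{F},\,\|\delta\|_2 \le \rho}\Pn\varepsilon_i\delta_i$ (using that the Rademacher complexities of norm-$\rho$ balls in $\mt{F}$ and in the affine set $\mu - \mt{F}$ agree up to constants, since $\mu$ lies within $O(1/\sqrt{\Tef}) \ll \rho$ of $\mt{F}$). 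For the right-hand side, \eqref{eq:foc_pmu} gives $\E[w\nu_\mu\delta] = 0$ for each fixed $\delta \in \mt{F}$, so $\delta \mapsto \Pn w_i\nu_{\mu,i}\delta_i$ is a mean-zero empirical process on $\mt{F}$; symmetrising and then dominating the sub-gaussian multiplier process with multiplier $w_i\nu_{\mu,i}$ — whose $\psi_2$ norm is $O(\|\nu_\mu\|_2)$ by Cauchy--Schwarz, Assumption~\ref{as:subg} and Corollary~\ref{cor:mom_prod} — by the matching Gaussian process, $\sup_{\delta \in \mt{F},\,\|\delta\|_2 \le \rho}|\Pn w_i\nu_{\mu,i}\delta_i|$ is of order $\|\nu_\mu\|_2\,\E\sup_{\delta \in \mu - \mt{F},\,\|\delta\|_2 \le \rho}\Pn g_i\delta_i$. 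Both complexities being $1$-homogeneous in the radius, the two fixed-point hypotheses — evaluated at radius $Cr$, the extra constant absorbed into a small enough $\eta$, and combined with a Talagrand/bounded-differences concentration inequality to pass from expectation to high probability — make the first supremum $\le \tfrac14 c_1\rho^2$ and the second supremum $\le \tfrac14 c_1\rho^2$. Then any $\delta$ with $\|\delta\|_2 = Cr$ obeys $\Pn w_i\delta_i^2 \ge \tfrac34 c_1\rho^2 > \tfrac12 c_1\rho^2 \ge 2\Pn w_i\nu_{\mu,i}\delta_i$, contradicting the basic inequality; hence $\P(\|\hat\delta\|_2 \ge Cr) \to 0$, i.e. $\|\hmu - \pmu\|_2 = O_p(r)$.

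The main obstacle is the passage in the second paragraph from the raw suprema $\sup|(\Pn - \E)[w\delta^2]|$ and $\sup|\Pn w_i\nu_{\mu,i}\delta_i|$ to the named Rademacher and Gaussian complexities of balls in $\mu - \mt{F}$: one must handle the unbounded summands in the contraction step (truncation together with sub-gaussian maximal inequalities), and — more delicately — pull the random weights $w_i$ and $w_i\nu_{\mu,i}$ out of the symmetrised processes through a multiplier empirical-process inequality, with the constants tracked and the extra factors kept down to logarithms (which are absorbed by the $\gg$ convention). By contrast, the convex localisation of the first paragraph and the final contradiction argument are routine.
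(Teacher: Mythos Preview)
Your architecture is exactly the paper's: derive the basic inequality $\Pn w_i\delta_i^2 \le 2(\Pn-\P)w_i\nu_{\mu,i}\delta_i$ for $\delta=\hmu-\pmu$ (using the FOC \eqref{eq:foc_pmu} to center the cross term), localise to the sphere $\|\delta\|_2=r$ by convexity/scaling, then show a uniform lower bound $\Pn w_i\delta_i^2 \gtrsim r^2$ and a uniform upper bound on the cross term.

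The gap is in the technical tools you invoke for those two bounds. You claim $\|w\|_{\psi_2}=O(1)$ and $\|w\nu_\mu\|_{\psi_2}=O(\|\nu_\mu\|_2)$, citing Corollary~\ref{cor:mom_prod}. But that corollary only gives $\E[w^q]=O(1)$ for $q$ up to roughly $9$: the factor $\pi/\E[\pi]$ has only finitely many bounded moments under Assumption~\ref{as:overlap}(b), so $w$ is \emph{not} sub-gaussian, and neither is the product $w\nu_\mu$ (Cauchy--Schwarz on Orlicz norms would at best give a $\psi_1$ bound, not $\psi_2$). This breaks both your ``extraction of the multiplier'' step in the lower bound and your domination of the multiplier process by a Gaussian one in the upper bound. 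Relatedly, the contraction step for $x\mapsto x^2$ is fragile here: the summands $w_i\delta_i^2$ are heavy-tailed through $w_i$, so truncation plus sub-gaussian tails does not straightforwardly reduce to the Rademacher width without extra factors.

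The paper avoids both issues by different machinery. For the lower bound it uses Mendelson's small-ball method (the argument after \eqref{eq:loo-zeroth-order-mu} in Lemma~\ref{lem:or_proj}): replace each factor of $w_i\delta_i^2$ by a threshold indicator, so the multiplier never needs moment control beyond a single probability bound, and then invoke \citet[Theorem~5.3]{mendelson2014learning} to lower-bound $\Pn\{4|\delta_i|\ge\|\delta\|_2\}$ via the linear fixed-point condition. For the upper bound it uses Mendelson's multiplier inequality \citep[Corollary~1.10]{mendelson2016upper}, which requires only $\|w\nu_\mu\|_{L_q}$ for some $q>2$ --- and that \emph{is} delivered by Corollary~\ref{cor:mom_prod} together with the sub-gaussianity of $\nu_\mu$ --- to pass directly to the Gaussian width. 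Swapping in these two tools closes your argument without changing its structure.
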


\begin{proof}
\textbf{Step 1. Finding an Inequality Characterizing $\pmu-\hmu$.}
Let $T_i = (\pi_i/E\pi_i) \exp(\muplo_i - \plo_i)$.
In this notation, $\hmu$ and $\pmu$ minimize $P_n T (\mu - m)^2$ and $P T (\mu-m)^2$
respectively over $m \in \mt{F}$. Because $\hmu$
minimizes $P_n T (\mu - m)^2$ over a set containing $\pmu$,
\begin{equation}
   0 \ge  P_n T (\mu - \hmu)^2 - P_n T (\mu - \pmu)^2 =    P_n T (\pmu - \hmu)^2 + 2 P_n T (\mu - \pmu) (\pmu - \hmu) 
\end{equation}
Here we've used the identity $a^2-b^2 = (a-b)^2 + 2b(a-b)$ with $a=\mu-\hmu$ and $b=\mu-\pmu$. Furthermore, because $\pmu$ is an orthogonal projection of $\mu$ onto the convex set $\mt{F}$, $P T (\mu - \pmu) (\pmu - m) \ge 0$ for all $m \in \mt{F}$ and therefore for $m=\hmu$. Subtracting $P T (\mu - \pmu) (\pmu - \hmu) \ge 0$ from the previous inequality, we get
\begin{equation}
0 \ge P_n T (\pmu - \hmu)^2 + 2 (P_n-P) T (\mu - \pmu) (\pmu - \hmu) 
\end{equation}

\textbf{Step 2. Reduction to the Sphere}
We've shown that $\hat\delta=\pmu-\hmu$ satisfies 
$P_n T \delta^2 + 2(P_n - P)T (\mu - \pmu) \delta \le 0$, and it follows that
$P_n T \hat\delta^2 < r^2$ if $P_n T \delta^2 + 2(P_n-P)T(\mu - \pmu) \delta > 0$ for all $\delta \in \pmu - \mt{F}$ with $\norm{\delta}_2 \ge r$. By a scaling argument, it suffices to show that this holds for all such $\delta$ with $\norm{\delta}_2 = r$. To see this, suppose $\norm{\delta}_2 \ge r$ and consider the point $\delta_r = r\delta/\sqrt{P T \delta^2}$ with $\norm{\delta_r}_2=r$.
As a result of the convexity of $\pmu - \mt{F}$, it's in $\mu - \mt{F}$, as it's on the segment between two points $\delta$ and $0$ in the set $\pmu - \mt{F}$. Furthermore,
$P_n T \delta_r^2 + 2(P_n-P)T(\mu - \pmu) \delta_r > 0$ implies $P_n T \delta_r^2 + 2(P_n-P)T (\mu - \pmu) \delta_r > 0$, as in terms of $s = r/\sqrt{P T \delta^2} \le 1$, we have
\begin{equation} 
P_n T \delta_r^2 + 2(P_n-P) T (\mu - \pmu)\delta_r =  s^2 P_n T \delta^2 
+ s \ 2(P_n-P) T (\mu - \pmu)\delta) \le s \qty{ P_n T \delta^2 + 2(P_n-P)T(\mu - \pmu)\delta} 
\end{equation}
with the inequality here holding because $s^2 \le s$ and $P_n T \delta^2$ is nonnegative. 

\textbf{Step 3. Characterizing the Sphere's Radius}
We conclude by finding a radius $r$ for which, with high probability, all $\delta \in \pmu - \mt{F}$ with $\norm{\delta}_2 = r$ satisfy $P_n T \delta^2 + 2(P_n-P) T (\mu - \pmu) \delta > 0$. In particular, we'll find one for which all such $\delta$ satisfy $P_n T \delta^2 \ge \eta r^2$ and $2(P_n-P)T (\mu-\pmu)\delta > -\eta r^2$ for some constant $\eta$. 

\textbf{The lower bound.}
The claimed lower bound holds with probability tending to one for $r$ satisfying our linear fixed point condition. This follows by substituting $\hat\mu$ for $\hat\mu^{(i)}$ into the argument used to establish a rate of convergence of $\hat\mu^{(i)}$ to $\pmu$ in Lemma~\ref{lem:or_proj}. In particular, the argument following \eqref{eq:loo-zeroth-order-mu}.

\textbf{The upper bound.}
By a multiplier inequality due to Mendelson \citep[Corollary 1.10]{mendelson2016upper}, 
\begin{equation}
\sup_\delta (P_n-\P) T (\mu-\pmu)\delta
\le c \norm{T (\mu-\pmu)}_{L_q} \E \sup_\delta \Pn g_i \delta_i
\end{equation}
where $q>2$, the supremum is implicitly taken over the set of $\delta$ considered, and $g_i$ are standard normals independent of the observed data. Finally, a variant of Lemma~\ref{lem:l_2_eq} appropriate to $q>2$
implies that $\norm{T(\mu-\pmu)}_{L_2} \le c \norm{\mu-\pmu}_2$. Consequently,  we have the claimed upper bound if $c\norm{\mu-\pmu}_{2} \E \sup_\delta \Pn g_i \delta_i \le \eta r^2$. 
\end{proof}

\begin{lemma}\label{lem:or_proj}
    Suppose $\mt{F}$ is convex and Assumption \ref{as:emp_set} and Assumptions \ref{as:subg}-\ref{as:overlap} and \ref{as:out_mom} hold. Then we have:
    \begin{equation*}
    \begin{aligned}
    \max_{i}|\mu_i-\hmu_i| = O_p(\max\qty{r, \|\mu - \pmu\|_2}\sqrt{\log(n)}) 
    \qqtext{ for $r$ satisfying } \eta r \ge \E \sup_{\substack{\delta \in \mu - \mt{F} \\ \norm{\delta} \le r}} \Pn \varepsilon_i \delta_i 
    \end{aligned}
    \end{equation*}
    where $\eta>0$ is sufficiently small and $\varepsilon_1 \ldots \varepsilon_n$ are independent Rademacher RVs independent of the observed data.
\end{lemma}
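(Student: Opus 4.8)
The plan is to reduce the sup-norm error to leave-one-out errors via the classical PRESS identity and then exploit independence. Write $T_j := \tfrac{\pi_j}{\E[\pi]}\exp(\muplo_j-\lo_j)$ for the oracle weight appearing in the definition of $\hmu$, and let $\hmu^{(i)} := \argmin_{f\in\mt{F}} \tfrac1n\sum_{j\ne i} T_j(\mu_j-f_j)^2$ be the same weighted projection of $\mu$ onto $\mt{F}$ computed from all observations except the $i$-th. Because $\mt{F}$ is a linear subspace, $\mu\mapsto\hmu$ is a weighted orthogonal projection onto a $p$-dimensional space, so the leave-one-out identity for weighted least squares gives $\mu_i-\hmu_i = (1-h_i)(\mu_i-\hmu^{(i)}_i)$, where $h_i\in[0,1]$ is the $i$-th weighted leverage; hence $\max_i|\mu_i-\hmu_i| \le \max_i|\mu_i-\hmu^{(i)}_i|$, and it suffices to bound the leave-one-out errors.

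Next I would decouple. Since $\hmu^{(i)}$ is a measurable function of $\{(X_j,D_j,Y_j)\}_{j\ne i}$ alone, it is independent of $X_i$; conditionally on that data the variable $\mu_i-\hmu^{(i)}_i = (\mu-\hmu^{(i)})(X_i)$ is, by Assumption~\ref{as:subg} (as $\mu-\hmu^{(i)}\in\spn{\mathbf{1},\m{F},\mu}$), subgaussian with $\psi_2$-norm $\lesssim \|\mu-\hmu^{(i)}\|_2 \le \|\mu-\pmu\|_2 + \|\pmu-\hmu^{(i)}\|_2$. A subgaussian tail bound applied conditionally, followed by a union bound over $i\le n$, then yields $\max_i|\mu_i-\hmu^{(i)}_i| = O_p\big(\max_i\|\mu-\hmu^{(i)}\|_2\cdot\sqrt{\log n}\big)$, \emph{provided} $\max_i\|\mu-\hmu^{(i)}\|_2$ is controlled with tails small enough (polynomially in $n$) to survive the union bound.

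It therefore remains to show $\|\pmu-\hmu^{(i)}\|_2 = O_p(\max\{r,\|\mu-\pmu\|_2\})$ simultaneously over all $i\le n$. I would obtain this by rerunning the argument of Lemma~\ref{lemma:mu-l2-convergence} for the leave-one-out objective: the basic inequality for $\hat\delta=\pmu-\hmu^{(i)}$, the reduction to the sphere $\{\delta\in\pmu-\mt{F}:\|\delta\|_2=r\}$ by the convexity/scaling argument, the lower bound $\tfrac1n\sum_{j\ne i}T_j\delta_j^2 \gtrsim \|\delta\|_2^2$ uniformly on the sphere (Lemma~\ref{lem:l_2_eq} plus concentration of the quadratic form), and the upper bound on the multiplier term via Mendelson's inequality together with a variant of Lemma~\ref{lem:l_2_eq} for $L^q$, $q>2$. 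Since the localized Rademacher complexity of the convex class $\mu-\mt{F}$ is a sub-root function of the radius, the linear fixed-point inequality in the statement, applied at radius $\max\{r,\|\mu-\pmu\|_2\}$ (up to an absolute constant), also implies the quadratic fixed-point inequality of Lemma~\ref{lemma:mu-l2-convergence} after the Gaussian--Rademacher comparison (an equivalence up to constants for subgaussian classes), so both fixed-point conditions hold at that radius. Crucially, the concentration steps — for the quadratic form and for the multiplier process — hold with failure probability polynomially small in $n$, using subgaussianity of $\spn{\mathbf{1},\m{F},\mu,\lo}$ and the moment bounds on $T_j$ from Corollary~\ref{cor:mom_prod}; a union bound over $i$ then gives the claimed uniform $L^2$ rate. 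Combining with the previous two paragraphs yields $\max_i|\mu_i-\hmu_i| = O_p(\max\{r,\|\mu-\pmu\|_2\}\sqrt{\log n})$.

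I expect the main obstacle to be this last step: upgrading the in-expectation bounds underlying Lemma~\ref{lemma:mu-l2-convergence} (the uniform quadratic-form lower bound on the sphere and Mendelson's multiplier inequality) to high-probability bounds with polynomial tails, so that the $L^2$ rate holds simultaneously over the $n$ leave-one-out fits. The remaining ingredients — the PRESS identity together with $h_i\in[0,1]$, the conditional subgaussian maximal inequality, and the sub-root property that lets a single linear fixed point $r$ cover both fixed-point conditions — are elementary by comparison.
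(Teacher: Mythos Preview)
Your skeleton---reduce to leave-one-out via $|\mu_i-\hmu_i|\le|\mu_i-\hmu_i^{(i)}|$, then a conditional-subgaussian maximal inequality, then control $\max_i\|\mu-\hmu^{(i)}\|_2$---is exactly the paper's route. Two differences are worth noting. First, the paper obtains the LOO comparison not through the PRESS identity but by adding the two optimality inequalities (for $\hmu$ on the full sample and $\hmu^{(i)}$ on the leave-$i$-out sample), which yields $T_i(\mu_i-\hmu_i)^2\le T_i(\mu_i-\hmu_i^{(i)})^2$ and hence the pointwise bound since $T_i>0$ a.s.; this works for any convex $\mt{F}$ as in the lemma's hypothesis, whereas PRESS requires linearity. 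Second, and more substantively, the paper controls $\|\mu-\hmu^{(i)}\|_2$ \emph{directly}, not via $\|\pmu-\hmu^{(i)}\|_2$: optimality of $\hmu^{(i)}$ gives $\tfrac1n\sum_{j\ne i}T_j(\mu_j-\hmu_j^{(i)})^2\le \Pn T_j(\mu_j-\pmu_j)^2=O_p(\|\mu-\pmu\|_2^2)$, a bound that is trivially uniform in $i$; the conversion of this weighted empirical norm to $\|\mu-\hmu^{(i)}\|_2$ then uses only Mendelson's small-ball lower bound, which already holds uniformly over the whole class $\mu-\mt{F}$ (and hence over every $\mu-\hmu^{(i)}$ simultaneously) with probability $1-e^{-cn}$. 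This explains why the lemma carries only the linear Rademacher fixed-point condition---the multiplier upper bound from Lemma~\ref{lemma:mu-l2-convergence}, the quadratic fixed point, and your sub-root reduction are never invoked---and the obstacle you correctly anticipated in your own route (upgrading in-expectation multiplier bounds to polynomial tails for a union bound over the $n$ LOO fits) simply does not arise in the paper's argument.
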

\begin{proof}
Let $\hmu_{i}^{(i)}$ be the LOO estimator with unit $i$ removed. From optimality for $\hmu$ and $\hmu^{(i)}$ we get:
\begin{equation}\label{eq:full_opt_mu}
    \Pn \frac{\pi_j\exp(\muplo_j -\lo_j)}{\E[\pi]}(\mu_j - \hmu_j)^2 \le  \Pn \frac{\pi_j\exp(\muplo_j -\lo_j)}{\E[\pi]}(\mu_j - \hmu_j^{(i)})^2
\end{equation}
Similarly, from optimality for $\hlo^{(i)}$ we get the opposite inequality:
\begin{equation}\label{eq:loo_opt_mu}
        \frac{1}{n}\sum_{j\ne i} \frac{\pi_j\exp(\muplo_j -\lo_j)}{\E[\pi]}(\mu_j - \hmu_j^{(i)})^2 \le \frac{1}{n}\sum_{j\ne i} \frac{\pi_j\exp(\muplo_j -\lo_j)}{\E[\pi]}(\mu_j - \hmu_j)^2
\end{equation}
Adding \eqref{eq:full_opt_mu} and \eqref{eq:loo_opt_mu} and eliminating the terms we get the following:
\begin{equation*}
    |\mu_i - \hmu_i| \le |\mu_i - \hmu_i^{(i)}| 
\end{equation*}
For each $i$ conditional on $\hmu^{(i)}$ each $\mu_i- \hmu_i^{(i)}$ is subgaussian with $\|\mu_i- \hmu_i^{(i)}\|_{\psi_2} \le L_{\psi_2} \|\mu_i- \hmu_i^{(i)}\|_{2}$ by Assumption \ref{as:subg}. Using Lemma~\ref{lem:max_in}, a maximal inequality for conditionally subgaussian variables, this implies the following bound.
\begin{equation*}
    \max_i|\mu_i -\hmu_i^{(i)}| = O_p\left( \max_i\|\mu_i- \hmu_i^{(i)}\|_2 \sqrt{\log(n)}\right).
\end{equation*}
To bound the norm $\|\mu_i- \hmu_i^{(i)}\|_2$, we compare the values of the leave-one-out loss at its minimizer $\mu^{(i)}$ and $\tilde\mu$. 
\begin{equation}
\begin{aligned}\label{eq:key_in_mu}
\frac{1}{n}\sum_{j\ne i} \frac{\pi_j\exp(\muplo_j -\lo_j)}{\E[\pi]}(\mu_j - \hmu_j^{(i)})^2 
&\le \frac{1}{n}\sum_{j\ne i} \frac{\pi_j\exp(\muplo_j -\lo_j)}{\E[\pi]}(\mu_j - \pmu_j)^2  \\
   &\le \Pn \frac{\pi_j\exp(\muplo_j -\lo_j)}{\E[\pi]}(\mu_j - \pmu_j)^2 = O_p(\norm{\mu-\pmu}_2^2)
\end{aligned}
\end{equation}
where the last equivalence follows from \ref{cor:pop_proj} and Markov inequality. 
To conclude, we'll show that this weighted empirical norm of $\mu-\hmu^{(i)}$
is --- up to constant factors --- an upper bound on $\norm{\mu-\hmu^{(i)}}_2$.
To do this, we'll work with binary lower bounds of the form $X \ge \theta \{ X \ge \theta \}$ on the factors in this norm. For any $\epsilon < \epsilon_0$, and $x > 0$
\begin{equation}
\label{eq:loo-zeroth-order-mu}
\begin{aligned}
     \frac{1}{n}\sum_{j\ne i} \frac{\pi_j\exp(\muplo_j -\lo_j)}{\E[\pi]}(\mu_j - \hmu_j^{(i)})^2 \ge\frac{q_{\epsilon}\exp( -x\| \muplo -\lo\|_{\psi_2}) \| \mu - \hmu^{(i)}\|_2}{4}  \times\\
      \frac{1}{n}\sum_{j\ne i} \left\{ \pi _j \ge q_{\epsilon} \E[\pi]\right\} \left\{ \muplo_j -\lo_j \ge  -x\| \muplo -\lo\|_{\psi_2}\right\}\left\{ 4|\mu_j - \hmu_j^{(i)}| \ge  \| \mu - \hmu^{(i)}\|_2\right\} 
\end{aligned}
\end{equation}
Because our product of indicators is in $\{0,1\}$, we know that the missing $j=i$ term could contribute at most $1/n$ to this average; thus, writing $\Pn$ for the average over all $j \in 1 \ldots n$,
\begin{equation}
\label{eq:zeroth-order-mu}
\begin{aligned}
    \frac{1}{n}\sum_{j\ne i} \left\{ \pi _j \ge q_{\epsilon} \E[\pi]\right\} \left\{ \muplo_j -\lo_j \ge  -x\| \muplo -\lo\|_{\psi_2}\right\}\left\{ 4|\mu_j - \hmu_j^{(i)}| \ge  \| \mu - \hmu^{(i)}\|_2\right\} \ge \\
        \Pn \left\{ \pi _j \ge q_{\epsilon} \E[\pi]\right\} \left\{ \muplo_j -\lo_j \ge  -x\| \muplo -\lo\|_{\psi_2}\right\}\left\{ 4|\mu_j - \hmu_j^{(i)}| \ge  \| \mu - \hmu^{(i)}\|_2\right\} - \frac{1}{n}.
\end{aligned}
\end{equation}
Via some additional calculations, we arrive at a simplified lower bound.
\begin{align*}
&\Pn \left\{ \pi _j \ge q_{\epsilon} \E[\pi]\right\} \left\{ \muplo_j -\lo_j \ge  -x\| \muplo -\lo\|_{\psi_2}\right\}\left\{ 4|\mu_j - \hmu_j^{(i)}| \ge  \| \mu - \hmu^{(i)}\|_2\right\}  \\
&\overset{(a)}{\ge}\Pn \left\{ \pi _j \ge q_{\epsilon} \E[\pi]\right\}\left\{ 4|\mu_j - \hmu_j^{(i)}|  \ge  \| \mu - \hmu^{(i)}\|_2\right\} -  \Pn\left\{ \muplo_j -\lo_j \le  -x\| \muplo -\lo\|_{\psi_2}\right\}  \\
&\overset{(b)}{\ge} \Pn \left\{ \pi _j \ge q_{\epsilon} \E[\pi]\right\} - 1 + \Pn\left\{ 4|\mu_j - \hmu_j^{(i)}| \ge  \| \mu - \hmu^{(i)}\|_2\right\} -  \Pn\left\{ \muplo_j -\lo_j \le  -x\| \muplo -\lo\|_{\psi_2}\right\} \\
&\overset{(c)}{=}  \qty( \E \left\{ \pi \ge q_{\epsilon} \E[\pi]\right\} - 1 ) +
   \Pn\left\{ 4|\mu_j - \hmu_j^{(i)}| \ge  \| \mu - \hmu^{(i)}\|_2\right\} - \E\left\{ \muplo -\lo \le  -x\| \muplo -\lo\|_{\psi_2}\right\} + o_p(1) \\
&\overset{(d)}{\ge} -\epsilon +
   \Pn\left\{ 4|\mu_j - \hmu_j^{(i)}| \ge  \| \mu - \hmu^{(i)}\|_2\right\}- \E\left\{ \muplo -\lo \le  -x\| \muplo -\lo\|_{\psi_2}\right\} + o_p(1).
\end{align*}
Here (a) follows from the inclusion/exclusion identity $1_A 1_B = 1_A - 1_A 1_{\neg B} \ge 1_A - 1_{\neg B}$, (b) from 
the elementary inequality $1_A 1_B \ge 1_A + 1_B - 1$ (consider the cases that $1_A1_B =1$ and $1_A1_B=0$) (c) from the law of large numbers, and (d) from Assumption \ref{as:overlap}. 
We will argue that---with appropriate choice of $x$ and $\epsilon$---the essential behavior of this sum is the same as the term $\Pn\left\{ 4|\mu_j - \hmu_j^{(i)}| \ge  \| \mu - \hmu^{(i)}\|_2\right\}$, which we will now lower-bound.

It suffices to find a uniform lower bound on $\Pn \{ 4 \abs{\delta}_i \ge \norm{\delta}_2 \}$ for $\delta \in \mu - \mt{F}$, as $\mu - \hat\mu^{(i)}$ is --- for all $i$ --- in this set. We use one from \citet{mendelson2014learning}. With probability $1-2\exp(\eta^2n/2)$,\footnote{This is Theorem 5.3 of \citet{mendelson2014learning} for $\tau=1/4$, where we've taken $\eta$ to be the lower bound discussed below Theorem 4.1 for $Q_H(1/2)$ based on the $L_2$-$L_4$ norm equivalence $\norm{\delta}_4 \le c \norm{\delta}_2$ implied by our subgaussianity assumption (Assumption~\ref{as:subg}).}
\begin{equation}
\begin{aligned}
& \Pn \{ 4 \abs{\delta}_i \ge \norm{\delta}_2 \} \ge \eta/4 \qqtext{ for all } \delta \in \mu - \mt{F} \qqtext{ with } \norm{\delta}_2 \ge r \\
&\text{ if } \quad \frac{\eta r}{64} \ge \E \sup_{\substack{\delta \in \mu - \mt{F} \\ \norm{\delta} \le r}} \Pn \varepsilon_i \delta_i \qqtext{ where }
\eta = \text{the bound discussed in the footnote}
\end{aligned}
\end{equation}
Taking constants $x$ and $\epsilon$ appropriately,
this implies that $\norm{\mu - \hmu^{(i)}}_2 = O_p(\norm{\mu - \pmu}_2)$
when $\norm{\mu - \pmu}_2 \ge r$ and therefore, generally, that
$\norm{\mu - \hmu^{(i)}}_2 = O_p\qty(\max\qty{r, \norm{\mu - \pmu}_2})$.
\end{proof}

In the case that $\mt{F}$ is --- or is contained in --- a $p$-dimensional subspace, 
$cr\sqrt{p/n}$ bounds the gaussian and rademacher complexities involved in the fixed point conditions of the lemmas above. Thus, the linear fixed point condition is satisfied if 
$\eta r \ge c r \sqrt{p/n}$, i.e., irrespective of $r$ it holds if $p/n$ is bounded by a sufficiently small constant. And the quadratic one holds for $r \propto \sqrt{p/n} \norm{\mu-\pmu}_2$. This implies the following result.

\begin{corollary}
\label{coro:mu-rates}
Suppose $\mt{F}$ is a set of dimension no larger than $p$. Then 
if $p/n$ is bounded by a sufficiently small constant, we have
\begin{equation*}
\norm{\hmu-\pmu}_2 = O_p\qty(\norm{\mu-\pmu}_2\sqrt{\frac{p}{n}}) = O_p\qty(\norm{\mu-\pmu}) \qand
\max_{i}|\hmu_i-\mu_i| = O_p\qty(\norm{\mu-\pmu}_2\sqrt{\log(n)}).
\end{equation*}
\end{corollary}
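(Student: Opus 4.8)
The plan is to specialize the two abstract results, Lemma~\ref{lemma:mu-l2-convergence} for the $L^2$ rate and Lemma~\ref{lem:or_proj} for the sup-norm rate, to the finite-dimensional case by evaluating the localized Rademacher and Gaussian complexities that enter their fixed-point conditions. The one genuine computation is to show that, when $\mt{F}$ lies in a subspace of dimension at most $p$, for every $r>0$
\begin{equation*}
\E \sup_{\substack{\delta \in \mu - \mt{F}\\ \norm{\delta}_2 \le r}} \Pn \varepsilon_i \delta_i \ \le\ c\, r\sqrt{p/n}
\qand
\E \sup_{\substack{\delta \in \mu - \mt{F}\\ \norm{\delta}_2 \le r}} \Pn g_i \delta_i \ \le\ c\, r\sqrt{p/n}
\end{equation*}
for an absolute constant $c$, with $\varepsilon_i$ Rademacher and $g_i$ standard normal. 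I would get this in the standard way: decompose $\delta = \mu - f$ into the $L^2$-residual $\mu - \bestmu$, which is orthogonal to $\mt{F}$, and the in-subspace part $\bestmu - f$; Pythagoras forces $\norm{\mu-\bestmu}_2 \le r$ and $\norm{\bestmu - f}_2 \le r$ on the localized set; expanding $\bestmu - f$ in an $L^2$-orthonormal basis $b_1,\dots,b_p$ and applying Cauchy--Schwarz together with $\E(\Pn \varepsilon_i b_{k,i})^2 = 1/n$ bounds its contribution by $r\sqrt{p/n}$, while the residual contributes only $\norm{\mu-\bestmu}_2/\sqrt n \le r/\sqrt n \le r\sqrt{p/n}$ and is absorbed. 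The versions with $\mu$ replaced by $\pmu$ needed by Lemma~\ref{lemma:mu-l2-convergence} are even easier, since $\pmu \in \mt{F}$ makes $\pmu - \mt{F}$ a subspace through the origin, so the residual term is not present.

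Next I would feed these into the fixed-point conditions. The linear condition $\eta r \ge \E\sup \Pn\varepsilon_i\delta_i$ shared by both lemmas becomes $\eta r \ge c r\sqrt{p/n}$, i.e.\ $p/n \le (\eta/c)^2$; this is precisely where the hypothesis that $p/n$ is bounded by a sufficiently small constant is used, and once it holds the condition is met by \emph{every} $r>0$. The quadratic condition $\eta r^2/\norm{\mu-\pmu}_2 \ge \E\sup \Pn g_i\delta_i$ of Lemma~\ref{lemma:mu-l2-convergence} becomes $\eta r^2/\norm{\mu-\pmu}_2 \ge c r\sqrt{p/n}$, i.e.\ $r \ge (c/\eta)\sqrt{p/n}\,\norm{\mu-\pmu}_2$, hence holds for the choice $r = (c/\eta)\sqrt{p/n}\,\norm{\mu-\pmu}_2$, which satisfies both conditions simultaneously.

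Finally I would read off the statement. With $r \asymp \sqrt{p/n}\,\norm{\mu-\pmu}_2$, Lemma~\ref{lemma:mu-l2-convergence} gives $\norm{\hmu-\pmu}_2 = O_p(r) = O_p(\sqrt{p/n}\,\norm{\mu-\pmu}_2)$, and since $\sqrt{p/n} = O(1)$ this is also $O_p(\norm{\mu-\pmu}_2)$ — the first display. For the sup-norm bound, Lemma~\ref{lem:or_proj} requires only the linear fixed-point condition, which holds for every $r>0$; choosing any $r \le \norm{\mu-\pmu}_2$ collapses $\max\{r,\norm{\mu-\pmu}_2\}$ to $\norm{\mu-\pmu}_2$, so the lemma yields $\max_i|\hmu_i-\mu_i| = O_p(\norm{\mu-\pmu}_2\sqrt{\log n})$. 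I expect the only delicate point to be the localized-complexity estimate above — in particular verifying that the affine offset of $\mu-\mt{F}$ contributes a negligible term on the restricted set, and keeping track of the constant $c$ so that the threshold on $p/n$ is explicit — but this is routine, and everything else is substitution into the two lemmas, so no substantive obstacle arises.
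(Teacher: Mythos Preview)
Your proposal is correct and follows essentially the same route as the paper: the paper's ``proof'' of this corollary is the paragraph immediately preceding it, which simply asserts that the localized Rademacher/Gaussian complexity is bounded by $cr\sqrt{p/n}$ and then reads off the two fixed-point conditions exactly as you do. You supply the standard orthonormal-basis computation that the paper omits; the only minor slip is that the statement of Lemma~\ref{lemma:mu-l2-convergence} indexes the supremum over $\mu-\mt{F}$ rather than $\pmu-\mt{F}$, but since you already handle the harder affine case this has no effect.
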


\newpage
\subsection{Empirical convergence}\label{ap:weights_prop}

In this section, we establish convergence results for $\hlo$, which we later use to analyze parts of the error of the estimator. 
\begin{lemma}\label{lem:conv_rate_1}
Suppose Assumptions \ref{as:subg}-\ref{as:out_mom} hold, $\norm{\mu-\bestmu}_2 \ll 1$, $\m{F}$ is contained in a $p \ll n$ dimensional subspace,
and $\zeta^2 = O(\sqrt{pn}).$
Then,  
    \begin{equation*}
                \| \hlo - \plo\|_2 = O_p\qty(\sqrt{\frac{p}{\E[\pi]n}}), \quad  \Pn (\hlo_i - \plo_i)^2 = O_p\qty(\frac{p}{\E[\pi]n}), \quad \Pn \frac{(1-D_i) \exp(\plo_i)}{\E[\pi]} \ell(\hlo_i -  \plo_i) = O_p\qty(\frac{p}{\E[\pi]n}).
    \end{equation*}
\end{lemma}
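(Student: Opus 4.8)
The plan is to derive all three conclusions from a single ``basic inequality'' obtained from the optimality of $\hlo$, and then to localize it with a sphere-reduction argument of the same flavor as Lemma~\ref{lemma:mu-l2-convergence}. Write $L_n(f) := \Pn[(1-D_i)\exp(f_i)] - \Pn[D_i f_i] + \tfrac{\overline{\pi}\zeta^2}{2n}\norm{f}_{\m F}^2$, so that $\hlo = \argmin_{f \in \mt F} L_n(f)$ and $\plo \in \mt F$. Using the arithmetic identity $\exp(\hlo_i)-\exp(\plo_i)=\exp(\plo_i)\{\ell(\plo_i-\hlo_i)+(\hlo_i-\plo_i)\}$ together with $\norm{\hlo}_{\m F}^2-\norm{\plo}_{\m F}^2=\norm{\hlo-\plo}_{\m F}^2+2\langle\hlo-\plo,\plo\rangle_{\m F}$ (exactly as in the derivation of \eqref{eq:quick_der}), one gets
\[
\underbrace{\Pn(1-D_i)\exp(\plo_i)\,\ell(\plo_i-\hlo_i)}_{=:B_n(\hlo)\,\ge\,0}\;+\;\frac{\overline{\pi}\zeta^2}{2n}\norm{\hlo-\plo}_{\m F}^2\;\le\;\abs{h_n(\hlo-\plo)}, \qquad h_n(f):=\Pn g_i f_i+\frac{\overline{\pi}\zeta^2}{n}\langle f,\plo\rangle_{\m F},
\]
where $g_i:=(1-D_i)\exp(\plo_i)-D_i$ and $h_n$ is linear in $f$. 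The point of isolating $h_n$ is that $(1-\pi)\exp(\plo)=\pi\exp(\plo-\lo)$, so $\E[g_i f_i]=\E[\pi(\exp(\plo-\lo)-1)f]$, and the first-order condition \eqref{eq:foc_plo} for $\plo$ gives $\E[g_if_i]+\tfrac{\E[\pi]\zeta^2}{n}\langle f,\plo\rangle_{\m F}=0$ for every fixed $f\in\mt F$; hence $h_n(f)=(\Pn-\E)[g_if_i]+\tfrac{(\overline{\pi}-\E[\pi])\zeta^2}{n}\langle f,\plo\rangle_{\m F}$ is a centered empirical process up to a negligible correction (recall $\overline{\pi}-\E[\pi]=O_p(\sqrt{\E[\pi]/n})$, $\norm{\plo}_{\m F}=O(\norm{\plo}_2)=o(\sqrt n)$, and the $\overline{\pi}\zeta^2$-versus-$\E[\pi]\zeta^2$ / $\plo(\hat\zeta)$-versus-$\plo$ discrepancies are lower order by the comparison established at the end of Appendix~\ref{ap:pop_objects}).

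Next, localize. Since $B_n$ is convex with $B_n(\plo)=0$ and $h_n$ is linear, a scaling argument identical to Step~2 of Lemma~\ref{lemma:mu-l2-convergence} shows that $\norm{\hlo-\plo}_2<r$ as soon as $B_n(\plo+\delta)>\abs{h_n(\delta)}$ for every $\delta\in\mt F$ with $\norm{\delta}_2=r$. For the lower bound I would show $B_n(\plo+\delta)\gtrsim\E[\pi]\,r^2$ uniformly on the sphere: truncate via $\ell(-\delta_i)\ge c\,\delta_i^2\{\abs{\delta_i}\le1\}$, and then apply a Mendelson-type small-ball / lower-isometry bound to the weighted empirical quadratic form $\Pn(1-D_i)\exp(\plo_i)\,\delta_i^2\{\abs{\delta_i}\le1\}$ over the $p$-dimensional subspace, whose population counterpart is $\E[(1-\pi)\exp(\plo)\delta^2]=\E[\pi\exp(\plo-\lo)\delta^2]\asymp\E[\pi]\norm{\delta}_2^2$ by (the proof of) Lemma~\ref{lem:l_2_eq} and Corollary~\ref{cor:pop_proj}; this gives the claim once $p/(\E[\pi]n)$ is bounded by a small constant, which is Assumption~\ref{as:comp}. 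For the upper bound I would bound $\sup_{\norm{\delta}_2=r}\abs{h_n(\delta)}$ by a localized multiplier inequality: in an $L^2$-orthonormal basis $\delta=\sum_k a_k\psi_k$, so $\abs{(\Pn-\E)[g_i\delta_i]}\le r\,(\sum_k[(\Pn-\E)g_i\psi_k]^2)^{1/2}$, whose expected square is $\tfrac1n\E[g_i^2\sum_k\psi_k(X_i)^2]\lesssim \tfrac{p}{n}\E[\pi]$ — the key moment estimate, following from $\E[g_i^2\mid\eta,X]\lesssim\exp(2\plo)+\pi$ together with the overlap moment bounds of Assumption~\ref{as:overlap} and Corollary~\ref{cor:mom_prod} — plus the negligible penalty correction. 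Concentrating the supremum, $\sup_{\norm{\delta}_2=r}\abs{h_n(\delta)}=O_p(r\sqrt{\E[\pi]p/n})$. Choosing $r=C\sqrt{p/(\E[\pi]n)}$ with $C$ large makes $\E[\pi]r^2\gtrsim p/n$ dominate $r\sqrt{\E[\pi]p/n}\asymp p/n$ on the sphere with probability tending to one, yielding the first conclusion $\norm{\hlo-\plo}_2=O_p(\sqrt{p/(\E[\pi]n)})$.

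The remaining two conclusions follow quickly. Plugging the rate back into the basic inequality at $\hlo$ gives $B_n(\hlo)\le\abs{h_n(\hlo-\plo)}=O_p(\norm{\hlo-\plo}_2\sqrt{\E[\pi]p/n})=O_p(p/n)$, i.e. $\Pn\tfrac{(1-D_i)\exp(\plo_i)}{\E[\pi]}\ell(\plo_i-\hlo_i)=O_p(p/(\E[\pi]n))$. A leave-one-out argument exactly as in Lemma~\ref{lem:or_proj} (adding the full and LOO optimality inequalities gives $\abs{\hlo_i-\plo_i}\le c\,\abs{\hlo^{(i)}_i-\plo_i}$, which is conditionally subgaussian given $\hlo^{(i)}$, so the maximal inequality Lemma~\ref{lem:max_in} applies) yields $\max_i\abs{\hlo_i-\plo_i}=O_p(\sqrt{p\log n/(\E[\pi]n)})=o_p(1)$; since on the event $\{\max_i\abs{\hlo_i-\plo_i}\le1\}$ the functions $\ell(x)$ and $\ell(-x)$ are comparable up to a constant, the version with argument $\hlo_i-\plo_i$ (the third conclusion) follows. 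Finally, $\Pn(\hlo_i-\plo_i)^2=O_p(p/(\E[\pi]n))$ follows from the first conclusion and the uniform equivalence of empirical and population $L^2$ norms over the $p\ll n$-dimensional subspace.

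I expect the main obstacle to be the two concentration ingredients that must carry the correct $\E[\pi]$ factor: the ``leverage-weighted'' moment bound $\E[g_i^2\sum_k\psi_k(X_i)^2]\lesssim\E[\pi]p$ used in the upper bound, and the matching lower isometry $\Pn(1-D_i)\exp(\plo_i)\delta_i^2\gtrsim\E[\pi]\norm{\delta}_2^2$ used in the quadratic lower bound. Both must exploit that the ``effective sample size'' is $\E[\pi]n$ rather than $n$ — the control weights $(1-D_i)\exp(\plo_i)$ are $\approx0$ on the rare treated units and $\exp(\plo)$ is small on average — while carefully controlling the dependence among $D_i$, $\plo_i$ and the features $\psi_k(X_i)$ via Assumption~\ref{as:overlap} and Corollary~\ref{cor:mom_prod}; this is exactly the place where the light-tail and overlap hypotheses, and the requirement $p/(\E[\pi]n)\to0$, are indispensable.
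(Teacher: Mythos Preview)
Your overall strategy — basic inequality from optimality of $\hlo$, centering the linear term via the first-order condition \eqref{eq:foc_plo} for $\plo$, localization to a sphere, and matching a quadratic lower bound against a multiplier-process upper bound of order $r\sqrt{\E[\pi]p/n}$ — is correct and is essentially the paper's approach. The technical variants you choose (convexity-based scaling instead of monotonicity of $\ell(x)/x$; a direct Cauchy--Schwarz bound $\sup_{\|\delta\|_2=r}|(\Pn-\E)g_i\delta_i|\le r(\sum_k[(\Pn-\E)g_i\psi_k]^2)^{1/2}$ instead of the paper's three-piece decomposition of $g_i$) are valid, and your identification of $\E[g_i^2\sum_k\psi_k^2]\lesssim p\,\E[\pi]$ as the key moment estimate is right.

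There is one genuine gap. Your leave-one-out step claims that ``adding the full and LOO optimality inequalities gives $|\hlo_i-\plo_i|\le c\,|\hlo^{(i)}_i-\plo_i|$.'' This is false: adding the two inequalities yields only
\[
(1-D_i)\exp(\hlo_i)-D_i\hlo_i \ \le\ (1-D_i)\exp(\hlo_i^{(i)})-D_i\hlo_i^{(i)},
\]
which for $D_i=0$ gives $\hlo_i\le\hlo_i^{(i)}$ and for $D_i=1$ gives $\hlo_i\ge\hlo_i^{(i)}$ — a \emph{one-sided} bound only (this is exactly what the paper extracts in Lemma~\ref{lem:conv_rate_3}). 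In particular you do not get $\max_i|\hlo_i-\plo_i|=o_p(1)$ from LOO, and without a two-sided uniform bound the comparability of $\ell(x)$ and $\ell(-x)$ fails (for $x\to-\infty$ the ratio $\ell(x)/\ell(-x)\to\infty$). So your route from the $\ell(\plo_i-\hlo_i)$ bound to the $\ell(\hlo_i-\plo_i)$ bound does not go through.

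That said, this extra step appears to be chasing a typo. The basic inequality \eqref{eq:emp_inq_delta} delivers the bound with argument $\plo_i-\hlo_i$, and that is exactly the form the paper uses in every downstream application (see the proof of Lemma~\ref{lem:emp_error}); the paper's own one-line justification ``follows from \eqref{eq:emp_inq_delta}'' likewise yields only the $\ell(\plo_i-\hlo_i)$ version. So the correct reading of the third conclusion is with $\ell(\plo_i-\hlo_i)$, and your argument already proves that.
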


\begin{proof}
    From the definition of $\hlo$ we get
\begin{equation}
\label{eq:key_emp_inq_without_foc}
\begin{aligned}
0 &\ge \Pn\left((1-D_i)\exp(\hlo_i)   -
 D_i \hlo_i\right) + \frac{\overline\pi\zeta^2}{2n}\|\hlo\|^2_{\m{F}} - 
    \Pn\left((1-D_i)\exp( \plo_i)   -  D_i  \plo_i \right) - \frac{\overline\pi\zeta^2}{2n}\| \plo\|^2_{\m{F}} \\
&=\Pn(1-D_i)\exp\left( \plo_i \right)\ell(\plo_i-\hlo_i)
    + \Pn\left((1-D_i)\exp( \plo_i) - D_i \right)(\hlo_i -  \plo_i) \\
    &+ \frac{\overline\pi\zeta^2}{2n}\left(\|\hlo -\plo \|^2_{\m{F}} +  2<\hlo -\plo, \plo>_{\m{F}}\right). 
\end{aligned}
\end{equation}

The first order condition for $\plo$ is that, for any $f \in \mt{F}$,
\begin{equation}
    \E[\pi (\exp(\plo-\lo) - 1)(f-\plo)] +  \frac{\E[\pi]\zeta^2}{n} <\plo,f-\plo>= 0.
\end{equation}
It follows, subtracting this first order condition from \eqref{eq:key_emp_inq_without_foc} for $f = \hlo$, that  
\begin{equation}
\label{eq:key_emp_inq}
\begin{aligned}
    0 &\ge \Pn \frac{(1-D_i) \exp(\plo_i)}{\E[\pi]} \ell(\plo_i -\hlo_i) +
           (\Pn-\P)\frac{\left((1-D_i)\exp( \plo_i) - D_i \right)}{\E[\pi]}(  \plo_i-\hlo_i) \\
           + &\frac{\overline\pi\zeta^2}{2\E[\pi]n}\|\hlo -\plo \|^2_{\m{F}} + \frac{(\overline\pi- \E[\pi])\zeta^2}{\E[\pi]n}<\hlo -\plo, \plo>_{\m{F}} 
\end{aligned}
\end{equation}

Now we've found that $\hat\delta = \plo-\hlo$ \emph{does not} satisfy the inequality
\begin{equation}
\begin{aligned}
    \label{eq:emp_inq_delta}
0 &< \Pn \frac{(1-D_i) \exp(\plo_i)}{\E[\pi]} \ell(\delta_i)+ (\Pn-\P)\frac{\left((1-D_i)\exp( \plo_i) - D_i \right)}{\E[\pi]}\delta_i \\
&+ \frac{\overline \pi\zeta^2}{2\E[\pi]n}\| \delta \|^2_{\m{F}} + \frac{(\overline\pi- \E[\pi])\zeta^2}{\E[\pi]n}<\delta, \plo>_{\m{F}} 
\end{aligned}
\end{equation}

We will find a radius $r$ for which, on a high probability event, every $\delta$ with $\norm{\delta}_2 \ge r$  \emph{does} satisfy it, implying that on that event $\norm{\hat\delta}_r \le r$. To do this, we'll 
work with the following uniform bounds, which we'll prove hold for some constant $\eta$ on 
an event of probability tending to one.

\begin{equation}
\label{eq:theta-uniform-bounds}
\begin{aligned}
&\Pn \frac{(1-D_i)\exp\left( \plo_i \right)}{\E[\pi]}\ell(\delta_i) \ge \eta
 \ell\left(\frac{\norm{\delta}_2}{4}\right) 
\quad && \text{ for all } \delta \in \mt{F} \text{ with } \norm{\delta}_2 \ge r \\
&\abs{(\Pn-\P)\frac{\left((1-D_i)\exp( \plo_i) - D_i \right)}{\E[\pi]}\delta_i} + 
 \frac{(\overline\pi- \E[\pi])\zeta^2}{\E[\pi]n}<\delta, \plo>_{\m{F}} 
\le \frac{\eta}{33} r\norm{\delta}_2
\quad && \text{ for all } \delta \in \mt{F}
\end{aligned}
\end{equation}
On this event, to show that a curve with $\norm{\delta}_2 \ge r$ satisfies \eqref{eq:emp_inq_delta}, 
it suffices to show that $\ell(\norm{\delta}_2/4) \ge r\norm{\delta}_2/33$. That is, letting $x=\norm{\delta}_2/4$,
that $\ell(x)/x \ge 4r / 33$. We can see that this ratio is increasing in $x$ by calculating its derivative.
\begin{equation}
\frac{d}{dx}\qty{\ell(x)/x}=\frac{\qty{-(1+x)e^{-x}+1}}{x^2} \ge 0 \qqtext{ when } e^x \ge 1+x,
\end{equation}
i.e., considering the Taylor series for $e^x$, for all $x \ge 0$. Thus, this condition holds when $\norm{\delta}_2 \ge r$ if it does when $\norm{\delta}_2=r$. To show that it does when $\norm{\delta}_2=r$, we'll use Taylor's theorem with MVT remainder.
Because $\ell(x)=f(x)-f(0)-f'(0)x$ for $f(x)=e^{-x}$, Taylor's theorem implies $\ell(x)=e^{-y}x^2/2$ and therefore 
$\ell(x)/x=e^{-y}x/2$ for some $y \in [0,x]$. Thus, 
\begin{equation}
\ell(r/4)/(r/4) \ge 4r/33 \qqtext{ if } e^{-y}r/8 \ge 4r/33 \qqtext{ i.e. if } e^{-y} \ge 32/33.   
\end{equation}
This holds if $y \le \log(33/32)$ and therefore if $r/4 \le \log(33/32)$; because $r \to 0$
by assumption this holds eventually. We conclude by proving the bounds \eqref{eq:theta-uniform-bounds}.

\textit{The lower bound.}
Using the property $\ell(x) \ge \ell(|x|)$ and bounds of the form
$Z \ge \theta \{ Z \ge \theta \}$ on factors in each term, 
 we get:
\begin{align*}
   \frac{\Pn(1-D_i)\exp\left( \plo_i \right)}{\E[\pi]}\ell(\delta_i)
&= \Pn \frac{1-D_i}{1-\pi_i}\frac{\pi_i\exp(\muplo_i -\lo_i)}{\E[\pi]}\exp(\plo_i-\muplo_i)\ell(\delta_i)  \\
&\ge \min_i \exp(\plo_i - \muplo_i) q_{\epsilon}\exp( -x\| \muplo -\lo\|_{\psi_2})  \ell\left(\frac{\norm{\delta}_2}{4}\right) \\
&\times \Pn\frac{1-D_i}{1-\pi_i}\left\{ \pi _i \ge q_{\epsilon} \E[\pi]\right\}\left\{ \muplo_i -\lo_i\ge  -x\| \muplo -\lo\|_{\psi_2}\right\} \{ 4| \delta_i| \ge \|\delta\|_2\}.
\end{align*}
From here we proceed essentially as we did from 
the analogous equation \eqref{eq:loo-zeroth-order-mu}
in the proof of Lemma \ref{lem:or_proj}.
\begin{align*}
&\Pn\frac{1-D_i}{1-\pi_i}\left\{ \pi _i \ge q_{\epsilon} \E[\pi]\right\}\left\{ \muplo_i -\lo_i\ge  -x\| \muplo -\lo\|_{\psi_2}\right\} \{ 4| \delta_i | \ge \|\hlo - \plo\|_2\} \\
&\overset{(a)}{\ge} \Pn \frac{1-D_i}{1-\pi_i}\left\{ \pi _i \ge q_{\epsilon} \E[\pi]\right\} + \Pn\frac{1-D_i}{1-\pi_i}\left\{ 4| \delta_i| \ge  \|\delta\|_2\right\} - \Pn\frac{1-D_i}{1-\pi_i}  \\
&- \Pn\frac{1-D_i}{1-\pi_i}\left\{ \muplo_i -\lo_i \le  -x\| \muplo -\lo\|_{\psi_2}\right\} \\
&\overset{(b)}{=}\E\left\{ \pi _i \ge q_{\epsilon} \E[\pi]\right\} + \Pn\frac{1-D_i}{1-\pi_i}\left\{ 4| \delta_i| \ge  \|\delta\|_2\right\} - 1 -  \\
& - \E\left\{ \muplo_i -\lo_i \le  -x\| \muplo -\lo\|_{\psi_2}\right\}  + o_p(1)  \\
&\overset{(c)}{\ge}  -\epsilon - \E\left\{ \muplo_i -\lo_i \le  -x\| \muplo -\lo\|_{\psi_2}\right\} + \Pn\frac{1-D_i}{1-\pi_i}\left\{ 4| \delta_i| \ge  \|\delta_i \|_2\right\} + o_p(1)
\end{align*}
Here (a) follows from inclusion/exclusion as in Lemma~\ref{lem:or_proj}; (b) from the law of large numbers, observing that
$\E[1-D_i \mid \pi_i]/(1-\pi_i) = 1$; and (c) from Assumption \ref{as:overlap}.
 and (c) because Assumption \ref{as:overlap} is that the first term in line (b) is at least $1-\epsilon$. Choosing $x$ and $\epsilon$ appropriately as in Lemma~\ref{lem:or_proj}, we have the lower bound
\begin{equation}
\label{eq:theta-est-lb-intermediate}
   \Pn\frac{(1-D_i)\exp\left( \plo_i \right)}{\E[\pi]}\ell(\delta_i) \ge 
c \min_i \exp(\plo_i - \muplo_i)  \ell\left(\frac{\norm{\delta}_2}{4}\right) \times c \Pn\frac{1-D_i}{1-\pi_i}\left\{ 4| \delta_i| \ge  \|\delta_i \|_2\right\}.
\end{equation}

The last factor on the right is, with probability tending to one, no smaller than some constant $c > 0$ for all $\delta$ in the set $\mt{F}$ with $\norm{\delta}_2 \ge r$ if $r$ satisfies the following linear fixed point condition for a sufficiently small constant $\eta > 0$ and $q > 2$.
\begin{equation}
\label{eq:theta-linear-fp-cond}
    \eta r \ge \norm*{\frac{1-D_i}{1-\pi_i}}_q \E \sup_{\substack{\delta \in \mt{F} \\ \norm{\delta} \le r}} \Pn g_i \delta_i 
    \qqtext{ which, for $q=3$, is }
     \frac{\eta r}{\sqrt[3]{\E\qty{\frac{1}{(1-\pi_i)^2}}}} \ge \E \sup_{\substack{\delta \in \mt{F} \\ \norm{\delta} \le r}} \Pn g_i \delta_i 
\end{equation}
Here $g_1 \ldots g_n$ is a sequence of independent standard normals independent of our data. The first version follows from a variant of the proof of Theorem 5.3 of \citet{mendelson2014learning} that uses the multiplier inequality \citep[Corollary 1.10]{mendelson2016upper} to replace the multipliers $(1-D_i)/(1-\pi_i)$ with a constant multiple of their $L_q$ norm.\footnote{In particular, we use \citet[Corollary 1.10]{mendelson2016upper} to bound $(\Pn-\P)\{(1-D_i)/(1-\pi_i)\}\phi_u(\delta_i)$ by $\norm{(1-D_i)/(1-\pi_i)}_q \Pn g_i \phi_u(\delta_i)$ where \citet{mendelson2015learning} uses the bounded difference inequality, then proceed as in \citet{mendelson2015learning}.}
Our sufficient version follows by taking $q=3$ and observing that 
\begin{equation}
\E \qty[\qty(\frac{1-D_i}{1-\pi_i})^3] 
= \E\qty[ \frac{ \E\qty{ (1-D_i)^3 \mid \pi_i} }{ (1-\pi_i)^3 } ] 
= \E\qty[ \frac{ \E\qty{ (1-D_i) \mid \pi_i} }{ (1-\pi_i)^3 } ] 
= \E\qty[ \frac{ 1-\pi }{ (1-\pi_i)^3 } ] = \E\qty[ \frac{1}{(1-\pi)^2} ]
\end{equation}

We derive a simplified version of \eqref{eq:theta-est-lb-intermediate} by making this substitution and
substituting a bound for the first factor.
Because $\min_i \exp(\plo_i - \muplo_i) \ge \exp(-\max_i \abs{\plo_i - \muplo_i})$ and
the subgaussianity of $\plo_i-\muplo_i$ implies $\abs{\plo_i - \muplo_i} \le c\norm{\plo-\muplo}_2 \sqrt{\log(n)}$ with probability tending to one, we get the following bound. For $r$ satisfying \eqref{eq:theta-linear-fp-cond},
with probability tending to one,
\begin{equation}
   \Pn \frac{(1-D_i)\exp\left( \plo_i \right)}{\E[\pi]}\ell(\delta_i) \ge c
\exp\qty(\norm{\plo - \muplo}_2 \sqrt{\log(n)}) \ell\left(\frac{\norm{\delta}_2}{4}\right) 
\quad \text{ for all } \delta \in \mt{F} \text{ with } \norm{\delta}_2 \ge r.
\end{equation}
Furthermore, this exponential factor is bounded, as
as Corollary~\ref{cor:pop_proj} gives a bound $\norm{\plo - \muplo}_2 = O(\norm{\mu-\bestmu}_2 + 1/\sqrt{n})$
from which it follows that---given our assumed bound on $\norm{\mu-\bestmu}_2$---the exponentiated quantity is bounded.  Thus, for some constant $\eta$, the lower bound in \eqref{eq:theta-uniform-bounds} holds.

\textit{The upper bound.}

We will show that each term is bounded by half our upper bound on the sum, i.e., that for  all $\delta \in \mt{F}$,
\begin{equation}
\label{eq:ub-both}
\begin{aligned}
\abs{(\Pn-\P)\frac{\left((1-D_i)\exp( \plo_i) - D_i \right)}{\E[\pi]}\delta_i} \le \frac{\eta/2}{33} r\norm{\delta}_2 \qqtext{and}  
 \frac{(\overline\pi- \E[\pi])\zeta^2}{\E[\pi]n}<\delta, \plo>_{\m{F}}  \le  \frac{\eta/2}{33} r\norm{\delta}_2 
 \end{aligned}
\end{equation}
For the second, we use Chebyshev's inequality. Because $\bar\pi - \E[\pi] = (\Pn - \P) D_i$ has variance $\E[\pi](1-\E[\pi])/n$, it follows by Chebyshev's inequality that with probability $1-t$, $\abs{\overline\pi- \E[\pi]} \le t^{-1/2}\sqrt{\E[\pi](1-\E[\pi])/n}$
and conseqently our bound is satisfied when
$$
\frac{\sqrt{\E[\pi](1-\E[\pi])}\zeta^2}{\E[\pi]n}\frac{<\delta, \plo>_{\m{F}}}{\sqrt{n}}  \le  t^{1/2}\frac{\eta/2}{33} r\norm{\delta}_2.
$$

Furthermore, via the Cauchy-Schwarz bound $<\delta, \plo>_{\m{F}} \le \norm{\delta}_{\m{F}}\norm{\theta}_{\m{F}}$ and Assumption~\ref{as:out_mom} 
that $\norm{\cdot}_{\m{F}} \le c \norm{\cdot}_2$ for some fixed constant $c$, this holds when
$$
c^2 \frac{\sqrt{\E[\pi](1-\E[\pi])}\zeta^2}{\E[\pi] n} \norm{\delta}_2\frac{\norm{\plo}_2}{\sqrt{n}}  \le  t^{1/2}\frac{\eta/2}{33} r\norm{\delta}_2 \quad \text{ and therefore when } \quad r \ge 66c^2t^{-1/2}\eta^{-1} \frac{\norm{\plo}_2}{\sqrt{n}}  \frac{\sqrt{\E[\pi](1-\E[\pi])}\zeta^2}{\E[\pi] n}.
$$
Assumption \ref{as:overlap} and previous Lemmas guarantee that $ \|\plo\|_{2} = o(\sqrt{n})$. As a result, this holds for $t$ tending to zero as long as $\zeta^2 = O(\sqrt{\E[\pi]} r n).$

To establish the first bound in \eqref{eq:ub-both}, we will focus on proving the bound below. From this, it follows that the upper bound from \eqref{eq:ub-both} holds for all $\delta \in \mt{F}$ via a scaling argument. Every $\delta \in \mt{F}$ is $(\norm{\delta}/r)(r\delta/\norm{\delta})$; this bound holds for the latter factor; and multiplying both sides
by the former gives the upper bound we claimed.
\begin{equation}
\label{eq:theta-uniform-ub-sufficient}
\abs{(\Pn-\P)\frac{\left((1-D_i)\exp( \plo_i) - D_i \right)}{\E[\pi]}\delta_i} < \frac{\eta/2}{33} r^2
 \qqtext{ for all } \delta \in \mt{F} \qqtext{ with } \norm{\delta}_2 = r.
\end{equation}

This is a bound on a centered multiplier process. We decompose the multiplier.
\begin{align*}
    (1-D) \exp(\plo) - D = \frac{1-D}{1-\pi}(1-\pi)\exp(\plo) - \pi\frac{D}{\pi} = 
    \pi \left(\frac{1-D}{1-\pi}\exp(\plo -\lo) - \frac{D}{\pi}\right) = \\
     \pi \left(\exp(\plo -\lo) - 1\right) + \frac{(\pi - D)}{1-\pi}\pi(\exp(\plo-\lo)-1)  + \frac{(\pi - D)}{1-\pi}.
\end{align*}

By using results from \citet{mendelson2016upper}; Corollary~\ref{cor:mom_prod} with $\lambda_1=1$, $\lambda_2=0$ and $\lambda_3=\lambda_4=1$; and the triangle inequality,
\begin{align*}
        \sup_{\substack{\delta \in \mt{F} \\ \norm{\delta} \le r }}\abs*{ (\Pn-\P)\frac{\pi_i \left(\exp(\plo_i -\lo_i) - 1\right)}{\E[\pi]} \delta_i } 
&= O_p\qty(\norm*{\frac{\pi \left(\exp(\plo -\lo) - 1\right)}{\E[\pi]}}_{3})\E \sup_{\substack{\delta \in \mt{F} \\ \norm{\delta} \le r }} \Pn g_i \delta_i \\
&= O_p\qty( \E \sup_{\substack{\delta \in \mt{F} \\ \norm{\delta} \le r }} \Pn g_i \delta_i ) = O_p\qty(r\sqrt{\frac{p}{n}}).
\end{align*}
where the penultimate equality follows from Corollaries \ref{cor:mom_prod} and \ref{cor:pop_proj}.

Similarly, we have:
\begin{align*}
  \sup_{\substack{\delta \in \mt{F} \\ \norm{\delta} \le r }}\left| \Pn\frac{\pi_i - D_i}{1-\pi_i}\frac{\pi_i \left(\exp(\plo_i -\lo_i) - 1\right)}{\E[\pi]}\delta_i\right| = \\
    \left\|\frac{\pi - D}{1-\pi}\frac{\pi \left(\exp(\plo -\lo) - 1\right)}{\E[\pi]}\right\|_{3}\E \sup_{\substack{\delta \in \mt{F} \\ \norm{\delta} \le r }} \Pn g_i \delta_i = O_p\left(\E \sup_{\substack{\delta \in \mt{F} \\ \norm{\delta} \le r }} \Pn g_i \delta_i\right) = O_p\qty(r\sqrt{\frac{p}{n}}),
\end{align*}
To get the penultimate equality, we use the bound:
\begin{align*}
    \left\|\frac{\pi - D}{1-\pi}\frac{\pi \left(\exp(\plo -\lo) - 1\right)}{\E[\pi]}\right\|_{3}^3 = \E\left[\frac{\pi(\pi^2 +(1-\pi)^2)}{(1-\pi)^2}\left(\frac{\pi \left(\exp(\plo -\lo) - 1\right)}{\E[\pi]}\right)^3\right] \le \\
    \E\left[\left(\frac{\pi \left(\exp(\plo -\lo) - 1\right)}{\E[\pi]}\right)^3\right] +     \E\left[\exp(2\lo)\left(\frac{\pi \left(\exp(\plo -\lo) - 1\right)}{\E[\pi]}\right)^3\right] = O(1), 
\end{align*}

Finally, we analyze the last term. By Markov's inequality, we have:
\begin{equation*}
    \sup_{\delta_i \in \mt{F}, \|\delta\|_2\le r }\left| \Pn\frac{\pi_i -D_i}{(1-\pi_i)}\delta_i\right| = O_p\left(\mathbb{E}\qty[\sup_{\delta_i \in \mt{F}, \|\delta\|_2\le r }\left| \Pn\frac{\pi_i -D_i}{(1-\pi_i)}\delta_i\right|]\right)
\end{equation*}
Using the $L^2$ norm instead of $L^1$ we get:
\begin{align*}
\left\|\sup_{\delta_i \in \mt{F}, \|\delta\|_2\le r } \Pn\frac{\pi_i -D_i}{(1-\pi_i)}\delta_i\right\|_2 
&= \left\|\sup_{\|X_i^\top \beta\|_2\le r } \Pn\frac{\pi_i -D_i}{(1-\pi_i)}X_i^\top \beta\right\|_2 \\
&\le  \left\|\sup_{\|X_i^\top \beta\|_2\le r } \norm{\Pn\frac{\pi_i -D_i}{(1-\pi_i)}X_i \Sigma^{-1/2}}_{\ell_2} \norm{\Sigma^{1/2}\beta}_{\ell_2} \right\|_2 \\
& = r \|\norm{\Pn\frac{\pi_i -D_i}{(1-\pi_i)}X_i \Sigma^{-1/2}}_{\ell_2}\|_2
\end{align*}
We then compute the last norm squared:
\begin{align*}
    &\left\|\norm{\Pn\frac{\pi_i -D_i}{(1-\pi_i)}X_i \Sigma^{-1/2}}_{\ell_2}^2\right\|_2^2 = \frac{1}{n} \mathbb{E} \left[\frac{\pi}{1-\pi} \| X \Sigma^{-1/2}\|_{\ell_2}^2\right] =  \frac{\E[\pi]}{n} \mathbb{E} \left[ \exp(\lo -  \log(\E[\pi])) \| X \Sigma^{-1/2}\|_{\ell_2}^2\right]\\
    & =  \frac{\E[\pi]}{n} \sum_{j=1}^p  \mathbb{E} \left[ \exp(\lo -  \log(\E[\pi])) \tilde X_j^2\right] \le \frac{\E[\pi]}{n} \|\exp(\lo -  \log(\E[\pi]))\|_2  \sum_{j=1}^p \| \tilde X_j^2\|_2 = O\qty(\frac{\E[\pi]p}{n})
\end{align*}
where $\tilde X_j$ is the $j$-th feature of the vector $X \Sigma^{-1/2}$. The last equality follows from Corollary \ref{cor:mom_prod} and because by construction and Assumption \ref{as:subg} $\tilde X_j$ are subgaussian with unit variance, and thus their fourth moment is bounded by a constant. As a result, we can conclude 
\begin{equation*}
     \sup_{\delta_i \in \mt{F}, \|\delta\|_2\le r }\left| \Pn\frac{\pi_i -D_i}{\E[\pi](1-\pi_i)}\delta_i\right| = O_p\left(r\sqrt{\frac{p}{\E[\pi]n}}\right).
\end{equation*}
Combining the lower bound and the upper bounds, we can conclude that $r = \sqrt{\frac{p}{\E[\pi]n}}$, proving the first result. The bound on $\Pn \frac{(1-D_i) \exp(\plo_i)}{\E[\pi]} \ell(\hlo_i -  \plo_i)$ follows from \eqref{eq:emp_inq_delta}. Finally, to get the bound on the empirical squared norm, we observe:
\begin{align*}
     \left\|\| \hlo - \plo\|^2_2 -  \Pn (\hlo_i - \plo_i)^2\right| \le  \| \hlo - \plo\|^2_2\left\|\hat \Sigma_{norm} - \mathcal{I}_p\right\|_{op} = o_p\left( \| \hlo - \plo\|^2_2\right)
\end{align*}
where $\hat \Sigma_{norm}$ is the empirical covariance matrix normalized by the population covariance matrix. The last implication follows from Theorem 4.7.1 in \cite{vershynin2018high}. The bound for $  \Pn (\hlo_i - \plo_i)^2$ then follows.

\end{proof}
Our final lemma established a version of a uniform bound for $(\hlo_i - \plo_i)$.
\begin{lemma}\label{lem:conv_rate_3}
    Suppose Assumptions \ref{as:subg}-\ref{as:out_mom} hold and $\|\mu - \bestmu\|_2 = o\qty(\frac{1}{\sqrt{\log(n)}})$, then 
    \begin{equation*}
        \max\{ (1-D_i)(\hlo_i - \plo_i),0\}) = O_p(1).
    \end{equation*}
\end{lemma}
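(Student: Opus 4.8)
The statement is the one-sided uniform bound $\max_i\max\{(1-D_i)(\hlo_i-\plo_i),0\}=O_p(1)$: for control units the fitted log-odds must not overshoot the population target $\plo$ by more than a constant. The plan is to reuse the leave-one-out device from Lemma~\ref{lem:or_proj}. For each $i$ let $\hlo^{(i)}$ be the minimizer of the same objective as $\hlo$ but over the $n-1$ units $\{(X_j,D_j,Y_j)\}_{j\ne i}$ (with $\overline{\pi}$ replaced by its leave-one-out analogue so that $\hlo^{(i)}$ is independent of $(X_i,D_i)$; the resulting ridge-term mismatch is $O(n^{-2})$ and negligible). Adding the optimality inequality for $\hlo$ (valid over a set containing $\hlo^{(i)}$) to the optimality inequality for $\hlo^{(i)}$ on the leave-one-out objective, every summand indexed by $j\ne i$ cancels and one is left, up to negligible terms, with $(1-D_i)\exp(\hlo_i)-D_i\hlo_i\le(1-D_i)\exp(\hlo^{(i)}_i)-D_i\hlo^{(i)}_i$; for a control unit ($D_i=0$) this gives $\exp(\hlo_i)\le\exp(\hlo^{(i)}_i)$, hence $\hlo_i\le\hlo^{(i)}_i$, and therefore $\max_i\max\{(1-D_i)(\hlo_i-\plo_i),0\}\le\max_i\max\{(1-D_i)(\hlo^{(i)}_i-\plo_i),0\}$.

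Next I would exploit that $\hlo^{(i)}-\plo\in\mt F$, a subgaussian class by Assumption~\ref{as:subg}, together with the independence of $\hlo^{(i)}$ and $(X_i,D_i)$: conditionally on the leave-one-out data, $\hlo^{(i)}_i-\plo_i=(\hlo^{(i)}-\plo)(X_i)$ is subgaussian with parameter $\lesssim\|\hlo^{(i)}-\plo\|_2$. The conditional maximal inequality of Lemma~\ref{lem:max_in}, applied over $i$ exactly as in Lemma~\ref{lem:or_proj}, then yields $\max_i|\hlo^{(i)}_i-\plo_i|=O_p\big(\max_i\|\hlo^{(i)}-\plo\|_2\,\sqrt{\log n}\big)$.

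The remaining ingredient, and the hard part, is to control $\max_i\|\hlo^{(i)}-\plo\|_2$ uniformly in $i$. The plan is to re-run the argument of Lemma~\ref{lem:conv_rate_1} for each leave-one-out minimizer -- its objective is an empirical objective over $n-1$ i.i.d.\ units, so the proof applies with no change and gives $\|\hlo^{(i)}-\plo\|_2=O_p(\sqrt{p/(\E[\pi]n)})$ -- and then to take a union bound over the $n$ leave-one-out problems. For the union bound to be cheap one re-inspects that proof: the Mendelson-type concentration and small-ball inequalities it uses already hold off exponentially small events, and the remaining population-type (Markov) bounds can be upgraded to exceptional probability $o(1/n)$ via higher moments. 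This is precisely where the strengthened hypothesis $\|\mu-\bestmu\|_2=o(1/\sqrt{\log n})$ enters: by Corollary~\ref{cor:pop_proj} it forces $\|\plo-\muplo\|_2=O(\|\mu-\bestmu\|_2+\zeta/\sqrt n)=o(1/\sqrt{\log n})$, so the factor $\exp(c\,\|\plo-\muplo\|_2\sqrt{\log n})$ appearing in the quadratic lower bound of Lemma~\ref{lem:conv_rate_1} tends to $1$, keeping the implied constants uniform across the leave-one-out problems. The outcome is $\max_i\|\hlo^{(i)}-\plo\|_2=O_p(\sqrt{p/(\E[\pi]n)})$.

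Putting the pieces together, $\max_i\max\{(1-D_i)(\hlo_i-\plo_i),0\}\le\max_i|\hlo^{(i)}_i-\plo_i|=O_p\big(\sqrt{p\log n/(\E[\pi]n)}\big)$, which by Assumption~\ref{as:comp} (under which $p/(\E[\pi]n)$ is negligible up to logarithmic factors) is $o_p(1)$, hence $O_p(1)$, as claimed. The genuinely delicate step is the third paragraph -- promoting the $\ell_2$ convergence rate of Lemma~\ref{lem:conv_rate_1} to one holding simultaneously over all $n$ leave-one-out estimators; everything else is a direct transcription of the leave-one-out argument already carried out for $\hmu$.
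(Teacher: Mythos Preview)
Your proposal is correct and follows essentially the same leave-one-out strategy as the paper's proof: the optimality comparison yields $(1-D_i)\hlo_i\le(1-D_i)\hlo^{(i)}_i$, the conditional-subgaussian maximal inequality (Lemma~\ref{lem:max_in}) reduces matters to bounding $\max_i\|\hlo^{(i)}-\plo\|_2$, and the latter is controlled by rerunning the argument of Lemma~\ref{lem:conv_rate_1} uniformly over the $n$ leave-one-out problems. You are, if anything, slightly more careful than the paper about the dependence of $\hlo^{(i)}$ on $D_i$ through $\overline\pi$ and about making the union bound over $i$ explicit.
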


\begin{proof}
    We use LOO estimators $\hlo^{(i)}$ and get from optimality for $\hlo$:
\begin{equation}\label{eq:full_opt}
    \Pn \left[(1-D_i) \exp(\hlo_i) - D_i \hlo_i\right]  + \frac{\overline{\pi}\zeta^2}{2n}\| \hlo \|_{\m{F}} \le  \Pn \left[(1-D_j) \exp(\hlo_j^{(i)}) - D_i \hlo_j^{(i)}\right]  + \frac{\overline{\pi}\zeta^2}{2n}\| \hlo^{(i)} \|_{\m{F}}.
\end{equation}
Similarly, from optimality for $\hlo^{(i)}$ we get the opposite inequality:
\begin{equation}\label{eq:loo_opt}
        \frac{1}{n}\sum_{j\ne i} \left[(1-D_j) \exp(\hlo_j^{(i)}) - D_i \hlo_j^{(i)}\right]  + \frac{\overline{\pi}\zeta^2}{2n}\| \hlo \|_{\m{F}} \le \frac{1}{n}\sum_{j\ne i}  \left[(1-D_j) \exp(\hlo_j) - D_i \hlo_j\right]  + \frac{\overline{\pi}\zeta^2}{2n}\| \hlo^{(i)} \|_{\m{F}}.
\end{equation}
Adding \eqref{eq:full_opt} and \eqref{eq:loo_opt} and eliminating the terms we get the following:
\begin{align*}
    (1-D_i) \exp(\hlo_i) - D_i \hlo_i \le  (1-D_i) \exp(\hlo_i^{(i)}) - D_i \hlo_i^{(i)}\Rightarrow \\
    (1-D_i) \exp(\hlo_i) \le (1-D_i) \exp(\hlo_i^{(i)}) \Rightarrow(1-D_i) \hlo_i \le (1-D_i)\hlo_i^{(i)} \Rightarrow\\
    (1- D_i) (\hlo_i - \plo_i) \le (1- D_i) (\hlo_i^{(i)} - \plo_i) \le |\hlo_i^{(i)} - \plo_i|  \Rightarrow\\
    \max\{\max_i (1- D_i) (\hlo_i - \plo_i),0\} \le \max_i|\hlo_i^{(i)} - \plo_i|.
\end{align*}
The first implication follows from multiplying both sides by $(1-D_i)$, the second implication follows from monotonicity of $\exp(\cdot)$, the third implication follows by subtracting $\plo_i$, and the final one follows because $\max_{i}|\hlo_i^{(i)} - \plo_i|\ge 0$.

Because all $\hlo_i^{(i)} - \plo_i$ are subgaussian (conditionally on $\hlo^{(i)}$), it follows:
\begin{equation*}
    \max_i|\hlo_i^{(i)} - \plo_i| = O_p\left(\sqrt{\log(n)}\max_i\|\hlo^{(i)} - \plo\|_2\right),
\end{equation*} 
and our next step is to show $\max_i\|\hlo^{(i)} - \plo\|_2 \ll\sqrt{\log(n)}$. The argument for this follows the same steps as in the previous proof and in the proof of Lemma \ref{lem:or_proj} and is omitted. 

\end{proof}

\newpage
\subsection{Error analysis}\label{ap:error}

\subsubsection{Decomposition}
We decompose the error into three parts:
\begin{equation}\label{eq:decomp}
\begin{aligned}
    &\xi= \Pn \left[(1-D_i)\hat\omega_i  - \frac{D_i}{\overline \pi}\right] \epsilon_i + 
\Pn \left[(1-D_i) \hat\omega_i - \frac{D_i}{\overline \pi}\right] (\mu_i- \hmu_i) + \\
&\Pn \left[(1-D_i) \hat\omega_i - \frac{D_i}{\overline \pi}\right]\hmu_i =: \xi_1 + \xi_2 + \xi_3.
\end{aligned}
\end{equation}
The last error, $\xi_3$, is the in-sample imbalance term, which we will control using the empirical FOCs. We leave its analysis for later and focus on the first two terms. 

To understand the behavior of the first two error terms, we'll work with a decomposition
of the multiplier term, i.e. the one in square brackets. This is our first step.
\begin{equation}
\label{eq:multiplier-decomp-step-1}
\begin{aligned}
&\left[(1-D_i) \overline \pi\hat\omega_i -D_i\right] \\
&= (1-D_i) \exp(\hlo_i) - D_i \\
&=    (1-D_i) \exp( \muplo_i) - D_i + (1-D_i)\qty{ \exp(\hlo_i)- \exp( \muplo_i)}\\
&=    (\pi_i - D_i)\qty{\exp(\muplo_i) + 1}   + (1-\pi_i) \exp(\muplo_i) - \pi_i +  (1-D_i)\qty{ \exp(\hlo_i)- \exp( \muplo_i)}  \\
&=     \frac{\pi_i - D_i}{1-\pi_i}(\pi_iu_i + 1)   + \pi_i u_i +  (1-D_i)\qty{ \exp(\hlo_i)- \exp( \muplo_i)}
\end{aligned} 
\end{equation}
In the last step, we use the identities $\frac{\pi_i}{1-\pi_i}\exp(-\lo_i) =1$ and $u_i=\exp(\muplo_i - \lo_i) - 1$, which imply 
\begin{equation*}
\begin{aligned}
\exp(\muplo_i) 
&=\frac{\pi_i}{1-\pi_i}\exp(\muplo_i-\lo_i) 
=\frac{\pi_i}{1-\pi_i}(u_i + 1) \quad \text{ and therefore } \\
\exp(\muplo_i)+1 
&= \frac{\pi_i}{1-\pi_i}(u_i + 1) + 1 
= \qty(\frac{\pi_i}{1-\pi_i}u_i + \frac{\pi_i}{1-\pi_i}) + \frac{1-\pi_i}{1-\pi_i} 
= \frac{\pi_i}{1-\pi_i}u_i + \frac{1}{1-\pi_i} 
\\
& =\frac{1}{1-\pi_i}\qty(\pi_i u_i + 1) \quad \text{ and } 
(1-\pi_i)\exp(\muplo_i) - \pi_i
= \pi_i\qty{\exp(\muplo_i - \lo_i) - 1} = \pi_i u_i.
\end{aligned}
\end{equation*}  

From here, we expand the last term of \eqref{eq:multiplier-decomp-step-1} by expanding $\exp(\muplo_i)$
around $\exp(\plo_i)$. 
\begin{align*}
(1-D_i)\qty{ \exp(\hlo_i)- \exp( \muplo_i) } \\
= (1-D_i)\qty{ \exp(\hlo_i)- \exp( \plo_i) }  
&+ (1-D_i)\qty{ \exp(\plo_i)- \exp( \muplo_i) } \\
= (1-D_i)\qty{ \exp(\hlo_i)- \exp( \plo_i) }  
&+ (1-D_i)\frac{\pi}{1-\pi}\exp(-\theta_i) \cdot \exp(\muplo_i)\qty{ \exp(\plo_i-\muplo_i)- 1} \\
= (1-D_i)\qty{ \exp(\hlo_i)- \exp( \plo_i) } 
&+ \frac{(1-D_i)\pi_i}{1-\pi_i}\exp(\muplo_i - \lo_i)\qty{ \exp(\plo_i -\muplo_i) - 1 } \\
= (1-D_i)\qty{ \exp(\hlo_i)- \exp( \plo_i) } 
&+ \frac{\qty(\qty{1-\pi_i} + \qty{\pi_i-D_i})\pi_i}{1-\pi_i}(u_i + 1)\qty{ \exp(\plo_i -\muplo_i) - 1 } \\
= (1-D_i)\qty{ \exp(\hlo_i)- \exp( \plo_i) } 
&+ \pi_i (u_i+1)\qty{ \exp(\plo_i -\muplo_i) - 1 } \\
&+ \frac{(\pi_i-D_i)\pi_i}{1-\pi_i}(u_i + 1)\qty{ \exp(\plo_i -\muplo_i) - 1 } \\
\end{align*}
Here, in the penultimate step, we've used the definitional identity $\exp(\muplo_i-\lo_i)=u_i+1$ along with the arithmetically obvious identity $1-D_i=1-\pi_i + \pi_i-D_i$. Substituting the result into 
\eqref{eq:multiplier-decomp-step-1} and grouping multiples of $\pi_i-D_i$ together yields the following.
\begin{equation}
\label{eq:multiplier-decomp}
\begin{aligned}
\left[(1-D_i) \overline \pi\hat\omega_i -D_i\right] 
&= \frac{\pi_i - D_i}{1-\pi_i}\qty[ \pi_iu_i + 1 +  \pi_i (u_i + 1)\qty{ \exp(\plo_i -\muplo_i) - 1 }] \\
&+ \pi_i u_i + (1-D_i)\qty{ \exp(\hlo_i)- \exp( \plo_i) } \\
&+ \pi_i (u_i + 1)\qty{ \exp(\plo_i -\muplo_i) - 1 } 
\end{aligned}
\end{equation}

This yields the following decompositions of $\bar\pi\xi_1$ and $\bar\pi\xi_2$.

\begin{equation}\label{eq:bias_decomp}
\begin{aligned}
\bar \pi\xi_1 
&= \Pn \frac{\pi_i - D_i}{1-\pi_i}\qty[ \pi_iu_i + 1 +  \pi_i (u_i + 1) \qty{ \exp(\plo_i -\muplo_i) - 1 }]\varepsilon_i \\
&+ \Pn \pi_i u_i\varepsilon_i + \Pn (1-D_i)\qty{ \exp(\hlo_i)- \exp( \plo_i) }\varepsilon_i \\ 
&+ \Pn \pi_i (u_i + 1)\qty{ \exp(\plo_i -\muplo_i) - 1 }\varepsilon_i \\
\bar\pi\xi_2
&= \Pn \frac{\pi_i - D_i}{1-\pi_i}\qty[ \pi_iu_i + 1 +  \pi_i (u_i + 1) \qty{ \exp(\plo_i -\muplo_i) - 1 }](\mu_i - \hmu_i) \\
&+ \Pn \pi_i u_i \nu_{\mu,i} + \Pn \pi_i u_i (\pmu_i-\hmu_i) + \Pn (1-D_i)\qty{ \exp(\hlo_i)- \exp( \plo_i) }(\mu_i - \hmu_i) \\
&+ \Pn \pi_i (u_i + 1)\qty{ \exp(\plo_i -\muplo_i) - 1 }(\mu_i - \hmu_i) 
\end{aligned}
\end{equation}
In the latter, we've used the identity $\mu_i = \pmu_i + \nu_{\mu,i}$ to break down the term  $\pi_i u_i (\mu_i - \hmu_i)$ into two.

\subsubsection{Population oracle comparison}
\begin{lemma}\label{lem:or_noise}
    Suppose  Assumptions \ref{as:emp_set}, \ref{as:subg} - \ref{as:overlap}, \ref{as:out_mom} hold, and $\|\mu - \bestmu\|_2 = o\left(\frac{1}{\sqrt{\log(n)}}\right)$, then:
\begin{equation*}
   \overline \pi\xi_1 =   \Pn \frac{\pi_i - D_i}{1-\pi_i}(\pi_iu_i + 1)\epsilon_i+\Pn\pi_iu_i \epsilon_i + o_p\left(\frac{\E[\pi]}{\sqrt{n}}\right) + 
   \Pn(1-D_i)( \exp(\hlo_i)- \exp(\plo_i))\epsilon_i.
\end{equation*}
\end{lemma}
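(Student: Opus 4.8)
Starting from the decomposition of $\bar\pi\xi_1$ in \eqref{eq:bias_decomp}, the goal is to show that every term except the ``main'' terms $\Pn \frac{\pi_i-D_i}{1-\pi_i}(\pi_i u_i+1)\epsilon_i$, $\Pn \pi_i u_i\epsilon_i$ and the last term $\Pn (1-D_i)\{\exp(\hlo_i)-\exp(\plo_i)\}\epsilon_i$ is $o_p(\E[\pi]/\sqrt n)$. There are three such remainder terms: (i) $\Pn \frac{\pi_i-D_i}{1-\pi_i}\pi_i(u_i+1)\{\exp(\plo_i-\muplo_i)-1\}\epsilon_i$, which is the ``cross'' term coming from expanding $\exp(\plo-\muplo)$ around $1$; and (ii) $\Pn \pi_i(u_i+1)\{\exp(\plo_i-\muplo_i)-1\}\epsilon_i$. (The term $\Pn\pi_i u_i\epsilon_i$ is already a main term and is kept.) The key structural fact I will use throughout is that $\epsilon$ has conditional mean zero given $\{X_i,\eta_i,D_i\}_{i=1}^n$ — hence given everything that determines $\pi_i,u_i,\muplo_i,\plo_i,\hlo_i$ is $\{X_i,\eta_i\}$-measurable (note $\plo,\muplo$ are \emph{population} objects, and $\hlo$ depends on the data only through $\{X_i,D_i\}$, not through outcomes) — so each of these sums is, conditionally, a sum of independent mean-zero terms whose conditional variance I can bound by the $L^2$ norm of its multiplier times $\sigma_{\max}^2$.

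\textbf{Key steps.} First I would condition on $\{X_i,\eta_i,D_i\}_{i=1}^n$ and apply a conditional second-moment (Chebyshev) bound to terms (i) and (ii). For term (ii), $\E[(\Pn \pi_i(u_i+1)\{\exp(\plo_i-\muplo_i)-1\}\epsilon_i)^2 \mid \cdots] \le \frac{\sigma_{\max}^2}{n}\Pn \pi_i^2(u_i+1)^2\{\exp(\nu_{\muplo,i})-1\}^2$, and taking expectations this is $\le \frac{\sigma_{\max}^2}{n}\E[\pi^2(u+1)^2\{\exp(\nu_{\muplo})-1\}^2]$. Using $|\exp(x)-1|^2 \le x^2\exp(2|x|) \lesssim \nu_{\muplo}^2(\ell(-\nu_{\muplo})+\ell(\nu_{\muplo})+\cdots)$ — more cleanly, Cauchy–Schwarz followed by Corollary~\ref{cor:mom_prod} (with $\lambda_2=2$, controlling $\E[\pi^2\exp(2(\muplo-\lo))\cdot]$ after writing $u+1=\exp(\muplo-\lo)$) and Corollary~\ref{cor:pop_proj} (which gives $\|\nu_{\muplo}\|_2^2 = O(\|\mu-\bestmu\|_2^2 + \zeta^2/n)$) — this bounds the expectation by $\E[\pi]^2 \cdot O(\|\mu-\bestmu\|_2^2 + \zeta^2/n)$ up to the subgaussianity of $\nu_{\muplo}$ (Assumption~\ref{as:subg}). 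Hence the term is $O_p(\E[\pi]\sqrt{(\|\mu-\bestmu\|_2^2+\zeta^2/n)/n})$, which is $o_p(\E[\pi]/\sqrt n)$ since $\|\mu-\bestmu\|_2 = o(1)$ and $\zeta=O(1)$. Term (i) is handled the same way, now with the extra factor $\frac{\pi_i-D_i}{1-\pi_i}$, whose conditional second moment given $\{X_i,\eta_i\}$ is $\frac{\pi_i}{1-\pi_i}$; after taking expectations this contributes an extra power of $\exp(\lo)$, still absorbed by Corollary~\ref{cor:mom_prod} (which allows $\lambda_1+\lambda_2\le 9$, comfortably enough here). Actually since $\epsilon \perp D \mid X,\eta$, one can just condition on $\{X_i,\eta_i\}$ directly and treat $(\pi_i-D_i)\epsilon_i$ jointly, but either route works.

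\textbf{Main obstacle.} The delicate point is not any single bound but keeping track of \emph{which} conditioning $\sigma$-field makes the relevant multipliers measurable and the relevant summands independent with zero mean — in particular, one must use that $\hlo$ is constructed from $\{X_i,D_i\}$ only (Lemma~\ref{lem:dual}), so that $\Pn(1-D_i)\{\exp(\hlo_i)-\exp(\plo_i)\}\epsilon_i$ genuinely has conditional mean zero given $\{X_i,\eta_i,D_i\}$ and can legitimately be \emph{kept} rather than bounded here (its own analysis, requiring the convergence rates of Lemma~\ref{lem:conv_rate_1} and the uniform bound of Lemma~\ref{lem:conv_rate_3}, is deferred to a later lemma). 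A secondary technical nuisance is that $\exp(\plo-\muplo)-1$ is not bounded pointwise, so moving from the quadratic surrogate $\nu_{\muplo}^2$ to the exponential requires an application of Cauchy–Schwarz pairing $\nu_{\muplo}^2$ (controlled in $L^2$, hence in $L^4$ by subgaussianity) against an exponential moment $\E[\pi^{2a}\exp(2a(\muplo-\lo))\exp(ca|\nu_{\muplo}|)]$ which is exactly the shape Corollary~\ref{cor:mom_prod} is designed to absorb; I would present this as a short computation citing Corollaries~\ref{cor:mom_prod} and~\ref{cor:pop_proj} rather than re-deriving it.
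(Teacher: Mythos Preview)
Your approach is correct and follows the same overall structure as the paper: start from the decomposition \eqref{eq:bias_decomp}, isolate the two remainder terms carrying the factor $\exp(\plo-\muplo)-1$, and kill each by a conditional-variance (Chebyshev) argument using $\E[\epsilon\mid X,\eta,D]=0$.

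The one genuine technical difference is how the factor $(\exp(\nu_{\muplo})-1)^2$ is controlled. The paper first combines your terms (i) and (ii) into a single sum with multiplier $\tfrac{\pi_i-D_i}{1-\pi_i}+1$, uses the clean identity $\E\!\big[(\tfrac{\pi-D}{1-\pi}+1)^2\mid X,\eta\big]=\tfrac{1}{1-\pi}$, and then pulls out $\max_i(\exp(\nu_{\muplo,i})-1)^2$ uniformly, bounding it by $o_p(1)$ via the subgaussian maximal inequality---this is exactly where the hypothesis $\|\mu-\bestmu\|_2=o(1/\sqrt{\log n})$ is used. Your route instead keeps the factor inside the expectation and bounds it by H\"older/Cauchy--Schwarz together with Corollary~\ref{cor:mom_prod} and the $L^4$--$L^2$ comparison for $\exp(\nu_{\muplo})-1$ (as in Lemma~\ref{lem:ell_eq}); this also works and, strictly speaking, only needs $\|\nu_{\muplo}\|_2=o(1)$ rather than the sharper $o(1/\sqrt{\log n})$ rate. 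Either device delivers the required $o_p(\E[\pi]/\sqrt n)$.
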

\begin{proof}

Comparing our claimed asymptotic approximation to our decomposition \eqref{eq:bias_decomp}, we see that it's equivalent to the following bound.    
\begin{align*}
&\Pn\qty{\frac{\pi_i - D_i}{1-\pi_i} + 1 }\qty[ \pi_i (u_i+1)\qty{ \exp(\plo_i -\muplo_i) - 1 }]\varepsilon_i = o_p\qty(\frac{\E[\pi]}{\sqrt{n}}) \\
\end{align*}

To establish this, we will show that this quantity's variance is $o\left(\frac{\E[\pi]^2}{n}\right)$.
By the law of total variance, because the conditional mean of this quantity is zero, this variance
is the \emph{expectation of the conditional variance}. And because $\pi_i-D_i$ are independent conditionally mean-zero random variables with conditional variance $\pi_i(1-\pi_i)$, 
$$
\begin{aligned}
\E\qty[ \qty( \frac{\pi_i - D_i}{1-\pi_i} + 1)^2 \mid X, \eta] =  
&\E\qty[ \qty(\frac{\pi_i - D_i}{1-\pi_i})^2 \mid X, \eta] + 2\E\qty[ \qty(\frac{\pi_i - D_i}{1-\pi_i}) \mid X, \eta] + 1 \\
&= \frac{\pi_i(1-\pi_i)}{(1-\pi_i)^2} + 0 + 1 = \frac{\pi_i}{1-\pi_i} + \frac{1-\pi_i}{1-\pi_i} = \frac{1}{1-\pi_i}.
\end{aligned}
$$
Thus, using Assumption \ref{as:sel} we get:
\begin{align*}
&\frac{1}{n}\E \Pn \sigma_i^2 \E\qty[ \qty( \frac{\pi_i - D_i}{1-\pi_i} + 1)^2 \mid X, \eta]  \qty[ \pi_i (u_i + 1)\qty{ \exp(\plo_i -\muplo_i) - 1 }]^2 \\ 
&=\frac{1}{n}\E \Pn \frac{\sigma_i^2}{1-\pi_i} \qty[ \pi_i (u_i+1)\qty{ \exp(\plo_i -\muplo_i) - 1 }]^2 \\ 
&\le \E \frac{\max_{i}\sigma^2_{i}\max_{i}\qty{\exp(\plo_i - \muplo_i)-1}^2}{n} \Pn\left[\pi_i (u_i+1)\right]^2  = o_p\left(\frac{\E[\pi]^2}{n}\right) \\
\end{align*}
where $\sigma_i^2 := \sigma^2(X_i, \eta_i)$. The last equality follows from two arguments. First,  by Markov inequality and Corollary \ref{cor:mom_prod} we have:
\begin{equation*}
    \Pn\left[\pi_i (u_i+1)\right]^2 = \Pn\left[\pi_i \exp(\muplo_i - \lo_i)\right]^2 = O_p\left(\E\left[\pi^2 \exp(2(\muplo - \lo))\right]\right) = O_p\left(\E[\pi]^2]\right).
\end{equation*}
Second, by Corollary \ref{cor:pop_proj} and the assumed bound on $\|\mu - \bestmu\|_2$ we have that $\|\nu_{\muplo}\|_2 = O(\|\mu - \bestmu\|_2) = o\left(\frac{1}{\sqrt{\log(n)}}\right)$. By Assumption \ref{as:subg} this implies $\|\nu_{\muplo}\|_{\psi_2} = o\left(\frac{1}{\sqrt{\log(n)}}\right)$. Then, we have by the maximal inequality for subgaussian random variables:
\begin{multline}\label{eq:exp_max_bound}
    (\exp(\nu_{\muplo,i})-1)^2 \le (\exp(|\nu_{\muplo,i}|)-1)^2 \le (\exp(\max_i|\nu_{\muplo,i}|)-1)^2 = \\ O_p\left((\exp(\sqrt{\log(n)}\|\nu_{\muplo}\|_{\psi_2})-1)^2\right) = o_p(1)
\end{multline}
\end{proof}

Our next lemma establishes results for $\bar \pi\xi_2$.
\begin{lemma}\label{lem:or_bias}

    Suppose  Assumptions \ref{as:emp_set}, \ref{as:subg} -\ref{as:out_mom} hold, and $\|\mu - \bestmu\|_2 = o\left(\frac{1}{\sqrt{\log(n)}}\right)$.
    And, letting $r$ is any solution to a version of the fixed point conditions from Lemma~\ref{lemma:mu-l2-convergence} in which the Rademacher multipliers are replaced by gaussian ones, suppose $r \not\to \infty$. Then,
\begin{align*}
    \overline{\pi}\xi_2 &=   \Pn\pi_i \exp(\muplo-\lo)\nu_{\muplo,i}\nu_{\mu,i} + 
    o_p\left(\sqrt{\frac{\E[\pi]}{n}}\right) + o_p(\E[\pi]\|\mu - \bestmu\|_2^2) +  O_p\qty(\E[\pi]  \min\qty{r,\  \frac{r^2}{\norm{\mu-\bestmu}}_2}) \\
    &+ \Pn(1-D_i)( \exp(\hlo_i)- \exp( g_i))(\mu_i  - \hmu_i).
\end{align*}
\end{lemma}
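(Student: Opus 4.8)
# Proof Proposal for Lemma \ref{lem:or_bias}

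The plan is to proceed term-by-term through the decomposition \eqref{eq:bias_decomp} of $\bar\pi\xi_2$, identifying which terms contribute to the stated asymptotic expansion and which are negligible. The decomposition has five groups of terms; I will argue that the ``bias-producing'' term $\Pn \pi_i u_i \nu_{\mu,i}$ must be upgraded to $\Pn \pi_i \exp(\muplo_i - \lo_i) \nu_{\muplo,i} \nu_{\mu,i}$ plus a remainder, the term $\Pn (1-D_i)\{\exp(\hlo_i) - \exp(\plo_i)\}(\mu_i - \hmu_i)$ is kept verbatim on the right-hand side, and the remaining three groups are $o_p$ of the target rates.

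The main steps, in order: (i) For the multiplier process term $\Pn \tfrac{\pi_i - D_i}{1-\pi_i}[\pi_i u_i + 1 + \pi_i(u_i+1)\{\exp(\plo_i - \muplo_i) - 1\}](\mu_i - \hmu_i)$, I would condition on $\{X_i, \eta_i\}_{i=1}^n$ and $\hmu$ (noting $\hmu$ is a function of the covariates and the log-odds but not of $\epsilon$), so that $\pi_i - D_i$ are independent mean-zero with variance $\pi_i(1-\pi_i)$. Then I bound the conditional variance using $\E[(\tfrac{\pi_i - D_i}{1-\pi_i})^2 \mid X,\eta] = \tfrac{\pi_i}{1-\pi_i}$, Corollary~\ref{cor:mom_prod} to control $\E[\pi^2 \exp(2(\muplo - \lo))]$, the bound \eqref{eq:exp_max_bound}-style maximal-inequality argument to make $\max_i\{\exp(\plo_i - \muplo_i) - 1\}^2 = o_p(1)$, and Corollary~\ref{coro:mu-rates} to control $\|\mu - \hmu\|_2$ and $\max_i|\mu_i - \hmu_i|$; this yields the $O_p(\E[\pi]\min\{r, r^2/\|\mu-\bestmu\|_2\})$ and $o_p(\sqrt{\E[\pi]/n})$ contributions. (ii) For $\Pn \pi_i u_i \nu_{\mu,i}$, I write $\pi_i u_i = \pi_i(\exp(\muplo_i - \lo_i) - 1)$ and use the FOC \eqref{eq:foc_pmu} for $\pmu$ — which says $\E[\tfrac{\pi}{\E[\pi]}\exp(\muplo - \lo)\nu_\mu f] = 0$ for $f \in \mt{F}$ — together with the FOC \eqref{eq:foc_muplo} for $\muplo$, to replace the population mean of this term by $\Pn \pi_i \exp(\muplo_i - \lo_i)\nu_{\muplo,i}\nu_{\mu,i}$ up to lower-order pieces; the fluctuation around the mean is handled by a variance bound using Corollary~\ref{cor:mom_prod} and $\|\nu_\mu\|_2 = O(\|\mu - \bestmu\|_2)$ (Corollary~\ref{cor:pop_proj}), giving $o_p(\sqrt{\E[\pi]/n}) + o_p(\E[\pi]\|\mu-\bestmu\|_2^2)$. (iii) For $\Pn \pi_i u_i(\pmu_i - \hmu_i)$, I use Corollary~\ref{coro:mu-rates}'s bound $\|\hmu - \pmu\|_2 = O_p(\|\mu-\bestmu\|_2\sqrt{p/n})$ together with Cauchy–Schwarz and $\E[\pi^2 \exp(2(\muplo-\lo))] = O(\E[\pi]^2)$, absorbing this into $o_p(\E[\pi]\|\mu-\bestmu\|_2^2)$. (iv) The term $\Pn (1-D_i)\{\exp(\hlo_i) - \exp(\plo_i)\}(\mu_i - \hmu_i)$ is kept as written, except that the difference $\exp(\hlo_i) - \exp(g_i)$ appearing on the right-hand side (with $g_i$ presumably a typo/shorthand for $\plo_i$) is what we carry forward; no estimate is needed here. (v) The last term $\Pn \pi_i(u_i+1)\{\exp(\plo_i - \muplo_i) - 1\}(\mu_i - \hmu_i)$: expand $\exp(\plo_i - \muplo_i) - 1 = \nu_{\muplo,i} + \ell(-\nu_{\muplo,i})$, bound the $\nu_{\muplo,i}$ piece via Cauchy–Schwarz using $\|\nu_{\muplo}\|_2, \|\nu_\mu\|_2 = O(\|\mu-\bestmu\|_2)$, and bound the $\ell$ piece using the maximal inequality as in \eqref{eq:exp_max_bound} plus Corollary~\ref{cor:pop_proj}; both fall into $o_p(\E[\pi]\|\mu-\bestmu\|_2^2)$ or smaller.

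The main obstacle I anticipate is step (ii): correctly exploiting the two first-order conditions to turn $\E[\pi u \nu_\mu]$ into the stated product $\E[\pi \exp(\muplo - \lo)\nu_{\muplo}\nu_\mu]$. The subtlety is that $u = \exp(\muplo - \lo) - 1$ is the \emph{population} residual of the $\muplo$-problem, and the FOC \eqref{eq:foc_pmu} only kills $\E[\tfrac{\pi}{\E[\pi]}\exp(\muplo - \lo)\nu_\mu f]$ for $f \in \mt{F}$, not $\E[\pi u \nu_\mu]$ directly; one must write $\pi u = \pi\exp(\muplo - \lo) - \pi$ and handle the regularization terms $\tfrac{\zeta^2}{n}\langle\cdot,\cdot\rangle_{\m{F}}$ carefully, showing they contribute at the $O(\zeta^2/n) = o(1/n)$ scale relative to $\E[\pi]$. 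A secondary difficulty is keeping the bookkeeping straight so that the empirical-process fluctuations (differences between $\Pn$ and $\P$) are each separately controlled at the claimed rates rather than being conflated with the population-level product structure; the Rademacher/Gaussian complexity bounds from Lemmas~\ref{lem:or_proj} and \ref{lemma:mu-l2-convergence}, specialized to the $p$-dimensional case via Corollary~\ref{coro:mu-rates}, are the right tools but must be invoked with the correct multiplier norms (via \citet[Corollary 1.10]{mendelson2016upper}).
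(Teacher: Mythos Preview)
Your proposal has a genuine gap: you have swapped the roles of two terms in the decomposition, misidentifying where the bias term $\Pn\pi_i\exp(\muplo_i-\lo_i)\nu_{\muplo,i}\nu_{\mu,i}$ originates.

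In step (ii) you claim $\Pn\pi_i u_i\nu_{\mu,i}$ produces the bias. But $\nu_\mu=\mu-\pmu\in\spn{\mt{F},\mu}$, so the FOC \eqref{eq:foc_muplo} applies directly with $f=\nu_\mu$, giving $\E[\pi u\nu_\mu]=-\tfrac{\E[\pi]\zeta^2}{n}\langle\muplo,\nu_\mu\rangle_{\m{F}}$, which is of regularization order and essentially zero. The paper therefore treats this term as mean-zero and bounds its variance by $\tfrac{1}{n}\|\tfrac{\pi u}{\E\pi}\|_4^2\|\nu_\mu\|_4^2=o(\E[\pi]^2/n)$. There is no mechanism by which FOCs \eqref{eq:foc_pmu} and \eqref{eq:foc_muplo} convert $\E[\pi u\nu_\mu]$ into $\E[\pi\exp(\muplo-\lo)\nu_{\muplo}\nu_\mu]\sim\E[\pi]/\Tef$; these quantities have different orders.

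Conversely, in step (v) you claim the fifth group $\Pn\pi_i(u_i+1)\{\exp(\plo_i-\muplo_i)-1\}(\mu_i-\hmu_i)$ is $o_p(\E[\pi]\|\mu-\bestmu\|_2^2)$. But $(u_i+1)=\exp(\muplo_i-\lo_i)$ and $\exp(\nu_{\muplo,i})-1=\nu_{\muplo,i}+\ell(-\nu_{\muplo,i})$, so after writing $\mu_i-\hmu_i=\nu_{\mu,i}+(\pmu_i-\hmu_i)$, the leading piece is exactly $\Pn\pi_i\exp(\muplo_i-\lo_i)\nu_{\muplo,i}\nu_{\mu,i}$---the bias term itself. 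Your Cauchy--Schwarz bound yields only $O_p(\E[\pi]\|\mu-\bestmu\|_2^2)$, not $o_p$; you are discarding the main term. The paper's Term~2 analysis keeps this piece and shows the \emph{remainder} (the $\ell$ part and the $\pmu-\hmu$ part) is small, the latter via FOC \eqref{eq:alt_foc_plo} and a Mendelson multiplier inequality.

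A secondary issue: in step (iii), Cauchy--Schwarz on $\Pn\pi_i u_i(\pmu_i-\hmu_i)$ gives $O_p(\E[\pi]\|\mu-\bestmu\|_2\sqrt{p/n})$, which is not generally $o_p(\E[\pi]\|\mu-\bestmu\|_2^2)$ under the stated assumptions. The paper instead centers using FOC \eqref{eq:foc_muplo} (applicable since $\pmu-\hmu\in\mt{F}$) and bounds the empirical process via \citet[Corollary~1.10]{mendelson2016upper}, obtaining the sharper $O_p(\min\{r,r^2/\|\mu-\bestmu\|_2\})$; this is where that rate enters, not in step (i).
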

Note that, as discussed in the lead-up to Corollary~\ref{coro:mu-rates},
in the case that $\mt{F}$ is --- or is contained in --- a $p$-dimensional subspace, the linear fixed point condition is satisfied if 
$\eta r \ge c r \sqrt{p/n}$, i.e., irrespective of $r$ it holds if $p/n$ is bounded by a sufficiently small constant. And the quadratic one holds for $r \propto \sqrt{p/n} \norm{\mu-\pmu}_2$. Thus, as long as $p/n \to 0$, 
we can subtitute in the following bound on $r^2/\norm{\mu-\bestmu}$
\begin{equation}
\frac{\qty(\sqrt{p/n}\norm{\mu-\pmu}_2)^2}{\norm{\mu-\bestmu}_2} 
= (p/n) \norm{\mu-\bestmu}_2 \frac{\norm{\mu-\pmu}_2}{\norm{\mu-\bestmu}_2} \lesssim (p/n) \norm{\mu-\bestmu}_2,
\end{equation}
where the boundedness of the ratio $\norm{\mu-\pmu}_2/\norm{\mu-\bestmu}_2$
is a result of Corollary~\ref{cor:pop_proj}. This implies the following simplified result.

\begin{corollary}
\label{coro:or_bias}
Suppose $\mt{F}$ is a set of dimension no larger than $p$. Then 
if $p/n$ is bounded by a sufficiently small constant, we have
\begin{equation*}
\begin{aligned}
\overline{\pi}\xi_2 &=   \Pn\pi_i \exp(\muplo-\lo)\nu_{\muplo,i}\nu_{\mu,i} + 
    o_p\left(\sqrt{\frac{\E[\pi]}{n}}\right) + o_p(\E[\pi]\|\mu - \bestmu\|_2^2) +  O_p\qty(\E[\pi]\norm{\mu-\bestmu}_2 \frac{p}{n}) \\
    &+ \Pn(1-D_i)( \exp(\hlo_i)- \exp( g_i))(\mu_i  - \hmu_i).
\end{aligned}
\end{equation*}
\end{corollary}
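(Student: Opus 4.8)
The plan is to obtain Corollary~\ref{coro:or_bias} as a direct specialization of Lemma~\ref{lem:or_bias} to a finite-dimensional feature class, following the paragraph immediately preceding the statement. The only genuine input beyond Lemma~\ref{lem:or_bias} is a bound on the complexity term appearing in the fixed-point conditions of Lemma~\ref{lemma:mu-l2-convergence}: when $\mt{F}$ is contained in a $p$-dimensional subspace, I would show $\E \sup_{\delta \in \mu - \mt{F},\ \|\delta\|_2 \le r} \Pn g_i \delta_i \le c\, r \sqrt{p/n}$ for an absolute constant $c$, and likewise for the Rademacher version. This is the same whitening-plus-Cauchy--Schwarz computation already carried out at the end of the proof of Lemma~\ref{lem:conv_rate_1} (write $\delta = X^\top\beta$, pass to $\Sigma^{1/2}\beta$, and bound the $\ell_2$ norm of $\Pn g_i X_i \Sigma^{-1/2}$), using Assumption~\ref{as:out_mom}(b) to control the covariance. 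With this, the (Gaussian-multiplier) fixed-point conditions of Lemma~\ref{lemma:mu-l2-convergence} become $\eta r \ge c\, r\sqrt{p/n}$ and $\eta r^2/\|\mu-\pmu\|_2 \ge c\, r\sqrt{p/n}$: the first holds for \emph{every} $r$ once $p/n < (\eta/c)^2$, and the second holds for $r \asymp \sqrt{p/n}\,\|\mu-\pmu\|_2$, so this is an admissible choice.

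Next I would verify that this $r$ satisfies the hypothesis $r \not\to \infty$ of Lemma~\ref{lem:or_bias}. By Corollary~\ref{cor:pop_proj}, $\|\mu - \pmu\|_2 = O(\|\mu-\bestmu\|_2)$ — in fact the two norms are comparable up to constants, since $\|\mu-\bestmu\|_2 = \min_{f\in\mt F}\|\mu-f\|_2 \le \|\mu-\pmu\|_2$ — so under the running assumption $\|\mu-\bestmu\|_2 \ll 1$ together with $p/n \to 0$ we get $r \asymp \sqrt{p/n}\,\|\mu-\pmu\|_2 \to 0$. Hence all hypotheses of Lemma~\ref{lem:or_bias} are in force with this $r$, and its conclusion applies verbatim.

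Finally I would simplify the term $O_p\!\big(\E[\pi]\min\{r,\ r^2/\|\mu-\bestmu\|_2\}\big)$. Bounding the minimum by its second argument and substituting $r^2 \asymp (p/n)\|\mu-\pmu\|_2^2$ gives
\[
\E[\pi]\,\frac{r^2}{\|\mu-\bestmu\|_2} \asymp \E[\pi]\,\frac{p}{n}\,\frac{\|\mu-\pmu\|_2^2}{\|\mu-\bestmu\|_2} \lesssim \E[\pi]\,\frac{p}{n}\,\|\mu-\bestmu\|_2,
\]
where the last step again uses $\|\mu-\pmu\|_2 \lesssim \|\mu-\bestmu\|_2$ from Corollary~\ref{cor:pop_proj}. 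Substituting this bound into the $O_p$ term of Lemma~\ref{lem:or_bias} — all of whose remaining terms already appear in the form stated in the corollary — completes the proof.

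As for the main obstacle: there is no substantial difficulty here, since the corollary is essentially a bookkeeping specialization of Lemma~\ref{lem:or_bias}. The one place requiring care is making sure the complexity estimate is available in the \emph{Gaussian}-multiplier form in which Lemma~\ref{lem:or_bias} is stated (not merely the Rademacher form used elsewhere), and that the fixed-point constant $\eta$ is treated as absolute so that the threshold on $p/n$ is absolute; both are handled by reusing the bounds already established in the proofs of Lemmas~\ref{lem:conv_rate_1} and~\ref{lem:or_proj}.
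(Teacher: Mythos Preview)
Your proposal is correct and follows essentially the same approach as the paper: specialize Lemma~\ref{lem:or_bias} by using the $cr\sqrt{p/n}$ complexity bound for a $p$-dimensional class to solve the fixed-point conditions with $r \asymp \sqrt{p/n}\,\|\mu-\pmu\|_2$, then bound $r^2/\|\mu-\bestmu\|_2 \lesssim (p/n)\|\mu-\bestmu\|_2$ via Corollary~\ref{cor:pop_proj}. Your additional checks (that $r\not\to\infty$ and that the Gaussian-multiplier form of the complexity bound is available) are appropriate and mirror what the paper implicitly relies on.
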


\begin{proof}

Comparing our claimed asymptotic approximation to our decomposition \eqref{eq:bias_decomp}, we see that it's equivalent to the following bound.     

\begin{equation}
\label{eq:bias_remainder}
\begin{aligned}
&\Pn \frac{\pi_i - D_i}{1-\pi_i}\qty[ \pi_iu_i + 1 +  \pi_i\exp(\muplo-\lo)\qty{ \exp(\nu_{\muplo,i}) - 1 }](\mu_i - \hmu_i) \\
& +\qty{ \Pn \pi_i \exp(\muplo-\lo)\qty{ \exp(\nu_{\muplo,i}) - 1 }(\mu_i - \hmu_i) - \Pn \pi_i\exp(\muplo-\lo)\nu_{\muplo,i}\nu_{\mu,i}}\\
&+ \Pn \pi_i u_i\nu_{\mu,i}  +\\
&\Pn \pi_i u_i (\pmu_i-\hmu_i)  \\
& = o_p\qty(\sqrt{\frac{\E[\pi]}{n}} + \E[\pi]\|\mu - \bestmu\|_2^2) + O_p\qty( E[\pi] \min\qty{r, \frac{r^2}{\norm{\mu-\bestmu}_2} }).
\end{aligned}
\end{equation}

\paragraph{Term 1 of \eqref{eq:bias_remainder}:}
The first term is an average of terms that are conditional on $X,\nu$, independent with mean zero. We will show that its variance is $o(\frac{\E[\pi]}{n})$.
By the law of total variance, because its conditional mean is constant, its variance is the expectation of the conditional variance. And because $(\pi_i-D_i)/(1-\pi_i)$ has conditional variance $\pi_i(1-\pi_i)/(1-\pi_i)^2=\pi_i/(1-\pi_i)$, this is 
\begin{equation}
\begin{aligned}
&\frac{1}{n} \E \Pn \frac{\pi_i}{1-\pi_i}\qty[ \pi_iu_i + 1 +  \pi_i (u_i+1) \qty{ \exp(\plo_i -\muplo_i) - 1 }]^2(\mu_i - \hmu_i)^2 \\
&\le 
\frac{1}{n} \E \max_i (\mu_i - \hmu_i)^2  \Pn \frac{\pi_i}{1-\pi_i}\qty[ \pi_iu_i + 1 +  \pi_i (u_i+1)\qty{ \exp(\plo_i -\muplo_i) - 1 }]^2\\
\end{aligned}
\end{equation}
Because $\max_i (\mu_i - \hmu_i)^2 = o_p(1)$, from \ref{lem:or_proj} and assumption that $\|\mu - \bestmu\|_2 = o_p(\frac{1}{\sqrt{\log(n)}})$ it suffices to show that the $\Pn$ factor is $O_p(\E[\pi])$. The quantity in square brackets is 
$a+b+c$ for $a=\pi_i (u_i + 1)$, $b=1-\pi_i$, and $c=\pi_i (u_i+1)\qty{ \exp(\plo_i -\muplo_i) - 1 }$,
and via the elementary inequality $(a+b+c)^2 \le 3(a^2+b^2+c^2)$, it suffices to show that 
$\Pn\qty{ \pi_i/(1-\pi_i) } x^2 = O_p(\E[\pi])$ for $x = a,b,c$. 

We address each term in turn. First, we have:
\begin{multline}\label{eq:bound_1}
    \Pn \frac{\pi_i}{(1-\pi_i)}(\pi_i(u_i + 1))^2 = \Pn \pi_i^2\exp(\lo_i)\exp(2(\muplo_i - \lo_i)) \\
    =O_p\left(\E\left[\pi^2\exp(\lo)\exp(2(\muplo - \lo))\right]\right) = O_p(\E[\pi]^3),
\end{multline}
where the second equality follows from Markov inequality, and the last equality follows from  Corollary \ref{cor:mom_prod} for $\lambda_1=1$, $\lambda_2=\lambda_3=2$, and $\lambda_4=0$. Second, we have
\begin{align*}
    \Pn \frac{\pi_i}{(1-\pi_i)}(1-\pi_i)^2 = \Pn \pi_i(1-\pi_i) \le \Pn \pi_i = O_p\left(\E [\pi]\right).
\end{align*}
Finally, we have
\begin{align*}
   &\Pn\frac{\pi_i}{1-\pi_i}\left[\pi_i \exp(\muplo_i-\lo_i)[\exp(\nu_{\muplo,i})-1]\right]^2 =\\
   &\Pn\exp(\lo_i)\left[\pi_i \exp(\muplo_i-\lo_i)[\exp(\nu_{\muplo,i})-1]\right]^2 \le\\
   &\max_{i}[\exp(\nu_{\muplo,i})-1]^2\Pn\exp(\lo_i)\left[\pi_i \exp(\muplo_i-\lo_i)\right]^2=o_p\left(\E[\pi]^3\right)
\end{align*}
where the last equality follows, because $\max_{i}[\exp(\nu_{\muplo,i})-1]^2$ is bounded by \eqref{eq:exp_max_bound} and $\Pn\exp(\lo)\left[\pi_i \exp(\muplo_i-\lo_i)\right]^2 = O_p\qty(\E[\pi]^3)$ from \eqref{eq:bound_1}.

\paragraph{Term 2 of \eqref{eq:bias_remainder}:}
We will show that, after dividing this term by $\E[\pi]$, the result is 
$o_p(1/\sqrt{n}) + \min\qty{r\norm{\mu-\bestmu}_2, r^2})$.
The latter term here does not appear in \eqref{eq:bias_remainder} directly, as it is $\norm{\mu-\bestmu}_2 \ll 1$ times a term that does---a term related to Term 4 of \eqref{eq:bias_remainder}.

To establish this bound, we work with the following decomposition of the aformentioned quotient with $\E[\pi]$.
\begin{equation}\label{eq:bias_nasty}
\begin{aligned}
    &\Pn\frac{\pi_i \exp(\muplo_i-\lo_i)}{\E[\pi]}[\exp(\nu_{\muplo,i})-1](\mu_i - \hmu_i) -\Pn \frac{\pi_i}{\E[\pi]}\exp(\muplo-\lo)\nu_{\muplo,i}\nu_{\mu,i}\\
    &=\Pn\frac{\pi_i \exp(\muplo_i-\lo_i)}{\E[\pi]}[\exp(\nu_{\muplo,i})-1](\pmu_i - \hmu_i)  + \\
    &\qty{\Pn\frac{\pi_i \exp(\muplo_i-\lo_i)}{\E[\pi]}[\exp(\nu_{\muplo,i})-1]\nu_{\mu,i}
    -\Pn \frac{\pi_i}{\E[\pi]}\exp(\muplo-\lo)\nu_{\muplo,i}\nu_{\mu,i}}.
\end{aligned}
\end{equation}
Our goal is to show that both terms in this decomposition are of the order $o_p(\| \mu - \bestmu\|_2^2)$.

First, we analyze the second term:
\begin{align*}
    &\Pn\frac{\pi_i \exp(\muplo_i-\lo_i)}{\E[\pi]}[\exp(\nu_{\muplo,i})-1]\nu_{\mu,i} -\Pn \frac{\pi_i}{\E[\pi]}\exp(\muplo_i-\lo_i)\nu_{\muplo,i}\nu_{\mu,i}\\
    &=\Pn\frac{\pi_i \exp(\muplo_i-\lo_i)}{\E[\pi]}[\exp(\nu_{\muplo,i})-\nu_{\muplo,i}  -1]\nu_{\mu,i} \\
    &=\Pn\frac{\pi_i \exp(\muplo_i-\lo_i)}{\E[\pi]}\ell(-\nu_{\muplo,i})\nu_{\mu,i}.
\end{align*}
We use H\"older inequality for the last expression:
\begin{equation*}
     \Pn\frac{\pi_i \exp(\muplo_i-\lo_i)}{\E[\pi]}\ell(-\nu_{\muplo,i})\nu_{\mu,i} \le \max_{i}|\nu_{\mu,i}|\Pn\frac{\pi_i \exp(\muplo_i-\lo_i)}{\E[\pi]}\ell(-\nu_{\muplo,i}) = O_p\qty(\sqrt{\log(n)}\|\mu - \bestmu\|_2^3),
\end{equation*}
where the last equality follows from bounds from Corollary \ref{cor:pop_proj}:
\begin{align*}
    &\max_{i}|\nu_{\mu,i}| = O_p\qty(\sqrt{\log(n)}\| \nu_\mu\|_{\psi_2}) = O_p\qty(\sqrt{\log(n)}\| \nu_\mu\|_{2}) =  O_p\qty(\sqrt{\log(n)}\| \mu - \bestmu\|_2),\\
    &\Pn\frac{\pi_i \exp(\muplo_i-\lo_i)}{\E[\pi]}\ell(-\nu_{\muplo,i}) = O_p\qty(\E\left[\frac{\pi \exp(\muplo-\lo)}{\E[\pi]}\ell(-\nu_{\muplo})\right]) = O_p\qty(\| \mu - \bestmu\|_2^2).
\end{align*}
We now return to the first term in \eqref{eq:bias_nasty} and recognize that $(\pmu_i - \hmu_i) \in \mt{F}$ and thus from the FOC \ref{eq:alt_foc_plo} it follows:
\begin{align*}
    &\Pn\frac{\pi_i \exp(\muplo_i-\lo_i)}{\E[\pi]}[\exp(\nu_{\muplo,i})-1](\pmu_i - \hmu_i) \\
    &= (\Pn - \P)\frac{\pi_i \exp(\muplo_i-\lo_i)}{\E[\pi]}[\exp(\nu_{\muplo,i})-1](\pmu_i - \hmu_i) - \frac{\zeta^2}{n}<\plo -\muplo,\pmu  - \hmu)>_{\m{F}}.
\end{align*}
We use empirical process arguments below to bound the first term, while for the second one we have:
\begin{equation}
\label{eq:term2-offset-bound}
\begin{aligned}
    &\qty| \frac{\zeta^2}{n}<\plo -\muplo,\pmu  - \hmu)>_{\m{F}}| \le \frac{\zeta^2}{n} \|\plo -\muplo\|_{\m{F}} \|\pmu  - \hmu\|_{\m{F}} \\
    &\overset{(a)}{\le} O_p\qty(\sqrt{\|\mu - \bestmu\|_2^2 +  \frac{1}{n}})\sqrt{\frac{\zeta^2}{n} } O_p(r) = 
    r O_p\qty(\norm{\mu-\bestmu}/\sqrt{n} + 1/n) = o_p(1/\sqrt{n})
\end{aligned}
\end{equation}
where to get (a) we use Corollary \ref{cor:pop_proj} to bound $\norm{\plo - \muplo}_{\m{F}}$ together with the norm equivalence $\norm{\cdot}_{\m{F}} \lesssim \norm{\cdot}_2$ which implies that 
$\|\pmu  - \hmu\|_{\m{F}} = O\qty(\|\pmu  - \hmu\|_{2}) = O_p(r)$.  The last bound follows because $r \not \to \infty$ and $\norm{\mu-\bestmu}_2\to 0$.
 
Next, we analyze the empirical process term. For $r$ satisfying the assumptions of Lemma~\ref{lemma:mu-l2-convergence}, except with gaussian multipliers used throughout\footnote{This modification allows the simplification we make in step (c). And because gaussian complexity is, up to constants, at least as large as Rademacher complexity, it doesn't inference with the application of Lemma~\ref{lemma:mu-l2-convergence}.},
there is a constant $c$ for which $\norm{\hmu-\pmu}_2 \le cr$ with probability $1-\delta$.
Thus, for any $\epsilon > 0$, with probability tending to $1-\delta$,
\begin{equation}
\label{eq:term2-centeredpart-bound}
\begin{aligned}
    &(\Pn - \P)\frac{\pi_i \exp(\muplo_i-\lo_i)}{\E[\pi]}[\exp(\nu_{\muplo,i})-1](\pmu_i - \hmu_i) \\
    &\le c \sup_{\substack{\delta \in \m{F} \\ \norm{\delta}_2 \le r}} (\Pn-\P) \xi_i \delta_i  \qqtext{for} \frac{\pi_i \exp(\muplo_i-\lo_i)}{\E[\pi]}[\exp(\nu_{\muplo,i})-1] \\
    &\overset{(a)}{\le} c \norm{\xi_i}_{2+\epsilon} \E \sup_{\substack{\delta \in \m{F} \\ \norm{\delta}_2 \le r}} \Pn g_i \delta_i 
    \overset{(b)}{\le} c\norm{\nu_{\muplo}}_{2} \E \sup_{\substack{\delta \in \m{F} \\ \norm{\delta}_2 \le r}} \Pn g_i \delta_i 
    \overset{(c)}{\le} c \norm{\nu_{\muplo}}_{2} \min\qty{r, \frac{r^2}{\norm{\pmu - \mu}_2}} \\
    &\overset{(d)}{=}  O(\norm{\bestmu - \mu}_2)
    \min\qty{r, \frac{r^2}{\norm{\pmu - \mu}_2}} =
    O\qty(\min\qty{\norm{\bestmu - \mu}_2 r, r^2}).
    \end{aligned}
\end{equation}
Here, in the step (a), we use a multiplier inequality from \citep[Corollary 1.10]{mendelson2016upper}, in the step (b),
we use the bound $\norm{\xi_i}_{2+\epsilon} = O(\norm{\nu_{\muplo}}_2)$, which we will establish via H\"older's inequality below, in step (c) we use fixed point conditions $r$ satisfies by definition,
and in step (d) a bound from Corollary~\ref{cor:pop_proj}.
Let's bound this multiplier.
\begin{equation}
\begin{aligned}
    &\left\|\frac{\pi\exp(\muplo -\lo)}{\E[\pi]}(\exp(\nu_{\muplo}) - 1)\right\|_{2+\epsilon}^{2+\epsilon} = 
    \left\|\left(\frac{\pi\exp(\muplo -\lo)}{\E[\pi]}\right)^{2+\epsilon}(\exp(\nu_{\muplo}) - 1)^{2+\epsilon}\right\|_{1} \le \\ 
    &\le  \left\|\left(\frac{\pi\exp(\muplo -\lo)}{\E[\pi]}\right)^{2+\epsilon}\right\|_{\frac{4}{2+\epsilon}}\left\|\left(\exp(\nu_{\muplo}) - 1\right)^{2+\epsilon}\right\|_{\frac{4}{2-\epsilon}}.
\end{aligned}
\end{equation}
The first factor is $O(1)$ by Corollary \ref{cor:mom_prod}. We need to perform a computation similar to the one in Lemma \ref{lem:ell_eq} for the second multiplier, which we omit since it follows the same logic. Conceptually $\nu_{\muplo}$ is small, which means that $(\exp(\nu_{\muplo}) - 1) \sim \nu_{\muplo}$ which implies, using the subgaussianity of $\nu_{\muplo}$, that
\begin{equation*}
    \left(\left\|\left(\exp(\nu_{\muplo}) - 1\right)^{2+\epsilon}\right\|_{\frac{4}{2-\epsilon}}\right)^{\frac{1}{2+\epsilon}} = \|\exp(\nu_{\muplo}) - 1\|_{\frac{4(2+\epsilon)}{2-\epsilon}}  \sim \| \nu_{\muplo}\|_{\frac{4(2+\epsilon)}{2-\epsilon}} = O\qty(\| \nu_{\muplo}\|_2) 
\end{equation*}

Summing the bounds from 
\eqref{eq:term2-offset-bound} and
\eqref{eq:term2-centeredpart-bound}, 
we get what we claimed.

\paragraph{Term 3 of \eqref{eq:bias_remainder}:}
Our goal is to show that this term is of the order $o_p\qty(\sqrt{\frac{\E[\pi]}{n}})$, but we will show a stronger bound $o_p\qty(\frac{\E[\pi]}{\sqrt{n}})$.  As before, we divide the term by $\E[\pi]$. By the FOC \eqref{eq:foc_muplo} for $u$ it follows that the terms in the average $\Pn \frac{\pi_i}{\E[\pi]} u_i \nu_{\mu,i}$ have mean zero. We compute the variance and apply CS:
\begin{equation*}
    \frac{1}{n} \E\left[\left(\frac{\pi}{\E[\pi]} u\nu_{\mu}\right)^2\right] \le \frac{1}{n} \sqrt{\E\left[\left(\frac{\pi}{\E[\pi]} u\right)^4\right]} \sqrt{\E\left[\nu^4_{\mu}\right]} =  o\left(\frac{1}{n}\right).
\end{equation*}
Here, the last equality follows from two bounds. First, using the triangle inequality's implication $\abs{(u+1)x} \le \abs{ux} + \abs{x}$ for $x=\pi/\E\pi$ and Corollary~\ref{cor:mom_prod} to bound the resulting terms, we get: 
\begin{equation}
\label{u:moment-bound}
    \left\|\frac{\pi}{\E[\pi]} u\right\|_4 \le \left\|\frac{\pi}{\E[\pi]} \exp(\muplo  -\lo)\right\|_4 + \left\|\frac{\pi}{\E[\pi]} \right\|_4 = O(1).
\end{equation}
Second, from Assumption \ref{as:subg} and Corollary \ref{cor:pop_proj} we have:
\begin{equation*}
    \|\nu_{\mu}\|_4 \le C \|\nu_{\mu}\|_2 = O(\|\mu - \bestmu\|_2) = o(1).
\end{equation*}

\paragraph{Term 4 of \eqref{eq:bias_remainder}:} 
Our goal is to show that this term is, after dividing by $\E[\pi]$, $o_p(1/\sqrt{n})+O_p\qty(\min\qty{r, \frac{r^2}{\mu-\bestmu}_2})$. 
To do this, this we start by centering it and using the FOC \eqref{eq:foc_muplo}:
\begin{equation*}
    \Pn \frac{\pi_i}{\E[\pi]} u_i (\pmu_i-\hmu_i) =  (\Pn - \P) \frac{\pi_i}{\E[\pi]} u_i (\pmu_i-\hmu_i) - \frac{\zeta^2}{n}<\muplo,\pmu-\hmu>_{\m{F}}.
\end{equation*}
We bound the second term:
\begin{align*}
    & \qty| \frac{\zeta^2}{n}<\muplo,\pmu-\hmu>_{\m{F}}| \le \frac{\zeta^2}{n} \|\muplo\|_{\m{F}} \|\pmu-\hmu\|_{\m{F}}. 
\end{align*}
By Assumption \ref{as:out_mom} and derived properties of $\|\pmu-\hmu\|_2$ we have $\|\pmu-\hmu\|_{\m{F}} =  O(\|\pmu-\hmu\|_2) = o_p(1)$. We also have:
\begin{equation*}
     \|\muplo\|_{\m{F}} \le  \|\muplo-\plo\|_{\m{F}} +  \|\plo\|_{\m{F}}
\end{equation*}
By Corollary \ref{cor:pop_proj} we have $\sqrt{\frac{\zeta^2}{n}}\|\muplo-\plo\|_{\m{F}} = o_p(1)$. Using Assumptions \ref{as:overlap}, \ref{as:out_mom} and Lemma \ref{lem:init_bound} we get $\|\plo\|_{\m{F}} = O\qty(\|\plo\|_{2}) = O\qty(\|\lo-\plo\|_{2} + \|\lo\|_2) = o(\sqrt{n})$. It thus follows:
\begin{align*}
    &\frac{\zeta^2}{n} \|\muplo\|_{\m{F}} \|\pmu-\hmu\|_{\m{F}} = \sqrt{\frac{\zeta^2}{n}} o_p(1)\times o_p(1) = o_p\qty(\frac{1}{\sqrt{n}}).
\end{align*}

For the second term, the argument used in 
\eqref{eq:term2-centeredpart-bound}, in combination with 
the $O_p(1)$ moment bound \eqref{u:moment-bound} on $\xi=(\pi/\E[\pi])u$  
below, implies that 
\begin{equation}
\begin{aligned}
(\Pn - \P) \frac{\pi_i}{\E[\pi]} u_i (\pmu_i-\hmu_i) 
&= O_p(\norm{\xi_i}_{2+\epsilon}) \min\qty{r, \frac{r^2}{\norm{\mu-\bestmu}}_2} 
= O_p\qty(\min\qty{r, \frac{r^2}{\norm{\mu-\bestmu}_2}}).
\end{aligned}
\end{equation}

\end{proof}

\subsubsection{Empirical errors}
\begin{lemma}\label{lem:emp_error}
    Suppose  Assumptions \ref{as:emp_set}, \ref{as:subg} -\ref{as:out_mom} hold, and $\|\mu - \bestmu\|_2 = o\left(\frac{1}{\sqrt{\log(n)}}\right)$, then we have 
\begin{equation*}
\begin{aligned}
        &\Pn\frac{(1-D_i)}{\E[\pi]}( \exp(\hlo_i)- \exp(\plo_i))\epsilon_i = o_p\left(\frac{1}{\sqrt{n}}\right),\\
        &\Pn\frac{(1-D_i)}{\E[\pi]}( \exp(\hlo_i)- \exp(\plo_i))(\mu_i- \hmu_i) = O_p\left(\frac{\sqrt{\log(n)}\|\mu - \bestmu\|_2p}{\E[\pi] n}\right) + o_p(\|\mu - \bestmu\|_2^2).
\end{aligned}
\end{equation*}
\end{lemma}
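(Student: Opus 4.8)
\textbf{Proof proposal for Lemma~\ref{lem:emp_error}.}

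The plan is to control both empirical error terms by combining the decomposition of the multiplier $(1-D_i)(\exp(\hlo_i)-\exp(\plo_i))$ into a ``population-curvature'' piece plus a centered-empirical-process piece, using the convergence rates for $\hlo$ established in Lemma~\ref{lem:conv_rate_1} and the uniform control from Lemma~\ref{lem:conv_rate_3}. The first step is to write $\exp(\hlo_i)-\exp(\plo_i) = \exp(\plo_i)\{(\hlo_i-\plo_i) - \ell(\hlo_i - \plo_i)\}$, isolating the linear term $\exp(\plo_i)(\hlo_i-\plo_i)$ from the quadratic remainder $\exp(\plo_i)\ell(\hlo_i-\plo_i)$; here I will use the bound $\Pn \frac{(1-D_i)\exp(\plo_i)}{\E[\pi]}\ell(\hlo_i-\plo_i) = O_p(p/(\E[\pi]n))$ from Lemma~\ref{lem:conv_rate_1} together with the uniform bound on $\max_i\{(1-D_i)(\hlo_i-\plo_i),0\}$ from Lemma~\ref{lem:conv_rate_3} to dispatch the remainder term whenever it is multiplied against bounded quantities.

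For the first claim (against $\epsilon_i$), I would use the key built-in independence: $\epsilon_i$ has mean zero conditionally on $\{X_i,\eta_i,D_i\}_{i=1}^n$, and $\hlo$ is a function of $\{X_i,D_i\}_{i=1}^n$ only, so $\Pn \frac{(1-D_i)}{\E[\pi]}(\exp(\hlo_i)-\exp(\plo_i))\epsilon_i$ is, conditionally, an average of independent mean-zero terms. Hence I would bound its conditional variance: it equals $\frac{1}{n}\Pn \frac{(1-D_i)\sigma_i^2}{\E[\pi]^2}(\exp(\hlo_i)-\exp(\plo_i))^2$, which by Assumption~\ref{as:out_mom}(a) is bounded up to constants by $\frac{1}{\E[\pi]^2 n}\Pn (1-D_i)(\exp(\hlo_i)-\exp(\plo_i))^2$. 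Writing $(\exp(\hlo_i)-\exp(\plo_i))^2 \lesssim \exp(2\plo_i)(\hlo_i-\plo_i)^2$ on the event where $\max_i|\hlo_i^{(i)}-\plo_i| = o_p(1)$ (Lemma~\ref{lem:conv_rate_3}), and then using $\Pn (1-D_i)\exp(2\plo_i)(\hlo_i-\plo_i)^2 = O_p(\E[\pi]^2)\cdot\Pn(\hlo_i-\plo_i)^2 = O_p(\E[\pi]^2 p/(\E[\pi]n)) = O_p(\E[\pi]p/n)$ via Corollary~\ref{cor:mom_prod} and the empirical-norm bound in Lemma~\ref{lem:conv_rate_1}, gives a conditional variance of order $p/(\E[\pi]n^2) = o(1/n)$ since $p = o(\E[\pi]n)$. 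A conditional Chebyshev argument then yields $o_p(1/\sqrt{n})$.

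For the second claim (against $\mu_i-\hmu_i$), the strategy is to split $\mu_i - \hmu_i = \nu_{\mu,i} + (\pmu_i - \hmu_i)$ and handle the linear-in-$(\hlo-\plo)$ contribution by the empirical first-order condition: since $\hmu \in \mt{F}$ and $\pmu\in\mt{F}$, the empirical FOC for $\hlo$ (equation~\eqref{eq:key_emp_inq_without_foc} / \eqref{eq:foc_plo}) means $\Pn \frac{(1-D_i)\exp(\plo_i)}{\E[\pi]}(\hlo_i-\plo_i)f_i$ is, up to the regularization offset, controlled; more precisely I would write $\exp(\hlo_i)-\exp(\plo_i)$ as its linear part plus $\ell$-remainder, match the linear part against the FOC to leave a centered empirical process $(\Pn-\P)$ term, bound that centered term via Mendelson's multiplier inequality \citep[Corollary~1.10]{mendelson2016upper} exactly as in the proof of Lemma~\ref{lem:or_bias}, obtaining a bound $\lesssim \|\hlo-\plo\|_2 \cdot (\text{gaussian complexity of } \mt{F} \text{ at radius } \|\pmu-\hmu\|_2)$, which is $O_p(\sqrt{p/(\E[\pi]n)}\cdot \sqrt{p/n}\,\|\mu-\bestmu\|_2) = O_p(\sqrt{\log n}\,\|\mu-\bestmu\|_2 p/(\E[\pi]n))$ after invoking Corollary~\ref{coro:mu-rates} for $\|\pmu-\hmu\|_2$; the $\ell$-remainder, against $\max_i|\mu_i-\hmu_i| = O_p(\sqrt{\log n}\,\|\mu-\bestmu\|_2)$ (Lemma~\ref{lem:or_proj}), contributes $O_p(\sqrt{\log n}\,\|\mu-\bestmu\|_2 \cdot p/(\E[\pi]n)) = o_p(\|\mu-\bestmu\|_2^2)$ once Assumption~\ref{as:comp} is used, and the $\nu_{\mu}$ piece is handled similarly, using orthogonality of $\nu_\mu$ to $\mt{F}$ under the reweighting. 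The main obstacle I anticipate is the bookkeeping in the second claim: cleanly separating the linear and quadratic parts of $\exp(\hlo)-\exp(\plo)$, correctly absorbing the ridge-penalty cross terms from the FOC (as in \eqref{eq:term2-offset-bound}), and verifying that the resulting products of rates $\sqrt{p/(\E[\pi]n)}$, $\sqrt{p/n}$, $\|\mu-\bestmu\|_2$, $\sqrt{\log n}$ collapse into the two advertised terms under Assumption~\ref{as:comp} — the probabilistic inputs are all already available, so the difficulty is purely in assembling them without losing a factor.
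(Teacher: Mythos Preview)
Your overall architecture is right --- conditional-variance for the $\epsilon$ term, linear-plus-$\ell$ splitting for the $(\mu-\hmu)$ term --- but two steps do not go through as written.

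\textbf{First claim.} The factorization
\[
\Pn (1-D_i)\exp(2\plo_i)(\hlo_i-\plo_i)^2 \;=\; O_p(\E[\pi]^2)\cdot \Pn(\hlo_i-\plo_i)^2
\]
is not justified: it would need $\max_i (1-D_i)\exp(2\plo_i)=O_p(\E[\pi]^2)$, but $\plo$ is only controlled in $L^2$/subgaussian norm, and passing to a uniform bound puts an uncontrolled $\sqrt{\log n}\,\|\plo\|_2$ in the exponent. The paper never decouples the weight from the increment. Writing $y_i=(1-D_i)\exp(\plo_i)/\E[\pi]$ and $x_i=(1-D_i)(\hlo_i-\plo_i)$, it splits on the sign of $x_i$; on $\{x_i>0\}$ it uses Cauchy--Schwarz $\Pn y_i^2 x_i^2 \le \sqrt{\Pn y_i^3}\sqrt{\Pn y_i x_i^4}$ together with $x_i^4\{x_i>0\}\le 24\,\ell(-x_i)$, so that the bound routes through the \emph{weighted} loss $\Pn y_i\ell(-x_i)=O_p(p/(\E[\pi]n))$ from Lemma~\ref{lem:conv_rate_1}, while Lemma~\ref{lem:conv_rate_3} is used only to bound $\max_i\exp(2x_i\{x_i\ge 0\})$, not $\exp(2\plo_i)$. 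On $\{x_i<0\}$ one has $(\exp(x_i)-1)^2\le C|x_i|$ and closes with $\sqrt{\Pn y_i^4}\sqrt{\Pn x_i^2}$. Your shortcut skips exactly the coupling that makes the argument work.

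\textbf{Second claim.} You propose to cancel the linear part using ``the empirical FOC for $\hlo$.'' That FOC reads $\Pn[(1-D_i)\exp(\hlo_i)-D_i]f_i=-\text{reg}$ for $f\in\mt{F}$; it involves $\exp(\hlo_i)$, not $\exp(\plo_i)(\hlo_i-\plo_i)$, and cannot be invoked with $f=\mu-\hmu\notin\mt{F}$. The paper's cancellation comes from a different orthogonality: the FOC \eqref{eq:foc_hmu} for $\hmu$, i.e.\ $\Pn\frac{\pi_i}{\E[\pi]}\exp(\muplo_i-\lo_i)(\mu_i-\hmu_i)f_i=0$ for $f\in\mt{F}$, applied with $f=\hlo-\plo\in\mt{F}$. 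To reach the weight $\pi_i\exp(\muplo_i-\lo_i)$ the paper first splits $(1-D_i)=(1-\pi_i)+(\pi_i-D_i)$, rewrites $(1-\pi_i)\exp(\plo_i)=\pi_i\exp(\muplo_i-\lo_i)\exp(\nu_{\muplo,i})$, and peels off the $(\exp(\nu_{\muplo,i})-1)$ correction (which is $o_p(\|\mu-\bestmu\|_2^2)$); the $(\pi_i-D_i)$ piece is then a genuine mean-zero multiplier process handled by contraction plus Mendelson's inequality. Without this rerouting to the $\hmu$-FOC, your centered-process bound leaves a term of order $\|\hlo-\plo\|_2\cdot\|\mu-\hmu\|_2$ that does not collapse to the stated rate.

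Minor: your decomposition should read $\exp(\hlo_i)-\exp(\plo_i)=\exp(\plo_i)\{(\hlo_i-\plo_i)+\ell(\plo_i-\hlo_i)\}$; the sign on the $\ell$ term is off.
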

\begin{proof} We compute the conditional variance of the first empirical error:
\begin{equation*}
    \frac{1}{n}\Pn\frac{(1-D_i)}{\E[\pi]^2}( \exp(\hlo_i)- \exp(\plo_i))^2\sigma_i^2 \le \frac{\max_{i}\sigma_i^2}{n} \Pn\frac{(1-D_i)}{\E[\pi]^2}( \exp(\hlo_i)- \exp(\plo_i))^2.
\end{equation*}
We work with the second term:
\begin{equation*}
    \Pn\frac{(1-D_i)}{\E[\pi]^2}( \exp(\hlo_i)- \exp(\plo_i))^2 = \Pn\frac{(1-D_i)}{\E[\pi]^2}\exp(2\plo_i)( \exp((1-D_i)(\hlo_i-\plo_i))- 1)^2.
\end{equation*}
Define $y_i:=\frac{1-D_i}{\E[\pi]}\exp(\plo_i)$ and $x_i = (1-D_i)(\hlo_i - \plo_i)$, then we can decompose the term:
\begin{align*}
    \Pn y_i^2( \exp(x_i)- 1)^2 =\Pn y_i^2\{x_i> 0\} (\exp(x_i) - 1)^2 + \Pn y_i^2\{x_i<0\} (\exp(x_i) - 1)^2 \le \\
    \Pn \{x_i>0\} y_i^2\exp(2x_i)x_i^2 +
    \Pn  \{x_i<0\}y_i^2 (\exp(2x_i) - 2 \exp(x_i) +1) \le \\
    \max_i\exp(2x_i\{x_i \ge 0\})\Pn \{x_i>0\} y_i^2x_i^2 + \Pn  \{x_i<0\}y_i^2 (\ell(-2x_i) - 2\ell(-x_i)) =\\
        \max_i\exp(2x_i\{x_i \ge 0\})\Pn \{x_i>0\} y_i^2x_i^2 + \Pn  \{x_i<0\}y_i^2 (\ell(2|x_i|) - 2\ell(|x_i|)) \le \\
         \max_i\exp(2x_i\{x_i \ge 0\})\Pn \{x_i>0\} y_i^2x_i^2 + \max_{i}\frac{(\ell(2|x_i|) - 2\ell(|x_i|)) }{ \ell(2|x_i|)}\Pn  \{x_i<0\}y_i^2 \ell(2|x_i|).
\end{align*}
By construction, we have:
\begin{equation*}
 \max_i\frac{(\ell(2|x_i|) - 2\ell(|x_i|)) }{ \ell(2|x_i|)} \le 1, \quad \{x_i<0\}\ell(2|x_i|) \le  C|x_i|,
\end{equation*}
and
\begin{equation*}
 \{x_i>0\} x_i^4 \le 24 \ell(-x_i\{x_i >0\}) \le 24 \ell(-x_i).
\end{equation*}
We then have:
\begin{align*}
    \Pn \{x_i>0\} y_i^2x_i^2 =  \Pn y_i \left(y_ix_i^2\{x_i>0\}\right) \le \sqrt{\Pn y_i^3}\sqrt{ \Pn y_ix_i^4\{x_i>0\}} \le \\
     \sqrt{24}\sqrt{\Pn y_i^3}\sqrt{ \Pn y_i\ell(-x_i)}
\end{align*}
Substituting back the values of $x_i$ and $y_i$ we get from Lemma \ref{lem:conv_rate_1}:
\begin{align*}
    \Pn y_i\ell(-x_i) = \Pn \frac{1-D_i}{\E[\pi]}\exp(\plo_i)\ell((1-D_i)(\plo_i - \hlo_i)) \le \\
    \Pn \frac{1-D_i}{\E[\pi]}\exp(\plo_i)\ell(\plo_i - \hlo_i) =\frac{p}{\E[\pi]n} = o_p(1)
\end{align*}

We also have:
\begin{equation*}
    \Pn  \{x_i<0\}y_i^2 \ell(2|x_i|) \le C\Pn  y_i^2 |x_i| \le 
    C\sqrt{\Pn  y_i^4}\sqrt{\Pn x_i^2}
\end{equation*}
Substituting $x_i$ and using Lemma \ref{lem:conv_rate_1} we get:
\begin{equation*}
    \Pn x_i^2 = \Pn(1-D_i)(\plo_i - \hlo_i)^2 \le \Pn(\plo_i - \hlo_i)^2 = O_p\left(\frac{p}{\E[\pi]n}\right) = o_p(1).
\end{equation*}
We have:
\begin{align*}
    \left(\Pn y_i^3\right)^{\frac43} \le \Pn y_i^4 =\Pn\left(\frac{1-D_i}{\E[\pi]}\exp(\plo_i)\right)^4 = \\\max_{i}\exp(4(\plo_i - \muplo_i))\Pn\left(\frac{1-D_i}{\E[\pi]}\exp(\muplo_i)\right)^4 = 
    O_p\left(\E\left(\frac{1-D}{\E[\pi]}\exp(\muplo)\right)^4\right) 
\end{align*}
where third equlaity follows by Markov inequality, Corollary \ref{cor:pop_proj}, Assumption \ref{as:subg} and maximal inequality. We next compute the expectation:
\begin{multline*}
    \E\left[\left(\frac{1-D}{\E[\pi]}\exp(\muplo)\right)^4\right] =  \E\left[\left(\frac{1}{\E[\pi]}\exp(\muplo)\right)^4(1-\pi)\right] = \E\left[\left(\frac{\pi}{\E[\pi]}\exp(\muplo-\lo)\right)^4\frac{1}{(1-\pi)^3}\right] \le \\
    4 \E\left[\left(\frac{\pi}{\E[\pi]}\exp(\muplo-\lo)\right)^4\right] + 4 \E\left[\left(\frac{\pi}{\E[\pi]}\exp(\muplo-\lo)\right)^4 \exp(3\lo)\right],
\end{multline*}
where the last inequality follows from the following implications:
\begin{equation*}
    \frac{1}{1-\pi} = \frac{1}{1 - \frac{\exp(\lo)}{1+\exp(\lo)}} = \frac{1+\exp(\lo)}{1 + \exp(\lo) - \exp(\lo)} = 1+\exp(\lo) \Rightarrow \left(\frac{1}{1-\pi}\right)^{3} \le 4 (1 + \exp(3\lo))
\end{equation*}
Terms $\E\left[\left(\frac{\pi}{\E[\pi]}\exp(\muplo-\lo)\right)^4\right]$ and $\E\left[\left(\frac{\pi}{\E[\pi]}\exp(\muplo-\lo)\right)^4 \exp(3\lo)\right]$ are $O(1)$ by Corollary \ref{cor:mom_prod}.

Finally, from Lemma \ref{lem:conv_rate_3} we have:
\begin{equation*}
    \max_i\exp(2x_i\{x_i \ge 0\}) = \exp(\max\{\max_i(1-D_i)(\hlo_i - \plo_i), 0\}) = O_p(1).
\end{equation*}
Using conditional Chebyshev's inequality, we can thus conclude:
\begin{equation*}
    \Pn\frac{(1-D_i)}{\E[\pi]}( \exp(\hlo_i)- \exp(\plo_i))\epsilon_i = o_p\left(\frac{1}{\sqrt{n}}\right),
\end{equation*}
thus finishing the analysis of the first part of the empirical error.

Next, we analyze the second part of the empirical error:
\begin{align*}
    \Pn\frac{(1-D_i)}{\E[\pi]}( \exp(\hlo_i)- \exp(\plo_i))(\mu_i- \hmu_i) = \\
    \Pn\frac{(1-D_i)}{\E[\pi]}\exp( \plo_i)\ell(\plo_i -\hlo_i)(\mu_i- \hmu_i) +  
    \Pn\frac{(1-D_i)}{\E[\pi]}\exp(\plo_i)(\hlo_i -\plo_i)(\mu_i- \hmu_i)
\end{align*}
Taking the first term, we have the following from Lemma \ref{lem:conv_rate_1}:
\begin{align*}
   \left| \Pn\frac{(1-D_i)}{\E[\pi]}\exp( \tilde g_i)\ell( \plo_i -\hlo_i)(\mu_i- \hmu_i)\right|\le \\
   \max_i|\mu_i- \hmu_i| \Pn\frac{(1-D_i)}{\E[\pi]}\exp( \plo_i)\ell(\plo_i -\hlo_i) = \\
   O_p \left(\frac{\sqrt{\log(n)}\|\mu - \bestmu\|_2p}{\E[\pi]n}\right) 
\end{align*}
For the second one, we have the following: 
\begin{equation}
\label{eq:cross-term-empirical}
\begin{aligned}
    \Pn\frac{(1-\pi_i)}{\E[\pi]}\exp( \plo_i)(\hlo_i -\plo_i)(\mu_i- \hmu_i) = 
     \Pn\frac{(1-D_i)}{\E[\pi]}\exp(\plo_i)(\hlo_i -\plo_i)(\mu_i- \hmu_i) + \\
     \Pn\frac{\pi_i - D_i}{\E[\pi]}\exp(\plo_i)(\hlo_i -\plo_i)(\mu_i- \hmu_i).
\end{aligned}
\end{equation}
We analyze the first part:
\begin{align*}
     \Pn\frac{(1-\pi_i)}{\E[\pi]}\exp( \plo_i)(\hlo_i -\plo_i)(\mu_i- \hmu_i) = 
     \Pn\frac{\pi_i}{\E[\pi]}\exp( \muplo_i-\lo_i)(\hlo_i -\plo_i)(\mu_i- \hmu_i) + \\
     \Pn\frac{\pi_i}{\E[\pi]}\exp(\muplo_i -\lo_i)(\exp( \nu_{\muplo,i})-1)(\hlo_i -\plo_i)(\mu_i- \hmu_i)
\end{align*}
The first term is equal to zero by the FOC \eqref{eq:foc_hmu} because $\hlo - \plo \in \mt{F}$. 

The second term has the following form:
\begin{align*}
\Pn\frac{\pi_i}{\E[\pi]}\exp(\muplo_i -\lo_i)(\exp( \nu_{\muplo,i})-1)(\hlo_i -\plo_i)(\mu_i- \hmu_i)\le \\
\max_i| \mu_i- \hmu_i| \sqrt{\Pn\left[\frac{\pi_i}{\E[\pi]}\exp(\muplo_i -\lo_i)(\exp( \nu_{\muplo,i})-1)\right]^2}\sqrt{\Pn(\hlo_i -\plo_i)^2}
\end{align*}
We have from Corollarys \ref{cor:pop_proj} and \ref{cor:mom_prod}, Assumption \ref{as:subg} and maximal inequality:
\begin{align*}
    \Pn\left[\frac{\pi_i}{\E[\pi]}\exp(\muplo_i -\lo_i)(\exp( \nu_{\muplo,i})-1)\right]^2 \le \\
    \max\exp(2|\nu_{\muplo,i}|) \sqrt{\Pn\left[\frac{\pi_i}{\E[\pi]}\exp(\muplo_i -\lo_i)\right]^4}\sqrt{\Pn\nu^4_{\muplo,i}} = O_p\left(\| \nu_{\muplo}\|_4^2\right) =\\
    O_p\left(\| \nu_{\muplo}\|_2^2\right)  = O_p(\|\mu - \bestmu\|_2^2).
\end{align*}
It follows that we have from Lemma \ref{lem:conv_rate_1}:
\begin{align*}
    \Pn\frac{\pi_i}{\E[\pi]}\exp(\muplo_i -\lo_i)(\exp( \nu_{\muplo,i})-1)(\hlo_i -\plo_i)(\mu_i- \hmu_i) = \\
    O_p\left(\sqrt{\log(n)}\|\mu - \bestmu\|_2^2\sqrt{\Pn(\hlo_i -\plo_i)^2}\right) = O_p\left(\|\mu - \bestmu\|_2^2\frac{\sqrt{\log(n)}p}{\sqrt{\E[\pi]n}}\right) = o_p\left(\|\mu - \bestmu\|_2^2\right)
\end{align*}

Because $\hmu_i$ is conditionally independent of $D_i$, we can think of the second term in
\eqref{eq:cross-term-empirical} using a multiplier process, i.e., using the bound
\begin{equation}
\label{eq:cross-term-empirical-second}
\begin{aligned}
\abs*{\Pn\frac{(\pi_i-D_i)\exp(\plo_i)}{\E[\pi]}(\hlo_i -\plo_i)(\mu_i- \hmu_i)}
&= \abs*{\Pn \xi_i (\hlo_i -\plo_i)(\mu_i - \hmu_i)} \qfor \xi_i = \frac{(\pi_i-D_i)\exp(\plo_i)}{\E[\pi]} \\
&\le c\sup_{\substack{\delta \in \mt{F} \\ \|\delta\|_2 \le r_{\lo}}}\abs*{\Pn \xi_i \delta_i (\mu_i - \hmu_i)} \qqtext{ when } \norm{\hlo-\plo}_2 \le cr_{\lo}.
\end{aligned}
\end{equation}
Here $\E[\xi_i \mid X_i, \nu_i]=0$. We will contract out the factors $\mu_i-\hmu_i$, leaving us with the process that we now bound using a multiplier inequality of \citet[Corollary 1.10]{mendelson2016upper}: with probability tending to one for any $q > 2$,
\begin{equation}
\label{eq:cross-term-empirical-second-simplified}
\sup_{\substack{\delta \in \mt{F} \\ \|\delta\|_2 \le r_{\lo}}}\abs*{\Pn \xi_i \delta_i} 
\le \norm{\xi_i}_q \E \sup_{\substack{\delta \in \mt{F} \\ \|\delta\|_2 \le r_{\lo}}} \abs*{ \Pn g_i \delta_i }
\le c \sqrt{E[\pi]}r_{\lo}^2.
\end{equation}
The last bound here follows from the fixed-point condition defining $r_\lo$ and the $O(1)$
bound $\norm{\xi_i}_q$ for $q=3$ established below.

To do this, we will use the following result which follows from \citet[Proposition 3.1.23][]{gine2021mathematical} and the Montgomery-Smith Reflection Principle. If $Y_1, Y_2, \ldots$ are iid sample bounded processes indexed by $F$,
\begin{equation}
\begin{aligned}
P\qty(\sup_{f \in F} \abs{\sum_i a_i Y_i(f)} \ge t\max_i \abs{a_i}) \le c P\qty(\sup_{f \in F} \abs{\sum_i Y_i(f)} \ge ct )
\end{aligned}
\end{equation}
Taking $Y_i=\xi_i \delta_i$ and $F=\qty{ \delta \in \mt{F}  \ : \  \norm{\delta}_2 \le r_{\lo} }$
and $a_i=\mu_i - \hmu_i$, \eqref{eq:cross-term-empirical-second-simplified} implies that the probability on the right tends to zero for $t=n \times c\sqrt{E[\pi]}r_{\lo}^2$, so this implies that the probability on the left does as well.
This is the probability that the bound from \eqref{eq:cross-term-empirical-second} exceeds $c\sqrt{E[\pi]}r_{\lo}^2 \max_{i}\abs{\mu_i-\hmu_i}$. Now observing that on an event of probability $1-\delta$ the `when' clause from \eqref{eq:cross-term-empirical-second} is satisfied 
for $r=r_{\lo}$ and, via Lemma~\ref{lem:or_proj}, $\max_{i}\abs{\mu_i-\hmu_i} \le c\min\qty{r_{\mu}, \norm{\mu-\pmu}_2})$, we see that
the second term in \eqref{eq:cross-term-empirical} is $O_p(\min\qty{r_{\mu}, \norm{\mu-\pmu}_2} \sqrt{E[\pi]\log(n)}r_{\lo}^2)$.

For the moment bound on our multiplier, we used Corollary \ref{cor:mom_prod}:
\begin{align*}
    \left\|\frac{(\pi-D)\exp(\plo)}{\E[\pi]}\right\|_{3}^3 = \E\left[\left(\frac{|D-\pi|}{1-\pi}\right)^3\left(\frac{\pi\exp(\plo-\lo)}{\E[\pi]}\right)^3\right] = \\
    \E\left[\pi\left(\frac{\pi\exp(\plo-\lo)}{\E[\pi]}\right)^3\right] + \E\left[\exp(3\lo)\left(\frac{\pi\exp(\plo-\lo)}{\E[\pi]}\right)^3\right]  = O(1).
\end{align*}

Combining all the results together, we can conclude:
\begin{equation*}
    \Pn\frac{(1-D_i)}{\E[\pi]}( \exp(\hlo_i)- \exp(\plo_i))(\mu_i- \hmu_i) = O_p \left(\frac{\sqrt{\log(n)}\|\mu - \bestmu\|_2p}{\E[\pi]n}\right)  + o_p(\|\mu - \bestmu\|_2^2).
\end{equation*}
\end{proof}

\newpage
\subsection{Balancing}
Finally, we establish the results for the balancing term. 

\begin{lemma}\label{lem:balancing}
    Suppose Assumption \ref{as:subg} - \ref{as:out_mom} hold,  $\zeta = O(1)$. Then we have
   \begin{equation*}
       \Pn \left[(1-D_i) \hat\omega_i - \frac{D_i}{\overline\pi}\right]\hmu_i = o_p\left(\frac{1}{\sqrt{n}}\right).
   \end{equation*} 
\end{lemma}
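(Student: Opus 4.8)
The plan is to exploit the empirical first-order conditions (KKT conditions) for the two optimization problems that define $\hlo$ and $\hmu$. Recall from Lemma~\ref{lem:dual} that $(1-D_i)\hat\omega_i = \frac{(1-D_i)}{\overline\pi}\exp(\hlo_i)$, so the balancing term can be rewritten as
\begin{equation*}
\Pn\left[(1-D_i)\hat\omega_i - \frac{D_i}{\overline\pi}\right]\hmu_i = \frac{1}{\overline\pi}\Pn\left[(1-D_i)\exp(\hlo_i) - D_i\right]\hmu_i.
\end{equation*}
The key observation is that $\hmu \in \mt{F}$ by construction, so I can apply the first-order condition for $\hlo$ in the direction $\hmu$. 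Differentiating the objective $\Pn((1-D_i)\exp(f_i) - D_i f_i) + \frac{\overline\pi\zeta^2}{2n}\|f\|_{\m{F}}^2$ at $f = \hlo$ along $\hmu$ gives
\begin{equation*}
\Pn\left[(1-D_i)\exp(\hlo_i) - D_i\right]\hmu_i + \frac{\overline\pi\zeta^2}{n}\langle\hlo,\hmu\rangle_{\m{F}} = 0,
\end{equation*}
so the balancing term equals $-\frac{\zeta^2}{n}\langle\hlo,\hmu\rangle_{\m{F}}$. Hence the entire problem reduces to showing $\frac{\zeta^2}{n}\langle\hlo,\hmu\rangle_{\m{F}} = o_p(1/\sqrt{n})$, i.e. (via Cauchy--Schwarz for the seminorm) $\frac{\zeta^2}{n}\|\hlo\|_{\m{F}}\|\hmu\|_{\m{F}} = o_p(1/\sqrt{n})$, which since $\zeta = O(1)$ follows from $\|\hlo\|_{\m{F}}\|\hmu\|_{\m{F}} = o_p(\sqrt{n})$.

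Next I would bound each $\m{F}$-seminorm. For $\hmu$: by Assumption~\ref{as:out_mom}(b) the $\m{F}$-seminorm is equivalent to the $L^2$ norm on $\spn{\m{F}}$, so $\|\hmu\|_{\m{F}} = O(\|\hmu - \mathbb{E}[\hmu]\|_2) \le O(\|\hmu - \pmu\|_2 + \|\pmu - \mathbb{E}[\pmu]\|_2 + \text{const})$. By Corollary~\ref{coro:mu-rates} we have $\|\hmu - \pmu\|_2 = O_p(\|\mu - \bestmu\|_2) = o_p(1)$, and $\|\pmu\|_2$ is bounded since $\pmu$ is an $L^2$-type projection related to $\mu$, whose variance is $O(1)$ by Assumption~\ref{as:out_mom}(c) (more precisely $\|\pmu\|_{\m{F}}^2 = O(\|\pmu - \mu\|_2^2 + \|\mu - \mathbb{E}\mu\|_2^2) = O(1)$ using that $\pmu$ is a projection of $\mu$). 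Hence $\|\hmu\|_{\m{F}} = O_p(1)$. For $\hlo$: I would use the triangle inequality $\|\hlo\|_{\m{F}} \le \|\hlo - \plo\|_{\m{F}} + \|\plo - \muplo\|_{\m{F}} + \|\muplo\|_{\m{F}}$, and then $\|\muplo\|_{\m{F}} \le \|\muplo - \lo\|_{\m{F}}$... wait, $\lo \notin \mt{F}$ in general, so instead I would bound $\|\muplo\|_{\m{F}} = \|f_{\muplo}\|_{\m{F}}$ using that $\|f_{\muplo}\|_2 \le \|\muplo\|_2 + |\beta_\mu|\|\mu\|_2$, with $\|\muplo\|_2 \le \|\muplo - \lo\|_2 + \|\lo\|_2 = O(1) + o(\sqrt{n})$ via Lemma~\ref{lem:init_bound} and Assumption~\ref{as:overlap}(c), and $|\beta_\mu| \le L_\mu$, $\|\mu\|_2 = O(1)$ by Assumptions~\ref{as:miss_lo}(b) and~\ref{as:out_mom}(c); the norm equivalence gives $\|f_{\muplo}\|_{\m{F}} = O(\|f_{\muplo}\|_2) = o(\sqrt{n})$. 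For the remaining two pieces: $\frac{\zeta^2}{n}\|\plo - \muplo\|_{\m{F}}^2 = O(\|\mu - \bestmu\|_2^2 + \zeta^2/n)$ by Corollary~\ref{cor:pop_proj}, so $\sqrt{\frac{\zeta^2}{n}}\|\plo - \muplo\|_{\m{F}} = o(1)$; and $\|\hlo - \plo\|_{\m{F}} = O(\|\hlo - \plo\|_2) = O_p(\sqrt{p/(\E[\pi]n)}) = o_p(\sqrt{n})$ by Lemma~\ref{lem:conv_rate_1} and norm equivalence.

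Assembling: $\frac{\zeta^2}{n}\|\hlo\|_{\m{F}}\|\hmu\|_{\m{F}} = \frac{O(1)}{n}\cdot O_p(\sqrt{n}/\text{?})\cdot O_p(1)$. The delicate point is the rate for $\|\hlo\|_{\m{F}}$: we only get $o_p(\sqrt{n})$ from the $\|\lo\|_2 = o(\sqrt{n})$ term, which gives $\frac{\zeta^2}{n}\|\hlo\|_{\m{F}}\|\hmu\|_{\m{F}} = o_p(1/\sqrt{n})$ exactly as needed — this is why Assumption~\ref{as:overlap}(c) is stated with the $o(\sqrt{n})$ rather than $O(\sqrt{n})$. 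I expect the main obstacle to be tracking the various seminorm/$L^2$ equivalences carefully and ensuring that each of the three pieces in the $\|\hlo\|_{\m{F}}$ decomposition is genuinely $o_p(\sqrt{n})$ (not merely $O_p(\sqrt{n})$) once multiplied by the $O_p(1)$ factor $\|\hmu\|_{\m{F}}$ and divided by $n$; the $\plo - \muplo$ and $\hlo - \plo$ pieces are in fact $o_p(\sqrt{n})$ with room to spare, so the binding constraint is the $\|\lo\|_2$ term, and the argument goes through precisely because Assumption~\ref{as:overlap}(c) is $o(\sqrt{n})$.
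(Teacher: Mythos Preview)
The proposal is correct and follows essentially the same route as the paper: use the first-order condition for $\hlo$ to rewrite the balancing term as $-\frac{\zeta^2}{n}\langle\hlo,\hmu\rangle_{\m{F}}$, apply Cauchy--Schwarz, convert $\m{F}$-seminorms to $L^2$ norms via Assumption~\ref{as:out_mom}(b), and identify $\|\lo\|_2 = o(\sqrt{n})$ from Assumption~\ref{as:overlap}(c) as the binding term. The only cosmetic difference is that the paper decomposes $\|\hlo\|_{\m{F}}$ through $\plo$ (which is already in $\mt{F}$, so no detour through $f_{\muplo}$ is needed) and bounds $\|\hmu\|_{\m{F}}$ via $\bestmu$ rather than $\pmu$; your three-term decomposition via $\muplo$ works but is slightly less direct.
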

\begin{proof}
By definition, we have:
    \begin{equation*}
    \Pn \left[(1-D_i) \hat\omega_i - \frac{D_i}{\overline\pi}\right]\hmu_i = \frac{1}{\overline\pi}\Pn[(1-D_i) \exp(\hlo_i) - D_i] \hmu_i.
\end{equation*}
From the first order condition for the empirical problem we have for any $f \in \mt{F}$:
\begin{equation*}
    \frac{1}{\overline\pi}\Pn[(1-D_i) \exp(\hlo_i) - D_i]f_i = -\frac{\zeta^2}{n} <\hlo, f>_{\m{F}}
\end{equation*}
Using this equality for $f = \hmu$ we get:
\begin{equation*}
    \left|\frac{1}{\overline\pi}\Pn[(1-D_i) \exp(\hlo_i) - D_i]\hmu_i\right| = \frac{\zeta^2}{\sqrt{n}} \frac{|<\hlo, \hmu>_{\m{F}}|}{\sqrt{n}} .
\end{equation*}

\begin{equation*}
    |<\hlo, \hmu>_{\m{F}}| \le \|\hlo\|_{\m{F}} \|\hmu\|_{\m{F}}  \le \left( \|\plo - \hlo\|_{\m{F}} + \|\plo \|_{\m{F}}\right)\qty(\|\bestmu\|_{\m{F}} + \|\bestmu - \hmu\|_{\m{F}})
\end{equation*}
By Assumption \ref{as:out_mom} we have  $ \|\plo - \hlo\|_{\m{F}} = O\qty(\|\plo - \hlo\|_2)$, $ \|\bestmu - \hmu\|_{\m{F}} = O\qty(\|\bestmu - \hmu\|_{2})$, $\|\plo \|_{\m{F}} = O\qty(\|\plo \|_{2})$, $\|\bestmu\|_{\m{F}} 
 = O\qty(\|\bestmu\|_2)$. Our results guarantee $\|\plo - \hlo\|_2 = o_p(1)$, $\|\bestmu - \hmu\|_{2} = o_p(1)$, and thus the dominant term is $\|\bestmu - \mathbb{E}[\bestmu]\|_2\times \|\plo \|_{2}$. Assumption \ref{as:out_mom} guarantees that $\|\bestmu- \mathbb{E}[\bestmu]\|_2 = O(1)$. We also have:
 \begin{equation*}
     \|\plo \|_{2} \le  \|\muplo - \plo \|_{2} + \| \lo -\muplo\|_2 + \| \lo\|_2
 \end{equation*}
 The first part behaves as $\| \mu - \bestmu\|_2$ by Corollary \ref{cor:pop_proj}, the second one is bounded by Lemma \ref{lem:init_bound}, and finally, for the last one, we use Assumption \ref{as:overlap} that guarantees $\| \lo\|_2 = o(\sqrt{n})$.

It thus follows:
\begin{equation*}
     \frac{1}{\overline\pi}\Pn[(1-D_i) \exp(\hlo_i) - D_i]\hmu_i = o_p\left(\frac{1}{\sqrt{n}}\right).
\end{equation*}
\end{proof}

\newpage
\subsection{Main result}
We connect all our previous results in the following general theorem which proves Theorem \ref{th:const_share} in the main text.
\begin{theorem}\label{th:main_theorem}
    Suppose Assumptions \ref{as:emp_set}, \ref{as:subg} - \ref{as:out_mom} hold, $\| \mu - \bestmu\|_2 \ll 1$ and $\zeta = O(1)$. Then we have:
    \begin{align*}
        \hat \tau - \tau= \bias + 
        \Pn \frac{\pi_i - D_i}{1-\pi_i}\frac{(\pi_iu_i + 1)}{\E[\pi]}\epsilon_i+\Pn\frac{\pi_iu_i}{\E[\pi]} \epsilon_i +
        o_p\qty(\frac{1}{\Tef}) + o_p\left(\frac{1}{\sqrt{\E[\pi] n}}\right) . 
    \end{align*}
\end{theorem}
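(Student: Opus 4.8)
The plan is to assemble the expansion from the three-term decomposition \eqref{eq:decomp}, $\hat\tau-\tau = \xi_1+\xi_2+\xi_3$, by feeding each piece into the lemmas of the preceding subsections and then collecting remainders; throughout one uses that $\|\mu-\bestmu\|_2^2 = 1/\Tef$ by the definition of $\Tef$, and that the hypothesis $\|\mu-\bestmu\|_2\ll 1$ supplies exactly the smallness (in particular $\|\mu-\bestmu\|_2 = o(1/\sqrt{\log n})$) that those lemmas require. First I would dispose of the balancing term: Lemma~\ref{lem:balancing} gives $\xi_3 = o_p(1/\sqrt n)$, which is $o_p(1/\sqrt{\E[\pi]n})$ since $\E[\pi]\le 1$. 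Next, for the noise term, Lemma~\ref{lem:or_noise} expresses $\overline\pi\,\xi_1$ as $\Pn\tfrac{\pi_i-D_i}{1-\pi_i}(\pi_iu_i+1)\epsilon_i + \Pn\pi_iu_i\epsilon_i$ plus $o_p(\E[\pi]/\sqrt n)$ plus the empirical correction $\Pn(1-D_i)(\exp(\hlo_i)-\exp(\plo_i))\epsilon_i$; and for the oracle-bias term, Corollary~\ref{coro:or_bias} expresses $\overline\pi\,\xi_2$ as the leading term $\Pn\pi_i\exp(\muplo_i-\lo_i)\nu_{\muplo,i}\nu_{\mu,i}$ plus $o_p(\sqrt{\E[\pi]/n}) + o_p(\E[\pi]\|\mu-\bestmu\|_2^2) + O_p(\E[\pi]\|\mu-\bestmu\|_2\,p/n)$ plus the empirical correction $\Pn(1-D_i)(\exp(\hlo_i)-\exp(\plo_i))(\mu_i-\hmu_i)$.

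I would then divide both displays by $\overline\pi$, using $\overline\pi = \E[\pi](1+o_p(1))$, which follows from $\mathrm{Var}(\overline\pi)\le\E[\pi]/n$ and $\E[\pi]\gg 1/\sqrt n$ via Chebyshev. The two empirical corrections are controlled by Lemma~\ref{lem:emp_error}: the $\epsilon$-version is $o_p(1/\sqrt n)$, and the $(\mu-\hmu)$-version equals $O_p(\sqrt{\log n}\,\|\mu-\bestmu\|_2\,p/(\E[\pi]n)) + o_p(\|\mu-\bestmu\|_2^2)$, i.e. $O_p(\sqrt{\log n}\,p/(\Tef^{1/2}\E[\pi]n)) + o_p(1/\Tef)$. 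Absorbing the $p$-dependent pieces (including the $O_p(\E[\pi]\|\mu-\bestmu\|_2\,p/n)$ term) is where Assumption~\ref{as:comp} enters, split into two regimes according to which term realizes the maximum on its right-hand side: if $\Tef^2\gtrsim\E[\pi]n$ then $p\ll\sqrt{\E[\pi]n\Tef}$, so $\sqrt{\log n}\,p/(\Tef^{1/2}\E[\pi]n)$ and $\E[\pi]\|\mu-\bestmu\|_2\,p/n$ are both $o(1/\sqrt{\E[\pi]n})$ up to logs; if $\Tef^2\lesssim\E[\pi]n$ then $p\ll\E[\pi]n/\sqrt{\Tef}$, so both are $o(1/\Tef)$ up to logs. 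Hence every $p$-dependent remainder is $o_p(1/\Tef)+o_p(1/\sqrt{\E[\pi]n})$.

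Finally I would replace $1/\overline\pi$ by $1/\E[\pi]$ in the three surviving leading averages. Since $1/\overline\pi = 1/\E[\pi] + O_p(\E[\pi]^{-3/2}n^{-1/2})$, and since Corollary~\ref{cor:mom_prod} (with Corollary~\ref{cor:pop_proj} for $\nu_\mu,\nu_{\muplo}$) shows that $\Pn\tfrac{\pi_i-D_i}{1-\pi_i}(\pi_iu_i+1)\epsilon_i$ and $\Pn\pi_iu_i\epsilon_i$ have second moment of order $\E[\pi]/n$ while $\Pn\pi_i\exp(\muplo_i-\lo_i)\nu_{\muplo,i}\nu_{\mu,i}$ has absolute mean of order $\E[\pi]/\Tef$, the corrections incurred are $O_p(1/(\E[\pi]n))$ and $O_p(1/(\Tef\sqrt{\E[\pi]n}))$, both $o_p(1/\sqrt{\E[\pi]n})$. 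After the substitution the $\xi_2$ leading term is exactly $\bias$ (recalling $\nu_{\muplo}=\plo-\muplo$, $\nu_\mu=\mu-\pmu$), and the same Markov and Cauchy--Schwarz estimate with Corollary~\ref{cor:pop_proj} gives $\bias=O_p(1/\Tef)$. Collecting the three $1/\E[\pi]$-weighted averages and all the $o_p$ terms yields the displayed expansion.

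The substantive work --- the empirical-process and fixed-point arguments behind Lemmas~\ref{lem:or_bias}, \ref{lem:conv_rate_1}, and \ref{lem:emp_error} --- has already been carried out, so at this level the theorem is largely a careful accounting exercise. The step I expect to need the most care is the absorption of the $p$-dependent remainders: it is precisely here that the two-sided bound in Assumption~\ref{as:comp} must be unpacked into the $\Tef^2\gtrsim\E[\pi]n$ and $\Tef^2\lesssim\E[\pi]n$ cases, and one must verify in each that the $O_p(\sqrt{\log n}\,\|\mu-\bestmu\|_2\,p/(\E[\pi]n))$ term from Lemma~\ref{lem:emp_error} is dominated by whichever of $1/\Tef$ and $1/\sqrt{\E[\pi]n}$ is relevant. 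A secondary subtlety is checking that the $1/\overline\pi\to1/\E[\pi]$ replacement stays harmless when $\E[\pi]$ is near its lower bound $1/\sqrt n$, which again rests on $\E[\pi]\gg 1/\sqrt n$.
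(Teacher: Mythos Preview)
Your proposal is correct and matches the paper's proof essentially line for line: the same three-term decomposition \eqref{eq:decomp}, Lemma~\ref{lem:balancing} for $\xi_3$, Lemma~\ref{lem:or_noise} and Corollary~\ref{coro:or_bias} for $\xi_1,\xi_2$, Lemma~\ref{lem:emp_error} for the empirical corrections, and Assumption~\ref{as:comp} to absorb the $p$-dependent remainders. One small remark: you invoke $\E[\pi]\gg 1/\sqrt n$ for the $\overline\pi\to\E[\pi]$ replacement, but that is not among the stated hypotheses of this appendix theorem; the paper instead extracts the weaker (and sufficient) $\E[\pi]n\to\infty$ directly from Assumption~\ref{as:comp}, since $p\ge 1$ and $p/(\E[\pi]n)\ll 1$.
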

\begin{proof}
By the expansion in \eqref{eq:decomp} we have:
\begin{equation*}
    \hat \tau - \tau = \xi_1 + \xi_2 + \xi_3
\end{equation*}
Lemma \ref{lem:balancing} implies that $\xi_3 = o_p\left(\frac{1}{\sqrt{n}}\right)$. Lemma \ref{lem:or_noise} implies:
\begin{equation*}
    \xi_1 = \frac{\E[\pi]}{\overline{\pi}}\left(\Pn \frac{\pi_i - D_i}{1-\pi_i}\frac{(\pi_iu_i + 1)}{\E[\pi]}\epsilon_i+\Pn\frac{\pi_iu_i}{\E[\pi]} \epsilon_i + o_p\left(\frac{1}{\sqrt{n}}\right)  
   \right) + \Pn(1-D_i)( \exp(\hlo_i)- \exp(\plo_i))\frac{\epsilon_i}{\overline{\pi}}
\end{equation*}
$ \frac{\E[\pi]}{\overline{\pi}} = 1 + O_p\left(\frac{1}{\sqrt{\E[\pi]n}}\right)$ as long as $\E[\pi] n\rightarrow \infty$ which is guaranteed by Assumption \ref{as:comp}. As a result, we get:
\begin{equation*}
    \xi_1 = \Pn \frac{\pi_i - D_i}{1-\pi_i}\frac{(\pi_iu_i + 1)}{\E[\pi]}\epsilon_i+\Pn\frac{\pi_iu_i}{\E[\pi]} \epsilon_i + o_p\left(\frac{1}{\sqrt{\E[\pi]n}} 
   \right) + \Pn(1-D_i)( \exp(\hlo_i)- \exp(\plo_i))\frac{\epsilon_i}{\overline{\pi}}
\end{equation*}
Using Lemma \ref{lem:emp_error} and $ \frac{\E[\pi]}{\overline{\pi}} = 1 + O_p\left(\frac{1}{\sqrt{\E[\pi]n}}\right)$ we get
\begin{align*}
    \Pn(1-D_i)( \exp(\hlo_i)- \exp(\plo_i))\frac{\epsilon_i}{\overline{\pi}} = o_p\left(\frac{1}{\sqrt{n}}\right),
\end{align*}
and thus:
\begin{align*}
    \Pn \frac{\pi_i - D_i}{1-\pi_i}\frac{(\pi_iu_i + 1)}{\E[\pi]}\epsilon_i+\Pn\frac{\pi_iu_i}{\E[\pi]} \epsilon_i + o_p\left(\frac{1}{\sqrt{\E[\pi]n}} \right).
\end{align*}
Similarly, Corollary \ref{coro:or_bias} guarantees: 
\begin{align*}
    &\xi_2 = \frac{\E[\pi]}{\overline{\pi}}\left(\bias +  o_p\left(\frac{1}{\sqrt{\E[\pi]n}}\right) + o_p(\|\mu - \bestmu\|_2^2) +  O_p\qty(\norm{\mu-\bestmu}_2 \frac{p}{n}) \right)  \\
&+  \Pn(1-D_i)( \exp(\hlo_i)- \exp( g_i))\frac{(\mu_i  - \hmu_i)}{\overline{\pi}}
\end{align*}
By the same logic as above, this implies:
\begin{align*}
    &\xi_2 = \bias +  o_p\left(\frac{1}{\sqrt{\E[\pi]n}}\right) + o_p\qty(\|\mu - \bestmu\|_2^2) +  O_p\qty(\norm{\mu-\bestmu}_2 \frac{p}{n}) + \\
    &+  \Pn(1-D_i)( \exp(\hlo_i)- \exp( g_i))\frac{(\mu_i  - \hmu_i)}{\overline{\pi}}.
\end{align*}
Using Lemma \ref{lem:emp_error} and $ \frac{\E[\pi]}{\overline{\pi}} = 1 + O_p\left(\frac{1}{\sqrt{\E[\pi]n}}\right)$ we get
\begin{align*}
     \Pn(1-D_i)( \exp(\hlo_i)- \exp( g_i))\frac{(\mu_i  - \hmu_i)}{\overline{\pi}} = O_p\left(\frac{\sqrt{\log(n)}\|\mu - \bestmu\|_2p}{\E[\pi] n}\right) + o_p(\|\mu - \bestmu\|_2^2)
\end{align*}
Since $\frac{\|\mu - \bestmu\|_2p}{\E[\pi] n} = \frac{p}{\E[\pi] n} \frac{1}{\sqrt{\Tef}}$, Assumption \ref{as:comp} guarantees:
\begin{equation*}
      \frac{p}{\E[\pi] n} \ll \max\left\{\frac{1}{\sqrt{\Tef}}, \sqrt{\frac{\Tef}{\mathbb{E}[\pi]n}}\right\} \Leftrightarrow  \frac{p}{\E[\pi] n} \frac{1}{\sqrt{\Tef}} \ll \max\left\{\frac{1}{\Tef}, \frac{1}{\sqrt{\mathbb{E}[\pi]n}}\right\},  
\end{equation*}
which implies $ O_p\left(\frac{\sqrt{\log(n)}\|\mu - \bestmu\|_2p}{\E[\pi] n}\right) = o_p\qty(\max\left\{\frac{1}{\Tef}, \frac{1}{\sqrt{\mathbb{E}[\pi]n}}\right\})$. Combining all the terms we get:
\begin{equation*}
    \hat \tau - \tau  = \bias + \Pn \frac{\pi_i - D_i}{1-\pi_i}\frac{(\pi_iu_i + 1)}{\E[\pi]}\epsilon_i+\Pn\frac{\pi_iu_i}{\E[\pi]} \epsilon_i + o_p\qty(\frac{1}{\Tef}) + o_p\left(\frac{1}{\sqrt{\E[\pi] n}}\right). 
\end{equation*}
\end{proof}
Theorem \ref{th:van_share} follows by observing that $\mathbb{V}\left[\Pn\frac{\pi_iu_i}{\E[\pi]} \epsilon_i\right] = \frac{1}{n}\mathbb{E}\qty[\left(\frac{\pi u}{\E[\pi]}\right)^2 \epsilon)^2] \le \frac{\sigma^2_{\max}}{n} \mathbb{E}\qty[\left(\frac{\pi u}{\E[\pi]}\right)^2] =O\left(\frac{1}{n}\right)$, where the last equality follows from \eqref{u:moment-bound}. By the same type of argument, we have 
\begin{align*}
    \mathbb{V}\left[ \Pn \frac{\pi_i - D_i}{1-\pi_i}\frac{\pi_iu_i}{\E[\pi]}\epsilon_i\right] \le \frac{\sigma^2_{\max}}{n}\mathbb{E}\qty[\exp(\lo)\frac{\pi u}{\E[\pi]}] = O\qty(\frac{1}{n}).
\end{align*}
As a result, both of these terms are negligible compared to $\Pn \frac{\pi_i - D_i}{1-\pi_i}\frac{\epsilon_i}{\E[\pi]} = O_p\qty(\frac{1}{\sqrt{E[\pi]n}})$ in the regime where $\E[\pi] \ll 1$.
\newpage
\subsection{Miscellaneous proofs}\label{ap:misc}

\subsubsection{Quadratic minimization}\label{sec:quadr_min}
Consider an arbitrary $d\times T_0$ matrix $A$, and a $d\times 1$ dimensional vector $b$. Define the minimal value of the following optimization problem:
\begin{equation*}
    V := \min_{x} \| Ax - b\|_2^2 +\sigma^2\|x\|_2^2. 
\end{equation*}
Let $A = UDV^\top$ be the SVD decomposition of $A$, then $x^{\star}$ that solve the optimization problem is equal to $ x^{\star} = V (D^2 + \sigma^2 \mathcal{I}_d)^{-1}D U^\top b)$. Substituting this value in the optimization problem, we get 
\begin{equation*}
    V = \sigma^2 (\sigma^2 b^\top U(D^2 + \sigma^2 \mathcal{I}_d)^{-2} U^\top b + \sigma^2 b^\top UD(D^2 + \sigma^2 \mathcal{I}_d)^{-2}D U^\top b 
    =\sigma^2 b^\top U(D^2 +\sigma^2\mathcal{I}_d)^{-1}U^\top b
\end{equation*}
Defining $\xi :=U^\top b$, we get a simpler expression for the same value:
\begin{equation*}
    V = \sigma^2\sum_{j = 1}^d \frac{\xi_j^2}{d^2_j + \sigma^2}.
\end{equation*}
Suppose $\xi_j^2 \sim \frac{d_j^2}{T_0} \sim j^{-p}$, then we can split the sum into two parts: $T_0 j^{-p} > \sigma^2 \Rightarrow j \le \qty(\frac{T_0}{\sigma^2})^{\frac1p}$, and $j > \qty(\frac{T_0}{\sigma^2})^{\frac1p}$. The first part of the sum behaves as $\qty(\frac{T_0}{\sigma^2})^{\frac1p}\frac{\sigma^2}{T_0}$, and the second part of the sum behaves as $\int_{x \ge  \qty(\frac{T_0}{\sigma^2})^{\frac1p}} x^{-p} dx \sim \qty(\frac{T_0}{\sigma^2})^{\frac{-p+1}{p}}$. Combining the two parts together we get $V\sim\left(\frac{\sigma^2}{T_0}\right)^{1-1/p}$.

\subsubsection{Proof of Corollaries \ref{cor:two-way} - \ref{cor:int_fe}}
We prove the second corollary because the first one follows from it. Our goal is to verify that conditions of Theorem \ref{th:const_share} hold.  Assumption \ref{as:subg} holds because, by definition, any random variable in $\spn{\m{F},\mu,\lo}$ is a linear combination of independent subgaussian random variables with subgaussian norms controlled by the $L^2$ norms. By definition, $\lo \ge \alpha_c$ with positive probability bounded away from zero. This implies that the first part of Assumption \ref{as:overlap} holds. It also guarantees that $\mathbb{E}[\pi]$ is bounded away from zero, and thus the second part of this assumption also holds. The first part is guaranteed because $\lo > c$ with positive probability. By definition, $\epsilon_{t}$ are uncorrelated and have variance bounded from zero and infinity, guaranteeing that Assumption \ref{as:out_mom} hold. We also have $\|\Sigma\|_{op} \le \max_{K+1\le t\le T_0}\mathbb{V}[\epsilon_t] \le \sigma^2_{\max}$ and thus using the derivations from Section \ref{sec:quadr_min} we get:
\begin{equation*}
    \min_{\mathbf{c}}\left\{\left\|\sum_{t = K+1}^{T_0} c_t\psi  -\psi\right\|_{2}^2 + \|\Sigma\|_{op}\sum_{t=K+1}^{T_0}c_{t}^2\right\} =\left(\frac{\sigma^2_{\max}}{T_0^{1-\frac{1}{\kappa}}}\right).
\end{equation*}
This implies that $\Tef \sim T_0^{1-\frac{1}{\kappa}}$, and we also have $T_0\sim p$ which simplifies Assumption \ref{as:comp}:
    \begin{equation*}
   n\min\left\{\max\left\{\frac{T_0^{\frac{\kappa-1}{2\kappa}}}{\sqrt{n}}, \frac{1}{T_0^{\frac{\kappa-1}{\kappa}}}\right\}, 1\right\}\gg T_0 \Leftrightarrow  n \gg T_0^{1+ \frac{1}{\kappa}}.
\end{equation*}

It remains to verify Assumption \ref{as:miss_lo}. Since the variance of $\lo$ is fixed, the first part of Assumption \ref{as:miss_lo} follows by considering $f^{\star} = \E[\lo]$ and observing that by definition:
\begin{equation*}
    \E\left[\frac{\pi}{\E[\pi]}\ell(\lo - \muplo)\right]+\frac{\zeta^2}{2n}\|f\|^2_{\m{F}} \le \E\left[\frac{\pi}{\E[\pi]}\ell(\lo - f^{\star})\right]+\frac{\zeta^2}{2n}\|f^{\star}\|^2_{\m{F}} = \E\left[\frac{\pi}{\E[\pi]}\ell(\lo - \E[\lo])\right] = O(1),
\end{equation*}
where the penultimate inequality follows from the fact that $\|f^{\star}\|^2_{\m{F}} = 0$ by definition of $\|\cdot \|_{\m{F}}$. To verify the second part of Assumption \ref{as:miss_lo} suppose that we have $\| \plo - \muplo\|_2^2 = O\qty(\frac{1}{\Tef})$. We then have:
\begin{align*}
    &\| \plo - \muplo\|_2^2 = \| \plo - f_{\muplo}- \beta_{\mu}\mu\|_2^2 = \| \plo - f_{\muplo}- \beta_{\mu}\bestmu  - \beta_{\mu}(\mu- \bestmu)\|_2^2 \\
    &=\| \plo - f_{\muplo}- \beta_{\mu}\bestmu\|_2^2 + \beta_{\mu}^2 \| \mu- \bestmu\|_2^2 = \| \plo - f_{\muplo}- \beta_{\mu}\bestmu\|_2^2 + \frac{\beta_{\mu}^2}{\Tef} = O\qty(\frac{1}{\Tef}) \Rightarrow \beta_{\mu}^2 = O(1),
\end{align*}
where the second equality follows because $\plo - f_{\muplo}- \beta_{\mu}\bestmu\in \mt{F}$ and $\mu - \bestmu$ is orthogonal to that space. To bound $\| \plo - \muplo\|_2^2$ we use 
$f^{opt} = \argmin_{f\in \mt{F}}\| \lo  - f\|_2^2$ and the following fact:
\begin{align*}
     \E\left[\frac{\pi}{\E[\pi]}\ell(\lo - \muplo)\right]+\frac{\zeta^2}{2n}\|\muplo\|^2_{\m{F}} \le \E\left[\frac{\pi}{\E[\pi]}\ell(\lo - \plo)\right]+\frac{\zeta^2}{2n}\|\plo\|^2_{\m{F}} \le 
     \E\left[\frac{\pi}{\E[\pi]}\ell(\lo - f^{opt})\right]+\frac{\zeta^2}{2n}\|f^{opt}\|^2_{\m{F}} \le \\
      \E\left[\frac{\pi}{\E[\pi]}\ell(\lo - f^{\star})\right] = O(1)
\end{align*}
The same logic as previously for $\mu$ guarantees that $\|\lo - f^{opt}\|_2^2 = O\left(\frac{1}{\Tef}\right)$. Moreover, by construction $\| f^{opt}\|^2_{\m{F}} = O\qty(\sum_{j = 0}^k \alpha_j^2 +\frac{1}{\Tef}) = O(1)$. It then follows from a trivial extension of Lemma \ref{lem:ell_eq}:\footnote{The proof of Lemma \ref{lem:ell_eq} and results it relies on does not depend on the second part of Assumption \ref{as:miss_lo}, so there is no circularity in this argument.} 
\begin{equation*}
\E\left[\frac{\pi}{\E[\pi]}\ell(\lo - f^{opt})\right] = O\qty(\frac{1}{\Tef})\Rightarrow   \E\left[\frac{\pi}{\E[\pi]}\ell(\lo - \plo)\right] =  O\qty(\frac{1}{\Tef} +\frac{1}{n}) \Rightarrow\E\left[\frac{\pi}{\E[\pi]}\ell(\lo - \muplo)\right] =  O\qty(\frac{1}{\Tef} +\frac{1}{n}).
\end{equation*}
Applying the same lemma in other direction, we get $\|\lo - \muplo\|_2 = O\qty(\frac{1}{\sqrt{\Tef}} +\frac{1}{\sqrt{n}})$, $\|\lo - \plo\|_2 = O\qty(\frac{1}{\sqrt{\Tef}} +\frac{1}{\sqrt{n}})$, and as a result $\| \muplo - \plo\|_2^2 =  O\qty(\frac{1}{\Tef} +\frac{1}{n})$. Since $n \gg \Tef$ from Assumption \ref{as:comp}, we have $\| \plo - \muplo\|_2^2 = O\qty(\frac{1}{\Tef})$. Finally, because $\|\lo - \plo\|_2 = o(1)$ it follows that all terms involving $u$ are negligible.

\newpage
\section{Technical lemmas}
\begin{lemma}\label{lem:max_in}
    Suppose we have a collection $\{(X_i,Y_i)\}_{i=1}^n$, where $n \ge n_0$ such that each $X_i$ is subgaussian conditional on $Y_i$ with $\|X_i\|_{\psi_2|Y_i} \le \sigma(Y_i)$ and $\max_{i}\sigma(Y_i) = O_p(r_n)$. Then $\max_i X_i = O_p\left(r_n \sqrt{\log(n)}\right)$.
\end{lemma}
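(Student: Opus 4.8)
The plan is to condition on the vector $(Y_1,\dots,Y_n)$, apply a standard maximal inequality for (conditionally) subgaussian random variables, and then transfer the bound back to the unconditional statement using the stochastic boundedness of $\max_i\sigma(Y_i)$. First I would fix the realization of $(Y_1,\dots,Y_n)$. Conditionally on this, each $X_i$ is subgaussian with parameter $\sigma(Y_i)\le M_n:=\max_i\sigma(Y_i)$. A union bound together with the standard subgaussian tail estimate $\P(X_i>t\mid Y_i)\le \exp(-ct^2/\sigma(Y_i)^2)$ gives, for any $t>0$,
\[
\P\!\left(\max_{i\le n} X_i > t \,\Big|\, Y_1,\dots,Y_n\right)\;\le\;\sum_{i=1}^n \exp\!\left(-\frac{ct^2}{\sigma(Y_i)^2}\right)\;\le\; n\exp\!\left(-\frac{ct^2}{M_n^2}\right).
\]
Choosing $t = t_n(\lambda) := M_n\sqrt{(\lambda+1)\log n /c}$ makes the right-hand side at most $n^{-\lambda}$, which tends to zero. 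Hence $\max_i X_i / \big(M_n\sqrt{\log n}\big) = O_p(1)$ \emph{conditionally} on $(Y_1,\dots,Y_n)$, with a bound on the conditional tail that does not depend on the realization of the $Y_i$'s (only on $M_n$ through the explicit expression above).

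Next I would remove the conditioning. The cleanest route is: for any $\varepsilon>0$, pick $K$ large enough that the conditional-tail bound above yields $\P(\max_i X_i > K M_n\sqrt{\log n}\mid Y_1,\dots,Y_n) \le \varepsilon/2$ for all $n\ge n_0$ (this uses only that the conditional bound $n\exp(-cK^2\log n)$ is eventually below $\varepsilon/2$, uniformly in the realization). Taking expectations, $\P(\max_i X_i > K M_n\sqrt{\log n})\le \varepsilon/2$. Now use the hypothesis $M_n = O_p(r_n)$: choose $C$ with $\P(M_n > C r_n)\le \varepsilon/2$ for all large $n$. Combining the two events,
\[
\P\!\left(\max_{i\le n} X_i > KC\, r_n\sqrt{\log n}\right)\;\le\;\P\!\left(M_n > Cr_n\right) + \P\!\left(\max_i X_i > K M_n\sqrt{\log n}\right)\;\le\;\varepsilon,
\]
which is exactly the statement $\max_i X_i = O_p(r_n\sqrt{\log n})$.

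The only mildly delicate point — the step I would treat as the main obstacle — is making sure the conditional maximal bound is genuinely \emph{uniform} over realizations of $(Y_1,\dots,Y_n)$, so that integrating out $Y$ is legitimate; this is why it matters that the union bound produces the clean dominating expression $n\exp(-ct^2/M_n^2)$, in which the dependence on the data enters only through the single scalar $M_n$. Everything else is the textbook $\sqrt{\log n}$ maximal inequality. I would also remark that one could instead invoke a sub-Gaussian maximal inequality in Orlicz-norm form, $\mathbb{E}[\max_i X_i \mid Y]\lesssim M_n\sqrt{\log n}$, and conclude via Markov's inequality plus $M_n=O_p(r_n)$; the argument is the same in substance.
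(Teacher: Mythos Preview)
Your proposal is correct and follows essentially the same route as the paper: split on the event $\{\max_i\sigma(Y_i)\le C r_n\}$, apply the conditional subgaussian tail plus union bound to get $n\exp(-cK^2\log n)$ on the good event, and absorb the bad event via $\max_i\sigma(Y_i)=O_p(r_n)$. The paper writes the split with indicator products rather than explicit conditioning, but the substance and constants are handled identically.
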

\begin{proof}
     Consider arbitrary $\sigma,\delta >0$ and observe:
\begin{align*}
     \E\left[\left\{\max_i X_i \ge \delta\sigma\sqrt{\log(n)}\right\}\right] \le \\
     \E\left[\left\{\max_i X_i \ge \delta\sigma\sqrt{\log(n)}\right\}\{\max_{i} \sigma(Y_i) \le \sigma\}\right] + \E[\{\max_{i} \sigma(Y_i) > \sigma\}].
\end{align*}
For the first term we have:
\begin{align*}
    \E\left[\left\{\max_i X_i \ge \delta\sigma\sqrt{\log(n)}\right\}\{\max_{i} \sigma(Y_i) \le \sigma\}\right] \le \\
    \sum_{i=1}^n \E\left[\E\left[\left\{ X_i \ge \delta\sigma\sqrt{\log(n)}\right\}|Y_i\right]\{\sigma(Y_i) \le \sigma\}\right] \le \\
 \sum_{i=1}^n\E\left[\exp\left(-C\frac{\delta^2\sigma^2\log(n)}{\sigma^2(Y_i)}\right)\{\sigma(Y_i) \le \sigma\}\right]\le 
n \exp\left(-C\delta^2\log(n)\right) 
\end{align*}
Fix $\epsilon>0$ and choose $\delta_1$ such that $\E[\{\max_{i} \sigma(Y_i) > \delta_1 r_n\}] \le \frac{\epsilon}{2} $. Then choose $\delta_2$ such that $1-C(\delta_1\delta_2)^2 <0$ and  $n_0 \exp\left(-C(\delta_1\delta_2)^2\log(n_0)\right) \le  \frac{\epsilon}{2}$. It then follows:
\begin{equation*}
     \max_{n\ge n_0}\E\left[\left\{\frac{\max_{i\le n} X_i}{r_n\sqrt{\log(n)}} \ge \delta_1\delta_2\right\}\right] \le \epsilon,
\end{equation*}
which by definition implies $\max_i X_i = O_p\left(r_n \sqrt{\log(n)}\right)$.
\end{proof}

\section{Simulation details}\label{ap:pars}
In this section, we describe the parameters we used for the simulations in Section \ref{sec:sim} and the results of the additional simualtions. The parameters of the outcome model for the AR design satisfy
\begin{equation*}
    \eta_i \sim \mathcal{N}(0, \sigma^2_{\eta}), \quad  u^{AR}_{it} \sim \mathcal{N}(0, \sigma^2_{AR}), \quad  \epsilon^{AR}_{i1} \sim \mathcal{N}\left(0, \frac{\sigma^2_{AR}}{1-\rho^2}\right),
\end{equation*}
where $\sigma^2_{\eta} = 1$, $\mathcal{N}(0, \sigma^2_{AR}) = 1$, $\rho =0.5$. The assignment model for the AR design has the following form:
\begin{equation*}
    \E[D_i = 1| \eta_i, Y_i^{T_0}, \nu_i] = \frac{\exp\left(\eta_i + \beta^{AR}_{T_0}\epsilon^{AR}_{iT_0} + \beta^{AR}_{T_0-1}\epsilon^{AR}_{iT_0-1} + \nu_{i}\right)}{1+ \exp\left(\eta_i + \beta^{AR}_{T_0}\epsilon^{AR}_{iT_0} + \beta^{AR}_{T_0-1}\epsilon^{AR}_{iT_0-1} + \nu_{i}\right)} , \quad \nu_{i} \sim \mathcal{N}(0, \sigma^2_{\nu}).
\end{equation*}
where $\beta^{AR}_{T_0} = 0.5$, $\beta^{AR}_{T_0-1} = 0.25$, and $\sigma^2_{\nu} = 0.25$. Note that we do not assume a logit selection model and allow for the misspecification error $\nu_i$.

The parameters of the outcome model for the random walk design satisfy:
\begin{equation*}
    \quad u^{RW}_{it} \sim \mathcal{N}(0, \sigma^2_{RW}),
\end{equation*}
where $\sigma^2_{RW} = \frac{1}{8}$. The assignment model has the form:
\begin{equation*}
      \E[D_i = 1| \eta_i, Y_i^{T_0}, \nu_i] = \frac{\exp\left( \beta^{RW}_{T_0}\epsilon_{i,T_0}+ \nu_{i}\right)}{1+ \exp\left(\beta^{RW}_{T_0}\epsilon_{i,T_0} + \nu_{i}\right)} , \quad \nu_{i} \sim \mathcal{N}(0, \sigma^2_{\nu}),
\end{equation*}
with $\beta^{RW}_{T_0} = 0.1$. Note that the assignment model depends on $\eta_i$ and past outcomes because $\epsilon_{iT_0}$ is not observed. 

We report the results for the distribution of $t$-statistics in the RW and mixture designs in Figures \ref{fig:inf_rw_model} - \ref{fig:inf_mixt_model}. These results confirm the intuition described in the main text. For the RW design, quantiles of $t$-statistics based on $\hat \tau_k^{TWFE}$ and $\hat \tau_{k}^{SC}$ are aligned with the theoretical quantiles of the standard normal distribution. The situation is less positive for the mixture design, with both methods delivering biased results. We can view the effect of this in Table \ref{table:coverage_rw_mix} that reports the coverage of nominal $95\%$ CI for both methods in two designs. The coverage in RW design is close to the nominal one for both estimators and less so in the mixture design. Interestingly, the coverage is uniformly better for the SC estimator. 

\begin{figure}[t!]
    \centering
    \includegraphics[scale = 0.40]{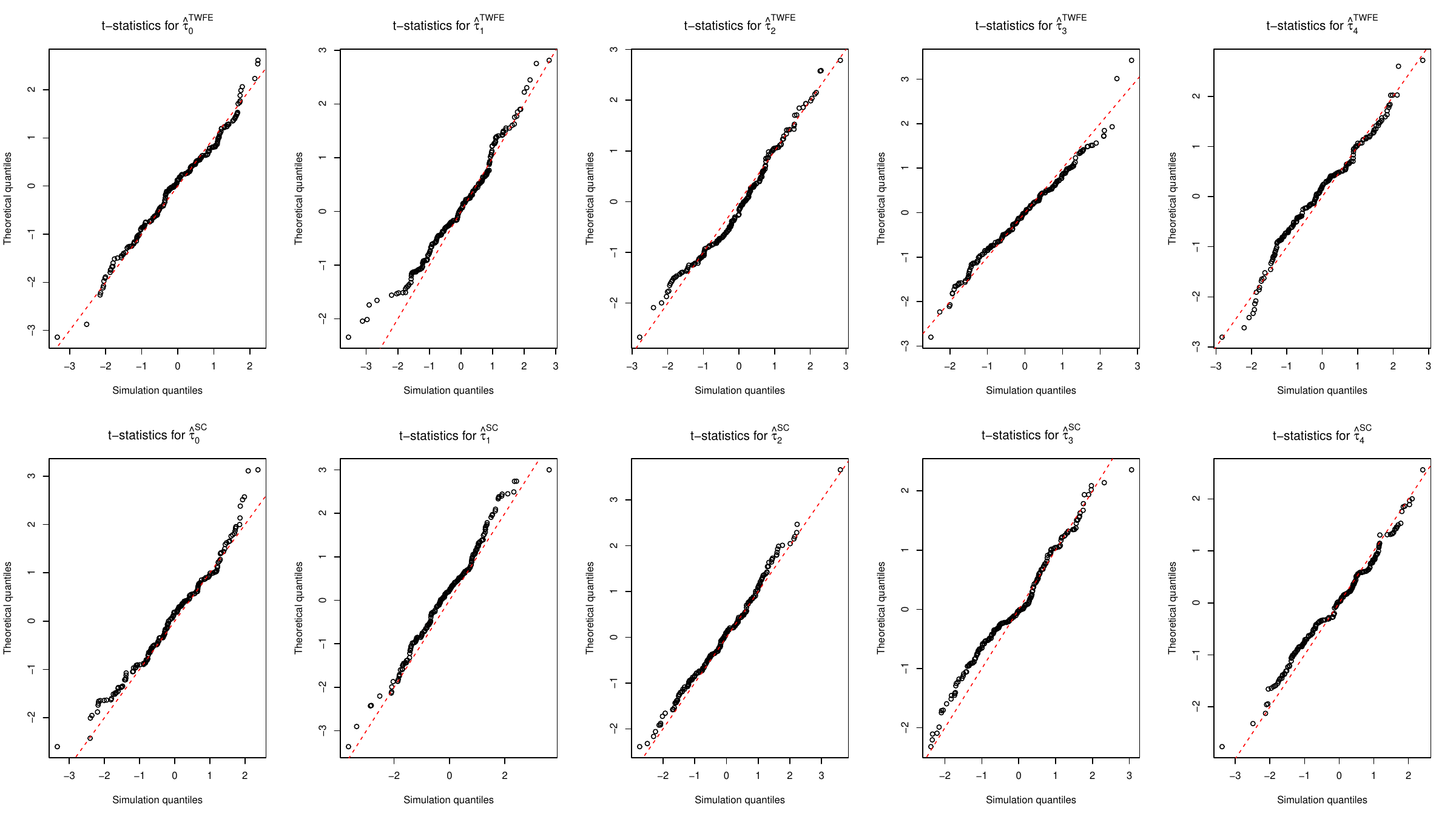}
    \caption{\footnotesize Computations based on $B = 200$ simulations with RW design. Each simulation has $n = 400$ units, $T_0 = 8$ pre-treatment periods, $K = 5$ treatment periods. The simulation parameters are reported in Appendix \ref{ap:pars}. First row: QQ plots for $t$-statistics based on the TWFE estimator; second row: QQ plots for $t$-statistics based on the SC estimator. Variance for each estimator is computed using $100$ bootstrap samples. }
    \label{fig:inf_rw_model}
\end{figure}

\begin{figure}[t!]
    \centering
    \includegraphics[scale = 0.40]{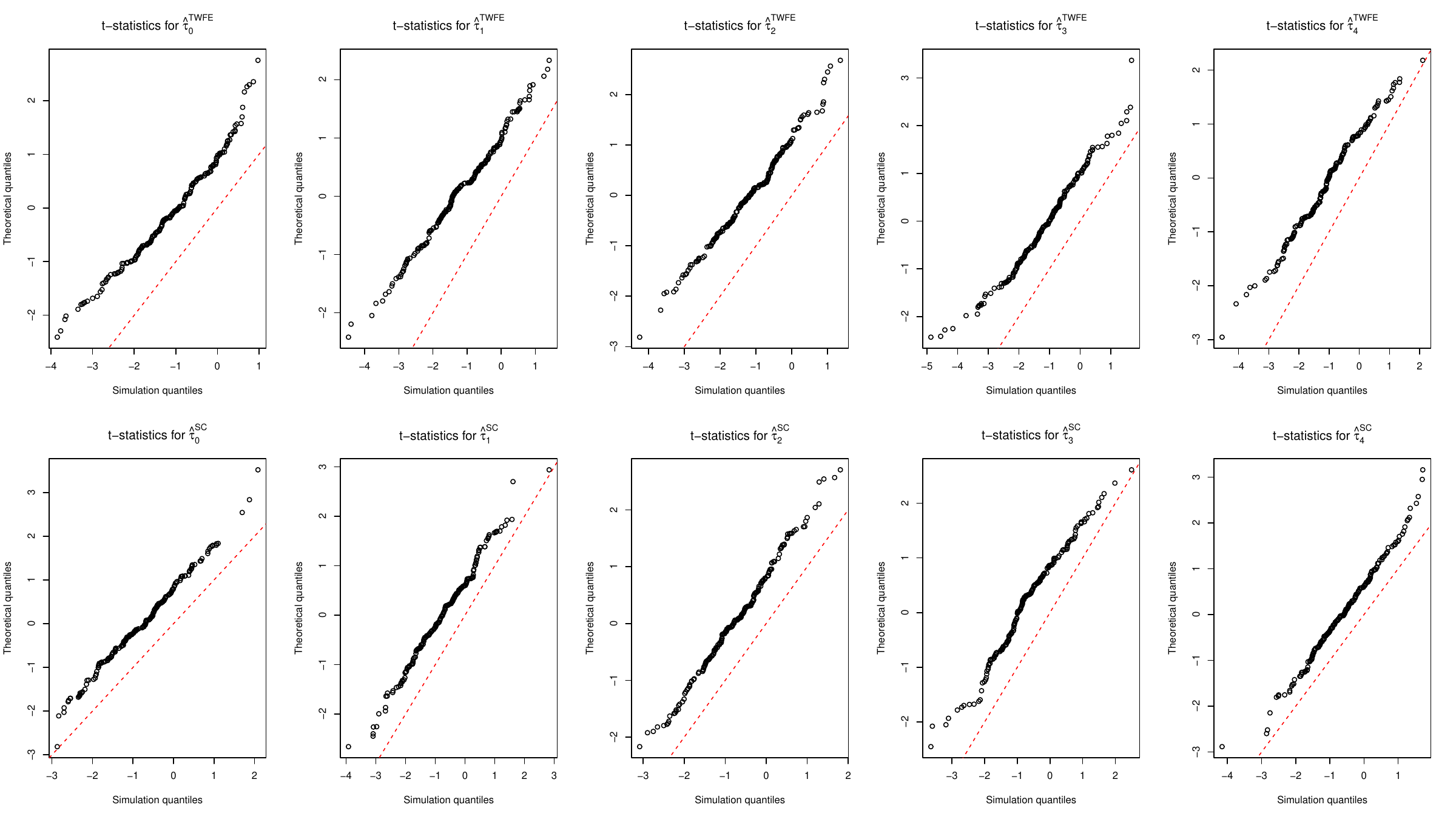}
    \caption{\footnotesize Computations based on $B = 200$ simulations with mixture design. Each simulation has $n = 400$ units, $T_0 = 8$ pre-treatment periods, $K = 5$ treatment periods. The simulation parameters are reported in Appendix \ref{ap:pars}. First row: QQ plots for $t$-statistics based on the TWFE estimator; second row: QQ plots for $t$-statistics based on the SC estimator. Variance for each estimator is computed using $100$ bootstrap samples. }
    \label{fig:inf_mixt_model}
\end{figure}

\begin{table}[t!]
\centering
\setlength{\extrarowheight}{0.2cm}
\begin{tabular}{|l|r|r|r|r|r|}
\hline\hline
\multirow{2}{*}{} & \multicolumn{5}{c|}{Effect in treatment period}\\\cline{2-6}
& 0 & 1 & 2 & 3 & 4 \\
\hline\hline
 \multicolumn{6}{|c|}{RW design}\\\cline{2-6}\hline 
CI based on $\hat \tau_{k}^{TWFE}$ & 0.94 & 0.94 & 0.94 & 0.95 & 0.96 \\
CI based on $\hat \tau_{k}^{SC}$ & 0.94 & 0.93 & 0.93 & 0.94 & 0.94 \\
\hline\hline
 \multicolumn{6}{|c|}{Mixture design}\\\cline{2-6}\hline 
CI based on $\hat \tau_{k}^{TWFE}$ & 0.80 & 0.77 & 0.77 & 0.78 & 0.79 \\
CI based on $\hat \tau_{k}^{SC}$ & 0.90 & 0.86 & 0.90 & 0.92 & 0.93 \\
\hline\hline
\end{tabular}
\caption{\footnotesize Coverage rates for $95\%$ confidence intervals based on $B =200$ simulations with RW and mixture designs. Each simulation has $n = 400$ units, $T_0 = 8$ pre-treatment periods, and $5$ treatment periods. The simulation parameters are reported in Appendix \ref{ap:pars}. In each block the first row: coverage rates based on $\hat \tau_{k}^{TWFE}$; the second row: coverage rate based on  $\hat \tau_{k}^{SC}$. Confidence intervals are constructed using \eqref{eq:conf_int}.}
\label{table:coverage_rw_mix}
\end{table}

\end{document}